%
%
\newif\ifital   \italfalse  
\newif\ifdraft  \draftfalse    
\newif\ifhdbk   \hdbkfalse    
\newif\ifcheat  \cheatfalse 
\newif\iflong   \longfalse    
\longtrue 
\documentclass{irmaart}
\usepackage{vaucanson-g}
\renewcommand{\EdgeR}[4][{\EdgeLabelPos}]%
   {%
   \setlength{\EdgeOff}{-\EdgeOff}%
   \EdgeStyle \ncline{#2}{#3} \nbput[npos=#1]{\VaucEdgeLabel{#4}}%
   \setlength{\EdgeOff}{-\EdgeOff}%
   }
\renewcommand{\DimState}%
   {\ChgStateLineStyle{\DimStateLineStyle}%
    \ChgStateLineWidth{\DimStateLineCoef}%
    \ChgStateLineColor{\DimStateLineColor}%
    \ChgStateFillColor{\DimStateFillColor}%
    \ChgStateLabelColor{\DimStateLabelColor}}
\renewcommand{\DimEdge}%
   {\ChgEdgeLineStyle{\DimEdgeLineStyle}%
    \ChgEdgeLineWidth{\DimEdgeLineCoef}%
    \ChgEdgeLineColor{\DimEdgeLineColor}%
    \ChgEdgeLabelColor{\DimEdgeLabelColor}}
\newcommand{\UnDimState}%
   {\RstStateLineStyle%
    \RstStateLineWidth%
    \RstStateLineColor%
    \RstStateFillColor%
    \RstStateLabelColor}
\newcommand{\UnDimEdge}%
   {\RstEdgeLineStyle%
    \RstEdgeLineWidth%
    \RstEdgeLineColor%
    \RstEdgeLabelColor}
%

%
\newcommand{\GoldMeanRatio}{1.618 1}
\newcommand{\SQRTwoRatio}{1.414 1}
%

%

%
\newcommand{\ChgEdgeLabelPosit}[1]{\renewcommand{\EdgeLabelPos}{#1}}  
\newcommand{\RstEdgeLabelPosit}{\ChgEdgeLabelPosit{\EdgeLabelPosit}}
\newcommand{\SetEdgeLabelPosit}[1]%
   {\renewcommand{\EdgeLabelPosit}{#1}\RstEdgeLabelPosit}
%
\newcommand{\ChgTransLabelSep}[1]
   {\setlength{\TransLabelSP}{#1\TransLabelSep}}
\newcommand{\RstTransLabelSep}{\ChgTransLabelSep{1}}
\newcommand{\SetTransLabelSep}[1]
   {\setlength{\TransLabelSep}{#1}\RstTransLabelSep}
\newcommand{\ChgRelTransLabelSep}[1]
      {\setlength{\TransLabelSP}{#1\TransLabelSP}}
%
\newlength{\NodeSep}\newlength{\NodeSP}%
\newcommand{\ChgNodeSep}[1]{\setlength{\NodeSP}{#1}} 
\newcommand{\RstNodeSep}{\ChgNodeSep{\NodeSep}} 
\newcommand{\SetNodeSep}[1]{\setlength{\NodeSep}{#1}\ChgNodeSep{\NodeSep}} 
\setlength{\NodeSep}{0pt}\RstNodeSep
%
\newcommand{\ChgCoilSize}[1]{\ChgZZSize{#1}}
\newcommand{\RstCoilSize}{\RstZZSize}
\newcommand{\ChgCoilLineWidth}[1]{\ChgZZLineWidth{#1}}

\newcommand{\EdgeZZ}[2]{\ZZEdge{#1}{#2}}%
\newcommand{\EdgeZZL}[4][{\EdgeLabelRevPosit}]%
   {\ZZEdgeL[#1]{#2}{#3}{#4}}
\newcommand{\EdgeZZR}[4][{\EdgeLabelRevPosit}]%
   {\ZZEdgeR[#1]{#2}{#3}{#4}}
%
%
\newpsstyle{VaucEdgeStyle}%
    {arrows=\EdgeArrowSty,%
     arrowsize=\EdgeArrowSZDim,%
     arrowlength=\EdgeArrowSZNum,%
     arrowinset=\EdgeArrowIns,%
     linewidth=\EdgeLineWid,%
     linecolor=\EdgeLineCol,%
     linestyle=\EdgeLineSty,%
     doubleline=false,%
     bordercolor=\EdgeLineBorderColor,%
     border=\EdgeLineBord,%
     fillstyle=none,%
     offset=\EdgeOff,%
     labelsep=\TransLabelSP,%
     nodesep=\NodeSP}
\newpsstyle{VaucEdgeDblStyle}%
    {arrows=\EdgeArrowSty,%
     arrowsize=\EdgeArrowSZDim,%
     arrowlength=\EdgeArrowSZNum,%
     arrowinset=\EdgeArrowIns,%
     linewidth=\EdgeLineDblCoef\EdgeLineWid,%
     linecolor=\EdgeLineCol,%
     linestyle=\EdgeLineSty,%
     doubleline=true,%
     doublesep=\EdgeLineDblSep\EdgeLineWid,%
     bordercolor=\EdgeLineBorderColor,%
     border=\EdgeLineBord,%
     fillstyle=none,%
     offset=\EdgeOff,%
     labelsep=\TransLabelSP,
     nodesep=\NodeSP}
\newcommand{\VCPutLabel}[4][1]{
\LargeState
\ChgStateLineStyle{none}%
\ChgStateFillStatus{none}%
\ChgStateLabelScale{#1}%
\StateVar[#4]{#2}{#3}%
\RstStateLabelScale%
\RstStateFillStatus%
\RstStateLineStyle%
}
\newcommand{\VCPutText}[3][1]{
\ChgStateLabelScale{#1}%
\VCPut{#2}{\VCStateLabel{#3}}%
\RstStateLabelScale%
}
\newcommand{\TransSpont}%
   {\psset{dotsep=1.5pt}%
    \ChgEdgeLineStyle{dotted}%
    \ChgEdgeLineWidth{2.2}%
    \ChgEdgeLabelScale{0.85}}
\newcommand{\BigAuto}[1]%
  {\FixStateDiameter{2.8cm}%
   \scalebox{1.272 1}{\State{(0,0)}{XQ}}%
   \ChgStateLabelScale{2}%
   \VCPutStateLabel{(0,0)}{#1}%
   \RstStateLabelScale\MediumState}
\newlength{\sttd}
\newcommand{\ChgStateDiameter}[1]%
    {\setlength{\sttd}{\StateDiam}\FixStateDiameter{#1\sttd}}
\newcommand{\ChgRelStateLineWidth}[1]%
   {\setlength{\StateLineWid}{#1\StateLineWid}}%
\newcommand{\ChgRelEdgeLineWidth}[1]
   {\setlength{\EdgeLineWid}{#1\EdgeLineWid}}
\newlength{\DotSize}%
\newlength{\SideLabelDist}
\setlength{\SideLabelDist}{0.3cm}
\newcommand{\SideLabelAngle}{135}
\newcommand{\DotState}[4][\SideLabelAngle]%
    {\psdots#2\pnode#2{#3}
     \uput{\SideLabelDist}[#1]{0}#2{\VCStateLabel{#4}}%
    }    
\newcommand{\SetAOSLengthCoef}[1]%
   {\SetAOS}
\newlength{\InitialDist}\setlength{\InitialDist}{0.8cm}
\newcommand{\InitialAngle}{180}
\newcommand{\InitialDot}[2][\InitialAngle]%
   {\SpecialCoor%
    \uput{\InitialDist}[#1](#2){\pnode(0,0){#2zz}}%
   \NormalCoor%
   \ChgNodeSep{0.5\DotSize}%
   \EdgeStyle\ncline{#2zz}{#2}
   \RstNodeSep%
   }
\newlength{\FinalDist}\setlength{\FinalDist}{0.6cm}
\newcommand{\FinalAngle}{-90}
\newcommand{\FinalDot}[2][\FinalAngle]%
   {\SpecialCoor%
    \uput{\FinalDist}[#1](#2){\pnode(0,0){#2zz}}%
   \NormalCoor%
   \ChgNodeSep{0.5\DotSize}%
   \EdgeStyle\ncline{#2}{#2zz}%
   \RstNodeSep%
   }

\usepackage{xspace}
\usepackage[loose]{subfigure}
\usepackage{ha-js}
\setcounter{tocdepth}{2}%
\usepackage[switch,pagewise]{lineno}
\ifdraft{\linenumbers}\else{\nolinenumbers}\fi     
\usepackage[hypertex,hyperindex,pagebackref,final]{hyperref}
\usepackage{calc}
\newcounter{hours}\newcounter{minutes}
\newcommand\printtime%
  {\setcounter{hours}{\time/60}%
   \setcounter{minutes}{\time-\value{hours}*60}%
  \thehours\,h\,\,\theminutes}
\newcommand\dateandtime{\today\quad\printtime}
\makeatletter
\def\@evenfoot{{\footnotesize
\texttt{\longshort{Long}{Short} version}  
\hfil 
\texttt{\ifdraft{Not to be circulated}\else{of Ch.~2 of AutoMAthA 
Handbook}\fi}  
}}
\def\@oddfoot{{\footnotesize
\texttt{\longshort{Long}{Short} version}%
\hfil 
\texttt{\ifdraft{\texttt{Compiled \today \e at \printtime}}%
        \else{\today}\fi}%
}}
\makeatother
\newcommand{\longonly}[1]{\iflong{#1}\else{}\fi}
\newcommand{\shortlong}[2]{\iflong{#2}\else{#1}\fi}
\newcommand{\longshort}[2]{\iflong{#1}\else{#2}\fi}
\newcommand{\shortonly}[1]{\iflong{}\else{#1}\fi}
\newcommand{\hdbkonly}[1]{\ifhdbk{#1}\else{}\fi}
\newcommand{\ArXivonly}[1]{\ifhdbk{}\else{#1}\fi}
\newcommand{\iesf}{\iflong\else\small\fi}%
\newcommand{\JSLandscape}%
   {\ifital \thispagestyle{empty}\mbox{ }\clearpage%
       \addtocounter{page}{-1}\fi}
\newlength{\DefiTest}\setlength{\DefiTest}{0pt}%
\newlength{\DefiHeightu}\newlength{\DefiHeightd}%
\newlength{\DefiDepthu}\newlength{\DefiDepthd}%
\newcommand{\Defi}[2]%
    {%
     \settoheight{\DefiHeightu}{${\displaystyle #1}$}%
     \settodepth{\DefiDepthu}{${\displaystyle #1}$}%
     \addtolength{\DefiHeightu}{\DefiDepthu}%
     \settoheight{\DefiHeightd}{${\displaystyle #2}$}%
     \settodepth{\DefiDepthd}{${\displaystyle #2}$}%
     \addtolength{\DefiHeightd}{\DefiDepthd}%
     \left\{#1%
     \rule[-\DefiDepthd]{\DefiTest}{\DefiHeightd}%
     \xmd\right|%
     \left.%
     \rule[-\DefiDepthu]{\DefiTest}{\DefiHeightu}%
      #2\right\}%
     }
\newlength{\ColoText}
\newlength{\ColoFigu}
\newlength{\TextFiguSpace}
\newlength{\parindenttemp} 
\newlength{\parskiptemp} 
\newlength{\fboxseptemp} 
\newcommand{\TFBoxing}{}
\newcommand{\TFVertAlig}{}
\newcommand{\LeftLarg}{}
\setlength{\fboxseptemp}{\fboxsep}
\setlength{\parindenttemp}{\parindent}
\setlength{\parskiptemp}{\parskip}
\setlength{\TextFiguSpace}{1.2em}
\renewcommand{\LeftLarg}{.66}
\ifdraft\renewcommand{\TFBoxing}{\fbox}\fi
\newcommand{\TxtFg}[3]%
   {%
    \setlength{\ColoText}{#1\textwidth}%
    \setlength{\ColoFigu}{\textwidth}%
    \addtolength{\ColoFigu}{-\ColoText}%
    \addtolength{\ColoText}{-.5\TextFiguSpace}%
    \addtolength{\ColoFigu}{-.5\TextFiguSpace}%
    \ifdraft\setlength{\fboxsep}{0pt}\fi
    \noi
    \TFBoxing{%
       \begin{minipage}[\TFVertAlig]{\ColoText}%
          \setlength{\parindent}{\parindenttemp}%
          \setlength{\parskip}{\parskiptemp}%
          \par\vspace*{0mm}
             #2
       \end{minipage}%
             }%
    \hspace*{\TextFiguSpace}%
    \TFBoxing{%
       \begin{minipage}[\TFVertAlig]{\ColoFigu}%
          \par\vspace*{0mm}%
             #3%
       \end{minipage}%
             }%
    \ifdraft\setlength{\fboxsep}{\fboxseptemp}\fi
   }%
\newcommand{\TextFigu}[3][\LeftLarg]%
   {\renewcommand{\TFVertAlig}{t}\TxtFg{#1}{#2}{#3}}
\newcommand{\TextFiguC}[3][\LeftLarg]%
   {\renewcommand{\TFVertAlig}{c}\TxtFg{#1}{#2}{#3}}
\newcommand{\TextFiguX}[3][\LeftLarg]
   {%
    \setlength{\ColoText}{#1\textwidth}%
    \setlength{\ColoFigu}{\textwidth}%
    \addtolength{\ColoFigu}{-\ColoText}%
    \addtolength{\ColoText}{-.5\TextFiguSpace}%
    \addtolength{\ColoFigu}{-.5\TextFiguSpace}%
    \addtolength{\ColoFigu}{\ETAExtendedLineWidth}
    \ifdraft\setlength{\fboxsep}{0pt}\fi
    \noi
    \ifodd\value{page}%
       \TFBoxing{%
          \begin{minipage}[t]{\ColoText}%
             \RstBLS
             \setlength{\parindent}{\parindenttemp}%
             \setlength{\parskip}{\parskiptemp}%
             \par\vspace*{0mm}
                #2
          \end{minipage}%
                }%
       \hspace*{\TextFiguSpace}%
       \TFBoxing{%
          \begin{minipage}[t]{\ColoFigu}%
             \par\vspace*{0mm}%
                #3%
          \end{minipage}%
                }%
    \else
       \hspace*{-\ETAExtendedLineWidth}
       \TFBoxing{%
          \begin{minipage}[t]{\ColoFigu}%
             \par\vspace*{0mm}%
                #3%
          \end{minipage}%
                }%
       \hspace*{\TextFiguSpace}%
       \TFBoxing{%
          \begin{minipage}[t]{\ColoText}%
             \RstBLS
             \setlength{\parindent}{\parindenttemp}%
             \setlength{\parskip}{\parskiptemp}%
             \par\vspace*{0mm}
                #2
          \end{minipage}%
                }%
    \fi%
    \ifdraft\setlength{\fboxsep}{\fboxseptemp}\fi
   }
\newcommand{\TextText}[2]{\TextFigu[.5]{#1}{#2}}

\newcommand{\medskipneg}{\vspace*{-2ex}} 
\newcommand{\smallskipneg}{\vspace*{-1ex}} 
\newcommand{\miniskipneg}{\vspace*{-0.25ex}} 
%
\newcommand{\miniskip}{\vspace*{0.1ex}} %
\newcommand{\noi}{\noindent}
\newcommand{\findent}%
    {\ifenglish {}
        \else                                     
        \iffrench {}\else \hspace*{\parindent}\fi
     \fi}                                         
\newlength{\jsparmini}       
\setlength{\jsparmini}{\parindent}    %

\newcommand{\ETAEnumLbl}[1]%
    {\iffrench  {\rm  #1)} \fi%
     \ifenglish {\rm (#1)} \fi}
\newcommand{\jsListe}[1]%
    {\noindent\makebox[\parindent][r]{\ETAEnumLbl{#1}}%
     \hspace*{.8em}\ignorespaces}
\newcommand{\jsListb}[1]%
    {\noindent\makebox[\retraitb][r]{\ETAEnumLbl{#1}}%
     \hspace*{.3em}\ignorespaces}
\newcommand{\jsList}[1]%
    {\noindent\makebox[\retraita][r]{\ETAEnumLbl{#1}}%
     \hspace*{.5em}\ignorespaces}
\newcommand{\tha}{\jsListe{a}}
\newcommand{\thb}{\jsListe{b}}

\newcommand{\thi}{\jsListe{i}}
\newcommand{\thii}{\jsListe{ii}}
\newcommand{\thiii}{\jsListe{iii}}
\newcommand{\thiv}{\jsListe{iv}}
\newcommand{\thv}{\jsListe{v}}
\newcommand{\thvi}{\jsListe{vi}}
\newcommand{\thvii}{\jsListe{vii}}
%

%



%

\newcommand{\pointn}{\noindent \makebox[1.2em]{$\bullet$}\ignorespaces}


\newcommand{\Axio}[1]%
   {\pointn #1\hspace*{.1em}\jspointtiret\hspace*{.4em}\ignorespaces}
\newcommand{\redmatu}[1]{\scalebox{0.84}{#1}}


%
%
%
%
\newcommand{\fa}{\forall}
\newcommand{\ext}{\exists}
\newcommand{\es}{\emptyset}


\newcommand{\OL}{\overline}


\newcommand{\bk}{\mathrel{\backslash }}

\newcommand{\e}{\text{\quad}}                 
\newcommand{\ee}{\text{\qquad}}               
\newcommand{\eee}{\text{\qquad \qquad}} 
\newsavebox{\InterSymbolSpace}
\savebox{\InterSymbolSpace}{\hspace{0.125em}}
\newsavebox{\SideFormulaSpace}
\savebox{\SideFormulaSpace}{\hspace{0.2em}}
\newcommand{\msp}{\usebox{\SideFormulaSpace}} 
\newcommand{\xmd}{\usebox{\InterSymbolSpace}} 
\newcommand{\eqpnt}{\makebox[0pt][l]{\: .}}
\newcommand{\eqpntvrg}{\makebox[0pt][l]{\: ;}}
\newcommand{\eqvrg}{\makebox[0pt][l]{\: ,}}
\newcommand{\EqVrgInt}{\: , \e }

\newcommand{\EqVrg}{\: ,}
\newcommand{\EqPnt}{\: .}

\newcommand{\quantvrg}{\, , \;}
\newcommand{\quantsp}{\ee }
\newcommand{\quantsmsp}{\e }
%

\newcommand{\TextInFormula}[1]{\text{\quad #1 \quad }}
\newcommand{\et}{\TextInFormula{and}}

\newcommand{\LatinLocution}[1]{{\itshape #1}\xspace}

\newcommand{\cf}{\LatinLocution{cf.}}

\newcommand{\eg}{\LatinLocution{e.g.}}

\newcommand{\etc}{\LatinLocution{etc.}} 

\newcommand{\ie}{{that is, }}

\newcommand{\via}{via\xspace}
 

%

%

\newcommand{\mathjsu}[1]{\mathsf{#1}}

\newcommand{\Kmbb}{\mathbb{K}}
\newcommand{\Lmbb}{\mathbb{L}}

\newcommand{\Tmbb}{\mathbb{T}}

\newcommand{\UNmbb}{{\mathchoice
{\hbox{$\textstyle\rm 1\kern-0.2em I$}}%
{\hbox{$\textstyle\rm 1\kern-0.2em I$}}%
{\hbox{$\scriptstyle\rm 1\kern-0.15em I$}}%
{\hbox{$\scriptscriptstyle\rm 1\kern-0.1em I$}}%
}}
\newcommand{\Ac}{\mathcal{A}}
\newcommand{\Bc}{\mathcal{B}}

\newcommand{\Dc}{\mathcal{D}}
\newcommand{\Ec}{\mathcal{E}}

\newcommand{\Pc}{\mathcal{P}}
\newcommand{\Qc}{\mathcal{Q}}
\newcommand{\Rc}{\mathcal{R}}
\newcommand{\Sc}{\mathcal{S}}
\newcommand{\Tc}{\mathcal{T}}

%

%
\newcommand{\Ambf}{\mathbf{A}}

\newcommand{\Cmbf}{\mathbf{C}}
\newcommand{\Dmbf}{\mathbf{D}}

\newcommand{\Imbf}{\mathbf{I}}
\newcommand{\Jmbf}{\mathbf{J}}

\newcommand{\Nmbf}{\mathbf{N}}

\newcommand{\Pmbf}{\mathbf{P}}

\newcommand{\Smbf}{\mathbf{S}}
\newcommand{\Tmbf}{\mathbf{T}}
\newcommand{\Umbf}{\mathbf{U}}


%


%

\newcommand{\Emsf}{\mathsf{E}}
\newcommand{\Fmsf}{\mathsf{F}}
\newcommand{\Gmsf}{\mathsf{G}}
\newcommand{\Hmsf}{\mathsf{H}}

\newcommand{\Kmsf}{\mathsf{K}}
\newcommand{\Lmsf}{\mathsf{L}}
\newcommand{\Mmsf}{\mathsf{M}}

\newcommand{\Umsf}{\mathsf{U}}
\newcommand{\Vmsf}{\mathsf{V}}

\newcommand{\Xmsf}{\mathsf{X}}
\newcommand{\Ymsf}{\mathsf{Y}}


%







\newcommand{\etainfty}{\scalebox{0.85}{\scalebox{0.85}{+}\infty}}



%
\newlength{\ArrowDiagSize}
\setlength{\ArrowDiagSize}{6pt}
\newlength{\ArrowDiagWidth}
\setlength{\ArrowDiagWidth}{2pt}
\newpsstyle{SLDiagStyle}%
   {colsep=6ex,rowsep=5ex,nodesep=1ex,npos=.45,%
    arrows=->,linewidth=\ArrowDiagWidth,arrowsize=\ArrowDiagSize,%
        linestyle=solid,linecolor=\ArrowDiagColor}

\newenvironment{SLDiag}%
   {\psset{style=SLDiagStyle}\begin{psmatrix}}%
   {\end{psmatrix}}%
\newcommand{\CDSL}{\begin{SLDiag}}
\newcommand{\CDSLF}{\end{SLDiag}}
\newenvironment{DiagraBig}%
{\psmatrix[colsep=7ex,rowsep=6ex,arrows=->,nodesep=1ex,npos=.45]}%
{\endpsmatrix}
\newcommand{\CDB}{\begin{DiagraBig}}
\newcommand{\CDBF}{\end{DiagraBig}}
\newenvironment{DiagraSmall}%
{\psmatrix[colsep=3ex,rowsep=3ex,arrows=->,nodesep=1ex,npos=.45]}%
{\endpsmatrix}
\newcommand{\CDS}{\begin{DiagraSmall}}
\newcommand{\CDSF}{\end{DiagraSmall}}


%
%
\newcommand{\matriceuu}[1]%
    {\begin{pmatrix} #1 \end{pmatrix}}
\newcommand{\matricedd}[4]%
    {\begin{pmatrix} #1 & #2 \\ #3 & #4 \end{pmatrix}}
\newcommand{\vecteurd}[2]%
    {\begin{pmatrix} #1 \\ #2 \end{pmatrix}}
\newcommand{\ligned}[2]%
    {\begin{pmatrix} #1 & #2 \end{pmatrix}}
\newcommand{\matricett}[9]%
    {\begin{pmatrix}  #1 & #2 & #3 \\
                      #4 & #5 & #6 \\
                      #7 & #8 & #9 \end{pmatrix}}
\newcommand{\vecteurt}[3]%
    {\begin{pmatrix} #1 \\ #2 \\ #3 \end{pmatrix}}
\newcommand{\lignet}[3]%
    {\begin{pmatrix} #1 & #2 & #3 \end{pmatrix}}

\newlength{\jsWidthCol}
\setlength{\jsWidthCol}{0pt}

\newlength{\blocinterligne}
\setlength{\blocinterligne}{1.4ex}

\newlength{\blocinterligned}
\setlength{\blocinterligned}{2ex}

%

%
%
%
\newlength{\temparraycolsep}
\newlength{\longueurbloc}
\newlength{\hauteurbloc}
\newlength{\centragebloc}
\setlength{\longueurbloc}{9ex}
\setlength{\hauteurbloc}{7ex}
\setlength{\centragebloc}{-3ex}
\newlength{\longueurblc}
\newlength{\hauteurblc}
\newlength{\centrageblc}
\setlength{\longueurblc}{6.5ex}
\setlength{\hauteurblc}{5ex}
\setlength{\centrageblc}{-2ex}
\newcommand{\blocligne}[1]%
    {\framebox[\longueurbloc]{$#1$}}
\newcommand{\blocmatrice}[1]%
    {\framebox[\longueurbloc]{\rule[\centragebloc]{0mm}{\hauteurbloc}$#1$}}
\newcommand{\blocvecteur}[1]%
    {\framebox{\rule[\centragebloc]{0mm}{\hauteurbloc}$#1$}}
\newcommand{\blcligne}[1]%
    {\framebox[\longueurblc]{$#1$}}
\newcommand{\blcmatrice}[1]%
    {\framebox[\longueurblc]{\rule[\centrageblc]{0mm}{\hauteurblc}$#1$}}
\newcommand{\blcvecteur}[1]%
    {\framebox{\rule[\centrageblc]{0mm}{\hauteurblc}$#1$}}
%
\newcommand{\matriceddblvs}[4]
   {\setlength{\temparraycolsep}{\arraycolsep}%
    \setlength{\arraycolsep}{1.3pt}%
        \left (%
    \begin{array}{cc}%
                #1  & \blcligne{#2} \\
            \blcvecteur{#3} & \blcmatrice{#4}
        \end{array}%
        \right )%
    \setlength{\arraycolsep}{\temparraycolsep}%
   }%
\newcommand{\vecteurdblvs}[2]%
   {\setlength{\temparraycolsep}{\arraycolsep}%
    \setlength{\arraycolsep}{1.5pt}%
        \left (%
    \begin{array}{c}%
                #1  \\
            \blcvecteur{#2}
        \end{array}%
        \right )%
    \setlength{\arraycolsep}{\temparraycolsep}%
   }%
\newcommand{\lignedblvs}[2]%
   {\setlength{\temparraycolsep}{\arraycolsep}%
    \setlength{\arraycolsep}{1.5pt}%
        \left (%
    \begin{array}{cc}%
                #1  & \blcligne{#2}
        \end{array}%
        \right )%
    \setlength{\arraycolsep}{\temparraycolsep}%
   }%
%
\newcommand{\matricettblvs}[9]
   {\setlength{\temparraycolsep}{\arraycolsep}%
    \setlength{\arraycolsep}{1.5pt}%
        \left (%
    \begin{array}{ccc}%
                #1  & \blcligne{#2} & #3\\
            \blcvecteur{#4} & \blcmatrice{#5} & \blcvecteur{#6}\\
                #7  & \blcligne{#8} & #9\\
        \end{array}%
        \right )%
    \setlength{\arraycolsep}{\temparraycolsep}%
   }%
\newcommand{\vecteurtblvs}[3]%
   {\setlength{\temparraycolsep}{\arraycolsep}%
    \setlength{\arraycolsep}{1.5pt}%
        \left (%
    \begin{array}{c}%
                #1  \\
            \blcvecteur{#2}\\
                #3
        \end{array}%
        \right )%
    \setlength{\arraycolsep}{\temparraycolsep}%
   }%
\newcommand{\lignetblvs}[3]%
   {\setlength{\temparraycolsep}{\arraycolsep}%
    \setlength{\arraycolsep}{1.5pt}%
        \left (%
    \begin{array}{ccc}%
                #1  & \blcligne{#2} & #3
        \end{array}%
        \right )%
    \setlength{\arraycolsep}{\temparraycolsep}%
   }%
%
\newcommand{\matricettblblvs}[9]
   {\setlength{\temparraycolsep}{\arraycolsep}%
    \setlength{\arraycolsep}{1.5pt}%
        \left (%
    \begin{array}{ccc}%
                #1  & \blcligne{#2} & \blcligne{#3}\\
            \blcvecteur{#4} & \blcmatrice{#5} & \blcmatrice{#6}\\
                \blcvecteur{#7}  & \blcmatrice{#8} & \blcmatrice{#9}\\
        \end{array}%
        \right )%
    \setlength{\arraycolsep}{\temparraycolsep}%
   }%
\newcommand{\vecteurtblblvs}[3]%
   {\setlength{\temparraycolsep}{\arraycolsep}%
    \setlength{\arraycolsep}{1.5pt}%
        \left (%
    \begin{array}{c}%
                #1  \\
            \blcvecteur{#2}\\
                \blcvecteur{#3}
        \end{array}%
        \right )%
    \setlength{\arraycolsep}{\temparraycolsep}%
   }%
\newcommand{\lignetblblvs}[3]%
   {\setlength{\temparraycolsep}{\arraycolsep}%
    \setlength{\arraycolsep}{1.5pt}%
        \left (%
    \begin{array}{ccc}%
                #1  & \blcligne{#2} & \blcligne{#3}
        \end{array}%
        \right )%
    \setlength{\arraycolsep}{\temparraycolsep}%
   }%
\newcommand{\jsCard}[1]{{\|{#1}\|}}
\newcommand{\fracts}[2]{{\textstyle \frac{#1}{#2}}}
%
\newcommand{\ExtnF}[1]%
   {\overset{{\scriptscriptstyle \pmb{\smile}}}{#1}}



\newcommand{\DiffF}[1]%
   {\overset{{\scriptscriptstyle \pmb{\lor}}}{#1}}

\newcommand{\LocaF}[1]%
   {\overset{{\scriptscriptstyle \leftrightarrow}}{#1}}

%



\newcommand{\jsDist}[2][{}]%
   {\operatorname{\mathbf{d}_{#1}}\left(#2\right)}








%

%
\newcommand{\jsHaut}[1]{{\operatorname{\mathsf{h}}[#1]}}

\newcommand{\jsEnla}[1]{{\operatorname{\mathsf{lc}}(#1)}}
\newcommand{\iEB}[1]{{\operatorname{\mathsf{i}_{\EBzi}}(#1)}}

\renewcommand{\min}{{\operatornamewithlimits{\mathsf{min}}}}
\renewcommand{\max}{{\operatornamewithlimits{\mathsf{max}}}}

\renewcommand{\lim}{{\operatornamewithlimits{\mathsf{lim}}}}







%
\newcommand{\Pfrak}{\mathfrak{P}}
\newcommand{\jsPart}[1]{{\operatorname{\Pfrak}\left(#1\right)}}
\newcommand{\jspart}[1]{\jsPart{#1}} 

\newcommand{\PdAe}{\jspart{\Ae}}












\newcommand{\StruSA}[1]{\aut{#1}}
\newcommand{\ETAze}[1]{0_{#1}}
\newcommand{\ETAun}[1]{1_{#1}}
\newcommand{\zeK}{\ETAze{\K}}

\newcommand{\unK}{\ETAun{\K}}

\newcommand{\unT}{\ETAun{\T}}

%

\newcommand{\x}{\! \times \!}

%


%
\newcommand{\KQQ}{{\K}^{Q \x Q}}

%





\newcommand{\TermCst}[1]{{\operatorname{\mathsf{c}}(#1)}}

\newcommand{\SerSAnMon}[2]%
    {#1 \langle \! \langle  #2  \rangle \! \rangle }
\newcommand{\SerSAnMonD}[2]%
    {\left[#1\right] \langle \! \langle  #2  \rangle \! \rangle }
\newcommand{\SerMon}[1]%
    {\!\langle \! \langle  #1  \rangle \! \rangle }
%

\newcommand{\KAe}{\SerSAnMon{\K}{\Ae }}

\newcommand{\KA}{\KAe} 
\newcommand{\KM}{\SerSAnMon{\K}{M}}

\newcommand{\KMQQ}{\KM ^{Q \x Q}}

\newcommand{\KQQM}{\SerSAnMon{\KQQ }{M}}

%



%


%
\newcommand{\PolSAnMon}[2]%
    {{#1 \langle  #2 \rangle }}
\newcommand{\PolMon}[1]%
    {{\!\langle  #1 \rangle }}

\newcommand{\KPM}{\PolSAnMon{\K}{M}}

%

\newsavebox{\LeftBraket}
\savebox{\LeftBraket}{\scalebox{0.7 1.2}{$<$}}
\newsavebox{\RightBraket}
\savebox{\RightBraket}{\scalebox{0.7 1.2}{$>$}}
\newcommand{\bra}[1]{\hbox{}\usebox{\LeftBraket}%
                           #1\usebox{\RightBraket}\hbox{}}
%

%

\newcommand{\Rec}{\mathrm{Rec}\,}
\newcommand{\Rat}{\mathrm{Rat}\,}


%

%
%

\newcommand{\KRat}{\K \Rat}

\newcommand{\KRec}{\K \Rec}

\newcommand{\RecA}{\Rec \Ae}

\newcommand{\RatA}{{{\Rat \Ae}}}


%


%
%
\newcommand{\lmn}{{(\lambda , \mu , \nu )}}



%

%

%

%



%

%
\newcommand{\jsStar}[1]{{{#1}^{*}}}
\newcommand{\Ae}{\jsStar{A}}

\newcommand{\jsPlus}[1]{{{#1}^{+}}}
\newcommand{\Ap}{\jsPlus{A}}






%


%



%


%




%



\newcommand{\iotaK}{\iota_{\ShiftInd{K}}}

%


\newcommand{\unAe}{{1_{\Ae}}}


%

%
\newcommand{\unM}{{1_{\!M}}}
\newcommand{\unN}{{1_{N}}}

%


%
\newcommand{\RecM}{{{\Rec M}}}

\newcommand{\RatM}{{{\Rat M}}}


%

%

%

\newcommand{\compos}{\ccdot }
\newcommand{\matmul}{\mathbin{\cdot}}


\newcommand{\plusopr}{\mathbin{\mathjsu{+}}}
\newcommand{\prodopr}{\mathbin{\mathjsu{\cdot}}}




\newcommand{\dedjs}{\mathrel{%
   \, \rule{0.12ex}{1.5ex}\rule[0.69ex]{1em}{0.12ex} \,} \ee }
\newcommand{\dedtxt}{\mathrel{%
   \, \rule{0.12ex}{1.5ex}\rule[0.69ex]{1em}{0.12ex} \,}\msp\msp}




%

%







%

\newcommand{\phiikpsi}%
{{\varphi ^{-1}\! \compos        \iotaK \! \compos \! \psi }}
\newcommand{\phiiotpsi}[1]%
{{\varphi ^{-1}\! \compos        \iota _{\ShiftInd{#1}} \! \compos \! \psi }}
\newcommand{\phiintkpsi}[1]%
{{(#1\varphi ^{-1}\! \cap K) \psi }}

%
\newcommand{\jsleq}{\leqslant }
\newcommand{\jsgeq}{\geqslant }
\newcommand{\jsless}
   {\mathrel{\leqslant_{\!\!\!\!\scriptscriptstyle{/}}}}
\newcommand{\jsgrea}
   {\mathrel{\geqslant_{\!\!\!\!\scriptscriptstyle{\backslash}}}}

\newcommand{\lexiconeq}
   {\preccurlyeq_{\!\!\!\!\!\scalebox{1.8 1}{\scriptscriptstyle{\pmb{/}}}}}



%
\newcommand{\jsAutUn}[1]%
   {\mbox{$\left\langle \thinspace #1 \thinspace \right\rangle $}}
\newcommand{\aut}[1]{\jsAutUn{#1}} 
\newcommand{\auta}{\jsAutUn{Q,A,E,I,T}}

\newcommand{\autiet}{\jsAutUn{I, E ,T}}


\newcommand{\Tran}[1]{\bigl ( #1\bigr )}



\newcommand{\jsNorm}[1]{{#1_{\mathsf{nor}}}}





%
%



%

%


%

%

\newcommand{\ShiftInd}[1]{\raisebox{-0.3ex}{$\scriptstyle{#1}$}}

%

%

%

%

%


\newcommand{\actb}{\mathbin{\raisebox{0.2ex}%
                        {${\scriptscriptstyle \circ} $}}}
\newcommand{\ccdot}{\actb} 




%

%

%

\newcommand{\dpT}{{\delta_{p,T}}} 
\newcommand{\dpI}{{\delta_{p,I}}} 
\newlength{\vbh}\newlength{\vbd}\newlength{\vbt}%
\newcommand{\CompAuto}[1]%
    {%
     \settodepth{\vbd}{\mbox{$\displaystyle{#1\strut}$}}%
     \settoheight{\vbh}{\mbox{$\displaystyle{#1\strut}$}}%
     \setlength{\vbt}{\vbh}\addtolength{\vbt}{\vbd}%
     {}%
     \psline[linewidth=0.8pt]{c-c}(0,-.65\vbd)(0,.9\vbh)%
     \hspace*{0.7pt}%
     {#1}%
     \kern0.8pt%
     \psline[linewidth=0.8pt]{c-c}(0,-.65\vbd)(0,.9\vbh)%
     }%



\newcommand{\mBC}[2]{{\mathjsu{E}}^{(#1)}(#2)}
\newcommand{\mMNY}[2]{{\mathjsu{M}}^{(#1)}_{#2}}
\newcommand{\bornedeuxlignes}[2]%
{\mbox{$
\begin{array}{c}{\scriptstyle #1}\\ {\scriptstyle #2} \end{array}
       $}}

%


\newcommand{\pathaut}[2]{\underset{#2}{\path{#1}}}

\newcommand{\RatE}{\mathjsu{RatE}\,}

\newcommand{\RatEA}{\mathjsu{RatE}\,\Ae}
\newcommand{\RatExp}[2]{{#1\,\mathjsu{RatE}\,#2}}
\newcommand{\KRA}{\RatExp{\K}{\Ae}}
\newcommand{\KREM}{\RatExp{\K}{M}}

\newcommand{\Dpth}[1]{\operatorname{\mathsf{d}}(#1)}
\newcommand{\Depth}[1]{\Dpth{#1}} 

%
\newcommand{\ExpDer}[2][a]%
    {\operatorname{\frac{\partial}{\partial \mbox{$#1$}}}#2}
\newcommand{\ExpDerP}[2][a]%
    {\operatorname{\frac{\partial}{\partial\mbox{$#1$}}}\left(#2\right)}
%
%
\newcommand{\ExpDerr}[2][a]%
    {\operatorname{\frac{\partial_{\mathrm{R}}}{\partial \mbox{$#1$}}}#2}
\newcommand{\ExpDerB}[2][a]%
   {\operatorname{\frac{\partial_\mathsf{b}}{\partial \mbox{$#1$}}}#2}
\newcommand{\ExpDerBP}[2][a]%
   {\operatorname{\frac{\partial_\mathsf{b}}{\partial \mbox{$#1$}}}\left(#2\right)}
\newcommand{\TerScale}{0.9}
\newcommand{\DTer}{\scalebox{\TerScale}{\mathrm{D}}}
\newcommand{\TDTer}{\scalebox{\TerScale}{\mathrm{TD}}}
\newcommand{\DerTer}[1]{{\DTer}\left(#1\right)}

\newcommand{\TruDerTer}[1]{{\TDTer}\left(#1\right)}


%

%

%

\newcommand{\CompExpr}[1]{\CompAuto{#1}}
\newcommand{\Kadd}{\oplus}
\newcommand{\Ksum}{\bigoplus}
\newcommand{\LitLength}{\operatorname{\ell}}
\newcommand{\LitL}[1]{\LitLength (#1)}




%
\newtheorem{property}[theorem]{Property}
\newcommand{\corol}[2][{}]{Corollary~\ref{#1:cor:#2}}%
\newcommand{\defin}[2][{}]{Definition~\ref{#1:def:#2}}%
\newcommand{\equat}[2][{}]{Equation~(\ref{#1:equ:#2})}%
\newcommand{\equnm}[2][{}]{(\ref{#1:equ:#2})\xspace}%
\newcommand{\equnmnosp}[2][{}]{(\ref{#1:equ:#2})}%
\newcommand{\exemp}[2][{}]{Example~\ref{#1:exa:#2}}%
%
\newcommand{\figur}[2][{}]{Figure~\ref{#1:fig:#2}}%
\newcommand{\lemme}[2][{}]{Lemma~\ref{#1:lem:#2}}%
\newcommand{\propo}[2][{}]{Proposition~\ref{#1:pro:#2}}%
\newcommand{\prpri}[2][{}]{Property~\ref{#1:pty:#2}}%
\newcommand{\theor}[2][{}]{Theorem~\ref{#1:the:#2}}%
\newcommand{\secti}[2][{}]{Section~\ref{#1:sec:#2}}%
\newcommand{\sbsct}[2][{}]{Section~\ref{#1:ssc:#2}}%
\newcommand{\CTchp}[1]{Chapter~\ref{#1:chp:#1}}%
%
\iflong\else
\renewcommand{\secti}[2][{}]{Sec.~\ref{#1:sec:#2}}%
\renewcommand{\sbsct}[2][{}]{Sec.~\ref{#1:ssc:#2}}%
\renewcommand{\CTchp}[1]{Chap.~\ref{#1:chp:#1}}%
\fi 
\newcommand{\NOTA}%
   {\medskip\noindent\textnormal{\textbf{Notation}}
    \hspace*{0.5em}\ignorespaces}

\newcommand{\EOP}{{\strut}\hfill\qedsymbol}%
\newcommand{\PushLine}{~\hfill~}
%
\newlength{\retraita}\setlength{\retraita}{1.5\parindent}
\newlength{\listespa}\setlength{\listespa}{.8em}
\newcommand{\EnumLbl}[1]{{\rm (#1)}}
\renewcommand{\jsListe}[1]%
    {\noindent\makebox[\retraita][r]{\EnumLbl{#1}}%
     \hspace*{\listespa}\ignorespaces}

\newcommand{\dex}[1]{\xmd(#1)}
\newcommand{\NeM}[1]%
   {\ifdraft%
    \marginpar[\begin{flushright}%
               {\sl {\scriptsize #1}}%
               \end{flushright}]%
              {\begin{flushleft}%
               {\sl {\scriptsize #1}}%
               \end{flushleft}}%
	\fi}%

%

\renewcommand\phi\varphi
\renewcommand\epsilon\varepsilon
\newcommand{\B}{\mathbb{B}}
\newcommand{\K}{\Kmbb}
\newcommand{\Lb}{\Lmbb}
\newcommand{\T}{\Tmbb}

\newcommand{\Ed}{\Emsf}
\newcommand{\Fd}{\Fmsf}
\newcommand{\Gd}{\Gmsf}
\newcommand{\Hd}{\Hmsf}

\newcommand{\Kd}{\Kmsf}
\newcommand{\Ld}{\Lmsf}
\newcommand{\Md}{\Mmsf}

\newcommand{\Ud}{\Umsf}
\newcommand{\Vd}{\Vmsf}

\newcommand{\Xd}{\Xmsf}
\newcommand{\Yd}{\Ymsf}

\newcommand{\zed}{\mathsf{0}}
\newcommand{\und}{\mathsf{1}}
\newcommand{\IdR}[1]{\mathbf{#1}}
\newcommand{\IdRAs}{\IdR{A}}

\newcommand{\IdRC}{\IdR{C}}
\newcommand{\IdRD}{\IdR{D}}
\newcommand{\IdRI}{\IdR{I}}

\newcommand{\IdRJ}{\IdR{J}}

\newcommand{\IdRN}{\IdR{N}}
\newcommand{\IdRM}{\IdR{P}}
\newcommand{\IdRP}{\IdR{P}}
\newcommand{\IdRS}{\IdR{S}}

\newcommand{\IdRT}{\IdR{T}}
\newcommand{\IdRA}{\IdR{U}}
\newcommand{\IdRU}{\IdR{U}}



\newcommand{\autplus}{+}
\newcommand{\autprod}{\cdot}
\newcommand{\autstarsymb}{\strut^{*}}
\newcommand{\autstar}[1]{#1^{*}}
\newcommand{\nsp}{\hspace*{0.2em}} 
\newcommand{\cupsp}{\nsp \cup \nsp}
\newcommand{\cupssp}{\xmd \cup \xmd}
\newcommand{\setvrg}{\xmd, \nsp}

\newcommand{\RatEM}{\RatE M}
\newcommand{\KKREA}{\PolSAnMon{\K}{\KRA}}%
\newcommand{\LiteLgth}[1]{\operatorname{\ell}\left(#1\right)}
\newcommand{\autshrnk}[3]{\aut{\hspace{-#2em}#1\hspace{-#3em}}}
\newcommand{\autsk}[1]{\autshrnk{#1}{0.6}{0.4}}
\newcommand{\ExprLine}[1]{\OL{#1}}
\newcommand{\Edl}{\ExprLine{\Ed}}
\newcommand{\pyaexponent}[1]%
    {\raisebox{.3ex}{\hspace{0.05em}$\scalebox{.4}{#1}$}}

\newcommand{\squa}{\scriptscriptstyle\square}

\newcommand{\rn}[1]{{#1}^{\bullet}}

\newcommand{\cb}[1]{{#1}^{\pyaexponent{\square}}}

\newcommand{\Ern}{\rn{\Ed}}
\newcommand{\Frn}{\rn{\Fd}}
\newcommand{\Grn}{\rn{\Gd}}

\newcommand{\Ecb}{\cb{\Ed}}
\newcommand{\Fcb}{\cb{\Fd}}
\newcommand{\Gcb}{\cb{\Gd}}
\newcommand{\DTAut}[1]{\Ac_{#1}}
\renewcommand{\jsCard}[1]{\operatorname{\mathsf{Card}}\left(#1\right)}
\renewcommand{\EqVrgInt}{\, , \hspace{.8em}}

\newcommand{\MNYi}{\text{MN-Y}}

\newcommand{\MNY}{\MNYi\xspace}
\newcommand{\EBz}{\xspace}
\newcommand{\EBzi}{}
\newcommand{\AtEs}{\Gamma}
\newcommand{\EtAs}{\Delta}
\newcommand{\atealgo}[3]{{\mathbf{#1}}_{#2}\!\left(#3\right)\xspace}
\newcommand{\sem}[2][\omega]{\atealgo{B}{#1}{#2}}
\newcommand{\srm}[2][\omega]{\atealgo{E}{#1}{#2}}
\newcommand{\mny}[2][\omega]{\atealgo{M}{#1}{#2}}
\newcommand{\rcm}[2][\tau]{\atealgo{C}{#1}{#2}}
\newcommand{\DerT}[1]{\Ac_{#1}}
\newcommand{\Stan}[1]{\Sc_{#1}}

\newcommand{\Thom}[1]{\Tc_{#1}}
\newcommand{\One}[1]{{#1}^{\mathsf{1}}}
\newcommand{\Zero}[1]{{#1}^{\mathsf{0}}}
\newcommand{\StanAuto}[4]%
   {\aut{\redmatu{\ligned{1}{0}},%
         \redmatu{\matricedd{0}{#1}{0}{#2}},%
         \redmatu{\vecteurd{#3}{#4}}}}
\newcommand{\StanAutoSP}[9]%
   {\aut{\redmatu{\lignet{1}{0}{0}},%
         \redmatu{\matricett{0}{#1}{#2}%
                            {0}{#3}{#4}%
                            {0}{#5}{#6}},%
         \redmatu{\vecteurt{#7}{#8}{#9}}}}
%
\newcommand{\ILC}[2][\omega]{\operatorname{\mathfrak{I}_{#1}}\!\left(#2\right)\xspace}

\makeatletter
 \@input{AE-refcro.aux}
\makeatother
   %
\setlength{\retraita}{1.3\parindent}
\setlength{\listespa}{.4em}
\SmallPicture
\ifdraft\ShowFrame\fi 
\newlength{\lga}\newlength{\lgb}\newlength{\lgc}
\begin{document}
\JSLandscape
\markboth{J.~Sakarovitch}{Automata and rational expressions}
\title{%
\longonly{\vspace*{-4ex}}
Automata and rational expressions%
\longonly{\vspace*{-2ex}}} 
\author%
  {Jacques Sakarovitch}
\ifhdbk
\address{Laboratoire Traitement et Communication de l'Information\\
  CNRS and T\'el\'ecom ParisTech\\
\ifcheat
email:\,\url{sakarovitch@telecom-paristech.fr}
\else
email:\,\url{sakarovitch@telecom-paristech.fr}\\[4mm]
\ifdraft{\upshape{\dateandtime}}\fi
\fi}
\else
\address{LTCI,\ CNRS and T\'el\'ecom ParisTech}
\fi

\maketitle\label{chapter2} 

\ifhdbk\setcounter{page}{35}\fi

\ArXivonly{\vspace*{-6ex}\medskipneg\medskipneg\medskipneg}
\hdbkonly{%
\ifcheat\medskipneg\fi%
\ifcheat\medskipneg\fi%
\ifcheat\medskipneg\fi%
}%
\begin{classification}
  68Q45
\end{classification}

\begin{keywords}
  Finite automata, regular expressions.
  Rational sets, recognisable sets.
\end{keywords}

\longshort{%
\enlargethispage*{7ex}
This text is an extended version of the chapter 
`Automata and rational expressions' in the 
\emph{AutoMathA Handbook}~\cite{Pin15Ed} that will appear soon,  
published by the European Science Foundation
and 
edited by Jean-{\'E}ric Pin.

It contains not only proofs, examples, and remarks that had been 
discarded due to the severe space constraints induced by the edition 
of a handbook of very large scope, but also developments that were 
not included as they did not seem to belong to the main stream of 
the subject. 
For that reason, the numbering of theorems, propositions, 
defintions, \etc may differ in the two versions, even if the general 
outline is the same.

\vspace*{-2ex}
\localtableofcontents

}{%
\ifcheat\medskipneg\fi 
\ifcheat\medskipneg\fi
\localtableofcontents
\ifcheat\clearpage\fi
}

\section{A new look at Kleene's theorem}%
\label{:sec:new-loo}%

Not very many results in computer science are recognised 
as being as basic and fundamental as \emph{Kleene's theorem}.
\index{Kleene's theorem}%
It was originally stated  as the equality of two 
sets of objects, and is still so, even if the names of the objects 
have changed --- see for instance \theor[JEP]{Kleene} in 
\CTchp{JEP} 
\shortlong{of this handbook}{of~\cite{Pin15Ed}}.
This chapter proposes a new look at this statement, in two ways.
First, we explain how Kleene's theorem can be seen as the 
conjunction of \emph{two results} with distinct hypotheses and scopes.
Second, we express the first of these two results as the  
\emph{description of algorithms} that relate the symbolic descriptions 
of the objects rather than as 
the \emph{equality} of two sets.

\ifcheat\smallskipneg\fi%
\paragraph{A two step Kleene's theorem}

In Kleene's theorem, we first distinguish a step that consists in proving 
that the \emph{set of regular} (or \emph{rational}) \emph{languages} 
is equal to the \emph{set of languages accepted by finite automata} 
--- a set which we denote by~$\Rat\Ae$.
This seems already to be Kleene's theorem itself and is indeed what 
S.~C.~Kleene established in~\cite{Klee56hb}.
But it is not, if one considers --- as we shall do here --- that this 
equality merely 
states the equality of the expressive power of 
rational expressions and that of finite labelled directed graphs.
This is universally true.
It holds independently of the structure in which the labels of the 
automata or the atoms of the expressions are taken, in any monoids or 
even in the algebra of polynomials under certain hypotheses.

By the virtue of the numerous properties of finite automata over
finitely generated (f.g., for short) free monoids: being apt to
determinisation for instance, the family of languages accepted by such
automata is endowed with many properties as well: being closed under
complementation for instance.
These properties are extraneous to the definition of the 
languages by expressions, and then --- by the former result --- to 
the definition by automata.
It is then justified, especially in view of the generalisation of 
expressions and automata to other monoids and even to other 
structures, to set up a definition of a new family of languages by
new means, that  
will extend in the case of other structures, these properties of the
languages over f.g. free monoids.
It turns out that the adequate definition will be given in terms of 
\emph{representations by matrices of finite dimension};
we shall call the languages defined in that way the 
\emph{recognisable languages} and we shall denote their family 
by~$\Rec\Ae$.
The second step of Kleene's theorem consists then in establishing that
finite automata are equivalent to matrix representations of finite
dimension under the hypothesis that the labels of automata are taken
in f.g. free monoids.

These two steps correspond to two different concepts: \emph{rationality} 
for the first one, and \emph{recognisability} for the second one.
This chapter focusses on rationality and on the first step, namely the
equivalence of expressiveness of finite automata and rational
expressions.
For sake of completeness however, we sketch in \secti{rat-rec} how 
one gets from rational sets to recognisable sets in the case of free 
monoids and in \secti{cha-mon}, we see that the same construction fails in 
non-free monoids and explore what remains true.

\ifcheat\smallskipneg\fi%
\paragraph{The languages and their representation}

Formal languages or, in the weighted variant, formal power series, 
are potentially \emph{infinite objects}.
We are only able to compute \emph{finite} ones; here, expressions 
that denote, or automata that accept, languages or series.
Hopefully, these expressions and automata are faithful description of
the languages or  series they stand for, all the more effective
that one can take advantage of this double view.

In order to prove that the family of languages accepted by finite 
automata coincide with that of the languages denoted by rational 
expressions we proceed
by establishing a double inclusion.
As sketched in \figur{phi-psi-map}\footnote{%
   $\jsPart{\Ae}$ denotes the power set of~$\Ae$, \ie the set of all 
   languages over~$\Ae$.}, 
given an automaton~$\Ac$ that
accepts a language~$K$, we describe algorithms which compute
from~$\Ac$ an expression~$\Fd$ that denotes the same language~$K$ ---
I call such algorithms a \emph{$\AtEs$-map}. 
Conversely, given an expression~$\Ed$ that denotes a language~$L$, we 
describe algorithms that compute from~$\Ed$ an automaton~$\Bc$ that 
accepts the same language~$L$ --- I call such algorithms a \emph{$\EtAs$-map}.

Most of the works devoted to the conversion between automata and 
expressions address the problem 
of the \emph{complexity} of the computation of these $\AtEs$- and $\EtAs$-maps.
I have chosen to study here the maps for themselves, how the results of 
different maps applied to a given argument are related, rather than to 
describe the way they are actually computed.
The $\AtEs$-maps are considered in \secti{aut-exp}, the $\EtAs$-maps 
in \secti{exp-aut}. 

\begin{figure}[htbp]
\begin{center}
\SetNodeSep{4pt}%
\setlength{\lga}{3.5cm}%
\setlength{\SideLabelDist}{0.22cm}%
\VCDraw[.5]{%
\begin{VCPicture}{(-14,-1)(14,12)}
    \psset{arrows=-,linestyle=solid}
    \ChgStateLabelScale{1}%
\VCPut{(0,9)}%
   {%
    \psframe[linewidth=1.2pt](-8.5,-1)(8.5,3)
	\psellipse[linecolor=lightgray,linewidth=3pt,%
	           fillstyle=solid,fillcolor=lightgray](0,1)(4.8,1.5)%
    \DotState[160]{(3,1)}{A2}{K}%
    \DotState[140]{(-2,0.5)}{E2}{L}%
    \VCPutLabel[.9]{(-11,0)}{A}{\jsPart{\Ae}}
    \VCPutLabel[.9]{(6.8,0.5)}{C}{\Rat\Ae}
    }%
\VCPut{(0,2)}%
{%
\VCPut{(-\lga,0)}%
   {\VCPut[30]{(0,0)}{\psellipse[linewidth=1.5pt](0,0)(6,3)}%
    \DotState{(-2,0)}{E1}{\Ed}%
    \DotState[135]{(4,2)}{E3}{\Fd}%
    \VCPutLabel[1]{(-2,-3)}{B}{\RatE\Ae}
	}%
\VCPut{(\lga,0)}%
   {\VCPut[-30]{(0,0)}{\psellipse[linewidth=1.5pt](0,0)(6,3)}%
    \DotState[-60]{(3,0)}{A1}{\Ac}%
    \DotState[-20]{(-1,-2)}{A3}{\Bc}%
    \VCPutLabel[1]{(3,-3)}{C}{\mathsf{Aut}\,\Ae}%
	}%
}%
\ArcR{A1}{A2}{}%
\ArcL{E3}{A2}{}%
\ArcL{E1}{E2}{}%
\ArcR{A3}{E2}{}%
\RstState\RstEdge
\EdgeLineDouble\ChgEdgeLineWidth{2}\ChgEdgeLabelScale{1.4}%
\ArcR[.4]{A1}{E3}{\AtEs}%
\ArcL[.6]{E1}{A3}{\EtAs}%
\RstEdge%
  \end{VCPicture}}%
  \end{center}
  \caption{The $\AtEs$- and $\EtAs$-maps}
  \medskipneg
  \label{:fig:phi-psi-map}%
\end{figure}

\ifcheat\smallskipneg\fi%
\ifcheat\smallskipneg\fi%
\paragraph{The path to generalisation}

The main benefit of splitting 
Kleene's theorem into two steps is to bring to light that the first 
one is a statement whose scope extends much beyond 
languages.
It is first generalised to \emph{subsets} of 
\emph{arbitrary monoids} and then, with some precaution, to 
\emph{subsets with multiplicity}, that is, to (formal power) 
\emph{series}.
This latter extension of the realm of Kleene's theorem is a matter 
for the same `splitting' and distinction between series 
on arbitrary monoids and series on f.g. free monoids.

It would thus be possible to first set up the convenient and most 
general structure and then state and prove Kleene's theorem in that 
framework.
My experience, however, is that many readers tend to be repelled and 
flee when confronted with statements outside the classical realm of 
\emph{words}, \emph{languages}, and \emph{free monoids}.
This is where I stay in the first three sections of this 
chapter.
The only difference with the classical exposition will be in the 
terminology and notation that will be carefully chosen or coined so 
that they will be ready for the generalisation to arbitrary monoids 
in \secti{cha-mon} and to series in \secti{int-mul}.

Notation and definitions given in \CTchp{JEP} are used in this
chapter without comment when they are refered to under the same
form and with the exact same meaning.
%

\ifcheat\smallskipneg\fi%
\ifcheat\smallskipneg\fi%
\ifcheat\smallskipneg\fi%
\section{Rationality and recognisability}
\label{:sec:rat-rec}%

We first introduce here a precise notion of \emph{rational expression},
and revisit the definition of finite automata in order to fix
our notation and to state, under the form that is studied here
and eventually generalised later, what we have called above the `first
step of Kleene's theorem' and which we now refer to as the
\emph{Fundamental theorem of finite automata}.
\index{Fundamental theorem!of finite automata}%
Second, we state and prove `the second step' of Kleene's theorem
in order to make the scope and 
essence of the first step clearer by contrast and difference. 

\ifcheat\smallskipneg\fi%
\subsection{Rational expressions}
\label{:sec:rat-exp-fm}

\shortlong{%
The set of \emph{rational languages} of~$\Ae$, denoted
by~$\RatA$, is 
defined as in \CTchp{JEP}:
it is the smallest subset 
of~$\jsPart{\Ae}$ which contains the finite sets (including the empty 
set) and is closed under union, product, and star. 
}{%
The set of \emph{rational languages} of~$\Ae$, denoted
by~$\RatA$
is the smallest subset 
of~$\jsPart{\Ae}$ which contains the finite sets (including the empty 
set) and is closed under union, product, and star. 
}%
\index{rational!language}%
\index{language!rational --}%
A precise structure-revealing specification for building elements of 
this family can be given by \emph{rational expressions}.

\ifcheat\smallskipneg\fi%
\begin{definition}
    \label{:def:rat-exp-fm}%
A \emph{rational expression over~$\Ae$} 
\index{rational!expression|see{expression}}%
\index{expression}%
\index{expression!rational --}%
is a well-formed formula built inductively from the 
\emph{constants}~$\zed$ and~$\und$ 
and the letters~$a$ in~$A$ as \emph{atomic formulas}, 
using
two binary operators~$\autplus$ and~$\autprod$ and one unary 
operator~$\autstarsymb$:
if~$\Ed$ and~$\Fd$ are rational expressions, so are
$(\Ed\autplus\Fd)$,
$(\Ed\autprod\Fd)$, and
$(\autstar{\Ed})$.
We denote by~$\RatEA$ the set of rational expressions over~$\Ae$ 
and often write \emph{expression} for \emph{rational expression}. 
(As in~\cite{Saka03}, `rational expression' is preferred to the more 
traditional \emph{regular expression} for several reasons and in particular as 
\index{regular expression|see{expression}}%
\index{expression!regular --}%
it will be used in the weighted case as well, see \secti{int-mul}.) 
\end{definition}

\ifcheat\smallskipneg\fi%
With every expression~$\Ed$ in~$\RatEA$ is associated a 
language of~$\Ae$,
which is called \emph{the language denoted by~$\Ed$} and we 
\index{language!denoted|see{expression}}%
\index{denoted|see{expression}}%
\index{expression!language denoted by --}%
write\footnote{%
   The notation~$L(\Ed)$ is more common,
   but~$\msp\CompExpr{\Ed}\msp$ is simpler
   and more appropriate when 
   dealing with expressions over an arbitrary monoid or with weighted 
   expressions.}
it as~$\CompExpr{\Ed}\msp$. 
The language~$\CompExpr{\Ed}$ is inductively defined by\footnote{%
   The empty word of~$\Ae$ is denoted by~$\unAe$.} 
$\msp\CompExpr{\zed}=\es\msp$,
$\msp\CompExpr{\und}=\{\unAe\!\}\msp$,
$\msp\CompExpr{a}=\{a\}\msp$ for every~$a$ in~$A$,
$\msp\CompExpr{(\Ed\autplus\Fd)}= {}\CompExpr{\Ed}\cup\xmd\CompExpr{\Fd}\msp$,
$\msp\CompExpr{(\Ed\autprod\Fd)}={}\CompExpr{\Ed}\msp\CompExpr{\Fd}\msp$, and
$\msp\CompExpr{(\autstar{\Ed})}=\{\CompExpr{\Ed}\}^{*}\msp$.
Two expressions are \emph{equivalent} if they denote the same 
\index{equivalent|see{expression}}%
\index{expression!equivalent --s}%
language. 

\ifcheat\smallskipneg\fi%
\begin{proposition}
    \label{:pro:rat-exp-fm}%
A language is rational if and only if it is denoted 
by an expression.
\end{proposition}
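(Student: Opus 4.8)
The plan is to prove the two implications separately, each one matching a different half of the definition of~$\RatA$. Recall that~$\RatA$ is the \emph{smallest} subfamily of~$\jsPart{\Ae}$ that contains the finite sets and is closed under union, product, and star; the two directions of the proposition will exploit these two features in turn.

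First I would show that every language denoted by an expression is rational, \ie that $\CompExpr{\Ed}\in\RatA$ for all~$\Ed$ in~$\RatEA$, by structural induction on~$\Ed$. The atomic cases are immediate: $\CompExpr{\zed}=\es$, $\CompExpr{\und}=\{\unAe\!\}$, and $\CompExpr{a}=\{a\}$ for~$a$ in~$A$ are all finite, hence in~$\RatA$. For the inductive step, the defining equations $\CompExpr{(\Ed\autplus\Fd)}=\CompExpr{\Ed}\cup\CompExpr{\Fd}$, $\CompExpr{(\Ed\autprod\Fd)}=\CompExpr{\Ed}\,\CompExpr{\Fd}$, and $\CompExpr{(\autstar{\Ed})}=\{\CompExpr{\Ed}\}^{*}$ exhibit each denoted language as the union, product, or star of languages that are rational by the induction hypothesis; closure of~$\RatA$ under these three operations then gives that the result is again rational.

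For the converse, I would let~$D$ denote the family of all languages over~$\Ae$ that are denoted by some expression, and check that~$D$ satisfies the three defining properties of~$\RatA$; by minimality this yields $\RatA\subseteq D$. Closure is purely syntactic: if $K=\CompExpr{\Ed}$ and $L=\CompExpr{\Fd}$, then $K\cup L=\CompExpr{(\Ed\autplus\Fd)}$, $KL=\CompExpr{(\Ed\autprod\Fd)}$, and $K^{*}=\CompExpr{(\autstar{\Ed})}$, so~$D$ is closed under union, product, and star. The one point requiring an argument is that every finite language lies in~$D$: the empty set is~$\CompExpr{\zed}$; a single word $w=a_{1}a_{2}\cdots a_{k}$ (with the $a_{i}$ in~$A$) is denoted by the expression $a_{1}\autprod a_{2}\autprod\cdots\autprod a_{k}$, built by a short induction on the length~$k$, with the empty word~$\unAe$ handled by~$\und$; and a finite language $\{w_{1},\ldots,w_{n}\}$ is denoted by the sum of the expressions denoting the individual~$w_{i}$, using closure under union (and~$\zed$ when $n=0$). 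Hence every finite set belongs to~$D$.

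Combining $\RatA\subseteq D$ with the first inclusion $D\subseteq\RatA$ gives $D=\RatA$, which is exactly the assertion that a language is rational if and only if it is denoted by an expression. There is no deep obstacle here; the only place demanding care is the elementary verification that each individual word admits an expression denoting precisely the corresponding singleton, which rests on a routine induction on word length and a fixed convention for the $k$-fold product.
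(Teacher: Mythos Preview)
Your argument is correct and is exactly the standard proof: structural induction on~$\Ed$ for one inclusion, and verifying that the family of denotable languages is a closed class containing the finite sets for the other, then invoking the minimality clause in the definition of~$\RatA$.

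The paper itself does not spell out a proof of this proposition; it is stated as an immediate consequence of the definitions (rational expressions being introduced precisely as a syntax whose semantics mirrors the inductive generation of~$\RatA$). Your write-up simply makes explicit what the paper leaves implicit, including the small but necessary check that every finite language --- in particular every singleton~$\{w\}$ --- is denoted by some expression, which is the only point where a moment's care is needed.
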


Like any formula, an expression~$\Ed$ is canonically
represented by a tree, which is called \emph{the syntactic tree} of~$\Ed$.
Let us denote by~$\LiteLgth{\Ed}$ the \emph{literal length} of the
expression~$\Ed$ (\ie the number of all occurences of letters from $A$
in~$\Ed$)
and by~$\Depth{\Ed}$ the \emph{depth} of~$\Ed$ which is
defined as the depth --- or height\footnote{%
   We rather not use \emph{height} because of the possible confusion 
   with the \emph{star height}, \cf \secti{exp-aut}.}%
--- of the syntactic tree of the expression.
\index{literal length|see{expression}}%
\index{expression!literal length of --}%
\index{depth|see{expression}}%
\index{expression!depth of --}%

The classical precedence relation between operators: 
`$\msp\autstarsymb\xmd>\xmd\autprod\xmd>\xmd\autplus\msp$'
allows to save parentheses in the writing of expressions:
for instance,
$\msp\Ed\autplus\Fd\autprod\autstar{\Gd}\msp$
is an unambiguous writing for the expression
$\msp(\Ed\autplus(\Fd\autprod(\autstar{\Gd})))\msp$.
But one should be aware that, for instance,
$\msp(\Ed\autprod(\Fd\autprod\Gd))\msp$ and
$\msp((\Ed\autprod\Fd)\autprod\Gd)\msp$ are two equivalent
\emph{but distinct} expressions.
In particular, the \emph{derivation} that we define 
at \secti{exp-aut} yields different results on these two 
expressions. 

In the sequel, any operator defined on expressions is implicitely 
extended additively to sets of expressions.
For instance, it holds:
\begin{equation}
\textstyle{\fa X \subseteq \RatEA \quantsp
\CompExpr{X}=\bigcup_{\Ed\in X}{}\CompExpr{\Ed}
\eqpnt}
\ee\eee
\notag
\end{equation}

\begin{definition}
\label{:def:con-ter-exp}%
The \emph{constant term} of an expression~$\Ed$ over~$\Ae$, 
\index{constant term!of an expression}%
\index{expression!constant term of --}%
written $\TermCst{\Ed}$, is the Boolean value,
inductively defined and computed
using the following equations:

\begin{gather}
\TermCst{\zed} = 0 \EqVrgInt
\TermCst{\und} = 1 \EqVrgInt  
\forall a \in A\quantsmsp \TermCst{a} =0 
\eqvrg 
\notag
\\
\TermCst{\Fd \autplus \Gd}= \TermCst{\Fd} + \TermCst{\Gd}\EqVrgInt
\TermCst{\Fd \autprod \Gd} = \TermCst{\Fd}  \TermCst{\Gd}\EqVrgInt
\TermCst{\autstar{\Fd}}=1
\eqpnt
\notag
\end{gather}
\end{definition}

The \emph{constant term} of a language~$L$ of~$\Ae$ is the 
Boolean value~$\TermCst{L}$ that is equal to~$\und$
if and only if~$\msp\unAe$ belongs to~$L$.
By induction on~$\Depth{\Ed}$, 
\index{constant term!of a language}%
\index{language!constant term of --}%
$\msp \TermCst{\Ed}=\TermCst{\CompExpr{\Ed}}\msp$ holds.

\ifcheat\smallskipneg\fi%
\ifcheat\miniskipneg\fi%
\subsection{Finite automata}
\label{:sec:fin-aut-fm}%

We denote an \emph{automaton over~$\Ae$} by 
$\msp \Ac= \aut{Q,A,E,I,T}\msp$
where\shortonly{, as in \CTchp{JEP},}
$Q$ is the \emph{set of states}, and is also called the \emph{dimension} 
of~$\Ac$, $I$ and $T$ are subsets of~$Q$, and 
$\msp E\subseteq Q\x A\x Q\msp$ is 
the set of transitions labelled by letters of~$A$.
The automaton~$\Ac$ is \emph{finite} if~$E$ is finite,
\index{automaton}%
\index{automaton!dimension of --}%
hence, if~$A$ is finite, if and only if (the useful part of)~$Q$ 
is finite.

A \emph{computation} in~$\Ac$ from state~$p$ to state~$q$ with 
label~$w$ is denoted by
$\msp p\pathaut{w}{\Ac}q\msp$.
The \emph{language accepted}\footnote{%
   I prefer not to speak of the language `recognised' by an 
   automaton, and\longshort{ I}{, in contrast with \CTchp{JEP}, I}
   would not say that a language is 
   `recognisable' when accepted by a finite automaton, in
   order to have a consistent terminology when generalising 
   automata to arbitrary monoids.}
by~$\Ac$, also called the 
\emph{behaviour} of~$\Ac$, denoted by~$\CompAuto{\Ac}$, is the set of 
\index{language!accepted|see{automaton}}%
\index{accepted|see{automaton}}%
\index{automaton!language accepted by --}%
\index{language!recognised|see{automaton}}%
\index{recognised|see{automaton}}%
\index{automaton!language recognised by --}%
\index{behaviour|see{automaton}}%
\index{automaton!behaviour of --}%
words accepted by~$\Ac$, that is, the set of labels of 
\emph{successful computations}: 
\shortlong{\PushLine
$\msp \CompAuto{\Ac}= 
\Defi{w\in\Ae}{\ext i\in I,\ext t\in T\quantsmsp i\pathaut{w}{\Ac}t}\msp$.
\PushLine}%
{\begin{equation}
\CompAuto{\Ac}= 
\Defi{w\in\Ae}{\ext i\in I,\ext t\in T\quantsmsp i\pathaut{w}{\Ac}t}
\eqpnt
\notag
\end{equation}}

\noindent 
The first step of Kleene's theorem, which we call \emph{Fundamental 
\index{Fundamental theorem!of finite automata}%
theorem of finite automata} then reads as follows.
   
\ifcheat\smallskipneg\fi%
\begin{theorem}
    \label{:the:fun-aut-fm}%
A language of~$\Ae$ is rational if and only if it is 
the behaviour of a finite automaton over~$\Ae$.
\end{theorem}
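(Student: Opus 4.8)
The plan is to prove the two implications separately, exhibiting in each direction an explicit procedure --- the very $\EtAs$- and $\AtEs$-maps announced in the introduction. By \propo{rat-exp-fm}, a language is rational if and only if it is denoted by some expression in~$\RatEA$; it therefore suffices to show that the languages denoted by expressions coincide with the behaviours of finite automata over~$\Ae$.

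For the implication ``rational $\Rightarrow$ behaviour of a finite automaton'' (the $\EtAs$-direction), I would argue by structural induction on an expression~$\Ed$ denoting the given language $L=\CompExpr{\Ed}$. The atomic cases are immediate: $\zed$, $\und$, and each letter~$a$ in~$A$ is the behaviour of a one- or two-state automaton with the obvious transitions. For the induction step it is enough to check that the class of languages accepted by finite automata is closed under the three rational operations. Union is obtained by the disjoint juxtaposition of two automata, uniting their initial and their final sets; product by linking the final states of the first automaton to the successors of the initial states of the second, with due care for the empty word governed by the constant terms $\TermCst{\Ed}$ and $\TermCst{\Fd}$; and star by looping final states back onto the successors of initial ones and adjoining, if necessary, acceptance of~$\unAe$. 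Each construction keeps the state set finite, so the resulting automaton is finite and accepts exactly $\CompExpr{(\Ed\autplus\Fd)}$, $\CompExpr{(\Ed\autprod\Fd)}$, or $\CompExpr{(\autstar{\Ed})}$ respectively.

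For the converse ``behaviour of a finite automaton $\Rightarrow$ rational'' (the $\AtEs$-direction), given a finite $\Ac=\aut{Q,A,E,I,T}$ I would associate with each state~$q$ the language $L_q$ of labels of computations from~$q$ to~$T$. These satisfy the linear system
\begin{equation}
L_q \;=\; \Bigl(\,\bigcup_{(q,a,q')\in E} a\, L_{q'}\Bigr)\;\cup\; T_q\EqVrg
\notag
\end{equation}
where $T_q=\{\unAe\}$ if $q\in T$ and $T_q=\es$ otherwise. Since the coefficient attached to each unknown is a subset of~$A$, hence a language not containing~$\unAe$, the system can be solved by repeated use of \emph{Arden's lemma} --- the equation $X = K\,X \cup L$ with $\unAe\notin K$ admits $\autstar{K}\,L$ as its unique solution --- eliminating the unknowns one at a time. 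Each elimination only combines already-rational languages by union, product, and star, so every $L_q$ is rational, and $\CompAuto{\Ac}=\bigcup_{i\in I}L_i$ is rational as well. Equivalently, one may run the \emph{state-elimination} procedure, successively deleting states of a generalised automaton whose transitions are labelled by rational expressions.

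The main obstacle is the converse direction, and within it the correct handling of loops, \ie the passage from a cyclic transition structure to the star operation. The delicate points are to verify that Arden's lemma genuinely applies at every elimination step --- which rests on the guarantee $\unAe\notin K$, itself a consequence of the fact that transitions are labelled by letters and never by~$\und$ --- and to check that the order in which the unknowns are eliminated does not alter the denoted language. The forward direction, by contrast, is essentially a bookkeeping of finite-automaton constructions and presents no conceptual difficulty beyond the careful treatment of the empty word in the product and star cases.
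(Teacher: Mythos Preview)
Your proposal is correct and follows essentially the same route as the paper: the theorem is split into the two Propositions~\ref{:pro:phi-map-fm} and~\ref{:pro:psi-map-fm}, and your chosen constructions --- the system-solution method via Arden's lemma for the $\AtEs$-direction and an inductive construction on expressions for the $\EtAs$-direction --- are precisely among the algorithms that the paper develops in Sections~\ref{:sec:aut-exp} and~\ref{:sec:exp-aut} (the latter being the standard-automaton construction of \secti{sta-aut-exp}).
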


\ifcheat\smallskipneg\fi%
\theor{fun-aut-fm} is 
proved by building connections between automata and expressions.

\ifcheat\smallskipneg\fi%
\begin{proposition}[$\AtEs$-maps]
    \label{:pro:phi-map-fm}%
For every finite automaton~$\Ac$ over~$\Ae$, there exist rational 
expressions over~$\Ae$ which denote~$\CompAuto{\Ac}$.
\end{proposition}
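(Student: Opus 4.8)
The plan is to prove existence by the \emph{state-elimination method}. I enlarge the class of automata under consideration to that of \emph{generalised automata} over~$\Ae$, whose transitions are labelled not by letters but by rational expressions of~$\RatEA$; such an automaton accepts, by definition, the union over all its successful computations of the languages denoted by the products of the expressions read along them. Every automaton~$\Ac=\aut{Q,A,E,I,T}$ of the excerpt is a generalised automaton --- each transition~$(p,a,r)$ being read with the atomic label~$a$ --- and with this convention its behaviour is exactly~$\CompAuto{\Ac}$. It therefore suffices to exhibit, for an arbitrary generalised automaton, an expression that denotes its behaviour.

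First I would put~$\Ac$ in a normal form. I adjoin two fresh states~$i$ and~$t$, a transition labelled~$\und$ from~$i$ to every state of~$I$ and one labelled~$\und$ from every state of~$T$ to~$t$, and I take~$\{i\}$ and~$\{t\}$ as the new sets of initial and final states. Since~$\und$ denotes~$\{\unAe\}$, prefixing and suffixing the empty word to each computation leaves the behaviour unchanged, and now~$i$ carries no incoming and~$t$ no outgoing transition. I also merge, for each ordered pair of states~$(p,r)$, all transitions from~$p$ to~$r$ into a single one whose label is the sum of their labels, with the convention that a \emph{missing} transition carries the label~$\zed$ (the empty sum). After this, between any two states there is exactly one transition.

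The heart of the argument is the removal of a state~$q\notin\{i,t\}$. Let~$\Fd$ be the label of the loop at~$q$ and, for states~$p\neq q$ and~$r\neq q$, let~$\Ed$,~$\Gd$,~$\Hd$ be the labels of~$p\to q$,~$q\to r$ and~$p\to r$. I delete~$q$ with all its incident transitions and reset the label of~$p\to r$ to
\[
\Hd\autplus\Ed\autprod\autstar{\Fd}\autprod\Gd
\eqpnt
\]
Indeed, every successful computation decomposes uniquely, at each passage, into a segment avoiding~$q$ and a segment that enters~$q$, loops at~$q$ some number~$n\geq0$ of times, and leaves; since~$n$ iterations of the loop contribute~$\CompExpr{\Fd}^{\,n}$ and~$\autstar{\Fd}$ denotes~$\bigcup_{n\geq0}\CompExpr{\Fd}^{\,n}$, the summands~$\Hd$ and~$\Ed\autprod\autstar{\Fd}\autprod\Gd$ recover exactly the labels of the computations that respectively avoid and traverse~$q$. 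The behaviour is thus an invariant of the step. Removing the states of~$Q$ one after another reduces~$\Ac$ to a generalised automaton on~$\{i,t\}$ carrying a single transition~$i\to t$, and the label of that transition is an expression denoting~$\CompAuto{\Ac}$ --- which is the assertion. Since a valid expression is obtained for \emph{every} order of elimination, the proof in fact yields a whole family of~$\AtEs$-maps.

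Each step being elementary, the only genuine obstacle is to make the invariance claim of the elimination step fully rigorous: one has to set up the behaviour of generalised automata carefully enough that the informal description ``a computation through~$q$ factors as enter, loop, leave'' becomes a genuine decomposition of each successful computation according to its visits to~$q$, and then check that the rewriting~$\Hd\autplus\Ed\autprod\autstar{\Fd}\autprod\Gd$ reproduces this decomposition at the level of denoted languages. Once that bookkeeping is in place, termination is immediate --- one state disappears at each step --- and the conclusion follows with no hypothesis on the order of elimination.
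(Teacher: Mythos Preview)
Your proposal is correct and follows exactly the state-elimination method that the paper presents as its first $\AtEs$-map (\sbsct{sta-eli-met}): adjoin fresh states~$i$ and~$t$ with~$\und$-transitions, then suppress the states of~$Q$ one by one, replacing each label~$p\to r$ by~$\Gd+\Kd_i\xmd\Ld^{*}\xmd\Hd_j$. The one noteworthy difference is in the justification of the elimination step: you argue directly by decomposing successful computations according to their visits to~$q$, whereas the paper defers the formal proof to the system-solution method (\sbsct{sys-res-met}), recasting the automaton as a linear system and invoking Arden's lemma (\corol{Ard}) --- the two methods being shown to coincide step for step (\propo{eli-equ}). Your combinatorial argument is perfectly sound and self-contained; the paper's detour through Arden's lemma buys a uniform framework that later generalises smoothly to the weighted case.
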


\ifcheat\smallskipneg\fi%
\ifcheat\smallskipneg\fi%
\ifcheat\smallskipneg\fi%
\begin{proposition}[$\EtAs$-maps]
    \label{:pro:psi-map-fm}%
For every rational expression~$\Ed$ over~$\Ae$, there exist 
finite automata over~$\Ae$ whose behaviour is equal to~$\CompExpr{\Ed}$.
\end{proposition}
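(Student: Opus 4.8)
The plan is to argue by structural induction on the expression~$\Ed$, following exactly the inductive definition of the denoted language~$\CompExpr{\Ed}$. Equivalently, one shows that the family of behaviours of finite automata over~$\Ae$ contains the atomic languages $\es$, $\{\unAe\}$, and $\{a\}$ for~$a$ in~$A$, and is closed under union, product, and star. Since by \propo{rat-exp-fm} the languages of the form~$\CompExpr{\Ed}$ are exactly the rational ones, this suffices.

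For the base cases I would exhibit explicit small automata: a single state that is initial but not final (and carries no transition) for~$\zed$; a single state that is both initial and final, with no transition, for~$\und$; and the two-state automaton $\aut{\{p,q\},A,\{(p,a,q)\},\{p\},\{q\}}$ for~$a$ in~$A$. Each is checked against the definition of behaviour in one line. Inductive step: suppose $\Fd$ and~$\Gd$ are denoted by finite automata $\Bc_1$ and~$\Bc_2$, which I take to have disjoint state sets. For the union $\Fd\autplus\Gd$ the construction is immediate: since $I$ and~$T$ may be arbitrary subsets of~$Q$, the disjoint union of $\Bc_1$ and~$\Bc_2$ --- with initial (resp.\ final) states the union of the two initial (resp.\ final) sets --- has behaviour $\CompExpr{\Fd}\cup\CompExpr{\Gd}$. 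For the product $\Fd\autprod\Gd$ I glue the components by rerouting the transitions leaving the initial states of~$\Bc_2$: for every transition $(i,a,r)$ of~$\Bc_2$ with~$i$ initial in~$\Bc_2$ I add, for each final state~$t$ of~$\Bc_1$, a transition $(t,a,r)$, keeping~$I_1$ initial and~$T_2$ final. The star $\autstar{\Fd}$ is handled the same way, rerouting the initial-transitions of~$\Bc_1$ back onto its own final states and declaring a fresh initial-and-final state to produce~$\unAe$.

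The main obstacle, and where the real bookkeeping lives, is that the automata of this chapter carry only letter-labelled transitions, $E\subseteq Q\x A\x Q$, with no spontaneous ($\unAe$-labelled) transitions available. Hence the textbook gluing `by $\unAe$-transitions' is not at our disposal, and the rerouting above must be accompanied by a correct treatment of the empty word: in the product one must add~$T_1$ to the final set precisely when $\unAe\in\CompExpr{\Gd}$, and~$I_2$ to the initial set precisely when $\unAe\in\CompExpr{\Fd}$; in the star, $\unAe$ must be accepted while the loop starts and stops at the right states. This is exactly the information recorded by the constant term~$\TermCst{\Ed}$ of \defin{con-ter-exp}, which I would use throughout to drive these case distinctions. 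Verifying that the behaviour of each glued automaton equals the intended language --- especially under nested stars --- is the one genuinely laborious point.

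A cleaner but less self-contained alternative, which I would only mention, is to permit $\unAe$-transitions in Thompson's style and then invoke their elimination; but that presupposes a spontaneous-transition removal result not established in the excerpt, so the direct letter-labelled constructions sketched above, and developed in \secti{exp-aut}, are the ones I would carry out in full.
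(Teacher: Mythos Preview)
Your proposal is correct and follows the same overall strategy as the paper: structural induction on~$\Ed$, with explicit letter-labelled constructions for the atoms and for~$+$, $\cdot$, ${}^*$. The difference is organisational but worth noting, because it dissolves exactly the bookkeeping you flag as laborious. Rather than working with arbitrary automata and case-splitting on the constant term, the paper fixes a normal form --- the \emph{standard automaton}, with a single initial state that is the target of no transition --- and maintains it throughout the induction. The three operations $\Ac+\Bc$, $\Ac\cdot\Bc$, $\Ac^{*}$ are then given by closed matrix formulas \equnm{sta-aut-sum}--\equnm{sta-aut-sta} in which the empty-word contribution is a scalar~$c$ carried by the unique initial state and enters the formulas uniformly; no side conditions on whether~$\unAe$ belongs to a factor are needed, and correctness becomes a short matrix computation rather than a path-tracing argument. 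Your rerouting is essentially what these formulas unfold to once the initial state is isolated. The paper also supplies a second, independent $\EtAs$-map --- the derived-term automaton~$\DTAut{\Ed}$ obtained by Antimirov derivation --- and shows it is a quotient of~$\Stan{\Ed}$; your Thompson alternative is treated as well (\propo{Tho-sta}): the backward closure of~$\Thom{\Ed}$ is exactly~$\Stan{\Ed}$.
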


\ifcheat\smallskipneg\fi%
\secti{aut-exp} describes how expressions are computed from automata,
\secti{exp-aut} how automata are associated with expressions.
Before going to this matter, which is the main subject of this chapter, 
let us establish the second step of Kleene's theorem.

\ifcheat\smallskipneg\fi%
\subsection{The `second step' of Kleene's theorem}
\label{:ssc:sec-ste}%

Let us first state the definition of recognisable languages, under the 
form that is given for recognisable subsets of arbitrary monoids (\cf 
\secti[JEP]{rec-sub}). 

\begin{definition}
    \label{:def:rec-lan}%
A language~$L$ of~$\Ae$ is 
\emph{recognised} by a 
morphism~$\alpha$ from~$\Ae$ into a monoid~$N$
if $\msp L= \alpha^{-1}(\alpha(L))\msp$.
A language 
is \emph{recognisable} if it is recognised by a morphism 
into a \emph{finite} monoid.
The set of \emph{recognisable languages} of~$\Ae$ is
\index{recognisable!language}%
\index{language!recognisable --}%
denoted by~$\RecA$.
\end{definition}

\ifcheat\smallskipneg\fi%
\begin{theorem}[Kleene]
\label{:the:kle-fm}%
If~$A$ is a finite alphabet, 
\index{Kleene's theorem}%
then $\msp\Rat\Ae=\Rec\Ae$. 
\end{theorem}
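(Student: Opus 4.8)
The plan is to prove the two inclusions $\Rec\Ae \subseteq \Rat\Ae$ and $\Rat\Ae \subseteq \Rec\Ae$ separately, using the Fundamental theorem of finite automata (\theor{fun-aut-fm}) as a bridge: I shall freely identify a rational language with the behaviour of a finite automaton over~$\Ae$, so that the whole argument reduces to translating between finite automata and morphisms into finite monoids. Recall that, by \defin{rec-lan}, $L \in \Rec\Ae$ means $L = \alpha^{-1}(\alpha(L))$ for some morphism~$\alpha$ from~$\Ae$ into a finite monoid~$N$.

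For the inclusion $\Rec\Ae \subseteq \Rat\Ae$, I would start from a language~$L$ recognised by a morphism $\alpha\colon \Ae \to N$ with~$N$ finite, so that $L = \alpha^{-1}(\alpha(L))$. From~$\alpha$ I build the automaton~$\Ac_{\alpha}$ whose set of states is~$N$ itself, with single initial state~$\unN$, set of final states~$\alpha(L)$, and a transition $\msp n \path{a} n\alpha(a)\msp$ for every~$n$ in~$N$ and every~$a$ in~$A$. Since both~$N$ and~$A$ are finite, $\Ac_{\alpha}$ is a finite automaton; and a straightforward induction on the length of~$w$ shows that the unique computation from~$\unN$ labelled by~$w$ ends at the state~$\alpha(w)$. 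Hence~$w$ is accepted by~$\Ac_{\alpha}$ if and only if $\alpha(w) \in \alpha(L)$, that is, if and only if $w \in L$; so $\CompAuto{\Ac_{\alpha}} = L$, and \theor{fun-aut-fm} (or \propo{phi-map-fm} applied to~$\Ac_{\alpha}$) gives that~$L$ is rational.

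For the converse inclusion $\Rat\Ae \subseteq \Rec\Ae$, I would take a rational language~$L$ and, again by \theor{fun-aut-fm}, present it as the behaviour of a finite automaton $\Ac = \aut{Q,A,E,I,T}$. The idea is to encode the transitions by Boolean matrices: to each letter~$a$ I associate its transition matrix~$E_{a}$ in~$\BQQ$, and extend this assignment multiplicatively to a morphism $\mu\colon \Ae \to \BQQ$, $\mu(w) = E_{w}$. Since~$Q$ is finite, $\BQQ$ is a finite monoid, and so is the image $M = \mu(\Ae)$. The transition matrices satisfy $(E_{w})_{p,q} = 1$ if and only if $\msp p \pathaut{w}{\Ac} q\msp$; writing~$I$ and~$T$ also for the characteristic vectors in~$\BunQ$ and~$\BQun$ of the initial and final sets, this yields $w \in \CompAuto{\Ac}$ if and only if $I\, E_{w}\, T = 1$. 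Setting $P = \Defi{m \in M}{I\, m\, T = 1}$, we obtain $L = \mu^{-1}(P) = \mu^{-1}(\mu(L))$, so~$L$ is recognised by the morphism~$\mu$ into the finite monoid~$M$, hence recognisable. (Equivalently, one could first determinise~$\Ac$ and take the transition monoid of the resulting deterministic automaton; the Boolean matrix monoid spares that preliminary step and handles non-deterministic automata directly.)

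The computations in both directions are routine inductions, so the only genuinely delicate point is the role of the hypothesis that~$A$ be finite. It is used exactly once, in the inclusion $\Rec\Ae \subseteq \Rat\Ae$, to guarantee that~$\Ac_{\alpha}$ is a finite automaton, and it cannot be dispensed with: over an infinite alphabet the set~$A$ of all single letters is recognisable --- map every letter to the same element~$x$ of the three-element monoid $\{1,x,0\}$ with $x^{2} = 0$, so that $A = \alpha^{-1}(\{x\})$ --- yet~$A$ is not rational, since every rational language involves only finitely many letters. By contrast, the inclusion $\Rat\Ae \subseteq \Rec\Ae$ uses only the finiteness of~$Q$, which holds for every finite automaton irrespective of~$A$; this dissymmetry is precisely what singles out the finiteness of the alphabet as the essential ingredient of this ``second step''.
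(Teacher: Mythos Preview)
Your proof is correct and follows essentially the same approach as the paper: for $\Rec\Ae\subseteq\Rat\Ae$ you build the very automaton~$\Ac_{\alpha}$ on the state set~$N$ that the paper builds, and for $\Rat\Ae\subseteq\Rec\Ae$ you use the same morphism into the Boolean matrix monoid~$\BQQ$ (the paper isolates this as \lemme{mat-mul-gra} and \lemme{mat-mul-rep}, while you fold it into the narrative). Your additional paragraph pinpointing the role of the finiteness of~$A$, with the counterexample over an infinite alphabet, is a welcome complement that the paper does not make explicit at this point.
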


\ifcheat\smallskipneg\fi%
The proof of this statement paves the way to 
further developments in this chapter.
Let
$\msp \Ac= \aut{Q,A,E,I,T}\msp$
be a finite automaton.
The set~$E$ of transitions
may be written as a $Q\x Q$-matrix, called the 
\emph{transition matrix} of~$\Ac$, also 
\index{transition!matrix}%
\index{matrix!transition --}%
denoted by~$E$, and whose $(p,q)$-entry is the set (the Boolean sum) 
of letters that label  
the transitions from~$p$ to~$q$ in~$\Ac$.
A fundamental (and well-known) lemma relates matrix multiplication 
and graph walking.

\begin{lemma}
\label{:lem:mat-mul-gra}%
Let~$E$ be the transition matrix of the 
automaton~$\Ac$ of finite dimension~$Q$.
Then, for every~$n$ in~$\N$, 
$E^{n}$ is the matrix of the labels of paths of length~$n$ in~$\Ac$:
\begin{equation}
\textstyle{E^{n}_{p,q} = \Defi{w\in A^{n}}{p\pathaut{w}{\Ac}q}}
\eqpnt
\notag
\end{equation}
\end{lemma}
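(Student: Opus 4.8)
The plan is to argue by induction on $n$, reading every matrix product over the semiring of languages in which the addition is union and the multiplication is concatenation, so that for matrices $M,N$ indexed by $Q$ one has $(MN)_{p,q}=\bigcup_{r\in Q}M_{p,r}\cdot N_{r,q}$. First I would settle the base case $n=0$: here $E^{0}$ is the identity matrix, whose $(p,q)$-entry is $\{\unAe\}$ when $p=q$ and $\es$ otherwise, and this matches the set of length-$0$ paths from $p$ to $q$, of which there is exactly one — the empty computation labelled $\unAe$ — precisely when $p=q$. The case $n=1$ is then nothing but the very definition of the transition matrix: by construction $E_{p,q}$ is the set of letters $a$ in $A$ for which $(p,a,q)$ is a transition, that is, $E_{p,q}=\Defi{a\in A}{p\pathaut{a}{\Ac}q}$. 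This case will feed the inductive step.

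For the induction, assuming the statement for $n$, I would write $E^{n+1}=E^{n}\,E$ and expand its $(p,q)$-entry as
\begin{equation}
E^{n+1}_{p,q} = \bigcup_{r\in Q} E^{n}_{p,r}\cdot E_{r,q}
\eqpnt
\notag
\end{equation}
Substituting the induction hypothesis for $E^{n}_{p,r}$ and the $n=1$ case for $E_{r,q}$, a word $w$ belongs to $E^{n+1}_{p,q}$ if and only if there is a state $r$ and a factorisation $w=u\,a$ with $u\in A^{n}$ and $a\in A$ such that $p\pathaut{u}{\Ac}r$ and $r\pathaut{a}{\Ac}q$.

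The crux — and the only point needing more than bookkeeping — is to recognise that this last condition is exactly the assertion that $w$ labels a computation of length $n+1$ from $p$ to $q$. This rests on the elementary fact that every computation of length $n+1$ factors uniquely as a computation of length $n$ followed by a single transition, the intermediate state $r$ being the state reached after $n$ steps. Granting this, the right-hand side above is $\Defi{w\in A^{n+1}}{p\pathaut{w}{\Ac}q}$, which closes the induction. I expect no genuine obstacle: the entire content is the bijection between length-$(n+1)$ computations and pairs consisting of a length-$n$ computation together with one outgoing transition, combined with the correct reading of matrix multiplication over the language semiring. The only care required is to keep the semiring operations aligned with path concatenation, so that the union over the intermediate state $r$ corresponds exactly to the existential quantifier on the point where $w$ is cut.
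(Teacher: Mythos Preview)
Your proof is correct and is precisely the standard argument one expects for this well-known lemma. Note that the paper does not actually give a proof of this statement --- it is introduced as ``a fundamental (and well-known) lemma'' and stated without proof --- so there is nothing to compare against; your induction on~$n$ via the factorisation $E^{n+1}=E^{n}E$ is exactly what any reader would supply.
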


The subsets~$I$ and~$T$ of~$Q$ may then be seen as Boolean vectors of 
dimension~$Q$ ($I$ as a row and~$T$ as a column-vector). 
From the notation
$\msp\longonly{\displaystyle}{E^{*} = \sum_{n\in\N}E^{n}}\msp$,
it follows: 
\begin{equation}
\CompAuto{\Ac} = I\matmul E^{*}\matmul T
\eqpnt
\label{:equ:com-aut-mat}
\end{equation}
The next step in the preparation of the proof of 
\theor{kle-fm} is to write the transition matrix~$E$ as
a formal sum
$\msp\longonly{\displaystyle}{E = \sum_{a\in A} \mu(a)\xmd a}\msp$,
where for every~$a$ in~$A$, $\mu(a)$ is a Boolean $Q\x Q$-matrix.
These matrices~$\mu(a)$ define a morphism
$\msp\mu\colon\Ae\rightarrow \B^{Q\x Q}\msp$\longshort{.}{
(the Boolean semiring~$\B$ has been defined at~\CTchp{JEP}).}
The second lemma involves the \emph{freeness} of~$\Ae$ and  
reads:

\begin{lemma}
\label{:lem:mat-mul-rep}%
Let
$\msp\mu\colon\Ae\rightarrow \B^{Q\x Q}\msp$
be a morphism and let
$\msp\longonly{\displaystyle}{E = \sum_{a\in A} \mu(a)\xmd a}\msp$.
Then, for every~$n$ in~$\N$,
$\msp E^{n} = \sum_{w\in A^{n}} \mu(w)\xmd w\msp$
and thus 
$\msp E^{*} = \sum_{w\in A^{*}} \mu(w)\xmd w\msp$.
\end{lemma}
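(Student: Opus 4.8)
The plan is to establish the first identity by induction on~$n$ and then deduce the statement about~$E^{*}$ by summing over all lengths. For the base case $n=0$, the right-hand side reduces to the single term $\mu(\unAe)\,\unAe$, the only word of length~$0$ being the empty word. Since~$\mu$ is a morphism, $\mu(\unAe)$ is the unit of~$\BQQ$, namely the identity matrix; scaling it by the empty word~$\unAe$ (the unit of the series product) yields exactly~$E^{0}$, so both sides agree.

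For the inductive step, assuming $E^{n}=\sum_{w\in A^{n}}\mu(w)\,w$, I would write $E^{n+1}=E^{n}\matmul E$ and distribute the matrix product over the two finite sums, obtaining $\sum_{w\in A^{n}}\sum_{a\in A}\bigl(\mu(w)\,w\bigr)\matmul\bigl(\mu(a)\,a\bigr)$. The computational heart is the single-term identity
\begin{equation}
\bigl(\mu(w)\,w\bigr)\matmul\bigl(\mu(a)\,a\bigr)=\mu(w)\mu(a)\,(wa)=\mu(wa)\,(wa)\eqpnt
\notag
\end{equation}
The first equality holds because the entries of~$\mu(a)$ are Boolean scalars and therefore commute with the word~$w$ occurring in each entry of the matrix product; the second is precisely the morphism property $\mu(wa)=\mu(w)\mu(a)$. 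This gives $E^{n+1}=\sum_{w\in A^{n}}\sum_{a\in A}\mu(wa)\,(wa)$.

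The step I expect to be the crux — and the only place where the hypothesis that~$\Ae$ is \emph{free} intervenes — is the reindexing of this double sum. Freeness guarantees that every word~$u$ of length $n+1$ factorises \emph{uniquely} as $u=wa$ with $w\in A^{n}$ and $a\in A$, so that $(w,a)\mapsto wa$ is a bijection from $A^{n}\x A$ onto $A^{n+1}$. Under this bijection the double sum becomes $\sum_{u\in A^{n+1}}\mu(u)\,u$, completing the induction. In a non-free monoid this factorisation need not be unique, which is exactly why the lemma carries freeness as a hypothesis (in contrast with \lemme{mat-mul-gra}).

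Finally, since $\Ae$ is partitioned by length as $\Ae=\bigcup_{n\in\N}A^{n}$, summing the established identities over all~$n$ gives $E^{*}=\sum_{n\in\N}E^{n}=\sum_{n\in\N}\sum_{w\in A^{n}}\mu(w)\,w=\sum_{w\in A^{*}}\mu(w)\,w$, as required.
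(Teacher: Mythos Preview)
Your proof is correct. The paper states this lemma without proof, moving directly to the proof of \theor{kle-fm}; your argument is the standard one and, in particular, your identification of the unique factorisation $u=wa$ as the place where freeness is essential is exactly the point the paper alludes to when it says the lemma ``involves the \emph{freeness} of~$\Ae$''.
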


\begin{proof}[Proof of \theor{kle-fm}]
By \theor{fun-aut-fm}, a rational language~$L$ of~$\Ae$
is the behaviour of a finite automaton
$\msp \Ac= \aut{Q,A,E,I,T}\msp$.
By~\equnm{com-aut-mat} and \lemme{mat-mul-gra}, we write
\begin{equation}
L= {}\CompAuto{\Ac}{} =  \Defi{w\in A^{*}}{ I\matmul\mu(w)\matmul T=1}
\eqpnt
\notag
\end{equation}
and thus
$\msp\displaystyle{L = \mu^{-1}(S)}\msp$
where
$\msp\displaystyle{S = \Defi{m\in\B^{Q\x Q}}{ I\matmul m\matmul T=1}}\msp$
and~$L$ is recognisable.

Conversely, let~$L$ be a recognisable language of~$\Ae$, recognised 
by the morphism
$\msp\alpha\colon\Ae\rightarrow N\msp$ and let
$\msp S = \alpha(L)\msp$.
Consider the automaton
$\msp \Ac_{\alpha}= \aut{N,A,E,\{\unN\},S}\msp$
where
$\msp\displaystyle{E = \Defi{\Tran{n,a,n\xmd\alpha(a)}}{a\in A, n\in N}}\msp$.
It is immediate that
\begin{equation}
\CompAuto{\Ac_{\alpha}} = 
  \Defi{w\in A^{*}}{\ext p\in S\quantsmsp \unN\pathaut{w}{\Ac}p}
  = \Defi{w\in A^{*}}{\alpha(w)\in S}
  = \alpha^{-1}(S) = L
\notag
\end{equation}
and~$L$ is rational by \theor{fun-aut-fm}.
\end{proof}

We postpone to \secti{cha-mon} 
the example that shows that recognisability and 
rationality are indeed two distinct concepts and the description of 
the relationships that can be found between them.
As mentioned in \CTchp{JEP}, the following holds.

\begin{theorem}
    \label{:the:dec-equ-aut}%
The equivalence of finite automata over~$\Ae$ is decidable.
\end{theorem}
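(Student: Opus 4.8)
The plan is to reduce deciding $\CompAuto{\Ac}=\CompAuto{\Bc}$ to an emptiness test, exploiting that recognisable languages form a Boolean algebra. For two finite automata $\Ac$ and $\Bc$ over the finite alphabet~$A$, I would first observe that $\CompAuto{\Ac}=\CompAuto{\Bc}$ holds if and only if the \emph{symmetric difference}
\[
K = \bigl(\CompAuto{\Ac}\cap(\Ae\sm\CompAuto{\Bc})\bigr)\cup\bigl((\Ae\sm\CompAuto{\Ac})\cap\CompAuto{\Bc}\bigr)
\]
is empty.

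By \theor{fun-aut-fm} both $\CompAuto{\Ac}$ and $\CompAuto{\Bc}$ are rational, hence by \theor{kle-fm} recognisable: write $\CompAuto{\Ac}=\alpha^{-1}(S)$ and $\CompAuto{\Bc}=\beta^{-1}(S')$ for morphisms $\alpha\colon\Ae\to N$ and $\beta\colon\Ae\to N'$ into finite monoids. Forming the product morphism $\gamma\colon\Ae\to N\x N'$, $w\mapsto(\alpha(w),\beta(w))$, into the finite monoid $N\x N'$, one sees that both languages --- and therefore their complements, intersections and unions, so in particular $K$ --- are recognised by the single morphism $\gamma$. Thus $K$ is recognisable, hence by \theor{kle-fm} rational again, and by \theor{fun-aut-fm} it is the behaviour of a finite automaton $\Cc$. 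The point is that this $\Cc$ can be built \emph{effectively} from $\Ac$ and $\Bc$.

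It then remains to decide whether $\CompAuto{\Cc}=\es$. Here one uses that $\CompAuto{\Cc}$ is nonempty precisely when some terminal state of $\Cc$ is accessible from some initial state: a successful computation witnesses a word in $\CompAuto{\Cc}$, and conversely any path from an initial to a terminal state gives its label in $\CompAuto{\Cc}$. Since $\Cc$ is finite, this accessibility is a reachability question in a finite directed graph, settled by a standard graph traversal, which delivers the decision procedure.

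The main obstacle is effectivity: the recognisability argument above is purely existential, and to obtain an algorithm each step must be replaced by an explicit construction on finite automata. Concretely, I would determinise $\Ac$ and $\Bc$ by the subset construction --- legitimate because $\Ae$ is a f.g. free monoid --- complete them, complement each by exchanging terminal and non-terminal states, and take a product automaton realising the intersections that make up $K$; emptiness is then tested on the accessible part of this product. The delicate point to watch is that complementation by state-swapping is correct only for \emph{complete deterministic} automata, so completion must precede it.
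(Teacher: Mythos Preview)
Your argument is correct and is the standard route: reduce equivalence to emptiness of the symmetric difference, use the Boolean closure of recognisable languages (equivalently, determinise, complete, complement, take products), and decide emptiness by a reachability test in a finite graph. Your cautionary remark about completing before complementing is exactly the right thing to flag.

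However, there is nothing to compare against: the paper does not prove this theorem. It is stated with the preamble ``As mentioned in \CTchp{JEP}, the following holds'' and is simply imported from another chapter of the handbook; it is then used only to derive the corollary on decidability of expression equivalence via the $\EtAs$-maps. So your proposal supplies a proof where the paper chose to cite one.
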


\propo{psi-map-fm} then implies:

\begin{corollary}
    \label{:cor:dec-equ-exp}%
The equivalence of rational expressions over~$\Ae$ is decidable.
\end{corollary}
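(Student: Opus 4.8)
The plan is to reduce the equivalence of expressions to the equivalence of automata, a problem that \theor{dec-equ-aut} already declares decidable. By definition, two expressions $\Ed$ and $\Fd$ over~$\Ae$ are equivalent exactly when they denote the same language, \ie when $\msp\CompExpr{\Ed}=\CompExpr{\Fd}\msp$; so it suffices to exhibit an effective procedure that, given $\Ed$ and $\Fd$, decides whether this equality holds.

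First I would invoke \propo{psi-map-fm}, which to each expression associates a finite automaton over~$\Ae$ whose behaviour is the denoted language. What is used here is not merely the bare \emph{existence} asserted in that statement, but the fact---stressed in the discussion preceding it---that the $\EtAs$-maps are genuine \emph{algorithms}: from $\Ed$ one actually computes a finite automaton $\Bc_{\Ed}$ with $\msp\CompAuto{\Bc_{\Ed}}=\CompExpr{\Ed}\msp$, and likewise a finite automaton $\Bc_{\Fd}$ from $\Fd$. Any single $\EtAs$-map (for instance one of those built in \secti{exp-aut}) may be applied uniformly to both inputs.

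A chain of equivalences then closes the argument: $\Ed$ and $\Fd$ are equivalent iff $\msp\CompExpr{\Ed}=\CompExpr{\Fd}\msp$, iff $\msp\CompAuto{\Bc_{\Ed}}=\CompAuto{\Bc_{\Fd}}\msp$, iff the two finite automata $\Bc_{\Ed}$ and $\Bc_{\Fd}$ are equivalent. By \theor{dec-equ-aut} this last condition is decidable, so composing the terminating computation of the two automata with the decision procedure for automaton equivalence gives a decision procedure for the equivalence of expressions.

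The point demanding care---rather than any genuine difficulty---is precisely the effectiveness just invoked. Since \propo{psi-map-fm} is phrased as an existence result, deducing decidability relies on the constructive nature of the $\EtAs$-maps and on the finiteness of the automata they return, both guaranteed by the constructions of \secti{exp-aut}. Granting these, the corollary is an immediate composition of two facts already in hand.
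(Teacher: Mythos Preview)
Your proof is correct and follows exactly the route the paper takes: the paper simply says that \propo{psi-map-fm} (the $\EtAs$-maps) together with \theor{dec-equ-aut} implies the corollary. Your additional remark about the constructive nature of the $\EtAs$-maps is a fair point of care, but the paper treats this as implicit.
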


\section{From automata to expressions: the $\AtEs$-maps}
\label{:sec:aut-exp}%

For the rest of this section, 
$\msp \Ac= \aut{Q,A,E,I,T}\msp$
is a finite automaton over~$\Ae$, and~$E$ is viewed, 
depending on the context, as the \emph{set of transitions}
or as the \emph{transition matrix} of~$\Ac$. 
As in~\equnm{com-aut-mat}, the language accepted by~$\Ac$ is 
conveniently written as
\begin{equation}
\CompAuto{\Ac} = I \matmul E^{*} \matmul T
        = \textstyle{\bigcup_{i\in I, t\in T}} \left(E^{*}\right)_{i,t}
\eqpnt
\notag
\end{equation}
In order to prove that~$\CompAuto{\Ac}$ 
is rational, it is sufficient to establish the following.

\ifcheat\smallskipneg\fi%
\begin{proposition}
\label{:pro:sta-mat}
The entries of~$E^{*}$ belong to the rational closure of the 
\index{rational!closure}%
entries of~$E$.
\end{proposition}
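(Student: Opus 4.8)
The plan is an induction on the dimension $|Q|$, realised through a block decomposition of $E$ that amounts to eliminating one state. It is convenient to prove the slightly stronger statement that \emph{for every square matrix $M$ with entries in $\jsPart{\Ae}$, the entries of $M^{*}$ lie in the rational closure of the finite set of entries of $M$}; \propo{sta-mat} is then the special case in which every entry of $E$ is a subset of $A$. Recall that the rational closure of a family of languages is the smallest family containing it and closed under union, product and star, so that the entries of a matrix always belong to the rational closure of the set of all its entries.

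For the base case $|Q|=1$ the matrix is a single scalar $E=(k)$, and $E^{*}=\sum_{n\in\N}E^{n}=(k^{*})$, which lies in the rational closure of $k$. For the inductive step, let $|Q|=n+1$, fix a state $s$, put $Q'=Q\setminus\{s\}$, and write $E$ in block form relative to $Q=Q'\cup\{s\}$:
\[
E=\begin{pmatrix} F & g \\ h & k \end{pmatrix},
\]
where $F$ is the $Q'\x Q'$ submatrix, $g$ the column and $h$ the row indexed by $s$, and $k=E_{s,s}$ a scalar. The heart of the argument is the ``star of a matrix'' identity
\[
E^{*}=\begin{pmatrix}
M^{*} & M^{*}\matmul g\matmul k^{*} \\
k^{*}\matmul h\matmul M^{*} & k^{*}+k^{*}\matmul h\matmul M^{*}\matmul g\matmul k^{*}
\end{pmatrix},
\qquad
M=F+g\matmul k^{*}\matmul h .
\]

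I would justify this identity by decomposing index sequences, in the spirit of \lemme{mat-mul-gra}: the $(p,q)$-entry of $E^{*}$ is the union, over all finite sequences of states from $p$ to $q$, of the corresponding products of entries, and every such sequence breaks uniquely into maximal segments inside $Q'$ separated by visits to $s$, each visit together with its loops at $s$ contributing a factor $g\matmul k^{*}\matmul h$. This is exactly what the four blocks enumerate; equivalently one checks that the displayed matrix is the least solution of $X=\mathrm{Id}+E\matmul X$, using Arden's rule $X=MX+B\Rightarrow X=M^{*}\matmul B$ valid for languages. Granting the identity, each entry of $M=F+g\matmul k^{*}\matmul h$ is a rational combination of entries of $E$, hence lies in their rational closure; the induction hypothesis applied to the $n$-dimensional matrix $M$ then places the entries of $M^{*}$ in the rational closure of the entries of $M$, and therefore of $E$. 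Substituting back, every block of $E^{*}$ is obtained from languages already in the rational closure of the entries of $E$ by finitely many unions, products and stars, which closes the induction.

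The step I expect to require the most care is the verification of the matrix-star identity: one must make sure that the decomposition of an arbitrary index sequence at its successive visits to $s$ is both exhaustive and free of repetition, so that the enumeration carried by the blocks neither misses nor double-counts a contribution. Once this bookkeeping is pinned down, the remainder is a routine induction.
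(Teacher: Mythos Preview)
Your proof is correct and is essentially the paper's \emph{recursive method} (\secti{rec-met}), which is one of the four approaches the paper offers for \propo{sta-mat}. The paper carries out the same block decomposition and induction, with two cosmetic differences: it allows an arbitrary split of~$Q$ into two blocks (both square) rather than always peeling off a single state, and it derives the matrix-star identity by writing $E^{*}=1+E\matmul E^{*}$ as a $2\times 2$ system and solving it via Arden's lemma (\corol{Ard}) rather than by your path-decomposition argument --- though you note this alternative yourself.
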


\ifcheat\smallskipneg\fi%
But we want to be more precise and describe 
procedures that produce 
for every entry of~$E^{*}$ a rational expression whose atoms are the 
entries of~$E$ (and possibly~$\und$).
There are (at least) four classical methods to proving 
\propo{sta-mat},
which can easily be viewed as algorithms serving our purpose
and which we present here:

\begin{conditions}

\item  Direct computation of~$\CompAuto{\Ac}\xmd$:
the \textit{state-elimination method} 
\index{state-elimination|see{method}}%
\index{method!state-elimination}%
looks the most elementary  
and is indeed the easiest for both hand computation 
and computer implementation.

\item  Computation of $E^{*}\matmul T$ as a 
\index{system-solution|see{method}}%
\index{method!system-solution}%
    solution of a system of linear equations.
Based on Arden's lemma, it also allows to consider $E^{*}\matmul T$ 
as a fixed point. 
    
\item  Iterative computation of $E^{*}$:
known as \textit{McNaughton--Yamada algorithm} 
\index{McNaughton--Yamada algorithm}%
\index{algorithm|see{McNaughton--Yamada}}%
and probably the most 
popular among textbooks on automata theory.

\item  Recursive computation of $E^{*}$:
\index{recursive|see{method}}%
\index{method!recursive}%
based on Arden's lemma as well, this algorithm
combines mathematical elegance and 
computational inefficiency.
    
\end{conditions}

The first three are based on an 
ordering of the states of the automaton.
For \emph{comparing} the results of these different algorithms, 
and of a given one when the ordering of states varies,  
we first introduce the notion of 
\emph{rational identities}, together with 
the key Arden's lemma for 
establishing the correctness of the algorithms as well as the 
identities.
The section ends with a refinement of 
\theor{fun-aut-fm}
which, by means of the notions of \emph{star height} 
and \emph{loop complexity}, relates even more closely an automaton 
and the rational expressions that are computed from it.

\ifcheat\smallskipneg\fi%
\ifcheat\smallskipneg\fi%
\subsection{Preparation: rational identities and Arden's lemma}
\label{:sec:Ard-rat-ide}

\ifcheat\smallskipneg\fi%
By definition, all expressions which
denote the behaviour of
a given automaton~$\Ac$ are \emph{equivalent}.
We may then ask whether, and how, this equivalence may be established 
\emph{within the world of expressions itself}.
We consider `elementary equivalences' of more or less simple 
expressions, which we call \emph{rational identities},  
or \emph{identities} for short, and which correspond to properties of 
(the semiring of) the languages denoted by the 
expressions.
\index{rational!identities|see{identities}}%
\index{identities!rational --}%
And we try to determine which of these identities, considered as 
\emph{axioms}, are necessary, or sufficient, to obtain by substitution 
one expression from another equivalent one.
It is known --- and out of the scope of this chapter --- that no 
finite sets of identities exist that allow to establish the 
equivalence of expressions in general (see \CTchp{ZE}).
We shall see however that a \emph{basic set of identities} is sufficient to 
deduce the equivalence between the expressions computed by the 
different $\AtEs$-maps described here.

\ifcheat\smallskipneg\fi%
\paragraph{Trivial and natural identities}
A first set of  identities, that we call \emph{trivial 
identities}, expresses the fact that~$\zed$ and~$\und$ are 
\index{identities!trivial --}%
interpreted as the zero and unit of a semiring:
\begin{equation}
\Ed \autplus \zed \equiv \Ed \EqVrgInt \zed \autplus \Ed \equiv \Ed \EqVrgInt
\Ed \autprod \zed \equiv \zed \EqVrgInt \zed \autprod \Ed \equiv \zed 
\EqVrgInt
\Ed \autprod \und \equiv \Ed \EqVrgInt \und \autprod \Ed \equiv \Ed  \EqVrgInt
\autstar \zed \equiv \und \e
\eqpnt
\tag{$\Tmbf$}
\label{:equ:tri-ide}
\end{equation}
An expression is said to be \emph{reduced} if it contains no 
subexpressions which is a left-hand side of one of the above 
identities; in particular, $\zed$ does not appear in a non-zero 
reduced expression.
\index{reduced expression|see{expression}}%
\index{expression!reduced --}%
Any expression~$\Hd$ can be rewritten in an 
equivalent reduced expression~$\Hd'$; this~$\Hd'$ is unique 
and independent of the way the rewriting is conducted.
From now on, all
expressions are implicitely reduced, which means that 
\emph{all the computations on expressions that will  
be defined below are performed modulo the trivial identities}.

The next set of identities expresses the fact that the 
operators~$\plusopr$ and~$\prodopr$ are interpreted as the 
\emph{addition} and \emph{product} in a semiring\longonly{ with their 
associativity, distributivity and commutativity properties}:
\begin{align}
\ \ \ \ (\Ed+\Fd)+\Gd \equiv \Ed+(\Fd+\Gd) \e \e &\text{and}\e  \e
\ \ (\Ed \cdot \Fd) \cdot \Gd \equiv \Ed \cdot (\Fd \cdot \Gd) \eqvrg \ee \ee
\tag{$\IdRAs$} 
\label{:equ:ass-ide}
\\
\Ed \cdot (\Fd + \Gd)  \equiv \Ed \cdot \Fd + \Ed \cdot \Gd \e \e &\text{and}\e  \e
(\Ed + \Fd) \cdot \Gd  \equiv \Ed \cdot \Gd + \Fd \cdot \Gd \eqvrg \ee \e
\tag{$\IdRD$} 
\label{:equ:dis-ide}
\\
\Ed + \Fd \, &\equiv \, \Fd + \Ed \eqpnt \ee \ee
\tag{$\IdRC$}
\label{:equ:com-ide}
\end{align}
The conjunction $\IdRAs\land\IdRD\land\IdRC$
\index{identities!natural --}%
is abbreviated as~$\IdRN$ and called the set of \emph{natural identities}.

\ifcheat\smallskipneg\fi%
\paragraph{Aperiodic identities}
The product in~$\jsPart{\Ae}$ is  \emph{distributive} over {infinite 
sums}; then
\begin{equation}
    \fa K\in\jsPart{\Ae}\quantsp
    K^{*} = \unAe + K^{*} K = \unAe + K\xmd K^{*}
    \eqvrg
    \eee
\label{:equ:ide-U-lan}
\end{equation}
from which we deduce the identities:
\begin{equation}
\Ed^{*} \equiv  \und \autplus \Ed \autprod \Ed^{*}
\ee \text{and} \ee
\Ed^{*} \equiv \und \autplus \Ed^{*}\autprod \Ed 
\eqpnt
\tag{$\IdRU$}
\label{:equ:uni-ide}
\end{equation}
From~\equnm{uni-ide} and the gradation\footnote{%
   That is, the elements of~$\Ae$ have a \emph{length} which is a 
   morphism from~$\Ae$ onto~$\N$ (\cf \secti{int-mul}).}
of~$\Ae$ follows
\index{monoid!graded --}%
Arden's lemma whose usage is ubiquitous. 

\ifcheat\smallskipneg\fi%
\begin{lemma}[Arden]
    \label{:lem:Ard}%
Let~$K$ and~$L$ be two subsets of~$\Ae $.
Then $\msp K^{*}L\msp $ is a solution,
$\msp K^{*}L\msp$~is \emph{the unique solution} if 
 $\msp\TermCst{K}=0\msp$,
of the equation
$\msp\mathrm{X} = K \xmd \mathrm{X} + L $. 
\index{Arden's lemma}%
\end{lemma}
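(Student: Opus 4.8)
The plan is to split the statement into its two assertions and treat them in turn: that $K^{*}L$ \emph{is} a solution — which needs no hypothesis on $K$ — and that it is the \emph{unique} one when $\TermCst{K}=0$. The first assertion is purely formal. Right-multiplying by $L$ the identity $K^{*}=\unAe + K\,K^{*}$, which holds in $\jsPart{\Ae}$ by~\equnm{ide-U-lan}, gives
\[
K^{*}L \;=\; L + K\,(K^{*}L)\eqpnt
\]
This is exactly the statement that $\mathrm{X}=K^{*}L$ satisfies $\mathrm{X}=K\,\mathrm{X}+L$, so the existence half is settled.

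For uniqueness, let $\mathrm{X}$ be an arbitrary solution. The idea is to unfold the equation: substituting $\mathrm{X}=K\,\mathrm{X}+L$ into itself and using distributivity, I would establish by induction on~$n$ that, for every $n\in\N$,
\[
\mathrm{X} \;=\; K^{n+1}\mathrm{X} \;+\; \Bigl(\,\textstyle\sum_{i=0}^{n}K^{i}\Bigr)L\eqpnt
\]
The base case $n=0$ is the equation itself, and the inductive step replaces the inner $\mathrm{X}$ by $K\,\mathrm{X}+L$. From this single identity one inclusion is immediate and needs no hypothesis: since $\bigl(\sum_{i=0}^{n}K^{i}\bigr)L\subseteq\mathrm{X}$ for every~$n$, taking the union over~$n$ gives $K^{*}L\subseteq\mathrm{X}$. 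Thus $K^{*}L$ is in fact the \emph{smallest} solution of the equation, whatever $K$ may be.

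The reverse inclusion $\mathrm{X}\subseteq K^{*}L$ is where the hypothesis $\TermCst{K}=0$ enters, and it is the one genuinely non-formal step. The hypothesis means $\unAe\notin K$, so, by the gradation of~$\Ae$, every word of~$K$ has length at least~$1$, hence every word of $K^{n+1}$ has length at least $n+1$. Given $w\in\mathrm{X}$, I would apply the displayed identity with $n=\Abso{w}$: the word~$w$ cannot lie in $K^{n+1}\mathrm{X}$, all of whose words are strictly longer than~$n$, so necessarily $w\in\bigl(\sum_{i=0}^{n}K^{i}\bigr)L\subseteq K^{*}L$. This proves $\mathrm{X}\subseteq K^{*}L$, and together with the inclusion of the previous paragraph, $\mathrm{X}=K^{*}L$.

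The main obstacle is precisely this last length argument: it is the only point that uses both the freeness (gradation) of~$\Ae$ and the hypothesis $\TermCst{K}=0$, everything else being formal manipulation valid in any semiring of subsets. The subtlety is that the term $K^{n+1}\mathrm{X}$ cannot be discarded in general — if $\unAe\in K$ then $K^{n+1}\mathrm{X}\supseteq\mathrm{X}$ and uniqueness genuinely fails (already $K=\{\unAe\!\}$ and $L=\es$ make every subset a solution) — so the full force of the hypothesis must be spent exactly here.
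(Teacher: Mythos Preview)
Your argument is correct and is precisely the one the paper has in mind: it states that Arden's lemma ``follows from~\equnm{uni-ide} and the gradation of~$\Ae$'', and your proof spells out exactly these two ingredients --- the identity $K^{*}=\unAe+K\,K^{*}$ for existence, and the length function on~$\Ae$ for uniqueness. The paper gives no further detail, so there is nothing to compare beyond this.
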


For computing \emph{expressions}, we prefer to
use Arden's lemma under the following form:

\ifcheat\smallskipneg\fi%
\begin{corollary}
\label{:cor:Ard}%
Let~$\Kd$ and~$\Ld$ be two rational expressions over~$\Ae$ 
with~$\TermCst{\Kd}=\zed$.
Then, 
$\msp\Kd^{*}\Ld\msp$ 
denotes the unique solution of 
$\msp\mathrm{X}=\CompExpr{\Kd}\xmd\xmd\mathrm{X}+\CompExpr{\Ld}\msp$.
\end{corollary}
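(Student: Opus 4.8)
The plan is to transport the statement --- which is about \emph{expressions} --- down to the level of the \emph{languages} they denote, where it becomes verbatim \lemme{Ard}. Set $K=\CompExpr{\Kd}$ and $L=\CompExpr{\Ld}$, the two subsets of~$\Ae$ denoted by~$\Kd$ and~$\Ld$. With this notation the equation appearing in the corollary, $\mathrm{X}=\CompExpr{\Kd}\,\mathrm{X}+\CompExpr{\Ld}$, is exactly the language equation $\mathrm{X}=K\,\mathrm{X}+L$ of \lemme{Ard}, an equation whose unknown ranges over subsets of~$\Ae$.

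First I would reconcile the two forms of the hypothesis. The corollary imposes $\TermCst{\Kd}=\zed$, a condition on the \emph{syntax} of~$\Kd$, while \lemme{Ard} asks for $\TermCst{K}=0$, a condition on the \emph{language}~$K$. These agree because of the equality $\TermCst{\Ed}=\TermCst{\CompExpr{\Ed}}$ noted above (proved by induction on $\Depth{\Ed}$): specialising it to $\Ed=\Kd$ gives $\TermCst{K}=\TermCst{\Kd}=\zed$. The hypothesis of \lemme{Ard} is therefore satisfied, so $K^{*}L$ is \emph{the unique} solution of $\mathrm{X}=K\,\mathrm{X}+L$.

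It then remains to recognise that $\Kd^{*}\Ld$ denotes this very language. By the inductive clauses defining the denotation map --- those for product and for star --- one has $\CompExpr{\autstar{\Kd}\autprod\Ld}=\CompExpr{\autstar{\Kd}}\,\CompExpr{\Ld}=K^{*}L$. Hence $\Kd^{*}\Ld$ denotes $K^{*}L$, which is the unique solution of the equation, exactly as claimed.

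I do not expect any genuinely hard step: the corollary is a pure re-reading of \lemme{Ard} through the denotation map. The single point that deserves to be spelled out --- and which I would treat as the crux --- is the move from the syntactic hypothesis $\TermCst{\Kd}=\zed$ to the semantic one $\TermCst{K}=0$, for which the previously established identity between the constant term of an expression and that of its denoted language is precisely the tool required.
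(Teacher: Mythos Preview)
Your proof is correct and is exactly the intended argument: the paper presents this corollary as an immediate reformulation of \lemme{Ard} at the level of expressions, and the only point to check---that $\TermCst{\Kd}=\zed$ transports to $\TermCst{\CompExpr{\Kd}}=0$ via the identity $\TermCst{\Ed}=\TermCst{\CompExpr{\Ed}}$---is precisely what you spell out.
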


The next two identities, called \emph{aperiodic identities},
\index{identities!aperiodic --}%
\index{aperiodic|see{identities}}%
are a consequence of \lemme{Ard}.%

\ifcheat\smallskipneg\fi%
\begin{proposition}
    \label{:pro:ide-ape}%
For all rational expressions~$\Ed$ and~$\Fd$ over~$\Ae$
\begin{align}
(\Ed + \Fd)^{*} \equiv   \Ed^{*}\cdot (\Fd\cdot \Ed^{*})^{*}
\ee &\text{and} \ee
(\Ed + \Fd)^{*} \equiv  (\Ed^{*}\cdot \Fd)^{*}\cdot \Ed^{*}
\eqvrg
\tag{$\IdRS$}
\label{:equ:sum-sta-ide}
\\
(\Ed \cdot \Fd)^{*} &
\equiv   1 + \Ed \cdot (\Fd \cdot \Ed)^{*} \cdot \Fd 
\eqpnt
\tag{$\IdRP$}
\label{:equ:pro-sta-ide}
\end{align}
\end{proposition}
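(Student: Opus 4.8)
The plan is to pass from expressions to the languages they denote and prove each identity as an equality in $\jsPart{\Ae}$, using the star fixed point \equnm{ide-U-lan} together with Arden's lemma \lemme{Ard} as the main tool. Since $\Ed_{1}\equiv\Ed_{2}$ means $\CompExpr{\Ed_{1}}=\CompExpr{\Ed_{2}}$, writing $K=\CompExpr{\Ed}$ and $L=\CompExpr{\Fd}$ reduces \equnm{sum-sta-ide} and \equnm{pro-sta-ide} to the three language equalities $(K+L)^{*}=K^{*}(LK^{*})^{*}$, $(K+L)^{*}=(K^{*}L)^{*}K^{*}$, and $(KL)^{*}=\unAe+K(LK)^{*}L$. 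A small auxiliary \emph{shift identity} $A(BA)^{*}=(AB)^{*}A$ will be proved first and used repeatedly: both $A(BA)^{*}$ and $(AB)^{*}A$ are solutions of $\mathrm{X}=AB\,\mathrm{X}+A$, so \lemme{Ard} identifies them.

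For the sum-star identities I would start from \equnm{ide-U-lan}, which gives $(K+L)^{*}=\unAe+K(K+L)^{*}+L(K+L)^{*}$, and regroup this as $(K+L)^{*}=K\,(K+L)^{*}+\bigl(L(K+L)^{*}+\unAe\bigr)$. Applying \lemme{Ard} peels off the leading $K^{*}$ and yields $(K+L)^{*}=K^{*}+K^{*}L\,(K+L)^{*}$. Reading this in turn as a linear equation in $(K+L)^{*}$ with coefficient $K^{*}L$ and applying \lemme{Ard} a second time gives $(K+L)^{*}=(K^{*}L)^{*}K^{*}$, which is the second form of \equnm{sum-sta-ide}; the first form then follows at once from the shift identity with $A=K^{*}$ and $B=L$.

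For the product-star identity \equnm{pro-sta-ide} I would combine the star fixed point $(KL)^{*}=\unAe+KL(KL)^{*}$ with the shift identity in the form $L(KL)^{*}=(LK)^{*}L$, which immediately turns the right-hand side into $\unAe+K(LK)^{*}L$; one then concludes each expression equivalence from the corresponding language equality. The step I expect to be the real obstacle is the hypothesis $\TermCst{K}=0$ required for \emph{uniqueness} in \lemme{Ard}: for general $\Ed,\Fd$ the constant terms of $K$, $L$, and $K^{*}L$ need not vanish, so Arden guarantees only that the displayed expression is \emph{a} solution, not the one we computed. I would dispose of this exactly where it arises by checking the equalities unconditionally: each candidate language is always a solution, and the inclusion $(K+L)^{*}\subseteq K^{*}+K^{*}L(K+L)^{*}$ (and its analogues) can be read directly from the factorisation of a word of $(K+L)^{*}$ into a maximal initial run of $K$-factors, a first $L$-factor, and a remainder in $(K+L)^{*}$ — which is precisely the combinatorial content that \lemme{Ard} packages when its hypothesis holds.
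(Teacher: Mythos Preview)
Your approach is correct and is precisely what the paper points to: the text merely states that the aperiodic identities are ``a consequence of \lemme{Ard}'' and gives no further proof, so you have in fact supplied the argument the paper omits. Your care about the constant-term hypothesis is well placed --- the paper elides it entirely --- and your proposed resolution via the direct factorisation of a word of $(K+L)^{*}$ into a maximal $K$-prefix, a first $L$-block, and a tail is the standard way to close that gap; it establishes the needed inclusions unconditionally and makes each Arden step legitimate even when $\TermCst{K}$ or $\TermCst{K^{*}L}$ is non-zero. The shift identity $A(BA)^{*}=(AB)^{*}A$ is likewise valid for arbitrary languages by the same kind of regrouping, so your derivation of \equnm{pro-sta-ide} from \equnm{ide-U-lan} goes through without restriction.
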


There are many other (independent) identities (\cf Notes). 
The remarquable fact is that those listed above will 
be sufficient for our purpose.

\ifcheat\smallskipneg\fi%
\paragraph{Identities special to~$\PdAe$}

Finally, the \emph{idempotency} of the union in~$\jsPart{\Ae}$ yields two further 
identities:

\smallskipneg\smallskipneg
\TextText{%
\begin{equation}
\Ed \autplus \Ed  \equiv  \Ed
\eqvrg
\tag{$\IdRI$}
\label{:equ:ide-ide}
\end{equation}
}{%
\begin{equation}
\left(\Ed^{*}\right)^{*}  \equiv  \Ed^{*}
\eqpnt
\tag{$\IdRJ$}
\label{:equ:sta-sta-ide}
\end{equation}
}
\miniskip 

In contrast with the preceding ones, these two identities 
\equnm{ide-ide} and \equnm{sta-sta-ide} do not hold for expressions 
over arbitrary semirings of formal power series 
(\cf~\secti{int-mul}). 

\subsection{The state-elimination method}
\label{:ssc:sta-eli-met}%

The algorithm known as 
\emph{state-elimination method},
\index{method!state-elimination}%
originally due to Brzozowski and  McCluskey \cite{BrzoMClu63},
works directly on the automaton~$\Ac=\auta$.
It consists in suppressing the states in~$\Ac$, 
one after the other, while transforming the labels 
of the transitions so that the language accepted by the resulting 
automaton is unchanged (\cf \cite{Wood87,Yu97hb}). 

A current step of the algorithm is represented at 
\figur{sta-eli-ste}.
The left diagram shows the state~$q$ to be 
suppressed, a state~$p_{i}$ which is the origin of a transition whose 
end is~$q$ and a state~$r_{j}$ which is the end of a transition whose 
origin is~$q$ (it may  be the case that~$p_{i}=r_{j}$). 
By induction, the labels 
are \emph{rational expressions}.
The right diagram shows the automaton after the suppression of~$q$, 
and the new label of the transition from~$p_{i}$ to~$r_{j}$.
The languages accepted by the automaton before and after the suppression 
of~$q$ are equal --- a formal proof will follow in the next subsection.

\begin{figure}[htbp]
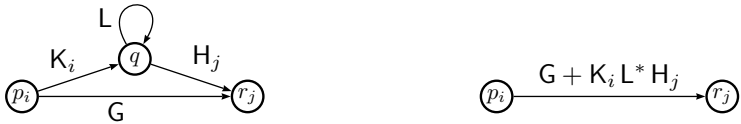

\centering
\VCDraw{%
\begin{VCPicture}{(-3.5,-0.5)(3.5,2.5)}
\MediumState 
\State[q]{(0,1)}{A}
\State[p_{i}]{(-3,0)}{B}\State[r_{j}]{(3,0)}{C}
\ChgEdgeLabelScale{1.2}
\EdgeL[.4]{B}{A}{\Kd_{i}}\EdgeL[.6]{A}{C}{\Hd_{j}}
\LoopN[.2]{A}{\Ld}
\EdgeR[.4]{B}{C}{\Gd}
\end{VCPicture}}%
\ee\ee\ee\ee
\VCDraw{%
\begin{VCPicture}{(-3.5,-0.5)(3.5,2.5)}
\MediumState 
\State[p_{i}]{(-3,0)}{B}\State[r_{j}]{(3,0)}{C}
\ChgEdgeLabelScale{1.2}
\EdgeL[.5]{B}{C}{\Gd + \Kd_{i}\xmd \Ld^{*}\xmd \Hd_{j}}
\end{VCPicture}}%
\caption{One step in the state-elimination method}
\medskipneg 
\label{:fig:sta-eli-ste}
\end{figure}

More precisely, the state-elimination method consists 
first in augmenting the set~$Q$ with two new states~$i$ 
and~$t$, and adding transitions labelled with~$\und$ from~$i$ to 
every initial state of~$\Ac$ and from every final state of~$\Ac$ 
to~$t$.
Then all states in~$Q$ are suppressed according to the procedure 
described above and in a certain order~$\omega$\longonly{%
(that can be decided beforehand or determined step by step)}.
At the end, only remain states~$i$ and~$t$, together with a 
transition from~$i$ to~$t$ labelled with an expression which we 
denote by~$\sem{\Ac}$ and which is the \emph{result} of the 
algorithm. 
Thus it holds:
\begin{equation}
    \CompAuto{\Ac} = \CompExpr{\sem{\Ac}}
    \eqpnt
\notag
\end{equation}
\figur{sta-eli-exa} shows every step of the state-elimination method 
on the automaton~$\Dc_{3}$ drawn in the upper left corner and following 
the order
$\msp\omega_{1}= r<p<q \msp$.
It shows the result 
$\msp
\sem[\omega_{1}]{\Dc_{3}}=a^{*}b\xmd (b\xmd a^{*}b 
	     + a\xmd b^{*}a)^{*}b\xmd a^{*}
	     + a^{*}
\msp$.
The computation of~$\sem{\Ac}$ may silently involve identities 
in~$\IdRN$. 
A common and natural way of performing the computation is to use 
identities~$\IdRI$ and~$\IdRJ$ as well: it yields simpler results. 
It is then to be stressed that
the use of~$\IdRI$ and~$\IdRJ$
is not needed to establish these equivalence results.

\ifcheat\smallskipneg\fi%
\paragraph{The effect of the order}

The result of the state-elimination method obviously depends on the 
order~$\omega$ in which the states are suppressed.
For instance, on the automaton~$\Dc_{3}$ of \figur{sta-eli-exa}, 
the other order
$\msp\omega_{2} = r<q<p\msp$ yields
$\msp\sem[\omega_{2}]{\Dc_{3}}  = (a +b\xmd(a\xmd b^*a)^*b)^*$,
and $\msp\omega_{3} = p<q<r\msp$ yields\\
\e$\msp
\sem[\omega_{3}]{\Dc_{3}}  = a^{*} + a^*b\xmd (b\xmd a^*b)^*b\xmd a^*
               + a^*b\xmd(b\xmd a^*b)^*a\xmd
    (b + a\xmd(b\xmd a^*b)^*a)^*a\xmd(b\xmd a^*b)^*b\xmd a^*$.


\setlength{\lga}{2.6cm}\setlength{\lgb}{2.25cm}\setlength{\lgc}{1.75cm}%
\begin{figure}[htbp]
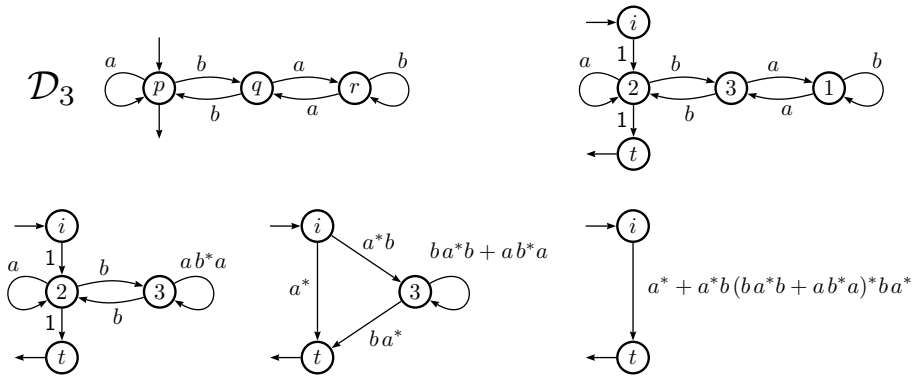

\centering
\VCDraw{%
\begin{VCPicture}{(-3.5,-\lgb)(6.7,\lgb)}
\MediumState
\State[p]{(0,0)}{A}
\State[q]{(\lga,0)}{B}
\State[r]{(2\lga,0)}{C}
\Initial[n]{A}\Final[s]{A}
\ArcL{A}{B}{b}\ArcL{B}{A}{b}
\ArcL{B}{C}{a}\ArcL{C}{B}{a}
\LoopW{A}{a}\LoopE{C}{b}
\VCPutText[1.8]{(-2.8,0)}{\Dc_{3}}
\end{VCPicture}
}%
\eee\ee
\VCDraw{%
\begin{VCPicture}{(-1.6,-\lgb)(6.7,\lgb)}
\MediumState
\SetAOSLengthCoef{1.4}%
\State[i]{(0,\lgc)}{I}\State[t]{(0,-\lgc)}{T}
\State[2]{(0,0)}{A}
\State[3]{(\lga,0)}{B}
\State[1]{(2\lga,0)}{C}
\Initial{I}\Final[w]{T}
\ArcL{A}{B}{b}\ArcL{B}{A}{b}
\ArcL{B}{C}{a}\ArcL{C}{B}{a}
\LoopW{A}{a}\LoopE{C}{b}
\EdgeR{I}{A}{\und}\EdgeR{A}{T}{\und}
\end{VCPicture}
}%

\shortlong{\medskip}{\bigskip}

\VCDraw{%
\begin{VCPicture}{(-1.6,-\lgb)(4.7,\lgb)}
\MediumState
\SetAOSLengthCoef{1.4}%
\State[i]{(0,\lgc)}{I}\State[t]{(0,-\lgc)}{T}
\State[2]{(0,0)}{A}
\State[3]{(\lga,0)}{B}
\Initial{I}\Final[w]{T}
\ArcL{A}{B}{b}\ArcL{B}{A}{b}
\LoopW{A}{a}
\LoopE[.2]{B}{a\xmd b^{*}a}
\EdgeR{I}{A}{\und}\EdgeR{A}{T}{\und}
\end{VCPicture}
}%
\e
\VCDraw{%
\begin{VCPicture}{(-1.3,-\lgb)(6.3,\lgb)}
\MediumState
\SetAOSLengthCoef{1.4}%
\State[i]{(0,\lgc)}{I}\State[t]{(0,-\lgc)}{T}
\State[3]{(\lga,0)}{B}
\Initial{I}\Final[w]{T}
\EdgeR{I}{T}{a^{*}}
\EdgeL{I}{B}{a^{*}b}\EdgeL{B}{T}{b\xmd a^{*}}
%
\LoopE[.25]{B}{b\xmd a^{*}b+a\xmd b^{*}a}
\end{VCPicture}
}%
\e
\VCDraw{%
\begin{VCPicture}{(-1.3,-\lgb)(7.4,\lgb)}
\MediumState
\SetAOSLengthCoef{1.4}%
\State[i]{(0,\lgc)}{I}\State[t]{(0,-\lgc)}{T}
\Initial{I}\Final[w]{T}
\EdgeL{I}{T}{\msp\msp a^{*} + a^{*}b\xmd (b\xmd a^{*}b 
	               + a\xmd b^{*}a)^{*}b\xmd a^{*}}
\end{VCPicture}
}%
\caption{The state-elimination method exemplified on the 
automaton~$\Dc_{3}$}
\label{:fig:sta-eli-exa}
\end{figure}

\ifcheat\smallskipneg\fi%
\begin{theorem}[Conway \cite{Conw71}, Krob \cite{Krob91}]
    \label{:the:ord-eli-met}
Let~$\omega $ and~$\omega' $ be two orders 
on the set of states of an automaton~$\Ac$.
Then,
$\msp\msp
 \IdRN \land \IdRS \land \IdRP \msp\msp \dedtxt 
 \sem{\Ac} \msp \equiv \msp \sem[\omega']{\Ac}
\msp\msp$ holds.
\end{theorem}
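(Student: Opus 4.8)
The plan is to reduce the statement to the commutation of two \emph{consecutive} elimination steps, and then to a $2\times 2$ computation controlled by $\IdRS$ and $\IdRP$. Recall first (see \figur{sta-eli-ste}) that eliminating a state $q$ transforms, for every ordered pair $(p,r)$ of surviving states, the label $G$ of the edge $p\to r$ into $G + K\,L^{*}H$, where $L$ is the loop on $q$, $K$ the label of $p\to q$ and $H$ the label of $q\to r$. Two total orders $\omega$ and $\omega'$ of $Q$ can be transformed into one another by a sequence of transpositions of consecutive states (bubble sort), and in the extended automaton the two new states $i$ and $t$ are never eliminated. Since the elimination steps performed \emph{before} such a transposition are identical, and those performed \emph{after} it are the same operations applied to their respective labels, and since $\equiv$ is by construction a congruence for the three operators, it is enough to prove: from the same (generalized) automaton, eliminating $q$ then $q'$ and eliminating $q'$ then $q$ produce on every surviving edge two labels that are equal modulo $\IdRN\wedge\IdRS\wedge\IdRP$.

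Fix a surviving pair $(p,r)$ and name the labels of the block on $\{q,q'\}$: let $a$ and $d$ be the loops on $q$ and $q'$, let $b$ and $c$ be the labels of $q\to q'$ and $q'\to q$; write $K,K'$ for $p\to q,\ p\to q'$, write $H,H'$ for $q\to r,\ q'\to r$, and $G$ for $p\to r$. Expanding the two elimination sequences and collecting terms with $\IdRN$, both results take the shape $G + \begin{pmatrix} K & K'\end{pmatrix} N^{*}\begin{pmatrix} H \\ H'\end{pmatrix}$ with $N=\begin{pmatrix} a & b \\ c & d\end{pmatrix}$, the only difference being that eliminating $q$ first yields for $N^{*}$ the family of expressions obtained by pivoting on $a$, while eliminating $q'$ first yields the one obtained by pivoting on $d$. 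By congruence it therefore suffices to prove that these two families of expressions for the four entries of $N^{*}$ are equivalent.

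For the $(1,1)$-entry this is
$$(a + b\,d^{*}c)^{*}\;\equiv\; a^{*} + a^{*} b\,(d + c\,a^{*}b)^{*} c\, a^{*},$$
which I would derive in three steps: $\IdRS$ gives $(a+b\,d^{*}c)^{*}\equiv a^{*}(b\,d^{*}c\,a^{*})^{*}$; then $\IdRP$ gives $(b\,d^{*}c\,a^{*})^{*}\equiv \und + b\,(d^{*}c\,a^{*}b)^{*} d^{*}c\,a^{*}$; and finally $\IdRS$ once more gives $(d^{*}c\,a^{*}b)^{*} d^{*}\equiv (d + c\,a^{*}b)^{*}$, so that substituting and distributing with $\IdRN$ produces the right-hand side. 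The three remaining entries follow symmetrically, using in addition the sliding identity $(XY)^{*}X\equiv X(YX)^{*}$ and the unrolling $\IdRU$, both of which are themselves consequences of $\IdRP$ (take $\Fd=\und$ for $\IdRU$, and expand then re-collect for the sliding identity).

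The main obstacle is precisely this $2\times2$ matrix-star computation: one must choose the grouping of subexpressions so that each entry is reached by a short alternation of $\IdRS$ (to split a sum under a star) and $\IdRP$ (to pull a factor out of a star), since these axioms, unlike the idempotency identities $\IdRI$ and $\IdRJ$, are the only non-natural ones permitted. Everything else — the reduction to adjacent swaps, the bilinear bookkeeping that exhibits both outputs as $G + \begin{pmatrix} K & K'\end{pmatrix} N^{*}\begin{pmatrix} H \\ H'\end{pmatrix}$, and the propagation of the resulting equivalence through all subsequent elimination steps — is routine once the congruence property of $\equiv$ is invoked.
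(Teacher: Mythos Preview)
Your proof is correct and follows essentially the same route as the paper: reduce to the swap of two consecutive eliminations, expand the resulting label on a surviving pair, and reconcile the two outcomes using $\IdRS$, $\IdRP$, and the sliding identity $(XY)^{*}X\equiv X(YX)^{*}$. The only difference is organizational: the paper writes out the full expression $\Ed$ for the $p\to q$ label and massages it into a form symmetric in the primed and unprimed labels, whereas you factor the computation through the four entries of $N^{*}$ and prove the two pivoting formulas for $N^{*}$ agree entrywise---same identities, same work, slightly tidier bookkeeping.
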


\longonly{%
\begin{proof}
We can go from any order~$\omega$ to any other order~$\omega'$ on~$Q$ 
by a sequence of transpositions. 
We therefore arrive at the situation illustrated in 
\figur{MNY-ord} (left)
and need to show that the expressions obtained 
when we first remove the state~$r$ and then~$r'$
are equivalent to those obtained from removing first~$r'$ and then~$r$,
modulo~$\IdRS \land \IdRM$ (without mentioning the \emph{natural 
identities}). 

\setlength{\lgb}{2cm}
\begin{figure}[htbp]
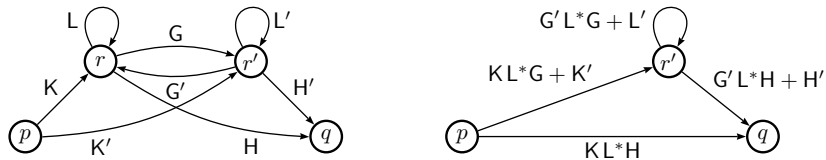

    \centering
    \VCDraw{%
    \begin{VCPicture}{(-4.4,-0.4)(4.4,3)}
        \MediumState 
        \State[p]{(-4,0)}{A}
        \State[q]{(4,0)}{B}
        \State[r]{(-2,\lgb)}{C}
        \State[r']{(2,\lgb)}{D}
        %
        \ArcR[.25]{A}{D}{\Kd'}
        \ArcR[.75]{C}{B}{\Hd}
        \ArcL[.5]{C}{D}{\Gd}
        \ArcL[.5]{D}{C}{\Gd'}
        \EdgeL[.5]{A}{C}{\Kd}
        \EdgeL[.5]{D}{B}{\Hd'}
        \LoopN[.2]{C}{\Ld}
        \LoopN[.8]{D}{\Ld'}
     \end{VCPicture}}
    \eee
    \VCDraw{%
    \begin{VCPicture}{(-4.4,-0.4)(4.4,3)}
        \MediumState 
        \State[p]{(-4,0)}{A}
        \State[q]{(4,0)}{B}
        \State[r']{(1.5,\lgb)}{D}
        %
        \EdgeL[.5]{A}{D}{\Kd\xmd \Ld^{*}\Gd + \Kd'}
        \EdgeR[.5]{A}{B}{\Kd\xmd \Ld^{*}\Hd}
        \EdgeL[.5]{D}{B}{\Gd'\xmd \Ld^{*}\Hd + \Hd'}
        %
        \LoopN[.2]{D}{\Gd'\xmd \Ld^{*}\Gd + \Ld'}
    \end{VCPicture}}
\caption{First step of two in the state-elimination method}
    \label{:fig:MNY-ord}
\end{figure}

The removal of state~$r$ gives the expressions in
\figur{MNY-ord} (right).
The removal of state~$r'$ gives the expression:
\begin{displaymath}
    \Ed =
    \Kd\xmd \Ld^{*}\Hd + (\Kd\xmd \Ld^{*}\Gd + \Kd') \xmd
     \left[\Gd'\xmd \Ld^{*}\Gd + \Ld'\right]^{*}
     (\Gd'\xmd \Ld^{*}\Hd + \Hd') \eqvrg
\end{displaymath}
which using~$\IdRS$ (and the natural identities) becomes:
\begin{multline*}
\Ed \equiv
\Kd\xmd \Ld^{*}\Hd
  + \Kd\xmd \Ld^{*}\Gd \xmd \left[{\Ld'}^{*}\Gd'\xmd \Ld^{*}\Gd \right]^{*}{\Ld'}^{*}
    \Gd'\xmd \Ld^{*}\Hd \\
  + \Kd'\xmd \left[{\Ld'}^{*}\Gd'\xmd \Ld^{*}\Gd \right]^{*}{\Ld'}^{*}
    \Gd'\xmd \Ld^{*}\Hd
  + \Kd\xmd \Ld^{*}\Gd \xmd \left[{\Ld'}^{*}\Gd'\xmd \Ld^{*}\Gd \right]^{*}{\Ld'}^{*}
    \Hd' \\
  + \Kd'\xmd \left[{\Ld'}^{*}\Gd'\xmd \Ld^{*}\Gd \right]^{*}{\Ld'}^{*} \Hd'\EqPnt
\end{multline*}
We write:
\begin{displaymath}
    \Kd'\xmd \left[{\Ld'}^{*}\Gd'\xmd \Ld^{*}\Gd \right]^{*}{\Ld'}^{*} \Hd'
    \equiv
    \Kd'\xmd {\Ld'}^{*} \Hd' +
    \Kd'\xmd {\Ld'}^{*}\Gd'\xmd \Ld^{*}
      \left[\Gd \xmd {\Ld'}^{*}\Gd'\xmd \Ld^{*}\right]^{*}\Gd \xmd{\Ld'}^{*} \Hd'
\end{displaymath}
by using~$\IdRP$, and  then,
by `switching the brackets'
(using the identity $\msp (\Xd\xmd \Yd)^{*}\Xd\msp \equiv \msp
\Xd\xmd (\Yd\xmd \Xd)^{*}\msp $ which is also
a consequence of~$\IdRP$),
we obtain:
\begin{multline*}
\Ed \equiv
\Kd\xmd \Ld^{*}\Hd \\
  + \Kd\xmd \Ld^{*}\Gd \xmd \left[{\Ld'}^{*}\Gd'\xmd \Ld^{*}\Gd \right]^{*}{\Ld'}^{*}
    \Gd'\xmd \Ld^{*}\Hd
  + \Kd'\xmd {\Ld'}^{*}\Gd'\xmd \left[\Ld^{*}\Gd \xmd {\Ld'}^{*}
    \Gd'\right]^{*} \Ld^{*}\Hd \\
  + \Kd\xmd \Ld^{*}\Gd \xmd \left[{\Ld'}^{*}\Gd'\xmd \Ld^{*}\Gd \right]^{*}{\Ld'}^{*}
    \Hd'
  + \Kd'\xmd {\Ld'}^{*}\Gd'\xmd \left[\Ld^{*}
    \Gd \xmd {\Ld'}^{*}\Gd'\right]^{*} \Ld^{*}\Gd \xmd{\Ld'}^{*} \Hd' \\
  + \Kd'\xmd {\Ld'}^{*} \Hd'
\end{multline*}
an expression that is perfectly symmetric in the letters with and
without `primes', which shows that we would have obtained the same result
if we had started by removing~$r'$ then~$r$.
\end{proof}

}%

\ifcheat\smallskipneg\fi%
\shortlong{%
The question of \emph{the length} of these 
expressions is also of interest, both from a theoretical as well 
as practical point of view.
The above example~$\Dc$ is easily generalised so as to find an 
exponential gap between the length of expressions for two distinct 
orders.  
The search for short expressions is 
performed by heuristics
(see Notes). 
}{%
Aside from the formal proximity between expressions obtained from a 
given automaton, the question of \emph{the length} of these 
expressions is of course of interest, both from a theoretical as well 
as practical point of view.
The above example~$\Dc$ is easily generalised so as to find an 
exponential gap between the length of expressions for two distinct 
orders.  
The search for short expressions is 
performed by heuristics, with more or less degree of sophistication 
(see Notes). 
}%

\ifcheat\smallskipneg\fi%
\ifcheat\smallskipneg\fi%
\ifcheat\smallskipneg\fi%
\subsection{The system-solution method}
\label{:ssc:sys-res-met}%

The computation of an expression that denotes the language accepted by 
a finite automaton 
\index{method!system-solution}%
as the solution of a system of linear equations is nothing else 
than the state-elimination method turned into a more 
mathematical setting\longonly{%
, which allows then easier formal proofs}.

\ifcheat\smallskipneg\fi%
\paragraph{Description of the algorithm}
Given $\msp\Ac=\auta\msp$,
 for every~$p$ in~$Q$, we write~$L_{p}$ for the 
\emph{set of words} which are the label of computations from~$p$ to a
final state of~$\Ac$:
$\msp\displaystyle{%
L_{p} = \Defi{w\in\Ae}{\ext t \in T \quantsmsp p \pathaut{w}{\Ac } t}
}\msp$.
For a subset~$R$ of~$Q$, we write the symbol $\msp\delta_{p,R}\msp$ 
for~$\und $ if~$p$ is in~$R$ and~$\zed$ if not.
The system of equations associated with~$\Ac$ is written:
\shortlong{%

\smallskipneg \smallskipneg 
\TextFigu[0.45]{%
\begin{equation}
 \CompAuto{\Ac}  = \sum_{p \in I} L_{p} 
                  = \sum_{p \in Q} {\dpI} \xmd L_{p} 
\label{:equ:BMC-1}	
\end{equation}
}{%
\begin{equation}
\forall p \in Q \quantsmsp
L_{p} = \sum_{q \in Q} \CompExpr{\Ed_{p,q}} \xmd L_{q} + \CompExpr{\dpT}
\label{:equ:BMC-2}
\end{equation}
}%
\miniskip\miniskip

\noindent
}{%
\begin{align}
 \CompAuto{\Ac} & = \sum_{p \in I} L_{p} 
                  = \sum_{p \in Q} {\dpI} \xmd L_{p} 
\label{:equ:BMC-1} \\[.5ex]
\forall p \in Q \quantsp
L_{p} &= \sum_{q \in Q} \CompExpr{\Ed_{p,q}} \xmd L_{q} + \CompExpr{\dpT}
\eee 
\label{:equ:BMC-2}
\end{align}
}%
where the~$L_{p}$ are the `unknowns' and the entries~$E_{p,q}$,
which represent subsets of~$A$, 
as expressions~$\Ed_{p,q}$ are sums of letters labelling paths of 
length~$1$.  
The system~\equnm{BMC-2} may be solved by successive 
\emph{elimination} of the unknowns%
\shortlong{%
, by means of Arden's lemma,
}{%
.
The pivoting operations, which involve subtraction and division that 
are not available in the semiring~$\jsPart{\Ae}$, are replaced by the 
application of Arden's lemma,
}%
since~$\TermCst{\Ed_{p,q}}=\zed$ for all~$p$, $q$ in~$Q$.
\index{Arden's lemma!}%

\longonly{%
After the elimination of a certain number of unknowns~$L_{p}$ --- we
write~$Q'$ for the set of indices of those which have not been
eliminated --- we obtain a system of the form:
\begin{align}
\CompAuto{\Ac} & = \sum_{p \in Q'} \CompExpr{\Gd_{p}} \xmd L_{p} + 
\CompExpr{\Hd} 
\label{:equ:BMC-3} \\[.5ex]
\forall p \in Q' \quantsp
L_{p} &= \sum_{q \in Q'} \CompExpr{\Fd_{p,q}} \xmd L_{q} + 
\CompExpr{\Kd_{p}}
\eee \label{:equ:BMC-4}
\end{align}
If we choose (arbitrarily) one element~$q$ in~$Q'$, \corol{Ard} 
applied to the corresponding equation from the system~\equnm{BMC-4}, 
yields:
\begin{align}
L_{q} &= \CompExpr{\Fd_{q,q}^{*}}\xmd
        \left(\sum_{p \in Q'\bk q} \CompExpr{\Fd_{q,p}} \xmd L_{p} 
                             + \CompExpr{\Kd_{q}}\right)
\label{:equ:BMC-5}
\end{align}
which allows the elimination of~$L_{q}$ 
in~\equnm{BMC-3}--\equnm{BMC-4} and gives:
\begin{align}
\CompAuto{\Ac} & = \sum_{r \in Q'\bk q}
   \left(\CompAuto{\Gd_{r} + \Gd_{q}\xmd\Fd_{q,q}^{*}\Fd_{q,r}}\right) 
       \xmd L_{r} 
 + \CompAuto{\Hd + \Gd_{q}\xmd\Fd_{q,q}^{*}\Kd_{q}}
\label{:equ:BMC-6} 
\\
\forall r \in Q'\bk p \quantsmsp
L_{r} &= \sum_{p \in Q'\bk q}
   \left(\CompAuto{\Fd_{r,p}+\Fd_{r,q}\xmd\Fd_{q,q}^{*}\Fd_{r,p}}\right)
   \xmd L_{p}
 + \CompAuto{\Kd_{r}+\Fd_{r,q}\xmd\Fd_{q,q}^{*}\Kd_{q}}
\eqpnt 
\e 
\label{:equ:BMC-7}
\end{align}
}%

When all unknowns~$L_{q}$ have been eliminated in the 
ordering~$\omega$ on~$Q$, the computation yields an 
expression that we denote by~$\srm{\Ac}$ and 
\shortlong{%
$\msp\CompAuto{\Ac} = \CompExpr{\srm{\Ac}}\msp$ holds.
}{%
\equnm{BMC-6} becomes:
\begin{equation}
    \CompAuto{\Ac} = \CompExpr{\srm{\Ac}}
    \eqpnt
\notag
\end{equation}
}%
As for the state-elimination method, the identities~$\IdRN$ 
(and~$\IdRI$ and~$\IdRJ$) are likely to have been involved at any 
step of the computation of~$\srm{\Ac}$.

\ifcheat\smallskipneg\fi%
\paragraph{Comparison with the state-elimination method}

The state-elimination method and the system-solution are indeed one 
and the same algorithm for computing the language accepted by a 
finite automaton, as stated by the following.

\begin{proposition}[\cite{Saka03}]
    \label{:pro:eli-equ}%
For any order~$\omega$ on the states of~$\Ac$,
\shortlong{%
$\msp\sem{\Ac} = \srm{\Ac}\msp$ holds.
}{%
it holds:\relax 
\begin{equation}
    \sem{\Ac} = \srm{\Ac}
    \eqpnt
\notag
\end{equation}
}%
\end{proposition}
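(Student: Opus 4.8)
The plan is to show that the two procedures are, step for step, \emph{the very same computation}, only dressed in two different notations---transitions of an automaton on one side, coefficients of a linear system on the other---so that when driven by a common order $\omega$ they manufacture identical expression trees. The whole argument is an induction on the number of states already eliminated, carrying an invariant that the current transition labels and the current system coefficients coincide under a fixed dictionary. First I would fix that dictionary. Recall that the state-elimination method runs on the automaton augmented by two states $i$ and $t$, with $\und$-transitions $i \to p$ for each initial $p$ and $p \to t$ for each final $p$, whereas the system-solution method keeps instead the master equation \equnm{BMC-1} for $\CompAuto{\Ac}$ and the constant terms $\dpT$. I identify the label of a transition $p \to r$ (both in $Q$) with the coefficient $\Fd_{p,r}$ of $L_r$ in the equation for $L_p$; the label of $i \to r$ with the coefficient of $L_r$ in the master equation; the label of $p \to t$ with the constant term $\Kd_p$ of the equation for $L_p$; and the label of $i \to t$ with the constant term $\Hd$ of the master equation. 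Under this dictionary the initial data of the two algorithms agree: the $\und$'s out of $i$ realise the symbols $\dpI$, the $\und$'s into $t$ realise the $\dpT$, and the original transition labels are exactly the $\Ed_{p,q}$ of \equnm{BMC-2}. This is the base case.

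Next I would treat the inductive step. Suppose the dictionary holds after eliminating some initial segment of $\omega$, and let $q$ be the next state to go. On the automaton side, removing $q$ replaces, for every pair of surviving states, the label of $p \to r$ by $\Gd + \Kd\,\Ld^{*}\,\Hd$ as in \figur{sta-eli-ste}, where $\Kd$, $\Ld$, $\Hd$ are the labels of $p\to q$, the loop at $q$, and $q\to r$; the cases $p=i$ and $r=t$ are included. On the system side, \corol{Ard} applied to the equation for $L_q$ yields \equnm{BMC-5}, and substitution gives \equnm{BMC-6}--\equnm{BMC-7}: the coefficient of $L_r$ in the equation for $L_p$ becomes $\Fd_{p,r} + \Fd_{p,q}\,\Fd_{q,q}^{*}\,\Fd_{q,r}$, and each constant term $\Kd_p$ becomes $\Kd_p + \Fd_{p,q}\,\Fd_{q,q}^{*}\,\Kd_q$. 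Through the dictionary---the loop $\Ld$ being exactly $\Fd_{q,q}$, so that its starred form is precisely the $\Fd_{q,q}^{*}$ produced by Arden's lemma---these two updates are, term by term, the same expression, built from the same sums and products in the same order. Hence the dictionary is restored, and the equality that is preserved is the \emph{syntactic} one, not merely an equivalence.

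Finally, once every state of $Q$ has been eliminated the automaton retains only the single arc $i \to t$, whose label is by definition $\sem{\Ac}$, while the master equation has collapsed to $\CompAuto{\Ac} = \Hd$ with $\Hd = \srm{\Ac}$; the dictionary identifies these two expressions, yielding $\sem{\Ac} = \srm{\Ac}$. The point that needs genuine care is purely organisational, and it is where I expect the real work to lie: the state-elimination method must be run on a \emph{generalised} automaton carrying a single expression label on each ordered pair of surviving states---old parallel edges of $\Ac$ having been merged by $\autplus$ at the outset---so that this single label matches the single coefficient $\Fd_{p,r}$ on the system side, and the two auxiliary states $i$ and $t$ must be handled exactly as the master equation and the constant column. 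One must also check that when successive eliminations pile several contributions onto the same pair $(p,r)$, the order in which the summands are adjoined on the automaton side coincides with the order in which the corresponding products are expanded in \equnm{BMC-7}, so that the two trees are literally identical and not merely related by associativity and commutativity of $\autplus$. Granting this matching of conventions, each elimination is a verbatim transcription of an Arden substitution, and the asserted equality follows; it is this bookkeeping of conventions and summation order, rather than any nontrivial manipulation of identities, that is the main obstacle.
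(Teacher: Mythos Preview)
Your proposal is correct and follows essentially the same approach as the paper: set up a dictionary between the transition labels of the generalised automaton (augmented with $i$ and $t$) and the coefficients $\Fd_{p,r}$, $\Gd_r$, $\Kd_p$, $\Hd$ of the system, verify it holds at the outset, and show by induction that eliminating state~$q$ in the automaton produces exactly the coefficients of \equnm{BMC-6}--\equnm{BMC-7} obtained by eliminating $L_q$ via Arden's lemma. The paper's proof is terser and does not explicitly flag the summation-order bookkeeping you raise, but the strategy and the invariant are identical.
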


\longonly{%
\begin{proof}
    We can build a generalised automaton~$\Bc '$ corresponding to the
system~\equnm{BMC-3}--\equnm{BMC-4},
with set of states is~$Q' \cup \{i,t\}$, where~$i$ and~$t$ do not
belong to~$Q'$, and such that, for all~$p$ and~$q$ in~$Q'$:

\thi the transition from~$p$ to~$q$ is labelled~$F_{p,q}$;

\thii the transition from~$p$ to~$t$ is labelled~$K_{p}$;

\thiii the transition from~$i$ to~$p$ is labelled~$G_{p}$;
 and 
 
\thiv the transition from~$i$ to~$t$ is labelled~$H$.

Note that this definition applied to the
system~\equnm{BMC-1}--\equnm{BMC-2} characterises the automaton
constructed in the first phase of the state-elimination method applied
to~$\Ac$.

The elimination in the system~\equnm{BMC-3}--\equnm{BMC-4} of the
unknown~$L_{q}$ by substitutions and the application of Arden's lemma
\index{Lemma!Arden's}%
give the system~\equnm{BMC-6}--\equnm{BMC-7}
whose coefficients are exactly the transition labels of the
generalised automaton obtained by removing the state~$q$ from~$\Bc'$.

Thus, since the starting points correspond and since each step
maintains the correspondence, the expression obtained 
for~$\CompAuto{\Ac}$ 
by the state-elimination method is the same as that obtained by the
solution of the system~\equnm{BMC-1}--\equnm{BMC-2}.
\end{proof}
}%

\ifcheat\smallskipneg\fi%
The state-elimination method reproduces, in the automaton~$\Ac$, the
computations corresponding to the solution of the system: the latter
is a \emph{formal proof} of the former.
As another consequence of \propo{eli-equ}, the following 
corollary of \theor{ord-eli-met} holds:

\ifcheat\smallskipneg\fi%
\begin{corollary}
    \label{:cor:ord-sem}
Let~$\omega $ and~$\omega' $ be two orders on the set of states of an 
automaton~$\Ac$.
Then,
\begin{equation}
       \IdRN  \land \IdRS \land \IdRP \msp\msp \dedjs 
        \srm{\Ac} \msp \equiv \msp \srm[\omega']{\Ac}
\eqpnt
\eee\eee
\notag
\end{equation}
\end{corollary}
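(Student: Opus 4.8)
The plan is to read off this corollary from the two results that immediately precede it, without performing any fresh manipulation of expressions. The pivotal fact is \propo{eli-equ}, which identifies the two algorithms: for \emph{each} fixed order it asserts the \emph{equality} of the outputs of the system-solution method and of the state-elimination method (equality of reduced expressions, that is, up to the trivial identities~$\Tmbf$ that are in force everywhere). Applying it once to~$\omega$ and once to~$\omega'$ I would record the two identities
\begin{equation}
\srm{\Ac} = \sem{\Ac}
\ee \text{and} \ee
\srm[\omega']{\Ac} = \sem[\omega']{\Ac}
\eqpnt
\notag
\end{equation}

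Next I would apply \theor{ord-eli-met} of Conway and Krob to the state-elimination outputs, which yields $\msp \sem{\Ac} \equiv \sem[\omega']{\Ac} \msp$ using only the identities $\IdRN \land \IdRS \land \IdRP$. Splicing this central equivalence between the two equalities above gives the chain $\msp \srm{\Ac} = \sem{\Ac} \equiv \sem[\omega']{\Ac} = \srm[\omega']{\Ac} \msp$, which is precisely the asserted equivalence $\srm{\Ac} \equiv \srm[\omega']{\Ac}$.

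The only step that warrants any attention is the accounting of which identities are consumed. Because the two outer relations are genuine equalities --- passing between the system-solution and the state-elimination computations costs nothing beyond~$\Tmbf$, since, as the proof of \propo{eli-equ} shows, the two are the very same process step for step --- the sole place where~$\IdRN$, $\IdRS$ or~$\IdRP$ is invoked is the central application of \theor{ord-eli-met}. Consequently the whole deduction rests on exactly the advertised set $\IdRN \land \IdRS \land \IdRP$, and on nothing more. There is accordingly no real obstacle to overcome here: all of the substance is carried by \theor{ord-eli-met} and \propo{eli-equ}, and the corollary is simply their juxtaposition.
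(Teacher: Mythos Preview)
Your proposal is correct and matches the paper's approach exactly: the paper presents this corollary as an immediate consequence of \propo{eli-equ} combined with \theor{ord-eli-met}, precisely the chain $\srm{\Ac} = \sem{\Ac} \equiv \sem[\omega']{\Ac} = \srm[\omega']{\Ac}$ that you wrote out.
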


\ifcheat\smallskipneg\fi%
\ifcheat\smallskipneg\fi%
\subsection{The McNaughton--Yamada algorithm}
\label{:sec:MNa-Yam-alg}

Given $\msp\Ac=\auta\msp$,
the McNaughton--Yamada algorithm (\cite{MNauYama60}) 
--- called here \MNY algorithm for short ---
\index{McNaughton--Yamada algorithm}%
\longshort{%
truly addresses the problem of computing the matrix~$E^{*}$, whereas 
the two preceding methods rather compute the sum of some of the 
entries of~$E^{*}$.
}{%
computes~$E^{*}$, whereas 
the two preceding methods rather compute~$\CompAuto{\Ac}$ directly.
}%
Like the former methods, it relies on an ordering of~$Q$ but it is based 
on a different grouping of computations\footnote{%
   In order to avoid confusion between the \emph{computations} of 
   expressions that denote the language accepted by~$\Ac$ and whose 
   variations are the subject of the chapter, and the 
   \emph{computations} within~$\Ac$, which is the way we call the 
   paths in the labelled directed graph~$\Ac$, we use the latter 
   terminology in this section.}
within~$\Ac$.

\ifcheat\smallskipneg\fi%
\ifcheat\miniskipneg \miniskipneg \fi%
\paragraph{Description of the algorithm}

\longonly{%
We write $\msp M_{p,q}\msp $ for $(E^{*})_{p,q}$:
\begin{equation}
    M_{p,q} = \Defi{w \in \Ae}{p \pathaut{w}{\Ac} q}
    \eqpnt
    \notag
\end{equation}
}%
The set~$Q$ ordered by~$\omega$ is identified with the set of 
integers from~$1$ to~$n = \jsCard{Q}$.
The key idea of the algorithm is to group the set of paths 
between any states~$p$ and~$q$ in~$Q$ according to the \emph{highest 
rank} of the intermediate states.
We denote by $\msp M^{(k)}_{p,q}\msp$ the set of labels of 
paths from~$p$ to~$q$
which \emph{do not pass through intermediate states of rank
greater than~$k$}.
And we shall compute expressions $\msp\Md^{(k)}_{p,q}\msp$
such that $\msp \CompExpr{\Md^{(k)}_{p,q}} =M^{(k)}_{p,q}\msp$.

\longshort{%
A path that does not pass through any intermediate state of
rank greater than~$0$ passes through no intermediate states, and
therefore reduces to a single transition. 
Thus, $M^{(0)}_{p,q}$ is,
for all~$p$ and~$q$ in~$Q$, the set of labels of transitions which go
from~$p$ to~$q$; \ie  $\msp M^{(0)}_{p,q} = E_{p,q}\msp $ and 
$\msp\Md^{(0)}_{p,q}=\Ed_{p,q}\msp$.
}{%
A path that does not pass through any intermediate state of
rank greater than~$0$ reduces to a direct transition. 
Thus $\msp M^{(0)}_{p,q} = E_{p,q}\msp $ and 
$\msp\Md^{(0)}_{p,q}=\Ed_{p,q}\msp$.
}%
A path which goes from~$p$ to~$q$ without
visiting intermediate states of rank greater than~$k$ is:

\smallskip
\tha either a path (from~$p$ to~$q$) which does not visit
intermediate states of rank greater than~$k-1\xmd$;

\smallskip
\thb or the concatenation:

\miniskipneg 
\miniskipneg \miniskipneg 
\begin{itemize}
\item of a path from~$p$ to $k$ without passing through
  states of rank greater than~\mbox{$k\!-\!1$};
\item followed by an arbitrary number of paths which
  go from $k$ to $k$ without passing through intermediate states
  of rank greater than~\mbox{$k\!-\!1\xmd$};
\item followed finally by a path from $k$ to~$q$ without
  passing through intermediate states of rank greater 
  than~\mbox{$k\!-\!1\xmd$}. 
\end{itemize}

\noindent
This decomposition
\longonly{is sketched in \figur{MNY} and} 
implies that for
all~$p$ and~$q$ in~$Q$, for all~$k \leq n$, it holds:
\begin{equation}
\Md^{(k)}_{p,q} = \Md^{(k-1)}_{p,q} +
\Md^{(k-1)}_{p,k}\xmd \left(\Md^{(k-1)}_{k,k}\right)^{*}\xmd \Md^{(k-1)}_{k,q}
\eqpnt
\notag
\end{equation}
\shortonly{%
We write $\msp M_{p,q}\msp $ for $(E^{*})_{p,q}$, $\Md_{p,q}$ for an 
expression that denotes it.
}%
The algorithm ends with the last equation:
\begin{equation}
\Md_{p,q} = \Md^{(n)}_{p,q} 
\e\text{if $\msp p \neq q\msp$}\EqVrgInt
\ee
\Md_{p,q} = \Md^{(n)}_{p,q} + \und
\e\text{if $\msp p = q\msp$}
\eqpnt
\notag
\end{equation}
For consistency with the previous sections, we write
$\msp \mny{\Ac} = \sum_{p \in I, q \in T}\Md_{p,q}\msp $
\shortlong{%
and $\msp\CompAuto{\Ac} = \CompExpr{\mny{\Ac}}\msp$ holds.
}{%
and it holds:
\begin{equation}
    \CompAuto{\Ac} = \CompExpr{\mny{\Ac}}
    \eqpnt
\notag
\end{equation}
}%

\longonly{%
\begin{figure}[htbp]
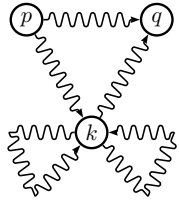

\centering
\VCDraw{%
\begin{VCPicture}{(-3,-3.7)(3,1.6)}
\State[p]{(-1.732,1)}{A}
\State[q]{(1.732,1)}{B}
\State[k]{(0,-2)}{C}
\Point{(-2.2,-2)}{C1}
\Point{(-1.5,-3.7)}{C2}
\Point{(2.2,-2)}{C3}
\Point{(1.5,-3.7)}{C4}
%
\ChgCoilSize{\SmallStateDiameter}
\ChgCoilLineWidth{1.2}
\EdgeZZ{A}{B}
\EdgeZZ{A}{C}
\EdgeZZ{C}{B}
\EdgeZZ{C2}{C}
\EdgeZZ{C3}{C}
\ChgEdgeArrowStyle{-}
\EdgeZZ{C}{C1}
\EdgeZZ{C1}{C2}
\EdgeZZ{C}{C4}
\EdgeZZ{C4}{C3}
\RstCoilSize
\end{VCPicture}
}%
\caption{Step~$k$ of \MNY algorithm}
    \label{:fig:MNY}
\end{figure}
}%

\begin{example}
    \label{:exa:MNY-ex1}%
The \MNY algorithm applied to the automaton~$\Rc_{1}$ of 
\figur{MNY-ex1} yields the following matrices 
(we group together, for each~$k$, the
four~$\Md^{(k)}_{p,q}$ into a matrix~$\Md^{(k)}$):
\begin{align}
\Md^{(0)} & = \matricedd{a}{b}{a}{b} 
\EqVrg \ee 
\Md^{(1)} = \matricedd{a + a(a)^{*}a}{b + a(a)^{*}b}%
                    {a + a(a)^{*}a}{b + a(a)^{*}b}
\EqVrg
\notag 
\\[1.5ex]
\Md^{(2)} & = 
\left (\begin{matrix}
{a + a(a)^{*}a + (b + a(a)^{*}b)(b + a(a)^{*}b)^{*}(a + a(a)^{*}a)}\\
{a + a(a)^{*}a + (b + a(a)^{*}b)(b + a(a)^{*}b)^{*}(a + a(a)^{*}a)}%
\end{matrix} \right .
\notag
\\
&\eee\eee
\left . \begin{matrix}
{(b + a(a)^{*}b) + (b + a(a)^{*}b)(b + a(a)^{*}b)^{*}(b + a(a)^{*}b)}\\
{(b + a(a)^{*}b) + (b + a(a)^{*}b)(b + a(a)^{*}b)^{*}(b + a(a)^{*}b)}
\end{matrix} \right )
\EqPnt\notag 
\end{align}
\end{example}

As in the first two methods, identities in~$\IdRN$ (as well 
as~$\IdRI$ and~$\IdRJ$) are likely to be used at any step of the 
\MNY algorithm.
What is new is that
identities~$\IdRD$ and~$\IdRU$ are particularly fit for the 
computations involved in the \MNY algorithm.
For instance, after using
these identities, the above matrices  become: 
\begin{equation}
\Md^{(1)}  =  \matricedd{a^{*}a}{a^{*}b}{a^{*}a}{a^{*}b}
\ee \text{and} \ee 
\Md^{(2)} = \matricedd{(a^{*}b)^{*}a^{*}a}%
                    {(a^{*}b)^{*}a^{*}b}%
                    {(a^{*}b)^{*}a^{*}a}%
                    {(a^{*}b)^{*}a^{*}b} 
\eqpnt
\notag 
\end{equation}

\begin{figure}[htbp]
\centering
\VCDraw{%
\begin{VCPicture}{(-1.4,-1cm)(4.4,0.8cm)}
\MediumState
\State[1]{(0,0)}{A}\State[2]{(3,0)}{B}
\Initial[n]{A}\Final[s]{A}
\ArcL{A}{B}{b}\ArcL{B}{A}{a}
\LoopW{A}{a}\LoopE{B}{b}
\end{VCPicture}
}%
\caption{The automaton~$\Rc_{1}$}
\medskipneg 
    \label{:fig:MNY-ex1}%
\end{figure}

\ifcheat\smallskipneg\fi%
\paragraph{Comparison with the state-elimination method}

Comparing the \MNY algorithm with the state-elimination method 
amounts to relating
two objects whose form and mode of construction are rather different:
on the one hand, a $Q\x Q$-matrix obtained by successive
transformations and on the other hand, an
expression obtained by repeated modifications of an automaton, hence of
a matrix, but one whose size decreases at each step.
\longonly{This leads us to a more detailed statement.}

\ifcheat\smallskipneg\fi%
\begin{proposition}[\cite{Saka03}]
    \label{:pro:com-mya-sem}
Let~$\msp\Ac = \auta\msp$ be an automaton and
for every~$p$ and~$q$ in~$Q$, 
let~$\Ac_{p,q}$ be the automaton defined by 
$\msp\Ac_{p,q} = \aut{Q,A,E,\{p\},\{q\}}\msp$.
\shortlong{%
For every order~$\omega$ on~$Q$,
$\msp
\IdRN\land\IdRA \msp\msp\dedtxt
\mny{\Ac_{p,q}} \msp \equiv \msp
\sem{\Ac_{p,q}} 
\msp$
holds.
}{%
For every (total) order~$\omega$ on~$Q$ and every~$p$ and~$q$ in~$Q$, 
it holds: 
\begin{equation}
\IdRN\land\IdRA \msp\msp\dedjs
\mny{\Ac_{p,q}} \msp \equiv \msp
\sem{\Ac_{p,q}} 
\eqpnt
\notag
\end{equation}
}%
\end{proposition}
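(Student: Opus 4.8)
The plan is to run the state-elimination method on $\Ac_{p,q}$ with the given order $\omega$ and to track, step by step, the labels it produces, showing that each one coincides --- modulo $\IdRN\land\IdRA$ --- with a \MNY expression. The starting point is that the two algorithms obey \emph{the same} local recurrence: removing a state $k$ replaces the label $\Gd$ of an edge $s\to s'$ by $\Gd+\Kd\,\Ld^{*}\,\Hd$, where $\Kd,\Ld,\Hd$ are the labels of $s\to k$, the loop on $k$, and $k\to s'$, which is exactly the step $\Md^{(k)}_{s,s'}=\Md^{(k-1)}_{s,s'}+\Md^{(k-1)}_{s,k}\,(\Md^{(k-1)}_{k,k})^{*}\,\Md^{(k-1)}_{k,s'}$. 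Since both start from $\Md^{(0)}_{s,s'}=\Ed_{s,s'}$, the only discrepancies to reconcile are the two auxiliary states $i,t$ together with the $\und$-labelled edges $i\to p$, $q\to t$ used by the state-elimination method, and the extra summand $\und$ that \MNY adds on the diagonal.

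First I would fix the rank of each state as its position in $\omega$ and prove by induction on the number $k$ of eliminated states the loop invariant: for any two states $s,s'$ of rank $>k$, the current label of $s\to s'$ coincides with $\Md^{(k)}_{s,s'}$ up to the associativity in $\IdRN$. This part is purely formal. The content lies in the companion invariants for the auxiliary edges: once $p$ has been eliminated, the label of $i\to s'$ equals $\Md^{(k)}_{p,s'}$ modulo $\IdRN\land\IdRA$; once $q$ has been eliminated, the label of $s\to t$ equals $\Md^{(k)}_{s,q}$; and once both are gone, the label of $i\to t$ equals $\Md^{(k)}_{p,q}$ (with an extra $+\und$ when $p=q$). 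Because all computations are performed modulo the trivial identities, the $\und$-labels of $i\to p$ and $q\to t$ are absorbed automatically and contribute no algebra, and one checks that $i\to t$ stays $\zed$ until both $p$ and $q$ have been removed.

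The identity $\IdRA$ enters at exactly one kind of moment: when the distinguished state $p$ (symmetrically $q$) is removed. Eliminating $p$ at rank $r_p$ turns the label of $i\to s'$ into $\und\,(\Md^{(r_p-1)}_{p,p})^{*}\,\Md^{(r_p-1)}_{p,s'}$, whereas the matching \MNY step gives $\Md^{(r_p)}_{p,s'}=\Md^{(r_p-1)}_{p,s'}+\Md^{(r_p-1)}_{p,p}\,(\Md^{(r_p-1)}_{p,p})^{*}\,\Md^{(r_p-1)}_{p,s'}$; these agree after factoring on the left and applying $\und+\Ed\,\Ed^{*}\equiv\Ed^{*}$ from $\IdRA$. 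The coincidence $p=q$ is the one place where the \MNY convention $\Md_{p,p}=\Md^{(n)}_{p,p}+\und$ is needed: removing $p$ produces $(\Md^{(r_p-1)}_{p,p})^{*}$ on $i\to t$, and since $\und+\Md^{(r_p)}_{p,p}\equiv\und+\Md^{(r_p-1)}_{p,p}(\Md^{(r_p-1)}_{p,p})^{*}\equiv(\Md^{(r_p-1)}_{p,p})^{*}$ by $\IdRA$, the isolated $\und$ appears, and a short induction shows $i\to t\equiv\und+\Md^{(k)}_{p,q}$ is then preserved. Reading the invariants at $k=n$ gives $\sem{\Ac_{p,q}}\equiv\Md^{(n)}_{p,q}=\mny{\Ac_{p,q}}$ when $p\neq q$, and $\sem{\Ac_{p,q}}\equiv\und+\Md^{(n)}_{p,q}=\mny{\Ac_{p,q}}$ when $p=q$.

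The main obstacle is bookkeeping rather than deep algebra: one must treat carefully the relative position of $p$ and $q$ in $\omega$ (which of the auxiliary edges is activated first), verify the $\zed$-edges are maintained up to the critical step, and separate the case $p=q$. It is worth stressing, by contrast with \theor{ord-eli-met}, that neither $\IdRS$ nor $\IdRP$ is required here: the two algorithms are compared \emph{for one and the same order} $\omega$, so only the folding identity $\IdRA$, together with the semiring identities $\IdRN$, is ever invoked.
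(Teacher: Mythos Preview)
Your proposal is correct and follows essentially the same route as the paper's proof: both track the state-elimination labels against the \MNY entries via the shared recurrence, establish the equality $\mBC{k}{r,s}=\Md^{(k)}_{r,s}$ for $k<\min(r,s)$, and then invoke $\IdRA$ precisely at the elimination of $p$ (resp.~$q$) to reconcile $(\Md^{(r_p-1)}_{p,p})^{*}\Md^{(r_p-1)}_{p,s'}$ with $\Md^{(r_p)}_{p,s'}$, propagating the equivalence to the $i$- and $t$-edges thereafter. The paper assumes $p<q$ and leaves the other cases (including $p=q$) to the reader, whereas you spell out the diagonal case explicitly; otherwise the arguments coincide.
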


\longonly{%
\begin{proof}

In the following, $\Ac$~and~$\omega $ are fixed and remain implicit. The
automaton~$\Ac$ has~$n$ states, identified with the integers from~$1$
to~$n$; the two algorithms perform~$n$ steps starting in a situation
called `step~$0$', the $k$th step of the state-elimination method
consisting of the removal of state~$k$, and that of algorithm\MNY
consisting of calculating the labels of paths that do not include
nodes (strictly) greater than~$k$. We write:
\begin{displaymath}
    \mBC{k}{r,s}
\end{displaymath}
for the label of the transition from~$r$ to~$s$ in the automaton
obtained from~$\Ac$ (and~$\omega $) at the $k$th step of the state
elimination method; necessarily, in this notation, $\msp k+1\jsleq r\msp$
and $\msp k+1\jsleq s\msp $ (abbreviated to $\msp k+1\jsleq r,s\msp $).
As above, we write:
\begin{displaymath}
    \mMNY{k}{r,s}
\end{displaymath}
for the entry~$r,s$ of the~$n\x n$ matrix computed by the $k$th
step of \MNY algorithm. At step~$0$, the automaton~$\Ac$ has not been
modified and we have:
\begin{equation}
\fa r,s \quantvrg 1\jsleq r,s\jsleq n\quantsp
\mMNY{0}{r,s} = \mBC{0}{r,s} \EqVrgInt
\eee \eee
\label{:equ:BM-1}
\end{equation}
which will be the base case of the inductions to come.
The \MNY algorithm is written:
\begin{multline}
\fa k \quantvrg 0< k\jsleq n \quantvrg
\fa r,s \quantvrg 1\jsleq r,s\jsleq n \quantsp \\
\mMNY{k}{r,s} = \mMNY{k-1}{r,s} + \mMNY{k-1}{r,k}\cdot
\left(\mMNY{k-1}{k,k}\right)^{*}\cdot \mMNY{k-1}{k,s} \eqpnt\ee
\label{:equ:BM-2}
\end{multline}
The state-elimination algorithm is written:
\begin{multline}
\fa k \quantvrg 0< k\jsleq n \quantvrg
\fa r,s \quantvrg k< r,s\jsleq n \quantsp \\
\mBC{k}{r,s} = \mBC{k-1}{r,s} + \mBC{k-1}{r,k}\cdot
\left(\mBC{k-1}{k,k}\right)^{*}\cdot \mBC{k-1}{k,s}
\label{:equ:BM-3}
\end{multline}
Hence we conclude, for given~$r$ and~$s$ and by induction on~$k$:
\begin{equation}
\fa r,s \quantvrg 1\jsleq  r,s\jsleq n \quantvrg
\fa k \quantvrg 0\jsleq  k <\min (r,s)  \quantsp
\mMNY{k}{r,s} = \mBC{k}{r,s}
\label{:equ:BM-4}
\end{equation}
We see in fact (as there is even so something to see) that
if~$k < \min (r,s)$ then all integer triples~$(l,u,v)$ such
that~$\mMNY{l}{u,v}$ occurs in the computation of~$\mMNY{k}{r,s} $ by
the (recursive) use of~\equnm{BM-2}, are such that~$l < \min (u,v)$.

Suppose now that we have~$p$ and~$q$, also fixed, such that~$1\jsleq
p<q\jsleq n$ (the other cases are dealt with similarly). We call the
initial and final states added to~$\Ac$ in the first phase of the
state-elimination method~$i$ and~$t$ respectively;
$i$~and~$t$ are not integers between~$1$ and~$n$.
The transition from~$i$ to~$p$ and that from~$q$ to~$t$ are
labelled~$\unAe$. Now let us consider step~$p$ of each algorithm.
For every state~$s$, $p<s$, $\mMNY{p}{p,s}$ is given by~\equnm{BM-2}:
\begin{equation}
\mMNY{p}{p,s} = \mMNY{p-1}{p,s} + \mMNY{p-1}{p,p}\cdot
\left(\mMNY{p-1}{p,p}\right)^{*}\cdot \mMNY{p-1}{p,s}
\notag
\end{equation}
and~$\mBC{p}{i,s}$ by:
\begin{equation}
\mBC{p}{i,s} =  \left(\mBC{p-1}{p,p}\right)^{*}\cdot \mBC{p-1}{p,s}
\notag
\end{equation}
and hence, by~\equnm{BM-4}:
\begin{equation}
\fa s \quantvrg p<s\jsleq n \quantsp
\IdRA \dedjs \mMNY{p}{p,s} \equiv \mBC{p}{i,s} \eqpnt \ee
\label{:equ:BM-5}
\end{equation}
Next we consider the steps following~$p$ (and row~$p$ of the
matrices~$\Md^{(k)}$).
For all~$k$, $p<k$, and all~$s$, $k<s\jsleq n$,
$\mMNY{k}{p,s}$ is always computed by~\equnm{BM-2}
and~$\mBC{k}{i,s}$ by:
\begin{equation}
\mBC{k}{i,s} =  \mBC{k-1}{i,s} + \mBC{k-1}{i,k}\cdot
\left(\mBC{k-1}{k,k}\right)^{*}\cdot \mBC{k-1}{k,s} \eqpnt
\label{:equ:BM-6}
\end{equation}
From~\equnm{BM-5}, and based on an observation analogous to the
previous one, we conclude from the term-by-term correspondence
of~\equnm{BM-2} and~\equnm{BM-6} that:
\begin{equation}
\fa k \quantvrg p<k \quantvrg
\fa s \quantvrg p<s\jsleq n \quantsp
\IdRA \dedjs \mMNY{k}{p,s} \equiv \mBC{k}{i,s} \eqpnt \ee \ee
\label{:equ:BM-7}
\end{equation}
The analysis of step~$q$ gives a similar, and symmetric, result to
that which we have just obtained from the analysis of step~$p$:
for all~$r$, $q<r$, we have:
\begin{gather*}
\mMNY{q}{r,q} = \mMNY{q-1}{r,q} + \mMNY{q-1}{r,q}\cdot
\left(\mMNY{q-1}{q,q}\right)^{*}\cdot \mMNY{q-1}{q,q}\\
\text{and}\eee \eee
\mBC{q}{r,t} =  \mBC{q-1}{r,q}\cdot \left(\mBC{q-1}{q,q}\right)^{*}\eee \eee
\end{gather*}
and hence
\begin{equation}
\fa r \quantvrg q<r\jsleq n \quantsp
\IdRA \dedjs \mMNY{q}{r,q} \equiv \mBC{q}{r,t} \eqpnt \ee
\label{:equ:BM-8}
\end{equation}
The steps following~$q$ give rise to an equation symmetric
to~\equnm{BM-7} (for column~$q$ of the matrices~$\Md^{(k)}$):
\begin{equation}
\fa k \quantvrg q<k \quantvrg
\fa r \quantvrg q<r\jsleq n \quantsp
\IdRA \dedjs \mMNY{k}{r,q} \equiv \mBC{k}{r,t} \eqpnt \ee \ee
\label{:equ:BM-9}
\end{equation}
Finally, from:
\begin{gather*}
\mMNY{k}{p,q} = \mMNY{k-1}{p,q} + \mMNY{k-1}{p,k}\cdot
\left(\mMNY{k-1}{k,k}\right)^{*}\cdot \mMNY{k-1}{k,q}\\
\text{and}\ee
\mBC{k}{i,t} =  \mBC{k-1}{i,t}+ \mBC{k-1}{i,k}\cdot
\left(\mBC{k-1}{k,k}\right)^{*}\cdot \mBC{k-1}{k,t}\ee
\end{gather*}
Equations~\equnm{BM-4}, \equnm{BM-7} and~\equnm{BM-9}
together allow us to conclude, by induction on~$k$, that:
\begin{equation}
\fa k \quantvrg q\jsleq k \jsleq n \quantsp
\IdRA \dedjs \mMNY{k}{p,q} \equiv \mBC{k}{i,t} \eqpnt \ee
\label{:equ:BM-10}
\end{equation}
When we reach~$k=n$ in this equation we obtain the identity we want.
\end{proof}
}%

\ifcheat\smallskipneg\fi%
As a consequence of \propo{com-mya-sem}, we have the following 
corollary of \theor{ord-eli-met}:\relax

\ifcheat\smallskipneg\fi%
\begin{corollary}
    \label{:cor:ord-mny}
Let~$\omega $ and~$\omega'$ be two orders on the states of an 
automaton~$\Ac$.
Then,
\begin{equation}
\textstyle{\IdRN  \land \IdRS \land \IdRP \msp\msp \dedjs 
        \mny{\Ac} \msp \equiv \msp \mny[\omega']{\Ac}}
\eqpnt
\eee\eee
\notag
\end{equation}
\end{corollary}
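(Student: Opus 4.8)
The plan is to reduce the order-independence of the \MNY output to that of the state-elimination output, which is exactly the content of \theor{ord-eli-met}, using \propo{com-mya-sem} as the bridge between the two algorithms. The key preliminary observation is that the matrices~$\Md^{(k)}$ computed by the \MNY algorithm depend only on the transition matrix~$E$ and on the order~$\omega$, and not on the choice of initial and final states. Hence the entry~$\Md_{p,q}$ produced for~$\Ac$ is literally the output of \MNY applied to the automaton $\msp\Ac_{p,q}=\aut{Q,A,E,\{p\},\{q\}}\msp$, that is, $\msp\Md_{p,q}=\mny{\Ac_{p,q}}\msp$. Unfolding the definition $\msp\mny{\Ac}=\sum_{p\in I,\,q\in T}\Md_{p,q}\msp$ then gives $\msp\mny{\Ac}=\sum_{p\in I,\,q\in T}\mny{\Ac_{p,q}}\msp$, and the same holds for~$\omega'$; so it suffices to prove the equivalence entrywise and then sum.

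For a fixed pair~$(p,q)$ I would assemble a three-link chain, all modulo $\IdRN\land\IdRS\land\IdRP$. First, \propo{com-mya-sem} applied to~$\Ac_{p,q}$ with the order~$\omega$ yields $\msp\mny{\Ac_{p,q}}\equiv\sem{\Ac_{p,q}}\msp$. Second, \theor{ord-eli-met} applied to the automaton~$\Ac_{p,q}$ gives $\msp\sem{\Ac_{p,q}}\equiv\sem[\omega']{\Ac_{p,q}}\msp$. Third, a further application of \propo{com-mya-sem}, this time with the order~$\omega'$, gives $\msp\sem[\omega']{\Ac_{p,q}}\equiv\mny[\omega']{\Ac_{p,q}}\msp$. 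Concatenating these three equivalences produces $\msp\mny{\Ac_{p,q}}\equiv\mny[\omega']{\Ac_{p,q}}\msp$, and summing over all~$p\in I$ and~$q\in T$ produces $\msp\mny{\Ac}=\sum_{p\in I,\,q\in T}\mny{\Ac_{p,q}}\equiv\sum_{p\in I,\,q\in T}\mny[\omega']{\Ac_{p,q}}=\mny[\omega']{\Ac}\msp$, which is the claim. The passage from the entrywise equivalence to the equivalence of the full sums is legitimate because the derivability relation~$\equiv$ is a congruence for~$\autplus$, so equivalent summands may be substituted inside a finite sum; crucially, this step invokes no star identity beyond those already assumed.

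The one point requiring a word of care is that \propo{com-mya-sem} is stated under the hypothesis $\IdRN\land\IdRU$, whereas the corollary advertises $\IdRN\land\IdRS\land\IdRP$. These are compatible: the unit--star identities $\IdRU$ of~\equnm{uni-ide} are themselves consequences of~$\IdRP$ modulo the trivial identities, since instantiating~$\Fd$ (respectively~$\Ed$) by~$\und$ in~\equnm{pro-sta-ide} and reducing with the trivial identities (which are always in force) yields $\msp\Ed^{*}\equiv\und\autplus\Ed\autprod\Ed^{*}\msp$ and $\msp\Ed^{*}\equiv\und\autplus\Ed^{*}\autprod\Ed\msp$. Thus $\IdRN\land\IdRS\land\IdRP$ makes every invocation of \propo{com-mya-sem} available, and no fresh computation is needed. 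I do not expect a genuine obstacle here, as the statement is a formal corollary in the exact spirit of \corol{ord-sem}; the only thing to watch is the bookkeeping — keeping the order~$\omega$ (respectively~$\omega'$) applied consistently on both the \MNY and the state-elimination side of each link, and checking that the reduction to the sum over $I\times T$ truly rests only on $\autplus$-congruence.
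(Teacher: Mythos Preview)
Your proposal is correct and follows exactly the route the paper intends: the corollary is presented in the text simply as ``a consequence of \propo{com-mya-sem}'' combined with \theor{ord-eli-met}, with no further argument, and your chain $\mny{\Ac_{p,q}}\equiv\sem{\Ac_{p,q}}\equiv\sem[\omega']{\Ac_{p,q}}\equiv\mny[\omega']{\Ac_{p,q}}$ followed by summation over $I\times T$ is precisely the intended unpacking. Your observation that $\IdRU$ is derivable from $\IdRP$ by instantiating one factor with~$\und$ (so that the hypotheses of \propo{com-mya-sem} are indeed available under $\IdRN\land\IdRS\land\IdRP$) is a detail the paper leaves implicit but which is needed for the argument to go through as stated.
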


\ifcheat\smallskipneg\fi%
\ifcheat\smallskipneg\fi%
\subsection{The recursive method}
\label{:sec:rec-met}%

\shortlong{%
This last method is due to Conway \cite{Conw71}.
It is based on computation on matrices \via \emph{block 
decomposition}.
}{%
The last method we want to quote appeared first in Conway's book 
\textit{Regular Algebra and Finite Machines} \cite{Conw71} which gave 
a new start to the formal study of rational expressions (\cf 
\CTchp{ZE}).
}%
\index{method!recursive}%
\index{block decomposition|see{matrix}}%
\index{matrix!block decomposition of --}%
Originally, it yields a proof of \propo{sta-mat}%
\longonly{%
(the entries 
of~$E^*$ belong to the rational closure of the entries of~$E$)}.
As we did above, we modify it so as  
to make it compute from~$\Ed$, a matrix of rational expressions which  
denotes~$E$, a matrix~$\Ed'$ of rational expressions which 
denotes the matrix~$E^*$. 

\ifcheat\smallskipneg\fi%
\paragraph{Description of the algorithm}

\longonly{%
The recursive method (for computing~$\Ed'$, our goal) is based on 
computation on matrices \via \emph{bloc decomposition}.
\index{bloc decomposition (of matrices)}%
\index{decomposition!bloc -- (of matrices)}%

Let~$M$ and~$M'$ be two $Q\x Q$-matrices (over any semiring indeed) 
and let~$Q$ be the disjoint union of~$R$ and~$S$.
Let us write their bloc decomposition according to~$Q=R\cup S$ as:
\begin{equation}
    M =  \matricedd{F}{G}{H}{K}
    \eee
    M' =  \matricedd{F'}{G'}{H'}{K'}   
\notag
\end{equation}
where~$F$ and~$F'$ are $R\x R$-matrices,
$K$ and~$K'$ are $S\x S$-matrices,
$G$ and~$G'$ $R\x S$-matrices,
and $H$ and~$H'$ $S\x R$-matrices.
The bloc decomposition is consistent with the matirx operations in 
the sense that we have:
\begin{multline}
    M + M' =  \matricedd{F+F'}{G+G'}{H+H'}{K+K'}
    \ee \text{and}\\ \ee
    M\matmul M' =  \matricedd{F\matmul F' + G\matmul H'}%
                             {F\matmul G' + G\matmul K'}%
                             {H\matmul F' + K\matmul H'}%
                             {H\matmul G' + K\matmul K'}   
\notag	
\end{multline}
}%
Let us write a block decomposition of~$E$ and the corresponding ones  
for~$\Ed$ and~$E^*$:
\begin{equation}
E= \matricedd{F}{G}{H}{K} \EqVrgInt
\ee
\Ed= \matricedd{\Fd}{\Gd}{\Hd}{\Kd} \EqVrgInt
\ee
E^*= \matricedd{U}{V}{W}{Z} 
\eqvrg
\notag
\end{equation}
where~$F$ and~$K$ (and thus~$\Fd$, $\Kd$, $U$ and~$Z$) are \emph{square matrices}.
By~\equnm{ide-U-lan}\footnote{%
   applied to matrices with entries in~$\PdAe$ rather than to 
   elements of~$\PdAe$.},
it follows that
\begin{equation}
E^* = \matricedd{U}{V}{W}{Z} =
\matricedd{1}{0}{0}{1} + \matricedd{F}{G}{H}{K}\matricedd{U}{V}{W}{Z} \eqvrg
\notag
\end{equation}
an equation which can be decomposed into a system of four other 
equations: 
\begin{alignat}{2}
U & =  1 + \CompExpr{\Fd}\xmd\xmd U + \CompExpr{\Gd}\xmd\xmd W  \eqvrg 
&
Z & =  1 + \CompExpr{\Hd}\xmd\xmd V + \CompExpr{\Kd}\xmd\xmd Z   \eqvrg
\ee
\label{:equ:con1}\\
V & =  \CompExpr{\Fd}\xmd\xmd V + \CompExpr{\Gd}\xmd\xmd Z \eqvrg  
& \ee\text{and}\eee
W & =  \CompExpr{\Hd}\xmd\xmd U + \CompExpr{\Kd}\xmd\xmd W   \eqpnt 
\label{:equ:con2}
\end{alignat}
\corol{Ard} applies to \equnm{con2} and then, after substitution, to 
\shortlong{%
\equnm{con1}.
}{%
\equnm{con1} gives:
\begin{alignat}{2}
V &= \CompExpr{\Fd}^*\xmd\CompExpr{\Gd}\xmd\xmd Z &\e \et \e 
W & = \CompExpr{\Kd}^*\xmd\CompExpr{\Hd}\xmd\xmd U \eqpnt
		\ee
		\notag
		\\
U &= (\CompExpr{\Fd}+\CompExpr{\Gd}\xmd\xmd\CompExpr{\Kd}^*\xmd\CompExpr{\Hd})^* 
&\e \et \e 
Z &= (\CompExpr{\Kd}+\CompExpr{\Hd}\xmd\xmd\CompExpr{\Fd}^*\xmd\CompExpr{\Gd})^* 
\eqpnt
\notag
\end{alignat}
}%
This procedure leads to the computation of~$E^{*}$ by induction on 
its dimension.
By the induction hypothesis, obviously fulfilled for matrices of 
dimension~$1$, $\msp\CompExpr{\Fd}^{*}$ and~$\CompExpr{\Kd}^{*}$ are 
denoted by matrices of rational expressions~$\Fd'$ and~$\Kd'$. 
Let us write
\begin{equation}
 \Ed'= \matricedd{({\Fd} + {\Gd}\xmd {\Kd'}\xmd {\Hd})^*}%
                {{\Fd'}\xmd{\Gd}\xmd ({\Kd} + {\Hd}\xmd{\Fd'}\xmd{\Gd})^*}%
                {{\Kd'}\xmd{\Hd}\xmd ({\Fd} + {\Gd}\xmd{\Kd'}\xmd{\Hd})^*}%
                {({\Kd} + {\Hd}\xmd {\Fd'}\xmd{\Gd})^*} \eqpnt
\notag
\end{equation}
and $\msp\CompExpr{\Ed'}=E^*\msp$ holds.
Another application of the induction hypothesis to 
$\msp\CompExpr{({\Fd} + {\Gd}\xmd {\Kd'}\xmd {\Hd})}^*\msp$
and
$\msp\CompExpr{({\Kd} + {\Hd}\xmd {\Fd'}\xmd{\Gd})}^*\msp$
shows that the entries of~$\Ed'$,
which we denote by $\msp \rcm{\Ac}\xmd$,
where~$\tau$ is the recursive division of~$Q$ used in the computation,
are all in~$\RatEA$.

\ifcheat\smallskipneg\fi%
\begin{example}
    \label{:exa:rcm-ex1}%
The recursive method applied to the automaton~$\Rc_{1}$ of 
\exemp{MNY-ex1} (\cf~\figur{MNY-ex1})
directly gives 
(there is no choice for the recursive division):
\begin{equation}
\rcm{\Rc_{1}}  =  \matricedd{(a + b(b)^{*}a)^{*}}%
                            {a^{*}b(b + a(a)^{*}b)^{*}}%
                            {b^{*}a(a + b(b)^{*}a)^{*}}%
                            {(b + a(a)^{*}b)^{*}}
\eqpnt 
\notag 
\end{equation}
\end{example}

\ifcheat\smallskipneg\fi%
\ifcheat\smallskipneg\fi%
\ifcheat\smallskipneg\fi%
\ifcheat\smallskipneg\fi%
\paragraph{Comparison with the state-elimination method}

Both the recursive method and the \MNY algorithm  yield a matrix of 
expressions.
\exemp{rcm-ex1} shows that
there is no hope for an easy inference of teh equivalence of the two 
matrices. 
We state however the following conjecture.

\ifcheat\smallskipneg\fi%
\begin{conjecture}
\label{:con:com-rec-sem}%
Let~$\Ac = \auta$ be an automaton.
For every recursive division $\tau$ of $Q$ and for every pair $(p,q)$ 
of states,
there exists an ordering $\omega$ of $Q$ such that:
\ifcheat\smallskipneg\fi
\ifcheat\smallskipneg\fi%
\begin{equation}
{\IdRN\land\IdRU} \msp\msp\dedjs 
     \left(\rcm{\Ac}\right)_{p,q} \msp\equiv\msp \sem{\Ac_{p,q}}
\eqpnt 
\notag
\end{equation}
\end{conjecture}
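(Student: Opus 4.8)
The plan is to reduce the statement, by means of \propo{com-mya-sem}, to a comparison between the recursive method and the \MNY algorithm, and then to carry out that comparison by induction on the recursive division~$\tau$. Indeed, \propo{com-mya-sem} already gives $\mny{\Ac_{p,q}} \equiv \sem{\Ac_{p,q}}$ modulo $\IdRN\land\IdRU$ for \emph{every} order~$\omega$; it is therefore enough to produce, for the given~$\tau$ and~$(p,q)$, an order~$\omega$ for which $\left(\rcm{\Ac}\right)_{p,q}\equiv\mny{\Ac_{p,q}}$ modulo the same identities, and then to chain the two equivalences. The advantage of routing through the \MNY algorithm is that both members are now entries of a matrix that denotes~$E^{*}$, so that the block formulas defining~$\rcm{\Ac}$ can be confronted directly with the rank-by-rank computation of the matrices~$\Md^{(k)}$.

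First I would use the top-level block decomposition $Q=R\cup S$ dictated by~$\tau$ and let the choice of~$\omega$ depend on the block containing~$(p,q)$. When $p,q\in R$, the recursive method returns the $(p,q)$-entry of the block $\bigl[(\Fd+\Gd\,\Kd'\,\Hd)^{*}\bigr]$, computed from the division $\tau|_{R}$ applied to the matrix $\Fd+\Gd\,\Kd'\,\Hd$. I would then take~$\omega$ ranking all states of~$S$ below those of~$R$: after its first $\jsCard{S}$ steps the \MNY algorithm has precisely folded~$S$ into~$R$, its $R\times R$ block being the matrix of labels of paths using only intermediate states of~$S$, while the remaining steps compute the star of that matrix with the sub-order $\omega|_{R}$. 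Choosing $\omega|_{R}$ by the induction hypothesis for the pair~$(p,q)$ relative to $(\Fd+\Gd\,\Kd'\,\Hd,\tau|_{R})$, and using that~$\equiv$ is a congruence, would yield the required equivalence; the cross case $p\in R,\ q\in S$ is symmetric, ranking~$R$ below~$S$ so that \MNY produces $\Fd'\,\Gd\,(\Kd+\Hd\,\Fd'\,\Gd)^{*}$, which is exactly the block~$V$ of~$\rcm{\Ac}$.

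The base case and the shallow cases reassure one that the scheme is correct: whenever a folded block reduces to a single state the two computations agree \emph{verbatim}. One checks this on~$\Rc_{1}$ of \exemp{MNY-ex1}, where the diagonal entry $\left(\rcm{\Rc_{1}}\right)_{1,1}=(a+b\,b^{*}a)^{*}$ is returned, with no use of any identity, by the state-elimination order $2<1$.

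The main obstacle --- and, I expect, the reason the statement is only conjectured --- appears in the inductive step as soon as a \emph{folded block carries two or more states}. The recursive entry above does not use the star of the folded block along a single path: through the product $\Gd\,\Kd'\,\Hd$ it uses the whole matrix~$\Kd'$, and the two diagonal entries of~$\Kd'$ are produced by the block formulas after \emph{opposite} internal foldings (the $U$-part of~$\Kd'$ folds the lower sub-block of~$S$ first, the $Z$-part folds the upper one). No single sub-order $\omega|_{S}$ can reproduce both $\tau|_{S}$-forms at once, and reconciling them is exactly what the sum-star identity $\IdRS$ and the product-star identity $\IdRP$ accomplish --- identities that are \emph{not} available in $\IdRN\land\IdRU$. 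Hence the naive induction cannot close, and a proof would have to establish that, although the individual entries of the folded star fail to match under $\IdRN\land\IdRU$, the specific combination $\Gd\,\Kd'\,\Hd$ and the resulting entry collapse to a shape reachable from some elimination order using only these weak identities; alternatively, one would need a genuinely global, non-inductive correspondence between~$\tau$ and~$\omega$ that bypasses the entrywise matching of sub-block stars. Deciding which of these is the case is precisely the open point.
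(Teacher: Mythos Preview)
The statement you were given is a \emph{conjecture}, not a theorem: the paper explicitly labels it as such and offers no proof. Your proposal correctly recognises this --- you outline the natural inductive reduction via \propo{com-mya-sem} and the \MNY algorithm, and then pinpoint exactly where it breaks down (the folded block~$\Kd'$ requires two incompatible sub-orders, and reconciling them appears to need~$\IdRS$ and~$\IdRP$, which are not available). This is a sound diagnosis of the obstruction, and your closing remark that ``deciding which of these is the case is precisely the open point'' is accurate: the paper leaves the matter open.

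So there is nothing to compare your argument against; what you have written is not a proof but a well-reasoned explanation of why the obvious attack fails, which is the appropriate response to a conjecture.
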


\ifcheat\smallskipneg\fi%
More generally, and as a conclusion of the description of these four 
methods, one would conjecture that \emph{the rational expressions 
computed from a same finite automaton are all equivalent modulo the 
natural identities and the aperiodic ones~$\IdRS$ and~$\IdRP$.}
Even if \emph{computed from} is not formal enough, the above 
developments should make the general idea rather clear.

\subsection{Star height and loop complexity}
\label{:sec:sth-lcp}%

\longonly{%
The purpose of this last subsection is to present a refinement of Kleene's 
Theorem --- or, rather, of the Fundamental Theorem of Finite Automata 
--- 
which relates even more closely than above an automaton and the 
rational expressions that are computed from it.
}%

Among the three rational operators~$+$, $\cdot$ and~$*$, 
the operator~$*$ is the one that `gives access to the infinite', hence
the idea of measuring the complexity of an expression 
by finding the degree of nestedness of 
this operator, 
a number called \emph{star height}. 
On the other hand, it is the 
\emph{circuits} in a finite automaton that produce an infinite number 
of computations, `all the more' that the circuits are more 
`entangled'.
The intuitive idea of entanglement of circuits will be captured by 
the notion of \emph{loop complexity}.
\shortlong{%
A refinement of \theor{fun-aut-fm} relates 
}{%
We show how
}%
the loop complexity of an automaton to 
the star height of an expression that is computed from this 
automaton, a result which is due originally to Eggan (\cite{Egga63}).

\ifcheat\smallskipneg\fi%
\paragraph{Star height of an expression}

Let~$\Ed$ be an expression over~$\Ae$.
The \emph{star height} of~$\Ed$, denoted by~$\jsHaut{\Ed}$, is 
inductively defined by 
\shortlong{%
$\msp \jsHaut{\Ed} = \max (\jsHaut{\Ed'},\jsHaut{\Ed''})\msp$ if
$\msp\Ed= \Ed' + \Ed''\msp$ or $\msp\Ed= \Ed' \cdot \Ed''\msp$,
$\msp \jsHaut{\Ed} = 1 + \jsHaut{\Fd} \msp$
if $\msp\Ed= \Fd^{*}\msp$,
and starting from 
$\msp \jsHaut{\zed} = \jsHaut{\und} =0\msp$ 
and $\msp\jsHaut{a} =0\msp$, 
for every~$a$ in~$A$.
}{%
\begin{align}
& \text{if \e $\msp\Ed= \zed\msp$, $\msp\Ed= \und\msp$
      or $\msp\Ed= a \in A\msp$,} &\eee
\jsHaut{\Ed}& = 0 \eqvrg
\notag
\\
& \text{if \e $\msp\Ed= \Ed' + \Ed''\msp$
      or $\msp\Ed= \Ed' \cdot \Ed''\msp$,} & \eee
\jsHaut{\Ed}& =
\max (\jsHaut{\Ed'},\jsHaut{\Ed''}) \eqvrg \ee
\notag
\\
& \text{if \e $\msp\Ed= \Fd^{*}\msp$,} & \eee
\jsHaut{\Ed}& = 1 + \jsHaut{\Fd} 
\eqpnt
\notag
\end{align}
}%
\index{star height!of expression}%
\index{expression!star height of --}%
\index{height|see{star height}}%

\begin{example}
    \label{:exa:sta-hgt-1}%
\thi
$\msp\jsHaut{(a+b)^{*}} = 1\msp$;
\ee 
$\msp\jsHaut{a^{*}\xmd (b\xmd a^{*})^{*}} = 2\msp$.

\thii
The heights of the three expressions computed for the 
automaton~$\Dc_{3}$ at \sbsct{sta-eli-met} are:
$\msp\jsHaut{\sem[\omega_{1}]{\Dc_{3}}}=2\msp$,
$\msp\jsHaut{\sem[\omega_{2}]{\Dc_{3}}}=3\msp$, and
$\msp\jsHaut{\sem[\omega_{3}]{\Dc_{3}}}=3\msp$.
\end{example}

\shortlong{%
Two equivalent
expressions may then have different star heights, and thus give
rise to the \emph{star-height problem} (see Notes and \CTchp{LM}).
}{%
These examples draw attention to the fact that two equivalent
expressions may have different star heights and that star height is 
unrelated to the length.
They also naturally give
rise to the so-called \emph{star-height problem}.
As it does not directly pertain to the matter developed in this 
chapter, we postpone the few indications we give on this problem
to the Notes section ((see also \CTchp{LM}).
}%
\index{star height!problem}%

\paragraph{Loop complexity of an automaton}

\shortlong{%
We call \emph{loop complexity} of an automaton~$\Ac$  
\index{loop!complexity}%
the integer $\msp\jsEnla{\Ac}\msp$ defined inductively by the
following equations (where a \emph{ball}
is a  non-trivial strongly connected component):

\smallskip  
$\jsEnla{\Ac}  = 0$  
\PushLine
if~$\Ac $ contains no balls (in particular if~$\Ac$ is empty);\e
          
\smallskip  
$\jsEnla{\Ac}  =
 \max\Defi{\jsEnla{\Pc}}{\text{$\Pc$ a ball in~$\Ac$}}$  
\PushLine
if~$\Ac$ is not strongly connected;\e

\smallskip  
$\jsEnla{\Ac}  = 1 + 
 \min \Defi{\jsEnla{\Ac \bk \{s\}}}{\text{$s$ state of~$\Ac$}}$
\PushLine
if~$\Ac$ is strongly connected.\e
}{%
Let $\msp\Ac=\auta\msp$ be an automaton;
we call \emph{balls}\footnote{%
   Translation of the French: \emph{pelote}.}
the strongly connected components of~$\Ac$  
that contain at least one transition.
\index{graph!ball in --}%
\index{graph!strongly connected component}%
In other words, a strongly connected component that contains at least
two states is a ball, and a strongly connected component reduced to a
single state~$s$ is a ball if and only if~$s$ is the source (and the
destination) of at least one loop. 
Balls are pairwise disjoint but do
not form a covering (hence a partition) of~$Q$ since a state may belong
to no ball (\cf \figur{ree-con}).
\begin{figure}[htbp]
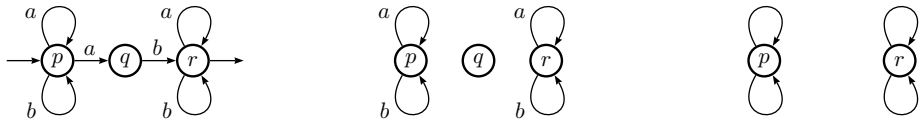

\centering
\VCDraw[.9]{%
\begin{VCPicture}{(-1.4,-1.4cm)(5.4,1.4cm)}
\State[p]{(0,0)}{A}\State[q]{(2,0)}{B}\State[r]{(4,0)}{C}
\Initial{A}\Final{C}
\EdgeL{A}{B}{a}\EdgeL{B}{C}{b}
\LoopN{A}{a}\LoopS{A}{b}\LoopN{C}{a}\LoopS{C}{b}
\end{VCPicture}
}%
\PushLine
\VCDraw[.9]{%
\begin{VCPicture}{(-1.4,-1.4cm)(5.4,1.4cm)}
\State[p]{(0,0)}{A}\State[q]{(2,0)}{B}\State[r]{(4,0)}{C}
\LoopN{A}{a}\LoopS{A}{b}\LoopN{C}{a}\LoopS{C}{b}
\end{VCPicture}
}%
\PushLine
\VCDraw[.9]{%
\begin{VCPicture}{(-1.4,-1.4cm)(5.4,1.4cm)}
\State[p]{(0,0)}{A}\State[r]{(4,0)}{C}
\LoopN{A}{}\LoopS{A}{}\LoopN{C}{}\LoopS{C}{}
\end{VCPicture}
}%
\caption{An automaton, its strongly connected components and its two balls}
\label{:fig:ree-con}
\end{figure}

\begin{definition}
    \label{:def:loo-cpl}
The \emph{loop complexity} of 
an automaton~$\Ac$  
\index{star height!loop complexity}%
is the integer $\msp\jsEnla{\Ac}\msp$ defined inductively by the
following equations:

\smallskip  
$\jsEnla{\Ac}  = 0$  
\PushLine
if~$\Ac $ contains no balls (in particular if~$\Ac$ is empty);\e
          
\smallskip  
$\jsEnla{\Ac}  =
 \max\Defi{\jsEnla{\Pc}}{\text{$\Pc$ a ball in~$\Ac$}}$  
\PushLine
if~$\Ac$ is not strongly connected;\e

\smallskip  
$\jsEnla{\Ac}  = 1 + 
 \min \Defi{\jsEnla{\Ac \bk \{s\}}}{\text{$s$ state of~$\Ac$}}$
\PushLine
if~$\Ac$ is strongly connected.\e
\end{definition}

\figur{loo-cpl} shows automata with loop complexity~$1$, $2$ 
and~$3$ respectively.

\begin{figure}[htbp]
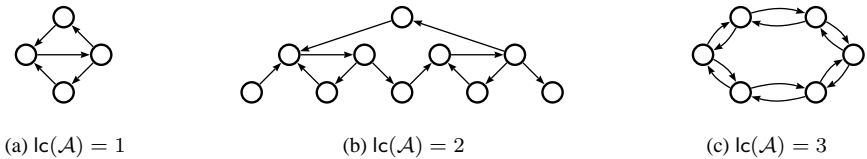

\centering
\subfigure[$\jsEnla{\Ac}  = 1$]{%
\ee
\VCDraw{
\begin{VCPicture}{(-1.4,-1.4)(1.4,1.4)}
\SmallState
\State{(0,1)}{A}
\State{(-1,0)}{B}\State{(1,0)}{C}
\State{(0,-1)}{D}
%
\Edge{A}{B}\Edge{B}{C}\Edge{C}{A}
\Edge{C}{D}\Edge{D}{B}
\end{VCPicture}
}%
\ee}%
\PushLine
\subfigure[$\jsEnla{\Ac}  = 2$]{%
\VCDraw{
\begin{VCPicture}{(-4.4,-1.4)(4.4,1.4)}
\SmallState
\State{(0,1)}{A}
\State{(0,-1)}{EF}
\VCPut{(-2,0)}%
{%
\State{(-1,0)}{B1}\State{(1,0)}{C1}
\State{(0,-1)}{D1}\State{(-2,-1)}{E1}
}
\VCPut{(2,0)}%
{%
\State{(-1,0)}{B2}\State{(1,0)}{C2}
\State{(0,-1)}{D2}\State{(2,-1)}{F2}
}
%
\Edge{A}{B1}
\Edge{E1}{B1}\Edge{B1}{C1}\Edge{C1}{EF}
\Edge{C1}{D1}\Edge{D1}{B1}
\Edge{C2}{A}
\Edge{EF}{B2}\Edge{B2}{C2}\Edge{C2}{F2}
\Edge{C2}{D2}\Edge{D2}{B2}
\end{VCPicture}
}%
}%
\PushLine
\subfigure[$\jsEnla{\Ac}  = 3$]{%
\ee
\VCDraw{
\begin{VCPicture}{(-0.4,-1.4)(4.4,1.4)}
\SmallState
\State{(0,0)}{E}
\State{(4,0)}{F}
\VCPut{(1,-1)}%
{\State{(0,0)}{A1}\State{(2,0)}{B1}}
\VCPut{(1,1)}%
{\State{(0,0)}{A2}\State{(2,0)}{B2}}
%
\ArcL{A1}{E}{}\ArcL{E}{A1}{}
\ArcL{A2}{E}{}\ArcL{E}{A2}{}
\ArcL{A1}{B1}{}\ArcL{B1}{A1}{}
\ArcL{A2}{B2}{}\ArcL{B2}{A2}{}
\ArcL{B1}{F}{}\ArcL{F}{B1}{}
\ArcL{B2}{F}{}\ArcL{F}{B2}{}
\end{VCPicture}
\ee}%
}%
\medskipneg
\caption{Automata with differents loop complexities}
\label{:fig:loo-cpl}
\end{figure}

\paragraph{Eggan's Theorem}      
We are now ready to state the announced refinement of the Fundamental 
theorem of finite automata.
}%

\begin{theorem}[Eggan~\cite{Egga63}]
    \label{:the:the-Egg}%
The loop complexity of a trim automaton~$\Ac $ is the minimum
of the star height of the expressions computed on~$\Ac$ by
the state-elimination method.
\end{theorem}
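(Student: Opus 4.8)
The plan is to prove the equality by establishing the two inequalities $\min_{\omega}\jsHaut{\sem{\Ac}}\jsleq\jsEnla{\Ac}$ and, for \emph{every} order $\omega$, $\jsEnla{\Ac}\jsleq\jsHaut{\sem{\Ac}}$, both by induction on the number of states and following the very recursion that defines $\jsEnla{\Ac}$. It is convenient to prove the slightly more general statement in which $I$ and $T$ are arbitrary, since $\jsEnla{\Ac}$ depends only on the underlying graph $\langle Q,A,E\rangle$ and not on $I$ and $T$. I would also record at the outset the elementary monotonicity facts $\jsHaut{\Ed\autplus\Fd}=\jsHaut{\Ed\autprod\Fd}=\max(\jsHaut{\Ed},\jsHaut{\Fd})$ and $\jsHaut{\autstar{\Ed}}=1+\jsHaut{\Ed}$, which show that the star height of any subexpression that genuinely occurs on a label never exceeds $\jsHaut{\sem{\Ac}}$.

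For the \textbf{upper bound} I would build an order $\omega$ realising $\jsEnla{\Ac}$. If $\Ac$ has no ball its graph is acyclic, and eliminating the states in reverse topological order never creates a loop, so $\sem{\Ac}$ is star-free and $\jsHaut{\sem{\Ac}}=0=\jsEnla{\Ac}$. If $\Ac$ is not strongly connected, I treat its balls one at a time in an order compatible with the acyclic quotient by strongly connected components; since no loop ever straddles two distinct components, the stars produced stay confined to the balls and, by induction, one gets $\jsHaut{\sem{\Ac}}=\max\Defi{\jsEnla{\Pc}}{\Pc \text{ a ball}}=\jsEnla{\Ac}$. If $\Ac$ is strongly connected, pick a state $s_0$ achieving the minimum in $\jsEnla{\Ac}=1+\min_{s}\jsEnla{\Ac\bk\{s\}}$, eliminate $\Ac\bk\{s_0\}$ first with an order that is good by induction, so that the self-loop accruing at $s_0$ (and the edges joining $s_0$ to the added endpoints) have star height $\jsleq\jsEnla{\Ac\bk\{s_0\}}$, and finally eliminate $s_0$; this last step introduces exactly one extra star, giving $\jsHaut{\sem{\Ac}}\jsleq 1+\jsEnla{\Ac\bk\{s_0\}}=\jsEnla{\Ac}$.

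For the \textbf{lower bound} I would argue, for an arbitrary order $\omega$, that a star of height $\jsEnla{\Ac}$ is forced to appear. Assuming $\jsEnla{\Ac}=c\jsgeq 1$, choose a ball $\Pc$ with $\jsEnla{\Pc}=c$ and let $s$ be the \emph{last} state of $\Pc$ eliminated by $\omega$. Just before $s$ is removed, every other state of $\Pc$ is already gone; because $\Pc$ is a strongly connected component, no circuit through $s$ can leave $\Pc$, and eliminating states outside $\Pc$ never alters an edge between two states of $\Pc$. Hence the self-loop label $\Ld$ then carried by $s$ is precisely the expression produced by state elimination, in the order induced by $\omega$, on the automaton obtained from $\Pc\bk\{s\}$ by turning the transitions leaving $s$ into transitions from a new initial state and those entering $s$ into transitions to a new final state. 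That automaton is trim and has loop complexity $\jsEnla{\Pc\bk\{s\}}$, so the induction hypothesis yields $\jsHaut{\Ld}\jsgeq\jsEnla{\Pc\bk\{s\}}$, while the strongly connected clause of the definition gives $\jsEnla{\Pc\bk\{s\}}\jsgeq\jsEnla{\Pc}-1=c-1$. Eliminating $s$ then creates the subexpression $\autstar{\Ld}$, of star height $\jsgeq c$; and since $\Ac$ is trim, $s$ lies on a successful computation, so $\autstar{\Ld}$ is multiplied by nonzero labels and added to further terms, never annihilated, and survives into $\sem{\Ac}$, whence $\jsHaut{\sem{\Ac}}\jsgeq c=\jsEnla{\Ac}$.

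The delicate point, on which I would spend most care, is this last direction: one must justify rigorously that the self-loop sitting on the last surviving state $s$ of a maximal ball \emph{is} a state-elimination expression of a strictly smaller \emph{trim} automaton with loop complexity $\jsEnla{\Pc\bk\{s\}}$, so that the induction hypothesis genuinely applies, and that the trimness of $\Ac$ prevents the resulting $\autstar{\Ld}$ from being erased by the trivial identities during the remaining eliminations. The acyclicity of the quotient by strongly connected components—ensuring that loops never merge two distinct balls—is exactly what keeps the bookkeeping local to a single ball in both inductions and makes them meet at the value $\jsEnla{\Ac}$.
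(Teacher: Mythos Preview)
Your two-inequality argument is essentially correct, but it is a genuinely different route from the paper's. The paper does \emph{not} split into an upper and a lower bound. Instead it introduces an auxiliary quantity, the \emph{loop index} $\ILC{\Ac}$, defined by the same recursion as $\jsEnla{\Ac}$ except that in the strongly connected clause one removes the $\omega$-\emph{largest} state rather than a minimising one. Trivially $\jsEnla{\Ac}=\min_{\omega}\ILC{\Ac}$, and the paper then proves the \emph{equality} $\ILC{\Ac}=\jsHaut{\sem{\Ac}}$ for every order~$\omega$ (\propo{enl-hau-eto}). The single induction is carried out on generalised automata whose transitions already carry expressions: one extends the loop index by letting each transition contribute its own star height, and then checks that one step of state elimination (removing the $\omega$-smallest state) leaves the generalised loop index unchanged. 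This handles both inequalities at once and yields the sharper statement that $\jsHaut{\sem{\Ac}}$ is determined, for each~$\omega$, by the purely combinatorial quantity~$\ILC{\Ac}$.

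Your approach trades that uniformity for directness. The upper bound is the same in spirit. For the lower bound, your key reduction --- that the self-loop accumulated on the last surviving state~$s$ of a maximal ball~$\Pc$ is, up to the original letter-loop on~$s$, exactly $\sem[\omega']{\Cc}$ for a trim automaton~$\Cc$ on $\Pc\setminus\{s\}$ with $\jsEnla{\Cc}=\jsEnla{\Pc\setminus\{s\}}$ --- is correct: maximality of the strongly connected component forbids any circuit through~$s$ from leaving~$\Pc$, and splitting~$s$ into a source~$i'$ and a sink~$t'$ gives a step-by-step bijection between the two eliminations. The point you rightly flag as delicate, the survival of $\Ld^{*}$ into $\sem{\Ac}$, is fine too: trimness is preserved under state elimination, so the edge carrying $\Ld^{*}$ always lies on an $i$--$t$ path, and each subsequent elimination only takes sums and products with that label, never dropping it. What you lose relative to the paper is the per-order identity $\jsHaut{\sem{\Ac}}=\ILC{\Ac}$; what you gain is that you never need to define or manipulate indices of expression-labelled automata.
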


This theorem may be proved by establishing a more 
precise statement which involves a refinement of the loop complexity 
\index{loop!index}%
and which we call the \emph{loop index}.

\shortlong{%
If~$\omega$ is an order on the state set of~$\Ac$, 
we write~$\OL{\omega}$ for the \emph{greatest state} 
according to~$\omega$. 
If~$\Rc$ is a subautomaton of~$\Ac$, we also
write~$\omega $ for the trace of~$\omega $ over~$\Rc$ and, in such a 
context, $\OL{\omega}$~for the greatest state 
of~$\Rc$ according to~$\omega $.
}{%
Let~$\msp\Ac = \auta\msp$ be an automaton.
If~$\omega $ is an order on~$Q$, 
we write~$\OL{\omega}$ for the \emph{greatest element} of~$Q$
according to~$\omega $. 
If~$\Sc$ is a sub-automaton of~$\Ac$, we also
write~$\omega $ for the trace of the order~$\omega$ over the set~$R$ 
of states of~$\Sc$ and, in such a context, $\OL{\omega}$~for the 
greatest element of~$R$ according to~$\omega$.
}%
Then, the \emph{loop index of~$\Ac $ relative to~$\omega $},
written~$\ILC{\Ac}$, is the integer inductively defined by the 
following: 
\longonly{\par}%
\begin{itemize}
\item  
if~$\Ac $ contains no ball, or is empty, then
\longshort{%
\begin{equation}
\ILC{\Ac}  = 0 
\eqpntvrg
\label{:equ:egg-5}
\end{equation}
}{%
$\msp\ILC{\Ac}  = 0\msp$;
}%

\item if~$\Ac $ is not itself a ball, then
\longshort{%
\begin{equation}
\ILC{\Ac}  =
\max\left(\Defi{\ILC{\Pc}}{\text{$\Pc$ ball in~$\Ac $}}\right)
\eqpntvrg
\label{:equ:egg-3}
\end{equation}
}{%
$\msp\ILC{\Ac}  =
\max\left(\Defi{\ILC{\Pc}}{\text{$\Pc$ ball in~$\Ac $}}\right)
\msp$;
}%
 
\item if~$\Ac $ is a ball, then
\longshort{%
\begin{equation}
\ILC{\Ac}  = 1 + \ILC{\Ac \bk \OL{\omega}} 
\eqpnt
\label{:equ:egg-4}
\end{equation}
}{%
$\msp\ILC{\Ac}  = 1 + \ILC{\Ac \bk \OL{\omega}}\msp$.
}%
\end{itemize}
The difference with respect to loop complexity is that the state that
we remove from a strongly connected automaton (in the inductive 
process) is fixed by the order~$\omega $ rather than being the result of
a minimisation.
This definition immediately implies that

\shortlong{%
\PushLine
$\msp
\jsEnla{\Ac}=
\min\Defi{\ILC{\Ac}}{\text{ $\omega $ is an order on~$Q$ }}
\msp$.
\PushLine

\noindent 
holds and \theor{the-Egg} is then a consequence
of the following.
}{%
\begin{property}
    \label{:pty:deg-enl-1}%
\ee $\msp
\jsEnla{\Ac}=
\min\Defi{\ILC{\Ac}}{\text{ $\omega $ is an order on~$Q$ }}
\msp$.
\EOP
\end{property}

\theor{the-Egg} is then a consequence of the following.
}%

\begin{proposition}[\cite{LombSaka01}]
    \label{:pro:enl-hau-eto}
For any order~$\omega$	on the states of~$\Ac$,
$\msp \ILC{\Ac} = \jsHaut{\sem{\Ac}}\msp$.
\end{proposition}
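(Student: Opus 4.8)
The plan is to prove $\ILC{\Ac}=\jsHaut{\sem{\Ac}}$ by induction on the number of states of $\Ac$, organised so that the inductive step peels off the $\omega$-greatest state, mirroring the recursive definition of the loop index. Write $s=\OL{\omega}$. Since the states are suppressed in increasing $\omega$-order, $s$ is the last state of $\Ac$ to be eliminated, so just before its removal only $i$, $s$ and $t$ survive, carrying labels I name $\Gd$ (from $i$ to $t$), $\Kd$ (from $i$ to $s$), $\Hd$ (from $s$ to $t$) and $\Ld$ (the loop on $s$). Eliminating $s$ then yields $\sem{\Ac}\equiv\Gd+\Kd\,\Ld^{*}\,\Hd$, so that $\jsHaut{\sem{\Ac}}=\max\!\bigl(\jsHaut{\Gd},\jsHaut{\Kd},\jsHaut{\Hd},\,1+\jsHaut{\Ld}\bigr)$. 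The whole proof reduces to controlling these four star heights.

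First I would record two locality facts. The label updates performed on the transitions joining states of $\Ac\bk\{s\}$ (together with $i,t$) while eliminating $Q\bk\{s\}$ are exactly those performed by the state-elimination method run on $\Ac\bk\{s\}$ alone; and any path contributing to the loop $\Ld$ on $s$ is a circuit through $s$, hence stays inside the ball $\Pc$ (strongly connected component) that contains $s$. Thus $\Ld$ is produced by eliminating $\Pc\bk\{s\}$, while $\Gd,\Kd,\Hd$ are boundary outputs of the elimination of $\Ac\bk\{s\}$. To exploit this I would strengthen the induction hypothesis to the statement that \emph{every label occurring during the elimination of a star-height-$0$ automaton $\Bc$ in order $\omega$ has star height at most $\ILC[\omega]{\Bc}$, with equality $\jsHaut{\sem{\Bc}}=\ILC[\omega]{\Bc}$ when $\Bc$ is trim}. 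The upper-bound clause gives $\jsHaut{\Gd},\jsHaut{\Kd},\jsHaut{\Hd}\jsleq\ILC[\omega]{\Ac\bk\{s\}}$ and $\jsHaut{\Ld}\jsleq\ILC[\omega]{\Pc\bk\{s\}}$; and since $s=\OL{\omega}$ is also the greatest state of $\Pc$, the defining equation of the loop index gives $1+\ILC[\omega]{\Pc\bk\{s\}}=\ILC[\omega]{\Pc}$.

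It then remains to assemble these bounds according to the three clauses defining $\ILC{\Ac}$. If $\Ac$ has no ball it is acyclic, $s$ carries no loop, every $\Ld$ met equals $\zed$ and $\zed^{*}\equiv\und$, so no star is ever created and both members are $0$. If $\Ac$ is itself a ball, then $\Pc=\Ac$, $\ILC{\Ac}=1+\ILC[\omega]{\Ac\bk\{s\}}$, the three boundary labels have height $\jsleq\ILC[\omega]{\Ac\bk\{s\}}$, and the loop term reaches $1+\jsHaut{\Ld}=\ILC{\Ac}$, which therefore dominates the maximum. If $\Ac$ is not a ball, then $\ILC{\Ac}=\max\{\ILC[\omega]{\Pc'}:\Pc'\text{ a ball of }\Ac\}$: the loop term contributes exactly $\ILC[\omega]{\Pc}$, while $\Gd,\Kd,\Hd$, being inherited from the elimination of $\Ac\bk\{s\}$, contribute the complexities of the remaining balls, so the overall maximum is $\max_{\Pc'}\ILC[\omega]{\Pc'}$.

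The main obstacle is the lower bound — showing that each of these maxima is genuinely attained rather than merely an upper bound — and this is exactly where trimness enters: for each ball $\Pc'$ one must exhibit a successful computation of $\Ac$ threading $i\to\Pc'\to t$ whose associated sub-expression is forced, by the induction hypothesis and by the uniqueness part of Arden's lemma (no equivalent expression can discard the outermost star), to have star height $\ILC[\omega]{\Pc'}$. Making this realization precise, and verifying the two locality claims together with the strengthened induction hypothesis in the presence of the auxiliary states $i,s,t$, is the delicate bookkeeping that the proof must carry out; I would first reduce to the trim case by deleting non-accessible and non-co-accessible states, since this changes neither $\ILC{\Ac}$ nor $\jsHaut{\sem{\Ac}}$.
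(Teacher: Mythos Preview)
Your overall strategy --- peel off the $\omega$-greatest state $s$ and analyse the last elimination step --- is different from the paper's, and it has a genuine gap in the lower-bound direction.

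The crux is your claim that $1+\jsHaut{\Ld}=\ILC[\omega]{\Pc}$, i.e.\ $\jsHaut{\Ld}=\ILC[\omega]{\Pc\bk\{s\}}$. Your strengthened induction hypothesis gives at most the inequality $\jsleq$: $\Ld$ is the loop-label on the state $s$, which lies \emph{outside} $\Pc\bk\{s\}$, so it is not literally ``a label occurring during the elimination of $\Pc\bk\{s\}$'' in the sense of your IH; and even granting that point, your IH promises only an upper bound for intermediate labels, reserving the equality clause for $\sem{\Bc}$ itself. You therefore need a separate argument that $\jsHaut{\Ld}\jsgeq\ILC[\omega]{\Pc\bk\{s\}}$, and the appeal to Arden's lemma does not supply it: the uniqueness in Arden's lemma is uniqueness of the \emph{language} solving $X=KX+L$, not of the expression denoting it, and carries no information about star heights. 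The parenthetical ``no equivalent expression can discard the outermost star'' is false in general and in any case beside the point, since what is at stake is the star height of the \emph{specific} expression produced by the algorithm, not of an arbitrary equivalent one. The same difficulty recurs in the non-ball case, where you need each ball $\Pc'$ actually to contribute $\ILC[\omega]{\Pc'}$ to the star height of $\sem{\Ac}$, not merely to bound it from above.

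The paper sidesteps the upper/lower split by a different device: it \emph{extends} $\ILC{\cdot}$ to generalised automata whose transitions are labelled by rational expressions, letting each transition~$e$ contribute $\jsHaut{|e|}$ to the recursive formulas, and then proves that eliminating the $\omega$-\emph{smallest} state leaves this generalised index unchanged. The induction is on the number of states, one invariance lemma per step, with the base case being the two- or three-state configuration $\{i,t\}$ or $\{i,q,t\}$ where the equality is read off directly. This invariant is exactly what your argument lacks: it packages both inequalities simultaneously and records how the star height already accumulated in intermediate labels (your $\Gd,\Kd,\Hd,\Ld$) trades off against the ball structure that has not yet been processed.
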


\longonly{%
At this point, let us note that in the inductive definition 
of~$\ILC{\Ac}$ it is the \emph{greatest} element~$\OL{\omega}$ of~$Q$ that 
is considered whereas in the construction of the 
expression~$\sem{\Ac}$ it is the \emph{smallest} element of~$Q$ 
suppressed first.

In the course of the computation of~$\sem{\Ac}$ by the state 
elimination method we consider automata whose transitions are 
labelled not by letters only but by rational expressions in general.
In order to define the index of such generalised automata, we first 
define the \emph{index}, written~$\iEB{e}$, of a transition~$e$ as 
the star height of the label of~$e$:
\begin{equation}
    \iEB{e} = \jsHaut{|e|} 
    \eqpnt
    \notag 
\end{equation}
The index of a generalised automaton is then defined inductively, by 
formulas that take into account the index of every transition:

\begin{itemize}
\item  
if~$\Ac $ is empty, then
\begin{equation}
\ILC{\Ac}  = 0 
\eqpntvrg
\label{:equ:egg-0}
\end{equation}

\item if~$\Ac $ is not itself a ball, then
\begin{equation}
    \begin{split}
\ILC{\Ac}  =
\max\left(\Defi{\iEB{e}}{\text{$e$ does not belong to a ball 
in~$\Ac$}}\right.\eee\\
         \left.\cup \Defi{\ILC{\Pc}}{\text{$\Pc$ ball in~$\Ac $}}\right)
     \eqpntvrg
     \end{split}
\label{:equ:egg-1}
\end{equation}
 
\item if~$\Ac $ is a ball, then
\begin{equation}
\ILC{\Ac}  = 1 +
\max\left(\Defi{\iEB{e}}{\text{$e$ is adjacent to~$\OL{\omega}$}},      
           \ILC{\Ac\bk\OL{\omega}}\right) 
\eqpnt
\label{:equ:egg-2}
\end{equation}
\end{itemize}

It is obvious that equations~\equnm{egg-0}--\equnm{egg-2} reduce 
to~\equnm{egg-3}--\equnm{egg-5} in the case of a `classic' automaton 
whose transitions are all labelled with letters, that is, have 
index~$0$.
\figur{ieb-1} shows two generalised automata and their index.

\begin{figure}[htbp]
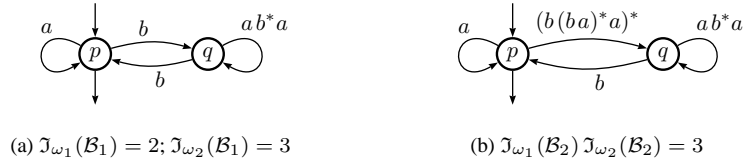

\centering
\subfigure[%
$\operatorname{\mathfrak{I}_{\omega_{1}}}\!\left(\Bc_{1}\right)= 2$;
$\operatorname{\mathfrak{I}_{\omega_{2}}}\!\left(\Bc_{1}\right) = 3$
           ]%
{\e%
\VCDraw{%
\begin{VCPicture}{(-2.5,-1.4)(5.5,1.4)}
\MediumState
\State[p]{(0,0)}{A}\State[q]{(3,0)}{B}
\Initial[n]{A}\Final[s]{A}
\ArcL{A}{B}{b}\ArcL{B}{A}{b}
\LoopW{A}{a}\LoopE{B}{a\xmd b^{*}a}
\end{VCPicture}
}%
\e}%
\eee
\subfigure[%
$\operatorname{\mathfrak{I}_{\omega_{1}}}\!\left(\Bc_{2}\right)
\operatorname{\mathfrak{I}_{\omega_{2}}}\!\left(\Bc_{2}\right) = 3$
           ]%
{%
\VCDraw{%
\begin{VCPicture}{(-2,-1.4)(6,1.4)}
\MediumState
\State[p]{(0,0)}{A}\State[q]{(4,0)}{B}
\Initial[n]{A}\Final[s]{A}
\ArcL[.5]{A}{B}{(b\xmd (b\xmd a)^{*}a)^{*}}
\ArcL{B}{A}{b}
\LoopW{A}{a}\LoopE{B}{a\xmd b^{*}a}
\end{VCPicture}
}%
}%
\medskipneg
\caption{Computation of the index of two automata for the two
  possible orders on the states: \e $\omega_{1}= p<q$, \e $\omega_{2}=q<p$.}
\label{:fig:ieb-1}
\end{figure}

\begin{proof}[Proof of \propo{enl-hau-eto}]
We proceed by induction on the number of states of~$\Ac$. 
The state-elimination method consists in the first place of
transforming~$\Ac$ into an automaton~$\Bc$ by adding two states
to~$\Ac$, an initial state and a final state, and transitions which
are all labelled by the empty word, the index of~$\Bc$ being
equal to that of~$\Ac$. 
By convention, the added states are greater
than all the other states of~$\Ac$ in the order~$\omega$ and are
never removed by the state-elimination method. 
On the other hand, the
transition labels, including those of the new transitions, may be
modified in the course of the state-elimination method.

The base case of the induction is therefore a generalised automaton
with~$3$ states of the form of \figur{egg-pre-1}\dex{a} or~(b).

\begin{figure}[ht]
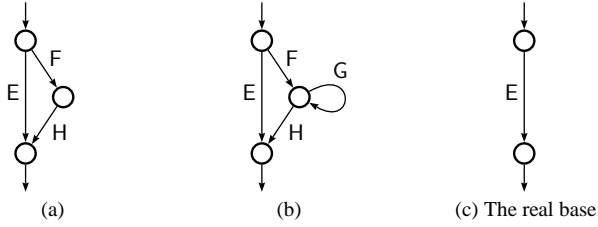

\centering
\e\e
\subfigure[]%
{\e  
\VCDraw{%
\begin{VCPicture}{(-.5,-2)(2,2)}
\SmallState
\State{(0,1.5)}{A}\State{(0,-1.5)}{B}\State{(1,0)}{C}
\Initial[n]{A}\Final[s]{B}
\EdgeR{A}{B}{\Ed}
\EdgeL{A}{C}{\Fd}
\EdgeL{C}{B}{\Hd}
\end{VCPicture}
}%
\e} 
\ee\e
\subfigure[]%
{\e  
\VCDraw{%
\begin{VCPicture}{(-.5,-2)(2,2)}
\SmallState
\State{(0,1.5)}{A}\State{(0,-1.5)}{B}\State{(1,0)}{C}
\Initial[n]{A}\Final[s]{B}
\EdgeR{A}{B}{\Ed}
\EdgeL{A}{C}{\Fd}
\EdgeL{C}{B}{\Hd}
\LoopE{C}{\Gd}
\end{VCPicture}
}%
\e} 
\e
\subfigure[The real base]%
{\ee\ee   
\VCDraw{%
\begin{VCPicture}{(-.5,-2)(.5,2)}
\SmallState
\State{(0,1.5)}{A}\State{(0,-1.5)}{B}
\Initial[n]{A}\Final[s]{B}
\EdgeR{A}{B}{\Ed}
\end{VCPicture}
}%
\ee\ee}
\medskipneg 
\caption{Base case of the induction}
\label{:fig:egg-pre-1}
\end{figure}

In case~(a), $\Bc$ contains no balls and we have
\begin{equation}\ILC{\Bc} =
   \max \bigl ( \jsHaut{\Ed}, \jsHaut{\Fd},
                \jsHaut{\Hd} \bigr )
        = \jsHaut{\Ed + \Fd \cdot \Hd}
        = \jsHaut{\sem{\Bc}} \eqpnt
\end{equation}
In case~(b), the unique state of~$\Bc$ that is neither initial nor
final is a ball whose index\EBz is~$1 + \jsHaut{\Gd}$, and we
have
\begin{equation}\ILC{\Bc} =
   \max \bigl ( \jsHaut{\Ed}, \jsHaut{\Fd},
                \jsHaut{\Hd}, (1 + \jsHaut{\Gd})\bigr )
        = \jsHaut{\Ed +
                  \Fd \cdot \Gd^{*} \cdot \Hd}
        = \jsHaut{\sem{\Bc}} \eqpnt
\end{equation}
Note that this reasoning is the essential part of the induction step
and that for rigour, if not for clarity, we could have taken the
automaton of \figur{egg-pre-1}(c) as our base case, for which the
statement is even more easily verified (and which corresponds to an
automaton~$\Ac $ with no state).

Now let~$\Bc $ be an automaton of the prescribed form with~$n+2$
states, $q$~the smallest state in the order~$\omega $ and~$\Bc'$ the
automaton which results from the first step of the state-elimination
method applied to~$\Bc $ (which consists of the elimination of~$q$).
Because the same states (other than~$q$) are adjacent
in~$\Bc $ as in~$\Bc '$, and since~$q$ is the \emph{smallest element}
in the order~$\omega $, the algorithm for computing the index runs the
same way in~$\Bc $ and~$\Bc'$, \ie  the succession of balls
constructed in each automaton is identical, excluding the examination
of~$q$ in~$\Bc $. It remains to show that the values calculated are
also identical.

Let~$\Pc $ be the smallest ball in~$\Bc $ that strictly
contains~$q$ -- if no such ball exists, take~$\Pc = \Bc$; and
let~$\Pc'$ the `image' of~$\Pc $ in~$\Bc' $. There are two
possible cases: either (a)~$q$ is not the source (and the destination)
of a loop, or (b)~it is, \ie  $q$~is a ball in~$\Bc $ all by itself,
and the label of this loop is an expression~$\Gd$.

The transitions of~$\Pc' $ are either identical to those of~$\Pc $,
or, in case~(a), labelled by products~$\Fd \cdot \Hd$,
where~$\Fd$ and~$\Hd$ are labels of transitions
of~$\Pc$, or, in case~(b), labelled by products~$\Fd \cdot
\Gd^{*} \cdot \Hd$. It therefore follows that, in
case~(a)
\begin{align}
\ILC{\Pc'} & =
\max \bigl( \max \Defi{\iEB{e}}%
       {\text{$e$ does not belong to a ball in~$\Pc'$}} ,
\notag 
\\
      &  \eee \eee \eee \e   
\max \Defi{\ILC{\Qc}}{\text{$\Qc$ ball in~$\Pc'$}}\bigr)
\ee
\notag 
\\[.5ex]
      & =
\max \bigl( \max \Defi{\iEB{e}}%
       {\text{$e$ does not belong to a ball in~$\Pc $}} ,
\notag 
\\[.5ex]
      &  \eee \eee \eee \e   
\max \Defi{\ILC{\Qc}}{\text{$\Qc$ ball in~$\Pc$}}\bigr)
\notag 
\\
      & = \ILC{\Pc} 
\eqpnt
\label{:equ:egg-6}
\end{align}
In case~(b), since 
$\msp\ILC{\{q\}} = 1 + \jsHaut{\Gd}\msp$,
we have:
\begin{align}
\ILC{\Pc'} & =
\max \bigl(\msp \max \Defi{\iEB{e}}%
           {\text{$e$ does not belong to a ball in~$\Pc'$}} ,
\notag \\
      & \eee \eee \eee \eee \ee
      \max \Defi{\ILC{\Qc}}{\text{$\Qc$ ball in~$\Pc'$}}\msp\bigr)
      \ee
\notag \\[.3ex]
      & = 
      \max \bigl(\msp \max \Defi{\iEB{e}}%
                 {\text{$e$ does not belong to a ball in~$\Pc$}},\msp
               (1 + \jsHaut{\Gd}),
\notag \\
      & \eee \eee \eee \eee \ee
      \max \Defi{\ILC{\Qc}}{\text{$\Qc$ ball in~$\Pc'$}}\msp\bigr)
\notag \\[.3ex]
      & = 
      \max \bigl(\msp \max \Defi{\iEB{e}}%
                 {\text{$e$ does not belong to a ball in~$\Pc$}},\msp
                     \ILC{\{q\}},
\notag 
\\
      & \eee \eee\ee  
      \max \Defi{\ILC{\Qc}}{\text{$\Qc$ ball in~$\Pc$,
                                  different from~$\{q\}$}}\msp\bigr)
\notag \\[.3ex]
      & = \ILC{\Pc} \eqpnt
\tag*{\equnmnosp{egg-6}$'$}
\end{align}

\noindent
If $\msp\Pc = \Bc \msp$ (and $\msp\Pc' = \Bc'\msp$), 
the equalities~\equnm{egg-6} and~\equnmnosp{egg-6}$'$ become
\begin{equation}
\ILC{\Bc'}  = \ILC{\Bc}
\label{:equ:egg-7}
\end{equation}
which proves the induction and hence the proposition. If not, and
without starting an induction on the number of overlapping balls that
contain~$\{q\}$, we can get from~\equnm{egg-6} to~\equnm{egg-7} by
noting that the transitions of~$\Bc' $ are either identical to
those of~$\Bc $, or correspond to transitions that
are adjacent to~$q$.

In case~(a), the labels of these transitions
(those corresponding to transitions adjacent to~$q$)
are products of
transitions of~$\Bc $: their index is obtained by taking a maximum,
and~\equnm{egg-7} is the result of the identity
$\msp\max (a,b,c) = \max(a, \max (b,c))\msp$.

In case~(b), the labels of the same transitions are, as before, of the
form
$\msp\Fd \cdot \Gd^{*} \cdot \Hd\msp$,
with index
$\msp\max(\jsHaut{\Fd}, \jsHaut{\Hd},
                                   1 + \jsHaut{\Gd})\msp$.
The corresponding transition in~$\Bc $ has label~$\Fd$
(or~$\Hd$); it is inspected in the algorithm for computing the
index when the indices of the transition with label~$\Hd$
(or~$\Fd$) and that of the \emph{ball~$\{q\}$}, with index
$\msp 1 +\jsHaut{\Gd}\msp$, have already been taken into account.
The result, which is~\equnm{egg-7}, follows for the same reason as
above.
\end{proof}
}%

\shortlong{%
\theor{the-Egg} admits a kind of converse stated in the following 
proposition. 
}{%
\theor{the-Egg} admits a kind of converse stated in the following 
proposition whose proof is postponed
to the next section where we build automata from 
expressions. 
}%

\begin{proposition}
\label{:pro:Egg-con}%
\shortlong{%
For every rational expression~$\Ed$ 
there exists an automaton~$\Ac$ which accepts~$\CompExpr{\Ed}$ and the
loop complexity of~$\Ac$ is equal to the star height of~$\Ed$.
}{%
With every rational expression~$\Ed$ 
is associated an automaton which accepts~$\CompExpr{\Ed}$ and whose 
loop complexity is equal to the star height of~$\Ed$.
}%
\end{proposition}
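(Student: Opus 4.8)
The plan is to argue by structural induction on $\Ed$, relying on the recursive construction $\Ed \mapsto \Ac_{\Ed}$ of an automaton from an expression developed in \secti{exp-aut}. I would use the Thompson-style version of that construction, which yields a \emph{trim} automaton with a single initial state $i$ and a single final state $t$, $i \neq t$, with no transition entering $i$ and none leaving $t$. To keep the bookkeeping clean I would strengthen the statement into: $\CompAuto{\Ac_{\Ed}} = \CompExpr{\Ed}$, $\jsEnla{\Ac_{\Ed}} = \jsHaut{\Ed}$, and neither $i$ nor $t$ lies in a ball of $\Ac_{\Ed}$. This last clause is what makes the gluing steps transparent.

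The base cases $\Ed \in \{\zed, \und\} \cup A$ are immediate: the automaton contains no ball, so $\jsEnla{\Ac_{\Ed}} = 0 = \jsHaut{\Ed}$. For $\Ed = \Ed' + \Ed''$ and $\Ed = \Ed' \cdot \Ed''$ the construction joins $\Ac_{\Ed'}$ and $\Ac_{\Ed''}$ with empty-word transitions routed only through the ball-free initial and final states; such transitions cannot close a new circuit, so the balls of $\Ac_{\Ed}$ are exactly those of the two components. Hence $\jsEnla{\Ac_{\Ed}} = \max(\jsEnla{\Ac_{\Ed'}}, \jsEnla{\Ac_{\Ed''}}) = \max(\jsHaut{\Ed'}, \jsHaut{\Ed''}) = \jsHaut{\Ed}$, and the fresh $i$, $t$ again avoid every ball, so the strengthened hypothesis is preserved.

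The whole difficulty sits in the star case $\Ed = \Fd^{*}$, where the construction adds ball-free states $i$, $t$, the empty-word transitions $i \to i_{\Fd}$, $t_{\Fd} \to t$, $i \to t$, and the single feedback transition $t_{\Fd} \to i_{\Fd}$. As $\Ac_{\Fd}$ is trim, this feedback merges all of its states into one strongly connected component, the unique ball $\Pc$ of $\Ac_{\Fd^{*}}$, so $\jsEnla{\Ac_{\Fd^{*}}} = \jsEnla{\Pc} = 1 + \min_{s} \jsEnla{\Pc \bk \{s\}}$, and it remains to prove $\jsEnla{\Pc} = 1 + \jsHaut{\Fd}$. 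The upper bound is the easy half: deleting $s = i_{\Fd}$ destroys the feedback and leaves a subautomaton of $\Ac_{\Fd}$, so $\jsEnla{\Pc \bk \{i_{\Fd}\}} \jsleq \jsEnla{\Ac_{\Fd}} = \jsHaut{\Fd}$ by the (easily checked) monotonicity of loop complexity under passage to subautomata; this already gives $\jsEnla{\Ac_{\Fd^{*}}} \jsleq 1 + \jsHaut{\Fd}$, in agreement with the bound one also gets from \theor{the-Egg} by eliminating the states of $\Ac_{\Fd^{*}}$ in the natural creation order, which returns $\Fd^{*}$ itself.

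The matching lower bound, $\min_{s} \jsEnla{\Pc \bk \{s\}} \jsgeq \jsHaut{\Fd}$, is the main obstacle, since deleting an interior state of $\Ac_{\Fd}$ could a priori cut its loop complexity faster than the feedback transition can restore it. Here I would not compute directly but invoke \propo{enl-hau-eto}: each $\jsEnla{\Pc \bk \{s\}}$ is realised as the star height of an expression produced by state elimination, and because every state of $\Pc \bk \{s\}$ still lies on a circuit running through the feedback, that expression must carry at least $\jsHaut{\Fd}$ nested stars around the surviving copy of $\Fd$. To make this rigorous I expect to have to reinforce the induction hypothesis with a \emph{robustness} clause for the built automata — that removing any one state and reinstating the feedback never lowers the loop complexity below $\jsHaut{\Fd}$ — propagated through the sum, product and star steps. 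Granting the lower bound, the star case closes as $\jsEnla{\Ac_{\Fd^{*}}} = 1 + \jsHaut{\Fd} = \jsHaut{\Fd^{*}}$, completing the induction and producing an automaton $\Ac$ with $\CompAuto{\Ac} = \CompExpr{\Ed}$ and $\jsEnla{\Ac} = \jsHaut{\Ed}$.
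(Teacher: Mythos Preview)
Your overall strategy---structural induction on~$\Ed$, tracking the source/sink states as ball-free---is sound for the base cases and for~$+$ and~$\cdot$, but the Thompson construction does \emph{not} realise the star case and your admitted gap in the lower bound cannot be filled. Take $\Ed = \bigl((a\cdot b^{*})^{*}\bigr)^{*}$, so $\jsHaut{\Ed}=3$. In $\Thom{\Ed}$ the unique ball~$\Pc$ consists of the states of $\Thom{(a\cdot b^{*})^{*}}$ together with the outer feedback $t_{\Fd}\to i_{\Fd}$. Now delete the state~$p$ that is simultaneously the target of $i_{\Fd}\to p$ and of the middle feedback $t'\to p$ (the initial state of $\Thom{a\cdot b^{*}}$). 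After this deletion~$i_{\Fd}$ has no outgoing edge except $i_{\Fd}\to t_{\Fd}$, so the only surviving balls are $\{i_{\Fd},t_{\Fd}\}$ and the inner $\{p_{b},q_{b}\}$ from~$b^{*}$, each of loop complexity~$1$. Hence $\jsEnla{\Pc\setminus\{p\}}=1<2=\jsHaut{(a\cdot b^{*})^{*}}$, so $\jsEnla{\Pc}=2$ and $\jsEnla{\Thom{\Ed}}=2<3$. The ``robustness clause'' you hope to propagate through the star step therefore fails already at depth two, because the single feedback edge of Thompson's star makes one well-chosen state (the shared landing point of~$i_{\Fd}$ and the inner feedback) a one-vertex bottleneck for \emph{both} nested loops.

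The paper sidesteps exactly this failure by \emph{doubling}: for the star step it does not form~$\Ac^{*}$ but rather $(\Zero{\jsNorm{\Ac}}\cdot\Zero{\Ac})^{*}$, two copies of~$\Ac$ concatenated at a cut-vertex~$t$ before the outer feedback is added. Removing~$t$ leaves both copies with their balls intact, so the minimum drops to exactly~$\jsEnla{\Ac}$; removing any other single state still leaves one full copy of~$\Ac$ untouched, so the minimum cannot drop below~$\jsEnla{\Ac}$. This redundancy is precisely what your single-copy Thompson construction lacks, and it makes the lower bound a two-line case split rather than an inductive invariant.
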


\ifcheat\smallskipneg\fi%
\ifcheat\smallskipneg\fi%
\ifcheat\smallskipneg\fi%
\ifcheat\smallskipneg\fi%
\ifcheat\smallskipneg\fi%
\section{From expressions to automata: the $\EtAs$-maps}
\label{:sec:exp-aut}

The transformation of rational expressions into finite automata
establishes \propo{psi-map-fm}. 
It is even more interesting than the 
transformation in the other way, 
both from a theoretical point of view and for practical purposes,
as there are many questions
that cannot be 
answered directly on expressions but require first their 
transformation into automata.

Every expression might be mapped to several automata, 
each of them being computed in different ways.
We distinguish the objects themselves, \ie the computed automata, 
which we try to characterise as intrinsically as possible, 
from the 
algorithms that allow to compute them.
We present two such automata:
the \emph{Glushkov}, or \emph{position}, automaton
and that we rather call \emph{the standard automaton} of the expression,
\index{automaton (of an expression)!position --}%
\index{automaton (of an expression)!Glushkov --}%
\index{Glushkov|see{automaton (of...)}}%
and the \emph{derived-term automaton}, that was first defined by 
{Antimirov}.

The standard automaton may be defined for 
expressions over \emph{any monoid} whereas the derived-term 
automaton will be defined for expressions over a \emph{free monoid} 
only.
In this section however, we restrict ourselves to expressions over a 
free monoid.
We begin with the presentation of two techniques for transforming an 
automaton into another one, that will help us in comparing the various 
automata associated with a given expression.

\ifcheat\smallskipneg\fi%
\ifcheat\smallskipneg\fi%
\subsection{Preparation: closure and quotient}
\label{:sec:clo-quo}

\paragraph{Closure}

Automata have been defined (\secti{fin-aut-fm}) as graphs labelled by 
\emph{letters} of an alphabet.
It is known that the family of languages accepted by finite automata 
is not enlarged if transitions labelled by {the empty 
word} --- called \emph{spontaneous transitions} --- are allowed as 
well. 
\index{spontaneous|see{transition}}%
\index{transition!spontaneous}%
The \emph{backward closure} of such an automaton
\index{backward|see{closure}}%
\index{closure (backward)}%
$\msp\Ac=\auta\msp$ is the equivalent automaton 
$\msp\Bc=\aut{Q,A,F,I,U}\msp$ with no spontaneous transitions defined by
{\small
\begin{equation}
F =  \Defi{(p,a,r)}{\ext q \in Q \quantsmsp  
p\pathaut{\unAe}{\Ac}q\msp, \quantsmsp (q,a,r) \in E }
\hspace{.7em}
\text{and}
\hspace{.7em}
U = \Defi{p}{\ext q \in T \quantsmsp p\pathaut{\unAe}{\Ac}q} 
\EqPnt
\notag
\end{equation}}
It is effectively computable, as the determination of~$F$ and~$U$ 
amounts to computing the transitive closure of a finite directed 
graph.

\ifcheat\smallskipneg\fi%
\paragraph{Morphisms and quotient}

Automata are structures; a morphism is a map from an automaton into 
another one which is compatible with this structure.

\ifcheat\smallskipneg\fi%
\begin{definition}
\label{:def:mor-aut}%
Let$\msp\Ac= \aut{Q,A,E,I,T}\msp$ and 
$\msp\Ac'= \aut{Q',A,E',I',T'}\msp$
be two automata.
A map
$\msp\varphi\colon Q\rightarrow Q'\msp$
is a \emph{morphism (of automata)} if:

\thi $\msp\varphi(I)\subseteq I'\msp$,\e\longonly{\\}
\thii $\msp\varphi(T)\subseteq T'\msp$,\e\longonly{\\} 
\thiii $\msp\fa(p,a,q)\in E\quantsmsp 
        \big(\varphi(p),a,\varphi(q)\big)\in E'\msp$.

\smallskip
\noindent
The automaton~$\Ac'$ is \emph{a quotient} of~$\Ac$ if, moreover, 

\thiv $\msp\varphi(Q)= Q'\msp$, that is, 
      $\varphi$ is \emph{surjective},\e\longonly{\\}
\thv $\msp\varphi(I)= I'\msp$,\e\longonly{\\}
\thvi $\msp\varphi^{-1}(T')= T\msp$,

\thvii $\msp\fa(r,a,s)\in E'\quantvrg \fa p\in\varphi^{-1}(r)\quantvrg 
\ext q\in\varphi^{-1}(s)\quantsmsp (p,a,q)\in E\msp$. 
\end{definition}
\index{quotient!of an automaton}%
    
If~$\varphi$ is a morphism, we write
$\msp\varphi\colon\Ac\rightarrow\Ac'\msp$, and the inclusion
$\msp\CompAuto{\Ac}\subseteq\CompAuto{\Ac'}\msp$ holds.
If~$\Ac'$ is a quotient of~$\Ac$, then
$\msp\CompAuto{\Ac}=\CompAuto{\Ac'}\msp$ holds.

\defin{mor-aut} generalises the classical notion of quotient of 
complete deterministic automata to arbitrary automata. 
Every automaton~$\Ac$ admits a \emph{minimal 
quotient}, which is a quotient of every quotient of~$\Ac$.
In contrast with the case of deterministic automata, the minimal 
quotient of~$\Ac$ is  
canonically associated with~$\Ac$, not  with the language 
accepted by~$\Ac$.

\ifcheat\smallskipneg\fi
\ifcheat\smallskipneg\fi%
\subsection{The standard automaton of an expression}
\label{:sec:sta-aut-exp}

The first automaton we associate with an expression~$\Ed$, 
which we write~$\Stan{\Ed}$ and which plays a 
central role in our presentation,
has first been defined by Glushkov (in~\cite{Glus61}).
For the same purpose, McNaughton and Yamada computed the 
determinisation of~$\Stan{\Ed}$ 
in their paper \cite{MNauYama60} that we already quoted.
In order to give an intrinsic description of~$\Stan{\Ed}$,
we define a restricted class of 
automata, and then show that rational operations on sets can be 
lifted on the automata of that class.

\ifcheat\smallskipneg\fi%
\subsubsection{Operations on standard automata}
\label{:sec:sta-aut}

An automaton is \emph{standard} if it has only one 
initial state, which is the end of no 
transition.
\index{standard|see{automaton}}%
\index{automaton!standard --}%
\figur{sta-aut} shows a standard automaton, both as a sketch, and 
under the matrix form.
The definition does not forbid the initial state~$i$
from also being final and the scalar $\msp c$, equal to~$\zed$ 
or~$\und$, is the \emph{constant term}
of $\msp\CompAuto{\Ac}\msp$.

\TinyPicture%
\begin{figure}[ht]
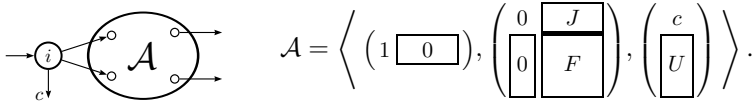

\centering 
\VCDraw{%
\begin{VCPicture}{(-4.5,-1.3)(2.5,1.3)}
\VCPut{(0,0)}%
   {%
    \BigAuto{\Ac}%
    \State[i]{(-3,0)}{A}%
    \VCPut{(-1,0)}{\VSState{(0,0.646)}{B1}\VSState{(0,-0.646)}{B2}}%
    \VCPut{(1,0)}{\VSState{(0,.736)}{C2}\VSState{(0,-0.736)}{C3}}%
   }%
\Initial[w]{A}\FinalR{s}{A}{c}
\Final[e]{C2}\Final[e]{C3}
\Edge{A}{B1}\Edge{A}{B2}
\end{VCPicture}%
}%
\ee
$\msp
\begin{displaystyle}
\Ac =\aut{\redmatu{\lignedblvs{1}{0}},
          \redmatu{\matriceddblvs{0}{J}{0}{F}},
          \redmatu{\vecteurdblvs{c}{U}}} 
\end{displaystyle}
\msp$.
\caption{A standard automaton}
\label{:fig:sta-aut}
\end{figure}
\SmallPicture
\ifcheat\shortonly{\miniskipneg\miniskipneg\miniskipneg}\fi%

Every automaton is
equivalent to a {standard} one.
\shortlong{Their special form}%
{More important for our purpose, their special form}
allows to define \emph{operations} on standard 
automata that are parallel to the 
\emph{rational operations}.
Let~$\Ac$ (as in~\figur{sta-aut}) and~$\Bc$ (with obvious 
notation) be two standard automata;
the following standard automata are defined:

\ifcheat\shortonly{\smallskipneg}\fi%
\begin{equation}
    {\Ac}+{\Bc} =
\aut{\redmatu{\lignetblblvs{1}{0}{0}},
     \redmatu{\matricettblblvs{0}{J}{K}%
                              {0}{F}{0}%
                              {0}{0}{G}},
     \redmatu{\vecteurtblblvs{c+d}{U}{V}}}\eqvrg
\label{:equ:sta-aut-sum}
\end{equation}

\begin{equation}
    {\Ac}\matmul{\Bc} = 
\aut{\redmatu{\lignetblblvs{1}{0}{0}},
     \redmatu{\matricettblblvs{0}{J}{c\xmd K}%
                              {0}{F}{U\matmul K}%
                              {0}{0}{G}},
     \redmatu{\vecteurtblblvs{c\xmd d}{Ud}{V}}}\eqvrg
\label{:equ:sta-aut-pro}
\end{equation}

\begin{equation}
    {\Ac}^{*} =
\aut{\redmatu{\lignedblvs{1}{0}},
     \redmatu{\matriceddblvs{0}{J}{0}{H}},
     \redmatu{\vecteurdblvs{1}{U}}}\eqvrg
\label{:equ:sta-aut-sta}
\end{equation}
where $\msp H= U\matmul J + F\msp$.
The use of the constants $\msp c\msp $ and $\msp d\msp $ allows 
a uniform treatment of
the cases whether the initial states of~$\Ac$ and~$\Bc$ are final or not.
\longonly{%

\smallskip
These constructions are shown at \figur{sta-aut-ope}.

\begin{figure}[ht]
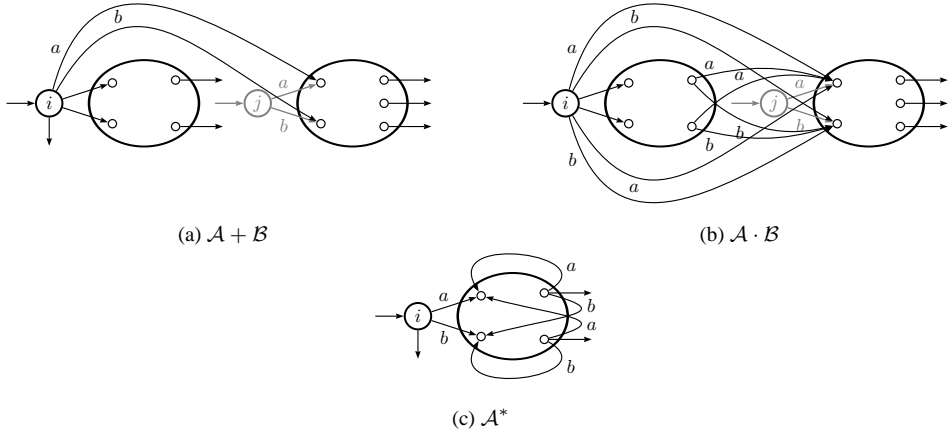

    \TinyPicture
\centering
\subfigure[$\displaystyle{\Ac + \Bc}$]{%
\VCDraw{%
\begin{VCPicture}{(-4,-3)(9,3)}
\VCPut{(0,0)}%
   {%
    \BigAuto{}
    \State[i]{(-3,0)}{A}%
    \VCPut{(-1,0)}{\VSState{(0,0.646)}{B1}\VSState{(0,-0.646)}{B2}}%
    \VCPut{(1,0)}{\VSState{(0,.736)}{C2}\VSState{(0,-0.736)}{C3}}%
   }%
\Initial[w]{A}\Final[s]{A}
\Final[e]{C2}\Final[e]{C3}
\Edge{A}{B1}\Edge{A}{B2}
\VCPut{(6.6,0)}%
   {%
    \BigAuto{}
    \DimState\State[j]{(-3,0)}{BB}\RstState%
    \VCPut{(-1,0)}{\VSState{(0,0.646)}{BB1}\VSState{(0,-0.646)}{BB2}}%
    \VCPut{(1,0)}{\VSState{(0,0)}{CC1}%
                  \VSState{(0,.736)}{CC2}\VSState{(0,-0.736)}{CC3}}%
   }%
\DimEdge
\Initial[w]{BB}
\EdgeL[.35]{BB}{BB1}{a}\EdgeR[.35]{BB}{BB2}{b}
\RstEdge 
\Final{CC1}\Final{CC2}\Final{CC3}
%
\VCurveL[.1]{angleA=75,angleB=150,ncurv=1.1}{A}{BB1}{a}
\VCurveL[.3]{angleA=60,angleB=150,ncurv=1.1}{A}{BB2}{b}
\end{VCPicture}%
}%
}%
\eee
\subfigure[$\displaystyle{\Ac\matmul\Bc}$]{%
\VCDraw{%
\begin{VCPicture}{(-4,-3)(9,3)}
\VCPut{(0,0)}%
   {%
    \BigAuto{}
    \State[i]{(-3,0)}{A}%
    \VCPut{(-1,0)}{\VSState{(0,0.646)}{B1}\VSState{(0,-0.646)}{B2}}%
    \VCPut{(1,0)}{\VSState{(0,.736)}{C2}\VSState{(0,-0.736)}{C3}}%
   }%
\Initial[w]{A}
\Edge{A}{B1}\Edge{A}{B2}
\VCPut{(6.6,0)}%
   {%
    \BigAuto{}
    \DimState\State[j]{(-3,0)}{BB}\RstState%
    \VCPut{(-1,0)}{\VSState{(0,0.646)}{BB1}\VSState{(0,-0.646)}{BB2}}%
    \VCPut{(1,0)}{\VSState{(0,0)}{CC1}%
                  \VSState{(0,.736)}{CC2}\VSState{(0,-0.736)}{CC3}}%
   }%
\DimEdge
\Initial[w]{BB}
\EdgeL[.35]{BB}{BB1}{a}\EdgeR[.35]{BB}{BB2}{b}
\RstEdge 
\Final{CC1}\Final{CC2}\Final{CC3}
\ArcL[.1]{C2}{BB1}{a}\ArcR[.1]{C3}{BB2}{b}
\LArcL{C3}{BB1}{a}\LArcR{C2}{BB2}{b}
\VCurveR[.1]{angleA=-75,angleB=-150,ncurv=1.1}{A}{BB2}{b}
\VCurveL[.1]{angleA=75,angleB=150,ncurv=1.1}{A}{BB1}{a}
\VCurveR[.3]{angleA=-60,angleB=-150,ncurv=1.1}{A}{BB1}{a}
\VCurveL[.3]{angleA=60,angleB=150,ncurv=1.1}{A}{BB2}{b}
\end{VCPicture}%
}%
}%

\medskipneg
\medskipneg
\subfigure[$\displaystyle{\Ac^{*}}$]{%
\VCDraw{%
\begin{VCPicture}{(-4.5,-2)(2.5,2)}
\VCPut{(0,0)}%
   {%
    \BigAuto{}
    \State[i]{(-3,0)}{A}%
    \VCPut{(-1,0)}{\VSState{(0,0.646)}{B1}\VSState{(0,-0.646)}{B2}}%
    \VCPut{(1,0)}{\VSState{(0,.736)}{C2}\VSState{(0,-0.736)}{C3}}%
   }%
\Initial[w]{A}\Final[s]{A}
\Final[e]{C2}\Final[e]{C3}
\EdgeL[.35]{A}{B1}{a}\EdgeR[.35]{A}{B2}{b}
\VCurveR[.2]{angleA=30,angleB=120,ncurv=2}{C2}{B1}{a}
\VCurveL[.3]{angleA=-15,angleB=15,ncurv=2}{C2}{B2}{b}
\VCurveL[.2]{angleA=-30,angleB=-120,ncurv=2}{C3}{B2}{b}
\VCurveR[.3]{angleA=15,angleB=-15,ncurv=2}{C3}{B1}{a}
\end{VCPicture}%
}%
}%
\medskipneg
\caption{Operations on  standard automata}
\label{:fig:sta-aut-ope}
\end{figure}
\SmallPicture

}%
Straightforward computations show that
$\msp\CompAuto{(\Ac+\Bc)}= \CompAuto{\Ac}+\CompAuto{\Bc}\msp$,
$\msp\CompAuto{(\Ac\matmul\Bc)}=\CompAuto{\Ac}\matmul\CompAuto{\Bc}\msp$
and
$\msp\CompAuto{(\Ac^{*})}=\CompAuto{\Ac}^{*}\msp$.

With every rational expression~$\Ed$ and by induction on its depth, 
we thus canonically associate a  standard automaton, which we 
write~$\Stan{\Ed}$ and which we call \emph{the} standard automaton 
of~$\Ed$.
\index{automaton (of an expression)!standard --}%
The induction and the computations show 
that the map
$\msp\Ed\mapsto\Stan{\Ed}\msp$ is a $\EtAs$-map:

\begin{proposition}
\label{:pro:sta-aut-exp}%
If~\/$\Ed$ is a  rational expression over~$\Ae$, then
$\msp\CompAuto{\Stan{\Ed}}=\CompExpr{\Ed}\msp$.
\end{proposition}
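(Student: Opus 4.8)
The plan is to prove the statement by induction on the depth $\Depth{\Ed}$ of the expression, following exactly the inductive definition of $\Stan{\Ed}$. The whole argument rests on the three behaviour identities recorded immediately before the statement — $\CompAuto{(\Ac+\Bc)}=\CompAuto{\Ac}+\CompAuto{\Bc}$, $\CompAuto{(\Ac\matmul\Bc)}=\CompAuto{\Ac}\matmul\CompAuto{\Bc}$ and $\CompAuto{(\Ac^{*})}=\CompAuto{\Ac}^{*}$ — which I take as granted (they are the content of the ``straightforward computations''), together with the inductive clauses defining $\CompExpr{\Ed}$. Note that the constructions \equnm{sta-aut-sum}--\equnm{sta-aut-sta} each yield a standard automaton, so $\Stan{\Ed}$ is well defined and standard at every stage of the induction.

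For the base case ($\Depth{\Ed}=0$), $\Ed$ is one of the atoms $\zed$, $\und$ or a letter $a\in A$, and $\Stan{\Ed}$ is the corresponding atomic standard automaton. One checks directly from the definition of the behaviour that $\CompAuto{\Stan{\zed}}=\es=\CompExpr{\zed}$, that $\CompAuto{\Stan{\und}}=\{\unAe\}=\CompExpr{\und}$, and that $\CompAuto{\Stan{a}}=\{a\}=\CompExpr{a}$; each is an immediate inspection of a one- or two-state automaton.

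For the inductive step, suppose the equality holds for every expression of depth strictly less than $\Depth{\Ed}$, in particular for the immediate subexpressions of $\Ed$. If $\Ed=\Ed'\autplus\Ed''$, then by definition $\Stan{\Ed}=\Stan{\Ed'}+\Stan{\Ed''}$ (the sum of standard automata of \equnm{sta-aut-sum}), so
\[
\CompAuto{\Stan{\Ed}}=\CompAuto{\Stan{\Ed'}}+\CompAuto{\Stan{\Ed''}}=\CompExpr{\Ed'}\cup\CompExpr{\Ed''}=\CompExpr{\Ed}\eqvrg
\]
where the first equality is the behaviour identity for the sum, the second is the induction hypothesis, and the third is the defining clause of $\CompExpr{\cdot}$ for $\autplus$. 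The cases $\Ed=\Ed'\autprod\Ed''$ and $\Ed=\autstar{\Fd}$ are handled identically, using \equnm{sta-aut-pro} and the behaviour identity for the product, respectively \equnm{sta-aut-sta} and the behaviour identity for the star, in place of those for the sum. This closes the induction.

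The argument is thus purely formal once the three behaviour identities are in hand; the only place where anything must genuinely be checked is precisely those identities, i.e.\ the block-matrix computations of \equnm{sta-aut-sum}--\equnm{sta-aut-sta}. The main subtlety there — and the point I would treat most carefully — is the role of the constant terms $c$ and $d$ in the product and star constructions: they are exactly what makes the constructions correct irrespective of whether the initial states of $\Ac$ and $\Bc$ are final, and verifying the product and star identities amounts to tracking how these scalars interact with the row $J$, the column $U$ and the matrix $F$ (for instance through the relation $H=U\matmul J+F$ in the star). Since those identities are already stated before the proposition, no further obstacle remains.
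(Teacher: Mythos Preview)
Your proof is correct and is exactly the approach the paper intends: the paper itself merely records the three behaviour identities and then states that ``the induction and the computations show'' the proposition, which is precisely the induction on~$\Depth{\Ed}$ you have spelled out. Your added remark on the role of the constant terms~$c$ and~$d$ in the product and star constructions is the only non-trivial point in the omitted ``straightforward computations'', and you have identified it correctly.
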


The inductive construction of~$\Stan{\Ed}$ also implies:

\begin{property}
\label{:pty:sta-aut-dim}
If~$\Ed$ is a  rational expression, 
the dimension of~$\Stan{\Ed}$ is
$\msp\LitL{\Ed} + 1\msp$.
\end{property}

\begin{example}
    \label{:exa:sta-aut-1}%
\figur{sta-aut-1} shows~$\Stan{\Ed_{1}}$, 
where
$\msp
\Ed_{1}= (a^{*}b+b\xmd b^{*}a^{*})^{*}
\msp$.
\end{example}

\smallskipneg\smallskipneg%
\begin{figure}[ht]
\centering
\setlength{\lga}{4cm}%
\VCDraw{%
\begin{VCPicture}{(-8,-1)(8,3.2)}
\SmallState
\VCPut{(-1.5\lga,0)}{%
\VCPut{(0,0)}{
\State{(0,\SQRTwo\lga)}{A}\State{(0,0)}{B}\State{(\lga,0)}{C}}%
\VCPut{(3\lga,0)}{
\State{(0,\SQRTwo\lga)}{D}\State{(0,0)}{E}\State{(-\lga,0)}{F}}%
}%
\Initial[w]{A}\Final[sw]{A}
\Final[s]{C}\Final[s]{F}
\EdgeL{A}{B}{a}
\ArcL{A}{C}{b}
\ArcL{B}{C}{b}\ArcL{C}{B}{a}
\LoopW{B}{a}\LoopN[.5]{C}{b}
\EdgeL{D}{E}{b}\EdgeL{E}{F}{a}
\ArcL{D}{F}{a}\ArcL[.25]{F}{D}{b}
\LoopE{E}{b}
\EdgeL{A}{D}{b}
\ArcL{C}{D}{b}
\LArcL[.2]{F}{B}{a}
\EdgeR{F}{C}{b}
\end{VCPicture}
}%
\caption{The automaton~$\Stan{\Ed_{1}}$. }
\medskipneg
\label{:fig:sta-aut-1}
\end{figure}

The example of
$\msp
\Ed =
\left(\left(\left(a^{*}+b^{*}\right)^{*}+c^{*}\right)^{*}+d^{*}\right)^{*}\ldots
\msp$
shows that the direct computation of~$\Stan{\Ed}$ 
by~\equnm{sta-aut-sum}--\equnm{sta-aut-sta} leads to an  
algorithm whose complexity is \emph{cubic} in~$\LitL{\Ed}$.
The quest for a better algorithm leads to a construction that is 
interesting \textit{per se}.

\ifcheat\smallskipneg\fi%
\subsubsection{The star-normal form of an expression}
\label{:sec:sta-nor-for}

The \emph{star-normal form} of an expression has been defined by
Br\"uggemann-Klein (in~\cite{Brug93}) in order to design a quadratic 
algorithm for the computation of the standard automaton of an 
expression. 
The interest of this notion certainly goes beyond that complexity 
improvement.
 
\ifcheat\smallskipneg\fi%
\begin{definition}[\cite{Brug93}]
    \label{:def:sta-nor-for}%
A rational expression~$\Ed$ is \emph{in star-normal form} (SNF) 
if and only if for 
any~$\Fd$ such that~$\Fd^*$ is a subexpression of~$\Ed$, 
\index{expression!star-normal form|see{star-normal form}}%
\index{star-normal form!expression in --}%
$\msp\TermCst{\Fd}=0\msp$.\footnote{%
   The definition, as well as the construction, have been slightly 
   modified from the original, for simplification.}
\end{definition}

\ifcheat\smallskipneg\fi%
Two operators on expressions, written 
$\bullet$ and $\squa$,
are defined by a mutual recursion
on the depth of the expression 
that defines and allows to compute
\emph{the star-normal form} of the 
\index{star-normal form!of an expression}%
expression.

\smallskipneg\smallskipneg%
\shortlong{%
\begin{gather}
\cb{\zed} = \rn{\zed} = \zed \EqVrgInt 
\cb{\und} = \rn{\und} =  \zed\EqVrgInt
\e\fa a \in A \quantsmsp \cb{a} = \rn{a} = a
\eqvrg 
\e        
\notag
\\
\cb{(\Fd + \Gd)} = \Fcb + \Gcb
\EqVrgInt
\cb{(\Fd^*)} = \Fcb
\EqVrgInt
\cb{(\Fd \cdot \Gd)} = 
\left\lbrace 
\begin{array}{ll}
  \Fcb + \Gcb & \text{if}\e \TermCst{\Fd}=\TermCst{\Gd}=1 \eqvrg\\
  \Frn \cdot \Grn & $otherwise $ \eqvrg
\end{array} \right. 
\notag
\\[0.5ex]
\rn{(\Fd + \Gd)} = \Frn + \Grn
\EqVrgInt 
\rn{(\Fd \cdot \Gd)} = \Frn \cdot \Grn
\EqVrgInt 
\rn{(\Fd^*)} = (\Fcb)^*
\eqpnt
\notag
\end{gather}
}{%
\begin{align}
\cb{\zed} = \zed \EqVrgInt \cb{\und} &= \zed\EqVrgInt
\fa a \in A \quantsmsp \cb{a} = a
\eqvrg 
\eee         
\label{:equ:cb-1}
\\
\cb{(\Fd + \Gd)} &= \Fcb + \Gcb
\eqvrg 
\label{:equ:cb-2}
\\
\cb{(\Fd \cdot \Gd)} 
   &= \left\lbrace 
         \begin{array}{ll}
           \Fcb + \Gcb & \e\text{if}\e \TermCst{\Fd}=\TermCst{\Gd}=1 \eqvrg\\
           \Frn \cdot \Grn & \e$otherwise $ \eqvrg
         \end{array} \right. 
\label{:equ:cb-3}
\\
\cb{(\Fd^*)} &= \Fcb
\eqpnt 
\label{:equ:cb-4}
\\[1ex]
%
%
%
\rn{\zed} = \zed\EqVrgInt \rn{\und} &= \und \EqVrgInt 
\fa a \in A\quantsmsp \rn{a} = a
\eqvrg 
\eee         
\label{:equ:rn-1}
\\
\rn{(\Fd + \Gd)} &= \Frn + \Grn
\eqvrg 
\label{:equ:rn-2}
\\
\rn{(\Fd \cdot \Gd)} &= \Frn \cdot \Grn
\eqvrg 
\label{:equ:rn-3}
\\
\rn{(\Fd^*)} &= (\Fcb)^*
\eqpnt 
\label{:equ:rn-4}
\end{align}
}%

\ifcheat\smallskipneg\fi%
\begin{example}
    \label{:exa:sta-nor-for-1}
Let
$\msp\Ed_{2}= (a^{*}b^{*})^{*}\msp$.
Then

$\msp\rn{\Ed_{2}}= \left(\cb{\left(a^{*}b^{*}\right)}\right)^{*}
  = \left(\cb{\left(a^{*}\right)}+\cb{\left(b^{*}\right)}\right)^{*}
                 = \left(\cb{(a)}+\cb{(b)}\right)^{*}
                 = (a+b)^{*}\msp$.
\end{example}

\ifcheat\smallskipneg\fi%
\begin{theorem}[\cite{Brug93}]
    \label{:the:BK-snf}%
For any expression~$\Ed$, $\Ern$ is in star-normal form and 
$\msp\Stan{\Ern}=\Stan{\Ed}\msp$.
\end{theorem}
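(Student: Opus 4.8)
The plan is to reduce everything to the observation that, up to the canonical naming of its non-initial states by the letter-occurrences (positions) of the expression, the standard automaton $\Stan{\Ed}$ is completely determined by the four data displayed in \figur{sta-aut}: the constant term $c=\TermCst{\Ed}$, the row $J$ of first positions, the square follow-matrix $F$, and the column $U$ of last positions. Since neither $\rn{\cdot}$ nor $\cb{\cdot}$ ever creates, deletes, or duplicates a letter (at the atoms $\rn{a}=\cb{a}=a$, and every recursive clause merely recombines its arguments), both transport the positions of $\Ed$ bijectively onto those of $\rn{\Ed}$ and $\cb{\Ed}$; I identify positions along these bijections throughout, so that the data $J,F,U$ of $\Ed$, $\rn{\Ed}$ and $\cb{\Ed}$ live over one common index set and may be compared directly. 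With this reduction, the star-normal-form half of the statement becomes a short structural induction: stars in $\rn{\Ed}$ are produced only by the clause $\rn{(\Fd^{*})}=(\cb{\Fd})^{*}$, whose base $\cb{\Fd}$ has constant term $\zed$; granting $\TermCst{\cb{\Fd}}=0$ for every $\Fd$ (part of the invariant below) and that $\cb{\Fd}$ is itself in SNF, each such star is legal, and one checks along the same induction that $\cb{\Ed}$ and $\rn{\Ed}$ are in SNF.

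The core is a single mutual structural induction establishing, for every $\Ed$, the two statements $P(\Ed):\ \Stan{\rn{\Ed}}=\Stan{\Ed}$, and $Q(\Ed):$ the components satisfy $J_{\cb{\Ed}}=J_{\Ed}$, $U_{\cb{\Ed}}=U_{\Ed}$, $\TermCst{\cb{\Ed}}=0$, and — crucially — the follow-matrices agree \emph{modulo the star loop}:
\[
U_{\Ed}\matmul J_{\Ed} + F_{\cb{\Ed}} \;=\; U_{\Ed}\matmul J_{\Ed} + F_{\Ed}.
\]
The two clauses are genuinely intertwined, because $\cb{(\Fd\autprod\Gd)}$ refers to $\rn{\Fd}\autprod\rn{\Gd}$ in the non-nullable case, so $P$ and $Q$ must be carried together.

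The main obstacle — and the reason $Q$ is phrased with the correction term $U\matmul J$ rather than as plain equality $F_{\cb{\Ed}}=F_{\Ed}$ — is that $\cb{\cdot}$ genuinely loses follow-edges. This is already visible on $\Ed_{2}=(a^{*}b^{*})^{*}$ of \exemp{sta-nor-for-1}: there $\cb{(a^{*}b^{*})}=a+b$, which has empty follow, whereas $a^{*}b^{*}$ carries the cross-edge from $a$ (a last position of $a^{*}$) to $b$ (a first position of $b^{*}$). The edge is recovered exactly when the expression is starred: the construction \equnm{sta-aut-sta} replaces $F$ by $H=U\matmul J+F$, adding precisely the outer product of last positions by first positions, which absorbs the discrepancy recorded in $Q(\Ed)$. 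Concretely, at a star node $\Ed=\Fd^{*}$ both $\Stan{\rn{\Ed}}=\bigl(\Stan{\cb{\Fd}}\bigr)^{*}$ and $\Stan{\Ed}=\bigl(\Stan{\Fd}\bigr)^{*}$ have first $=J_{\Fd}$, last $=U_{\Fd}$, constant term $\und$, and follow-matrix $U_{\Fd}\matmul J_{\Fd}+F_{\cb{\Fd}}=U_{\Fd}\matmul J_{\Fd}+F_{\Fd}$ by the hypothesis $Q(\Fd)$; hence they coincide and $P(\Fd^{*})$ holds, while $Q(\Fd^{*})$ follows from $Q(\Fd)$ since $U_{\Fd^{*}}J_{\Fd^{*}}=U_{\Fd}J_{\Fd}$.

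The remaining cases are mostly bookkeeping through \equnm{sta-aut-sum} and \equnm{sta-aut-pro}, which let me replace $\Stan{\rn{\Fd}\autplus\rn{\Gd}}$ by $\Stan{\rn{\Fd}}+\Stan{\rn{\Gd}}$ and similarly for products, so $P$ propagates through $\autplus$ and $\autprod$ at once by the induction hypothesis. For $Q$ at a sum, and at a product with at least one non-nullable factor (where $\cb{(\Fd\autprod\Gd)}=\rn{\Fd}\autprod\rn{\Gd}$, giving outright $\Stan{\cb{\Ed}}=\Stan{\Ed}$), the four clauses are routine once the follow-matrices are compared block by block. The one delicate sub-case is $\cb{(\Fd\autprod\Gd)}=\cb{\Fd}\autplus\cb{\Gd}$ when $\TermCst{\Fd}=\TermCst{\Gd}=\und$: the automaton of a sum has block-diagonal follow with no cross-edges, while the product automaton carries the cross-block $U_{\Fd}\matmul J_{\Gd}$; the equality in $Q$ survives only because both factors being nullable forces $J_{\Ed}$ and $U_{\Ed}$ to meet both sides, so the correction term $U_{\Ed}\matmul J_{\Ed}$ restores exactly the missing cross-blocks $U_{\Fd}\matmul J_{\Gd}$ and $U_{\Gd}\matmul J_{\Fd}$. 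I expect this nullable-product case to be where all the substance sits; the verification of $\TermCst{\cb{\Ed}}=0$ and of the SNF claim then rides along the same induction with no further ideas.
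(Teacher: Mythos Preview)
Your proposal is correct and is essentially the paper's proof, recast at the level of matrix components rather than automaton equalities. The paper carries the pair of invariants $\Stan{\Ern}=\Stan{\Ed}$ and $(\Stan{\Ecb})^{*}=\Stan{\Ed^{*}}$; unfolding the second via \equnm{sta-aut-sta} yields precisely your $Q(\Ed)$: equal $J$, equal $U$, and $U\matmul J+F_{\cb{\Ed}}=U\matmul J+F_{\Ed}$. The three auxiliary facts the paper isolates as \prpri{ope-sta-aut}~(i)--(iii) are exactly the block computations you describe for the sum, the nullable product, and the nested star, each using Boolean idempotency at the spot where a cross-block $U_{\Fd}\matmul J_{\Gd}$ gets doubled; so both routes hit the same obstacle and resolve it the same way.
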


\longonly{%
\theor{BK-snf} implies in particular that~$\Ern$ is 
equivalent to~$\Ed$.
It relies on three computations on Boolean standard automata which are the 
direct consequence of the 
formulas~\equnm{sta-aut-sum}--\equnm{sta-aut-sta}. 
and of the idempotency identity~$\IdRI$.
 
\begin{property}
    \label{:pty:ope-sta-aut}
Let~$\Ac$, $\Ac'$, $\Bc$ and~$\Bc'$ be four (Boolean) standard automata.
\begin{enumerate}
    \item  
If 
$\msp\Ac^{*}={\Ac'}^{*}\msp$
and
$\msp\Bc^{*}={\Bc'}^{*}\msp$
then
$\msp\left(\Ac+\Bc\right)^{*}=\left(\Ac'+\Bc'\right)^{*}\msp$.

    \item  
If
$\msp\TermCst{\CompAuto{\Ac}}=\TermCst{\CompAuto{\Bc}}=1\msp$, then
$\msp\left(\Ac+\Bc\right)^{*}=\left(\Ac\matmul\Bc\right)^{*}\msp$.

    \item  
$\msp\Ac^{*}=\left(\Ac^{*}\right)^{*}\msp$.
\end{enumerate}
\end{property}
}%

\longonly{%
\begin{proof}
\thi 
Let
\begin{equation}
\Ac =\StanAuto{J}{F}{c}{U}
    \e\text{and}\e
\Ac' =\StanAuto{J'}{F'}{c'}{U'}
\eqpnt
\notag
\end{equation}
The hypothesis implies
\begin{equation}
\Ac^{*} = \StanAuto{J}{U\cdot J +F}{1}{U} =    
{\Ac'}^{*} = \StanAuto{J'}{U'\cdot J' +F'}{1}{U'}
\eqvrg
\notag
\end{equation}
hence 
$\msp J=J'\msp$,
$\msp U=U'\msp$
and $\msp U\cdot J +F=U'\cdot J' +F'\msp$.
Accordingly, if we have
\begin{equation}
    \Bc =\StanAuto{K}{G}{d}{V}
    \e\text{and}\e
    \Bc' =\StanAuto{K'}{G'}{d'}{V}
\eqvrg
\notag
\end{equation}
then
$\msp K=K'\msp$,
$\msp V=V'\msp$
and $\msp V\cdot K +G=V'\cdot K' +G'\msp$
hold.
From~\equnm{sta-aut-sum} and~\equnm{sta-aut-sta} follow
\begin{align}
(\Ac+\Bc)^{*} 
   & = \StanAutoSP{J}{K}{U\cdot J +F}{U\cdot K}%
              {V\cdot K}{V\cdot K +G}{1}{U}{V}
\notag \\[.5ex]
   & = \StanAutoSP{J'}{K'}{U'\cdot J' +F'}{U'\cdot K'}%
              {V'\cdot K'}{V'\cdot K' +G'}{1}{U'}{V'}
     = (\Ac'+\Bc')^{*}
\eqpnt
\notag
\end{align}
%
%
%

\thii
With the same notation as before
we have on one hand-side
\begin{align}
\Ac+\Bc 
   & = \StanAutoSP{J}{K}{F}{0}%
                  {0}{G}{1}{U}{V}
   \ee\text{and then}
\notag \\[.5ex]
(\Ac+\Bc)^{*} 
   & = \StanAutoSP{J}{K}{U\cdot J +F}{U\cdot K}%
              {V\cdot K}{V\cdot K +G}{1}{U}{V}
\eqvrg
\notag
\end{align}
and on the other
\begin{align}
\Ac\matmul\Bc 
   & = \StanAutoSP{J}{K}{F}{U\cdot K}%
                  {0}{G}{1}{U}{V}
   \ee\text{and then}
\notag \\[.5ex]
(\Ac\matmul\Bc)^{*} 
   & = \StanAutoSP{J}{K}{U\cdot J +F}{U\cdot K + U\cdot K}%
              {V\cdot K}{V\cdot K +G}{1}{U}{V}
\eqvrg
\notag
\end{align}
and the equality follows from the idempotency identity~$\IdRI$.
%
%
%

\thiii
Again with the same notation,
we have: 
\begin{equation}
\Ac^{*} = \StanAuto{J}{U\cdot J +F}{1}{U}
\e\text{and}\e
\left(\Ac^{*}\right)^{*} = \StanAuto{J}{U\cdot J +U\cdot J +F}{1}{U}   
\notag
\end{equation}
and the equality follows from the idempotency identity~$\IdRI$.
\end{proof}
}%

\longonly{%
\begin{proof}[Proof of \theor{BK-snf}]
We establish by a simultaneous induction that the following 
\emph{two} statements hold:
\begin{align}
\Ern \e\text{is in SNF} &\ee\text{and}\ee 
\Stan{\Ern} = \Stan{\Ed} 
\label{:equ:BK-snf-1}
\\
\Ecb \e\text{is in SNF} &\ee\text{and}\ee 
(\Stan{\Ecb})^* = \Stan{\Ed^{*}} 
\eqpnt
\label{:equ:BK-snf-2}
\end{align}
Both~\equnm{BK-snf-1} and~\equnm{BK-snf-2} clearly hold for the base 
clauses.

\begin{itemize}
\item
$\msp\Ed = \Fd + \Gd\msp$  
\begin{itemize}
    \item 
    $\msp\Ern = \Frn + \Grn\msp$ is in star-normal form by induction.
Moreover,
\begin{align}
\Stan{\Ern} =  \Stan{\Frn}+\Stan{\Grn} & = \Stan{\Fd}+\Stan{\Gd}
&\ee&\text{by induction} 
\notag
\\
& =   \Stan{\Ed}
&\ee&\text{since $\msp\Ed = \Fd + \Gd\msp$} 
\eqpnt 
\notag
\end{align}

\item  
$\msp\Ecb = \Fcb + \Gcb\msp$ is in star-normal form by induction.
Moreover,
\begin{align}
\e\left(\Stan{\Ecb}\right)^{*} =  \left(\Stan{\Fcb}+\Stan{\Gcb}\right)^{*} 
   & = \left(\Stan{\Fd}+\Stan{\Gd}\right)^{*}
&\ee&\text{by induction and by \prpri{ope-sta-aut}~\dex{i}} 
\notag
\\
& =   \Stan{\Ed^{*}}
&\ee&\text{since $\msp\Ed = \Fd + \Gd\msp$} 
\eqpnt 
\notag
\end{align}
\end{itemize}

\item
$\msp\Ed = \Fd \matmul \Gd\msp$ 

\begin{itemize}
\item  
$\msp\Ern = \Frn \matmul \Grn\msp$ is in star-normal form by induction.
Moreover,
\begin{align}
    \Stan{\Ern} = \Stan{\Frn}\matmul\Stan{\Grn} & = \Stan{\Fd}\matmul\Stan{\Gd} 
&\ee&\text{by induction} 
\notag
\\
& =   \Stan{\Ed}
&\ee&\text{since $\msp\Ed = \Fd \matmul \Gd\msp$} 
\eqpnt 
\notag
\end{align}

\item  
    $\msp\Ecb = \Fcb + \Gcb\msp$ or $\msp\Ecb = \Frn\matmul\Grn\msp$ 
    is in star-normal form by induction.
    Moreover:
    
\thi if $\msp\TermCst{\Fd}=\TermCst{\Gd}=1\msp$, then
\begin{align}
\e\left(\Stan{\Ecb}\right)^{*} =  \left(\Stan{\Fcb}+\Stan{\Gcb}\right)^{*} 
   & = \left(\Stan{\Fd}+\Stan{\Gd}\right)^{*}
&\ee&\text{by induction and by \prpri{ope-sta-aut}~\dex{i}} 
\notag
\\
& =   \left(\Stan{\Fd}\matmul\Stan{\Gd}\right)^{*}
&\ee&\text{by \prpri{ope-sta-aut}~\dex{ii}}  
\notag
\\
& =   \Stan{\Ed^{*}}
&\ee&\text{since $\msp\Ed = \Fd \matmul \Gd\msp$} 
\eqpnt 
\notag
\end{align}

\thii if $\msp\TermCst{\Fd}\TermCst{\Gd}=0\msp$, then
\begin{align}
\e\left(\Stan{\Ecb}\right)^{*} =  \left(\Stan{\Frn}\matmul\Stan{\Grn}\right)^{*} 
   & = \left(\Stan{\Fd}\matmul\Stan{\Gd}\right)^{*}
&\ee&\text{by induction} 
\notag
\\
& =   \Stan{\Ed^{*}}
&\ee&\text{since $\msp\Ed = \Fd \matmul \Gd\msp$} 
\eqpnt 
\notag
\end{align}
\end{itemize}

\item
$\msp\Ed = \Fd^{*} \msp$  
 \begin{itemize}
    \item  
    $\msp\Ern = \left(\Fcb\right)^{*}\msp$ is in star-normal form by 
    induction and since $\msp\TermCst{\Fcb}=0\msp$.
    Moreover,
\begin{align}
\eee\Stan{\Ern} =  \left(\Stan{\Fcb}\right)^{*}& = \Stan{\Fd^{*}} 
&\ee&\text{by induction and~\equnm{BK-snf-2}} \ee
\notag
\\
& =   \Stan{\Ed}
&\ee&\text{since $\msp\Ed = \Fd^{*}\msp$} 
\eqpnt 
\notag
\end{align}

\item 
 $\msp\Ecb = \cb{\left(\Fd^{*}\right)} = \Fcb\msp$ is in star-normal form by induction.
Moreover
\begin{align}
\e\left(\Stan{\Ecb}\right)^{*} =  \left(\Stan{\Fcb}\right)^{*} 
   & = \Stan{\Fd^{*}}
&\ee&\text{by induction} 
\notag
\\
& =   \Stan{\Ed^{*}}
&\ee&\text{since $\msp\Ed = \Fd\msp$ and by 
\prpri{ope-sta-aut}~\dex{iii}} 
\eqpnt 
\notag
\end{align}
\end{itemize}
\end{itemize}

\end{proof}
}

\ifcheat\smallskipneg\fi%
As the computation of~$\rn{\Ed}$ is linear in~$\LitL{\Ed}$, the goal 
is achieved by the following:

\ifcheat\smallskipneg\fi%
\begin{theorem}[\cite{Brug93}]
    \label{:the:BK-snf-cpl}%
\shortlong{%
The computation of~$\Stan{\Ern}$ 
has a quadratic complexity in~$\LitL{\Ed}$.
}{%
Let~$\Ed$ be a rational expression in star-normal form.
Then, the inductive computation of~$\Stan{\Ed}$ 
by~\equnm{sta-aut-sum}--\equnm{sta-aut-sta} has a quadratic 
complexity in~$\LitL{\Ed}$.
}%
\end{theorem}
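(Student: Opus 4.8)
The plan is to treat the inductive computation as the bottom-up construction that, processing the syntactic tree of $\Ed$ in post-order, maintains the transition relation of $\Stan{\Ed}$ in a sparse (adjacency-list) representation. With such a representation the sub-blocks $F$, $G$, $J$, $K$, $U$, $V$ of the operands in \equnm{sta-aut-sum}--\equnm{sta-aut-sta} are re-used by index-offsetting and never recopied, so that the identity parts and the sum operation cost only $O(1)$ per node; all the real work is concentrated in the two \emph{outer products} that create genuinely new transitions, namely $U\matmul K$ at a product node \equnm{sta-aut-pro} and $U\matmul J$ at a star node \equnm{sta-aut-sta}. By \prpri{sta-aut-dim} the dimension of $\Stan{\Ed}$ is $\LitL{\Ed}+1$, so writing $n=\LitL{\Ed}$ the final automaton carries at most $(n+1)^2$ transitions and the whole accumulated transition relation lives inside an $n\times n$ grid of position pairs. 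Hence the total cost is $O(n)$ (node overhead) plus the sum, over product nodes, of $\jsCard{U}\matmul\jsCard{K}$ and, over star nodes, of $\jsCard{U}\matmul\jsCard{J}$, and it suffices to bound each of these two sums by $O(n^2)$.

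First I would dispose of the product nodes, which require no hypothesis on $\Ed$. Each entry counted by $\jsCard{U}\matmul\jsCard{K}$ at a product node $\Fd\matmul\Gd$ is an ordered pair $(x,y)$ where $x$ is a final state (last position) of the left factor and $y$ a successor of the initial state (first position) of the right factor; in particular $x$ lies in the subtree of $\Fd$ and $y$ in the disjoint subtree of $\Gd$. For a \emph{fixed} ordered pair of positions $(x,y)$ there is at most one node of the tree having $x$ in its left subtree and $y$ in its right subtree — namely their lowest common ancestor, in the single orientation it imposes — so each ordered pair is charged to at most one product node. Therefore $\sum_{\text{product nodes}}\jsCard{U}\matmul\jsCard{K}\le n(n-1)<n^2$.

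The star nodes are the crux, and the only place where the star-normal form is used. Here $x$ and $y$ both lie below the star node, so a single pair $(x,y)$ may belong to $U\matmul J$ at \emph{several} nested star nodes: this is exactly the silent re-use of the idempotency identity $\IdRI$ (re-adding a loop that is already present) that makes the naive computation cubic on examples such as $(((a^*+b^*)^*+c^*)^*+d^*)^*$. The key lemma I would prove is: \emph{if $\TermCst{\Fd}=0$, then the transition block $F$ of $\Stan{\Fd}$ is disjoint from the back-transitions $U\matmul J$ added when forming $\Stan{\Fd}^{*}$}, i.e. no first position of $\Fd$ is a successor of a last position of $\Fd$ inside $\Stan{\Fd}$. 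This is proved by induction on $\Fd$ (under the standing hypothesis $\TermCst{\Fd}=0$): the star case is vacuous, since $\TermCst{\Gd^{*}}=1$; the sum and product cases reduce, after splitting the enlarged first/last sets caused by nullable factors, to position-disjointness of the relevant subexpressions together with the induction hypothesis applied to the factors of constant term $0$, which are precisely the ones that supply the dangerous pairs. Since $\Ed$ is in star-normal form (\defin{sta-nor-for}), \emph{every} starred subexpression $\Fd$ satisfies $\TermCst{\Fd}=0$, so at each star node the set $U\matmul J$ is disjoint from the operand's current relation; because the construction only ever adds transitions, the back-transitions created at two distinct star nodes are pairwise disjoint (nested ones because the inner set is already contained in the operand's block of the outer node, unrelated ones by position-disjointness). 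Consequently $\sum_{\text{star nodes}}\jsCard{U}\matmul\jsCard{J}$ equals the cardinality of a single set of transitions, bounded by $(n+1)^2$.

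Combining the two sums with the $O(n)$ node overhead yields the quadratic bound. The main obstacle I expect is the crux lemma, which is where the choice of star-normal form is vindicated: one must set up the induction with the correct invariant so that the enlargement of the first/last sets through nullable subexpressions does not secretly reintroduce a last-to-first transition. A secondary, more bookkeeping obstacle is to pin down the cost model honestly — namely to argue that with the adjacency-list representation the embeddings prescribed by \equnm{sta-aut-sum}--\equnm{sta-aut-sta} cost nothing beyond the outer products, so that the computation genuinely only \emph{accumulates} transitions and the disjointness argument can be applied globally.
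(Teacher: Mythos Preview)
The paper provides no proof of this theorem: the long version carries only a boxed ``to be done'' placeholder, and the short version omits a proof entirely. So there is nothing in the paper to compare against.

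Your proposal is correct and is essentially Br\"uggemann-Klein's original argument from~\cite{Brug93}. The crux lemma --- that when $\TermCst{\Fd}=0$ the back-transitions $U\matmul J$ created by~\equnm{sta-aut-sta} are disjoint from the operand's block~$F$ --- is precisely the key observation behind her quadratic bound, and your inductive proof of it is the right one: the star case is vacuous since $\TermCst{\Gd^{*}}=1$; the sum case reduces to the induction hypothesis on both summands (each of constant term~$0$); and the product case splits on which factor has constant term~$0$, with the off-diagonal blocks of~$F_{\Fd}$ handled by position-disjointness and the one surviving diagonal block handled by the induction hypothesis on the factor of constant term~$0$. Your charging of the product-node work $\jsCard{U}\matmul\jsCard{K}$ to lowest common ancestors is likewise correct, and the bookkeeping assumption that the block embeddings cost nothing beyond the outer products is the standard Glushkov-construction implementation using a global position numbering.
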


\longonly{%
\begin{proof}
	{\large \fbox{to be done}}
\end{proof}
}%

\ifcheat\smallskipneg\fi%
\ifcheat\smallskipneg\fi%
\subsubsection{The Thompson automaton}
\label{:sec:oth-rel-aut}

A survey on~$\EtAs$-maps cannot miss out the method due to Thompson 
\cite{ThompK68}. 
It was designed to be directly implementable as a program, primarily 
for searching with rational expressions in text.
It is based on the use of spontaneous transitions. 
\figur{Tho-1} shows the basic steps of the construction, which, by 
induction, associates with an expression~$\Ed$ a
unique (and well-defined) automaton~$\Thom{\Ed}$.
\index{automaton (of an expression)!Thompson --}%
\index{Thompson|see{automaton (of...)}}%
It is remarkable that this construction corresponds indeed to another 
way of defining the standard automaton.

\ifcheat\smallskipneg\fi%
\begin{proposition}
\label{:pro:Tho-sta}
The \emph{backward closure} of~$\Thom{\Ed}$ is equal to~$\Stan{\Ed}$. 
\end{proposition}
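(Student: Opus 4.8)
The plan is to argue by structural induction on $\Ed$, using the fact that both $\Thom{\Ed}$ and $\Stan{\Ed}$ are built by the same recursion on the constants $\zed,\und$, the letters, and the three operators. Write $\overline{\Ac}$ for the backward closure of $\Ac$; the claim is $\overline{\Thom{\Ed}}=\Stan{\Ed}$, and the first thing I would do is fix the state identification under which this equality is read. Backward closure leaves the state set unchanged and replaces each transition $(q,a,r)$ by all $(p,a,r)$ with $p\pathaut{\unAe}{}q$; in particular it never creates an \emph{incoming} labelled transition at a state that was the target of spontaneous transitions only. In $\Thom{\Ed}$ every letter-input state and every auxiliary state introduced by the construction is reached exclusively through spontaneous transitions, so all of them become inaccessible in $\overline{\Thom{\Ed}}$. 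The accessible part is thus the unique initial state together with the letter-output states — one per occurrence of a letter in $\Ed$ — that is, the $\LitL{\Ed}+1$ states of $\Stan{\Ed}$ (\prpri{sta-aut-dim}). This is the bijection under which I read the claimed equality. Finality transfers correctly: the closure rule $U=\Defi{p}{\ext q\in T,\ p\pathaut{\unAe}{}q}$ makes the initial state final exactly when $\unAe\in\CompExpr{\Ed}$, matching the constant $c$ of \figur{sta-aut}.

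For the base cases $\zed$, $\und$ and $a\in A$ I would simply display $\Thom{\Ed}$, compute its backward closure, and compare with the standard automata produced by the atomic clauses; this is immediate.

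The inductive step rests on a compatibility lemma: backward closure carries each Thompson step onto the corresponding standard-automaton operation, so that (on the accessible parts)
$$\overline{\Thom{\Fd+\Gd}}=\overline{\Thom{\Fd}}+\overline{\Thom{\Gd}},\quad \overline{\Thom{\Fd\autprod\Gd}}=\overline{\Thom{\Fd}}\matmul\overline{\Thom{\Gd}},\quad \overline{\Thom{\Fd^{*}}}=\left(\overline{\Thom{\Fd}}\right)^{*},$$
the operations on the right being those of \equnm{sta-aut-sum}--\equnm{sta-aut-sta}. Granting the lemma together with the induction hypotheses $\overline{\Thom{\Fd}}=\Stan{\Fd}$ and $\overline{\Thom{\Gd}}=\Stan{\Gd}$, the right-hand sides become $\Stan{\Fd+\Gd}$, $\Stan{\Fd\autprod\Gd}$ and $\Stan{\Fd^{*}}$ \emph{by definition} of the standard automaton, which closes the induction. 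To prove the lemma I would track, for each spontaneous transition added by the Thompson step, the new $\unAe$-paths it creates: the construction only ever adds such a transition out of a newly created state or out of a former final state (which has no other outgoing transition) into a former initial state (which has no incoming transition), so every new $\unAe$-path is an old one of $\Thom{\Fd}$ or $\Thom{\Gd}$ possibly prolonged across one such bridge. The labelled transitions it generates in the closure are then exactly the boundary-crossing transitions prescribed by \equnm{sta-aut-sum}--\equnm{sta-aut-sta} — for the product, the transitions feeding the first letters of $\Gd$ out of the final letter-output states of $\Fd$, together with the $c\xmd K$ contribution when $\TermCst{\Fd}=1$; for the star, the feedback transitions $U\matmul J$.

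The main obstacle is precisely this bookkeeping: backward closure is a transitive-closure (hence global) operation, whereas the Thompson step is local, so one cannot simply assert that closing the composite equals the union of the closed parts. The work is to verify that the only genuinely new labelled transitions are the boundary-crossing ones, and that they match the standard-automaton matrices term by term — in particular that the treatment of a final initial state, encoded by the constant $c$, agrees on both sides. Once the short shape of the added $\unAe$-paths is established, this reduces to a finite, if slightly tedious, transition-by-transition comparison.
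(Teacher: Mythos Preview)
The paper states this proposition without proof, so there is no reference argument to compare yours against. Your structural induction on~$\Ed$, with a compatibility lemma showing that backward closure carries each Thompson constructor onto the corresponding standard-automaton operation of~\equnm{sta-aut-sum}--\equnm{sta-aut-sta}, is the natural route and is correct. The observation that in~$\overline{\Thom{\Ed}}$ only the initial state and the letter-output states remain accessible is exactly what pins down the bijection with the $\LitL{\Ed}+1$ states of~$\Stan{\Ed}$; the rest is indeed the routine boundary-crossing bookkeeping you describe. One small caveat: since backward closure keeps the full state set of~$\Thom{\Ed}$, the equality in the proposition has to be read on the accessible part (equivalently, up to the identification you set up); you already say this, but it is worth making explicit at the outset that this is the intended meaning of ``equal''.
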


\begin{figure}[ht]
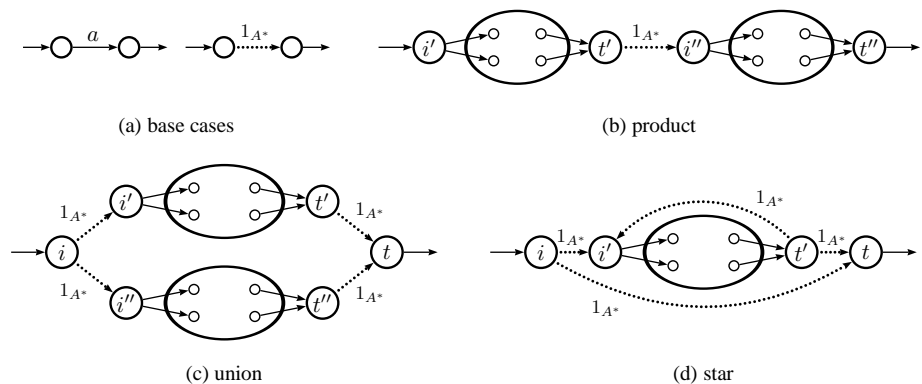

\centering
\subfigure[base cases]{
\VCDraw[0.9]{%
\begin{VCPicture}{(-1.4,-1.2)(3.4,1.2)}
\SmallState
\State{(0,0)}{A}\State{(2,0)}{B}
\Initial{A}\Final{B}
\EdgeL{A}{B}{a}
\end{VCPicture}%
}
\VCDraw[0.9]{%
\begin{VCPicture}{(-1.4,-1.2)(3.4,1.2)}
\SmallState
\State{(0,0)}{A}\State{(2,0)}{B}
\Initial{A}\Final{B}
\TransSpont\EdgeL{A}{B}{\unAe}
\end{VCPicture}%
}
}
\e \
\subfigure[product]{%
\VCDraw[0.9]{%
\begin{VCPicture}{(-8,-1.2)(8,1.2)}
\rput(-3.9,0){%
\FixStateDiameter{2cm}%
\scalebox{\SQRTwoRatio}{\State{(0,0)}{A}}%
\MediumState%
\rput(-0.7,0){\State[i']{(-1.9,0)}{B}%
\VSState{(0,0.4)}{B1}\VSState{(0,-0.4)}{B2}}%
\rput(0.7,0){\State[t']{(1.9,0)}{C}%
\VSState{(0,0.4)}{C1}\VSState{(0,-0.4)}{C2}}%
}%
\EdgeL{B}{B1}{}\EdgeL{B}{B2}{}
\EdgeL{C1}{C}{}\EdgeL{C2}{C}{}
\rput(3.9,0){%
\FixStateDiameter{2cm}%
\scalebox{\SQRTwoRatio}{\State{(0,0)}{AA}}%
\MediumState%
\rput(-0.7,0){\State[i'']{(-1.9,0)}{BB}%
\VSState{(0,0.4)}{BB1}\VSState{(0,-0.4)}{BB2}}%
\rput(0.7,0){\State[t'']{(1.9,0)}{CC}%
\VSState{(0,0.4)}{CC1}\VSState{(0,-0.4)}{CC2}}%
}%
\Initial{B}\Final{CC}%
\EdgeL{BB}{BB1}{}\EdgeL{BB}{BB2}{}
\EdgeL{CC1}{CC}{}\EdgeL{CC2}{CC}{}
\TransSpont
\EdgeL{C}{BB}{\unAe}
\end{VCPicture}%
}%
}
\\
\medskipneg
\subfigure[union]{%
\VCDraw[0.9]{%
\begin{VCPicture}{(-6.2,-2.5)(6.2,2.5)}
\rput(0,1.5){%
\FixStateDiameter{2cm}%
\scalebox{\GoldMeanRatio}{\State{(0,0)}{A}}%
\MediumState%
\rput(-0.9,0){\State[i']{(-2,0)}{B}%
\VSState{(0,0.4)}{B1}\VSState{(0,-0.4)}{B2}}%
\rput(0.9,0){\State[t']{(2,0)}{C}%
\VSState{(0,0.4)}{C1}\VSState{(0,-0.4)}{C2}}%
\EdgeL{B}{B1}{}\EdgeL{B}{B2}{}
\EdgeL{C1}{C}{}\EdgeL{C2}{C}{}
}%
\rput(0,-1.5){%
\FixStateDiameter{2cm}%
\scalebox{\GoldMeanRatio}{\State{(0,0)}{AA}}%
\MediumState%
\rput(-0.9,0){\State[i'']{(-2,0)}{BB}%
\VSState{(0,0.4)}{BB1}\VSState{(0,-0.4)}{BB2}}%
\rput(0.9,0){\State[t'']{(2,0)}{CC}%
\VSState{(0,0.4)}{CC1}\VSState{(0,-0.4)}{CC2}}%
\EdgeL{BB}{BB1}{}\EdgeL{BB}{BB2}{}
\EdgeL{CC1}{CC}{}\EdgeL{CC2}{CC}{}
}%
\State[i]{(-4.8,0)}{I}\State[t]{(4.8,0)}{T}
\Initial{I}\Final{T}%
\TransSpont
\EdgeL{I}{B}{\unAe}\EdgeR{I}{BB}{\unAe}
\EdgeL{C}{T}{\unAe}\EdgeR{CC}{T}{\unAe}
\end{VCPicture}%
}%
}
\ee
\subfigure[star]{%
\VCDraw[0.9]{%
\begin{VCPicture}{(-6.2,-2.5)(6.2,2.5)}
\FixStateDiameter{2cm}%
\scalebox{\GoldMeanRatio}{\State{(0,0)}{A}}%
\MediumState%
\rput(-0.9,0){\State[i']{(-2,0)}{B}%
\VSState{(0,0.4)}{B1}\VSState{(0,-0.4)}{B2}}%
\rput(0.9,0){\State[t']{(2,0)}{C}%
\VSState{(0,0.4)}{C1}\VSState{(0,-0.4)}{C2}}%
\EdgeL{B}{B1}{}\EdgeL{B}{B2}{}
\EdgeL{C1}{C}{}\EdgeL{C2}{C}{}
\rput(-4.8,0){\State[i]{(0,0)}{BB}}%
\Initial{BB}%
\rput(4.8,0){\State[t]{(0,0)}{CC}}%
\Final{CC}%
\TransSpont
\EdgeL{BB}{B}{\unAe}\EdgeL{C}{CC}{\unAe}
\LArcR[.2]{BB}{CC}{\unAe}
\ChgLArcAngle{45}
\LArcR[.2]{C}{B}{\unAe}
\end{VCPicture}%
}%
}
\medskipneg \medskipneg
\caption{Thompson's construction}
\label{:fig:Tho-1}
\end{figure}

\longonly{%
\figur{Co2-Tho} shows the
construction applied to the
expression~$\Ed_{2}= (a^{*} b + b\xmd b^{*} a )^{*}\msp $.
}

\longonly{%
\begin{figure}[ht]
\centering
\VCDraw{%
\begin{VCPicture}{(-6,-3)(16,3)}
\SmallState
\VCPut{(0,1)}{%
\State{(0,0)}{A}\State{(2,0)}{B}\State{(4,0)}{C}
\State{(6,0)}{D}\State{(8,0)}{DD}\State{(10,0)}{E}
}%
\VCPut{(0,-1)}{%
\State{(-2,0)}{L}
\State{(0,0)}{FF}\State{(2,0)}{F}\State{(4,0)}{G}
\State{(6,0)}{H}\State{(8,0)}{J}\State{(10,0)}{JJ}
\State{(12,0)}{K}}%
\State{(-3,0)}{I}\State{(13,0)}{T}
\State{(-5,0)}{II}\State{(15,0)}{TT}
\Initial{II}\Final{TT}
\EdgeL{B}{C}{a}\EdgeL{DD}{E}{b}
\EdgeL{L}{FF}{b}\EdgeR{G}{H}{b}\EdgeL{JJ}{K}{a}
\TransSpont
\Edge{II}{I}\Edge{I}{A}\Edge{I}{L}\Edge{A}{B}\Edge{C}{D}
\Edge{E}{T}\Edge{T}{TT}\Edge{F}{G}\Edge{H}{J}\Edge{K}{T}
\LArcL{A}{D}{}\LArcR{F}{J}{}
\LArcL{C}{B}{}\LArcR{H}{G}{}
\Edge{D}{DD}\Edge{FF}{F}\Edge{J}{JJ}
\ChgLArcAngle{40}\ChgLArcCurvature{0.65}
\LArcR{II}{TT}{}
\ChgLArcAngle{50}\ChgLArcCurvature{0.75}
\LArcR{T}{I}{}
\RstLArcAngle\RstLArcCurvature 
\end{VCPicture}
}
\caption{The automaton $\msp \Thom{\Ed_{2}}\msp $}
\label{:fig:Co2-Tho}
\end{figure}
}

\longonly{%
\subsubsection{Loop complexity of standard automata}
\label{:sec:sta-hei-sta}

We end this section with the proof of 
\propo{Egg-con}:
\textsl{With every rational expression~$\Ed$,
we can associate an automaton equivalent to~$\Ed$ and whose loop 
complexity is equal to the star height of~$\Ed$.}

The sketches at \figur{sta-aut-ope}~(a) and~(b) make clear that 
if~$\Ac$ and~$\Ac'$ are standard automata,
it holds
\begin{equation}
\jsEnla{(\Ac+\Ac')}  = \jsEnla{(\Ac\cdot\Ac')} 
                     = \max\{\jsEnla{\Ac},\jsEnla{\Ac'}\}  
\eqpnt
\notag
\end{equation}
It follows then from \defin{loo-cpl} that the loop complexity of the 
sum and product of  
standard automata is equal to the star height of the expressions 
for the sum and product, provided equality hold for the operands.

The same relation does not holds for the star of a standard 
automaton, as seen on the example shown at \figur{sta-hei-sta-1}:
the loop complexity of the automaton is not necessarily incremented 
by the star operation.
In the opposite way, the star operation (on standard automata)
may well increase the loop complexity by more than~$1$, as  
shown by~$\Stan{\Ed_{1}}$.
Hence, it is not true that
$\msp \jsEnla{\Stan{\Ed}}=\jsHaut{\Ed}\msp$ holds.

\begin{figure}[ht]
\centering
\setlength{\lga}{2cm}\setlength{\lgb}{1.5cm}
\renewcommand{\ForthBackEdgeOffset}{2.8}
\VCDraw{%
\begin{VCPicture}{(-1,-1)(5,1)}
\SmallState
\State{(0,0)}{A}
\State{(\lga,0)}{B}    
\State{(2\lga,0)}{C}
\Initial[w]{A}\Final[e]{C}
\EdgeL{A}{B}{a}\ArcL{B}{C}{a}\ArcL{C}{B}{a}
\end{VCPicture}%
}%
\eee
\VCDraw{%
\begin{VCPicture}{(-1,-1)(5,1)}
\SmallState
\State{(0,0)}{A}
\State{(\lga,0)}{B}    
\State{(2\lga,0)}{C}
\Initial[w]{A}\Final[s]{A}\Final[e]{C}
\EdgeL{A}{B}{a}\ArcL{B}{C}{a}\ArcL{C}{B}{a}
\end{VCPicture}%
}%

\caption{The standard automaton~$\Ac_{3}$ of~$a\xmd(a^{2})^{*}$ 
and~${\Ac_{3}^{*}}$}
\label{:fig:sta-hei-sta-1}
\end{figure}
}%

\longonly{%
In order to circumvent this difficulty, we replace the star operation 
on standard automata by a more elaborate one.
A standard automaton is \emph{normalised} if it has only one 
final state and if this final state is not the origin of any 
transition.
\index{normalised standard automaton}%
\index{standard automaton!normalised --}%
\index{automaton! normalised standard --}%
An obvious construction transforms any standard automaton~$\Ac$ into 
an equivalent normalised one, which we write~$\jsNorm{\Ac}$, 
and we have:

\begin{property}
    \label{:pty:enl-nor}
\ee
$\msp\jsEnla{\jsNorm{\Ac}} = \jsEnla{\Ac}\msp$.
\end{property}
}%


\longonly{%
We further write~$\Zero{\Ac}$ for the standard automaton~$\Ac$ in 
which the initial state is not final.
The automaton
$\msp\left(\Zero{\jsNorm{\Ac}}\matmul\Zero{\Ac}\right)\msp$ has a 
cut-vertex~$t$.
Finally, let 
$\msp\Bc = \left(\Zero{\jsNorm{\Ac}}\matmul\Zero{\Ac}\right)^{*}\msp$ 
and~$\Bc'$ the automaton in which~$t$ has been made final.
Clearly,
$\msp\CompAuto{\Bc}=\Ac^{*}\msp$ and 
\propo{Egg-con} is established with the proof of the following 
lemma.

\begin{lemma}
    \label{:lem:egg-rec}
$\jsEnla{\Bc'} = \jsEnla{\Ac } + 1$ .
\end{lemma}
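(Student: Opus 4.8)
The plan is to compute $\jsEnla{\Bc'}$ directly from \defin{loo-cpl} by first locating the balls of $\Bc'$. Since $\Bc'$ is standard, its initial state $i$ is the end of no transition and so lies in no ball. I claim that every other state lies in a \emph{single} ball $\Pc$, consisting of the two body copies together with the cut-vertex $t$; granting this, $\Bc'$ is not strongly connected and has $\Pc$ as its only ball, whence $\jsEnla{\Bc'}=\jsEnla{\Pc}$. Since $\Pc$ is a ball, \defin{loo-cpl} reduces the goal $\jsEnla{\Pc}=\jsEnla{\Ac}+1$ to proving that $\min\Defi{\jsEnla{\Pc\bk\{s\}}}{\text{$s$ state of~$\Pc$}}$ equals $\jsEnla{\Ac}$.

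First I would verify the ball structure. Assuming, as we may, that $\Ac$ (hence $\jsNorm{\Ac}$) is trim, $t$ is the unique final state of $\jsNorm{\Ac}$ and the only vertex through which a path can pass from the left copy $\Zero{\jsNorm{\Ac}}$ to the right copy $\Zero{\Ac}$; moreover the star construction \equnm{sta-aut-sta} inserts return transitions from the final states of the right copy back to the states entered from $i$ in the left copy. Trimness then yields, for every body state $x$, both a path $x\to t$ and a path $t\to x$ (the latter running through the right copy, along a return transition, and back into the left copy), so all body states together with $t$ are mutually reachable and $\Pc$ is strongly connected. This simultaneously shows that no body state can lie outside $\Pc$, which justifies the reduction above.

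Next comes the upper bound. Removing the cut-vertex $t$ severs the left copy from the right copy, and no return transition can close a cycle once $t$ is gone (from the left copy one cannot re-enter the right copy without passing through $t$). Hence the balls of $\Pc\bk\{t\}$ are exactly those of the two copies; since making an initial state non-final and deleting it leaves the balls unchanged, each copy has loop complexity $\jsEnla{\Ac}$ (using \prpri{enl-nor} for the left copy), so $\jsEnla{\Pc\bk\{t\}}=\jsEnla{\Ac}$ and therefore $\min_{s}\jsEnla{\Pc\bk\{s\}}\le\jsEnla{\Ac}$.

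For the matching lower bound I would exploit the fact that \emph{two} disjoint copies of the body of $\Ac$ sit inside $\Pc$: whatever single state $s$ is deleted, $s$ belongs to at most one copy (or equals $t$), so the \emph{other} copy survives intact in $\Pc\bk\{s\}$. Invoking the monotonicity of loop complexity under sub-automata — if $\Rc$ is a sub-automaton of $\Sc$ then $\jsEnla{\Rc}\le\jsEnla{\Sc}$, which is the monotonicity of the cycle rank of a digraph under subgraphs and is proved by a routine induction on \defin{loo-cpl} — the surviving copy forces $\jsEnla{\Pc\bk\{s\}}\ge\jsEnla{\Ac}$ for every $s$. Combining the two bounds gives $\min_{s}\jsEnla{\Pc\bk\{s\}}=\jsEnla{\Ac}$, hence $\jsEnla{\Bc'}=\jsEnla{\Pc}=1+\jsEnla{\Ac}$. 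The main obstacle is the careful bookkeeping of balls under vertex deletion together with the monotonicity lemma; the conceptual heart — and the very reason the construction doubles the body and threads it through the cut-vertex $t$ — is exactly that doubling makes every single deletion leave a complete copy behind (giving the lower bound) while the cut-vertex provides one deletion that collapses the global loop down to $\jsEnla{\Ac}$ (giving the upper bound).
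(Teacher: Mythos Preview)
Your argument is correct and follows essentially the same route as the paper: identify that $\Bc'\setminus\{i\}$ is a single ball, apply the recursive clause of \defin{loo-cpl}, get the upper bound by deleting the cut-vertex~$t$, and get the lower bound from the surviving intact copy. You are simply more explicit than the paper in two places --- you spell out why the body is strongly connected (the paper states it), and you name the monotonicity of loop complexity under sub-automata (the paper uses it silently when it writes ``$\Bc''$ contains either~$\One{\Ac}$ or~$\Zero{\jsNorm{\Ac}}$ and $\jsEnla{\Bc'}\jsgeq\jsEnla{\Ac}+1$'').
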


\begin{proof}
The automaton~$\Bc'$ without its initial state~$i$ is a ball; by definition, we have
\begin{equation}
    \jsEnla{\Bc'} = 
       1 + \min \Defi{\jsEnla{\Bc' \bk \{i,s\}}}{s \in \Bc'\bk i}
\eqpnt
\label{:equ:egg-rec}
\end{equation}
The final state~$t$ of~$\jsNorm{\Ac}$ is a \emph{cut vertex} (of the 
underlying graph) of~$\Zero{\jsNorm{\Ac}}\cdot\Zero{\Ac}$
If we set~$s=t$ in~\equnm{egg-rec}, the balls 
of~$\Bc''=\Bc'\bk\{i,s\}$ are those of~$\Zero{\jsNorm{\Ac}}$
and~$\Zero{\Ac}$ and hence 
$\msp\jsEnla{\Bc'}\jsleq\jsEnla{\Ac}+1\msp$.
If we take an arbitrary~$s$ in~$\Zero{\jsNorm{\Ac}}$ or~$\Zero{\Ac}$, 
$\Bc''$ contains either~$\One{\Ac}$ or~$\Zero{\jsNorm{\Ac}}$ and
$\msp\jsEnla{\Bc'}\jsgeq\jsEnla{\Ac}+1\msp$.
\end{proof}
}

\longonly{%
\begin{figure}[ht]
\centering
\setlength{\lga}{2cm}\setlength{\lgb}{1cm}
\renewcommand{\ForthBackEdgeOffset}{2.8}
\VCDraw{%
\begin{VCPicture}{(-1,-1.5)(7,1.5)}
\SmallState
\State{(\lga,0)}{B}    
\State{(2\lga,-\lgb)}{C}
\MediumState
\State[i]{(0,0)}{A}
\State[t]{(3\lga,0)}{D}
\Initial[w]{A}\Final[e]{D}
\EdgeL{A}{B}{a}\ArcL{B}{C}{a}\ArcL{C}{B}{a}\LArcL{B}{D}{a}
\end{VCPicture}%
}%
\eee
\VCDraw{%
\begin{VCPicture}{(-1,-1.5)(13,1.5)}
\MediumState
\State[i]{(0,0)}{A}
\State[t]{(3\lga,0)}{AA}
\SmallState
\State{(\lga,0)}{B}    
\State{(2\lga,-\lgb)}{C}
\VCPut{(3\lga,0)}{%
\State{(\lga,0)}{BB}    
\State{(2\lga,-\lgb)}{CC}
\State{(3\lga,0)}{DD}
}
\Initial[w]{A}\Final[s]{A}
\Final[s]{AA}\Final[e]{DD}
\EdgeL{A}{B}{a}\ArcL{B}{C}{a}\ArcL{C}{B}{a}\ArcL{B}{AA}{a}
\EdgeL{AA}{BB}{a}\ArcL{BB}{CC}{a}\ArcL{CC}{BB}{a}
\ArcL{BB}{DD}{a}\LArcR{DD}{B}{a}
\end{VCPicture}%
}%

\caption{The standard automata~$\jsNorm{{\Ac_{3}}}$ and~$\Bc'_{3}$ 
buit from  
$\msp\left(\Zero{\jsNorm{{\Ac_{3}}}}\cdot
           \Zero{\Ac_{3}}\right)^{*}\msp$}
\label{:fig:sta-hei-sta-2}%
\end{figure}
}%

\ifcheat\smallskipneg\fi%
\ifcheat\smallskipneg\fi%
\ifcheat\smallskipneg\fi%
\subsection{The derived-term automaton of an expression}
\label{:sec:der-aut-exp}%

Let us first recall the \emph{(left) quotient} operation on 
languages: 
\begin{equation}
     \fa L\in\jsPart{\Ae}\quantvrg\fa u\in\Ae\quantsp
     u^{-1}L = \Defi{v\in\Ae}{u\xmd v\in L}
     \eqpnt
     \eee
     \notag
\end{equation}
The quotient is a \emph{(right) action} of~$\Ae$ on~$\jsPart{\Ae}$:
\shortlong{%
$\msp(u\xmd v)^{-1}L = v^{-1}\left(u^{-1}L\right)\msp$.
}{%
\begin{equation}
    \fa L\in\jsPart{\Ae}\quantvrg\fa u,v\in\Ae\quantsp
    (u\xmd v)^{-1}L = v^{-1}\left(u^{-1}L\right)
    \eqpnt
    \eee
    \label{:equ:quo-act}
\end{equation}
}
A fundamental, and characteristic, property of rational 
\index{quotient!of a language}%
\emph{languages} --- which is  
another way to express that they are recognisable --- is that they 
have a \emph{finite number of quotients}.

The principle of the construction we present in this section, and 
which we call \emph{derivation}, is to 
transfer the quotient on languages to an operation on the expressions.
First introduced by Brzozowski \cite{Brzo64}, the definition of the
derivation of an expression~$\Ed$ has been
modified by Antimirov \cite{Anti96} (\cf Notes) 
and yields a non-deterministic 
automaton $\DerT{\Ed}$, which  
we propose to call the \emph{derived-term automaton} of~$\Ed$.
This construction concerns thus expressions over \emph{free 
\index{free monoid}%
monoids} only. 
In the sequel, $\Ed$ is a rational expression over~$\Ae$.

\begin{definition}[Brzozowski--Antimirov \cite{Anti96}]
    \label{:def:exp-der-b}%
The \emph{derivation}
\index{derivation (of an expression)}%
\index{expression!derivation of --|see{derivation}}%
of~$\Ed$ with respect to a letter~$a$ of~$A$,
denoted by~$\ExpDer{\Ed}$, is \emph{a set} of rational expressions
over~$\Ae$, inductively defined by:
{\iesf
{\allowdisplaybreaks
\begin{align}
\ExpDer{\zed}  \msp=\msp \ExpDer{\und} \msp=\msp \emptyset ,
\qquad \forall b \in A  \quantsmsp
& \ExpDer{b}  \msp=\msp \left \{
\begin{array}{cl}
\{1\} & \quad \text{if \quad}  b = a \msp, \\
\emptyset & \quad \text{otherwise},
\end{array} \right . \eqvrg \ee
\label{:equ:der-b-1}
	 \\[1ex]
     \ExpDer{(\Fd \autplus \Gd)} & \msp=\msp
        \ExpDer{\Fd} \cupsp \ExpDer{\Gd}
   \eqvrg\label{:equ:der-b-2}\\[1ex]
   \ExpDer{(\Fd \autprod  \Gd)} & \msp=\msp
      \left(\ExpDer{\Fd}\right) \autprod    \Gd
         \cupsp    \TermCst{\Fd} \, \ExpDer{\Gd}
    \eqvrg\label{:equ:der-b-3}\\[1ex]
    \ExpDer{(\Fd^{*})} & \msp=\msp
       \left(\ExpDer{\Fd}\right)\autprod \Fd^{*}
          \eqpnt\label{:equ:der-b-4}
  \end{align}}%
}%
\end{definition}

\equat{der-b-3} should be understood with the convention that
the product~$x\xmd X$ of a set~$X$ by a Boolean value~$x$ is~$X$  
if $x=\und$ and~$\emptyset$ if $x=\zed$.
The induction involved in
Equations~\equnmnosp{der-b-2}--\equnmnosp{der-b-4}
should be interpreted by extending derivation additively (as are 
always derivation operators) and  by distributing (on the right) 
the~$\autprod$ operator over sets as well.
Finally, every operation on rational expressions is computed modulo 
the trivial identities~$\IdRT$, but not modulo the natural 
identies~$\IdRN$ 
--- nor the idempotent identites~$\IdRI$ and~$\IdRJ$.

\begin{definition}
\label{:def:exp-der-wrd}%
The \emph{derivation} of~$\Ed$ with respect to a non-empty word~$v$ 
of~$\Ae$,  
denoted by~$\msp\longonly{\displaystyle}{\ExpDer[v]{\Ed}}\msp$, 
is \emph{the set of rational expressions over~$\Ae$}, 
defined by~\equnm{der-b-2}--\equnm{der-b-4} 
for letters in~$A$ 
and by induction on the length of~$v$ by:
{\iesf
\begin{equation}
\forall u \in A^+ \quantvrg \forall a \in A \quantsp
\ExpDer[ua]{\Ed} = \ExpDer{ \left(\ExpDer[u]{\Ed}\right)}
\eqpnt 
\eee\eee
\label{:equ:der-b-5}
\end{equation}}
\end{definition}

\shortlong{%
The derivation of expressions is parallel 
to the quotient of languages and we have:
{\iesf
\begin{equation}
\forall \Ed \in \RatEA \quantvrg \forall u \in \Ap \quantsp
\CompExpr{\ExpDer[u]{\Ed}} \xmd = u^{-1}\CompExpr{\Ed}
\eqpnt
\eee \eee \eee \e 
\label{:equ:der-b-quo-2}
\end{equation}}
}{%
The derivation of expressions is indeed parallel to the quotient of 
languages as we have the following property.

\begin{property}
\label{:pty:quo-lan}
\begin{align}
\forall L, K\subseteq \Ae \quantvrg \forall a \in A \quantsp
a^{-1}(L\cup K) & = a^{-1}L \cupsp a^{-1}K
\eqvrg
\eee\eee\ee
\notag
\\
a^{-1}(L\xmd K) & = (a^{-1}L) \cupsp \TermCst{L}\xmd a^{-1}K
\eqvrg
\notag
\\
a^{-1}(L^{*}) & = (a^{-1}L)\xmd L^{*}
\eqpnt
\notag
\end{align}
\end{property}

It follows, by induction on the depth of the expression:
\begin{equation}
\forall \Ed \in \RatEA \quantvrg \forall a \in A \quantsp
\CompExpr{\ExpDer{\Ed}} = a^{-1}\CompExpr{\Ed}
\eee\eee
\label{:equ:der-b-quo}
\end{equation}
which in turn implies, by induction on the length of words and the 
parallel between~\equnm{der-b-5} and~\equnm{quo-act}:
\begin{equation}
\forall \Ed \in \RatEA \quantvrg \forall u \in \Ap \quantsp
\CompExpr{\ExpDer[u]{\Ed}} = u^{-1}\CompExpr{\Ed}
\eqpnt
\eee\eee
\label{:equ:der-b-quo-2}
\end{equation}
In particular, we have:
\begin{property}
\label{:pty:cst-trm-der}
If~$\Ed$ is not a constant,
there exists~$u$ in~$\Ap$ such that
$\msp\TermCst{\ExpDer[u]{\Ed}}=\und\msp$.
\end{property}
}%

\ifcheat\smallskipneg\fi%
\ifcheat\smallskipneg\fi%
\begin{example}
    \label{:exa:der-aut-1}%
The derivation of~$\msp\Ed_{1}=(a^{*}b+b\xmd b^{*}a)^{*}\msp$ (\cf 
\exemp{sta-aut-1}) yields:
{\iesf
\begin{gather}
\ExpDer[a]{\Ed_{1}}=\ExpDer[aa]{\Ed_{1}}=\{a^{*}b\xmd\Ed_{1}\}
\EqVrgInt
\ExpDer[b]{(\Ed_{1})^{*}}=\{\Ed_{1},b^{*}a\xmd\Ed_{1}\}
\EqVrgInt
\notag	\\
\ExpDer[b]{a^{*}b\xmd\Ed_{1}}=\{\Ed_{1}\}
\EqVrgInt
\ExpDer[a]{(b^{*}a\xmd\Ed_{1})^{*}}=\{\Ed_{1}\}
\EqVrgInt
\ExpDer[b]{(b^{*}a\xmd\Ed_{1})^{*}}=\{b^{*}a\xmd\Ed_{1}\}
\EqPnt
\notag	
\end{gather}}
\end{example}

\subsubsection{The derived-term automaton}
\label{:ssc:der-ter-aut}

Derivation thus associates a pair of an expression and a word with a 
set of expressions. 
We now turn this map into an automaton.

\begin{definition}
\label{:def:der-ter}%
We call \emph{true derived term} of~$\Ed$ every expression that 
belongs to ${\longonly{\displaystyle}{\ExpDer[w]{\Ed}}}$
for some word~$w$ of~$\Ap$;
\index{derived term (of an expression)!true --}%
we write  
$\msp \displaystyle{\TruDerTer{\Ed}}\msp$ 
for the set of true derived terms of~$\Ed$:
\begin{equation}
   \textstyle{\TruDerTer{\Ed} = \bigcup_{w\in A^{+}}\ExpDer[w]{\Ed}} \eqpnt
    \label{:equ:tdt-def}
\end{equation}
The set
$\msp\DerTer{\Ed} = \TruDerTer{\Ed} \cupsp \{\Ed \}\msp$
\index{derived term (of an expression)}%
is the set of \emph{derived terms} of~$\Ed$.
\end{definition}

\begin{example}[\exemp{der-aut-1} cont.]
    \label{:exa:der-aut-2}%
$\msp
\DerTer{\Ed_{1}}= \{\Ed_{1}, a^{*}b\xmd\Ed_{1}, b^{*}a\xmd\Ed_{1}\}
\msp$.
\end{example}

The sets of derived terms and the rational operations are related by 
the following equations, from which most of the subsequent properties 
will be derived. 

\begin{proposition}
\label{:pro:tru-der-ind}
Let~$\Fd$ and~$\Gd$ be two expressions.
Then,
$\TruDerTer{\Fd \autplus \Gd} 
   = \TruDerTer{\Fd} \cupssp  \TruDerTer{\Gd}$,
$\TruDerTer{\Fd \autprod  \Gd} 
    =\left(\TruDerTer{\Fd}\right)\autprod\Gd
         \cupssp   \TruDerTer{\Gd}$, and 
$\msp\TruDerTer{\Fd^{*}} 
     = \left(\TruDerTer{\Fd}\right)\autprod \Fd^{*}\msp$
hold.
\end{proposition}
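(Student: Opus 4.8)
The plan is to reduce each identity to a statement about the single-word derivations $\ExpDer[u]{\cdot}$, to prove that statement by induction on the length of the word, and then to take the union over all words of $\Ap$. The two tools I rely on are that derivation is extended additively to sets, so that $\ExpDer{(X \cupsp Y)} = \ExpDer{X} \cupsp \ExpDer{Y}$, and that for a set $M$ of expressions the product rule \equnm{der-b-3} reads $\ExpDer{(M \autprod \Hd)} = \bigl(\ExpDer{M}\bigr)\autprod\Hd \cupsp \TermCst{M}\,\ExpDer{\Hd}$, where $\TermCst{M}=\und$ precisely when some element of $M$ has constant term~$\und$. Setting $\ExpDer[\unAe]{\Ed} = \{\Ed\}$, one has $\TermCst{\ExpDer[u]{\Fd}}=\und \iff u \in \CompExpr{\Fd}$ for every $u$ in $\Ae$, by \equnm{der-b-quo-2}. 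I also use that a reduced expression other than~$\zed$ denotes a non-empty language, so $\CompExpr{\Fd}\neq\es$; this is the only point where reducedness enters. The sum is then immediate: a straightforward induction on the length of $u$ from \equnm{der-b-2} and additivity gives $\ExpDer[u]{(\Fd\autplus\Gd)} = \ExpDer[u]{\Fd}\cupsp\ExpDer[u]{\Gd}$ for all $u$ in $\Ap$, and the union over $u$ yields the first identity.

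For the product I would establish, by induction on the length of $w$, the refined formula
\begin{equation}
\ExpDer[w]{(\Fd \autprod \Gd)} = \bigl(\ExpDer[w]{\Fd}\bigr)\autprod\Gd
\cupssp
\bigcup_{\substack{w = u\,v,\ v \in \Ap\\ u \in \CompExpr{\Fd}}} \ExpDer[v]{\Gd}
\eqpnt
\notag
\end{equation}
the base case $w\in A$ being exactly \equnm{der-b-3}. In the step $w = w'a$, deriving the formula for $w'$ by $a$ turns its first summand into $\bigl(\ExpDer[w'a]{\Fd}\bigr)\autprod\Gd$ together with the constant-term contribution $\TermCst{\ExpDer[w']{\Fd}}\,\ExpDer[a]{\Gd}$, which is present exactly when $w'\in\CompExpr{\Fd}$ and then supplies the decomposition with $v=a$; its second summand becomes the family of longer decompositions $\ExpDer[va]{\Gd}$. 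These recombine into the formula for $w$. Taking the union over $w\in\Ap$ makes the first summand equal to $\bigl(\TruDerTer{\Fd}\bigr)\autprod\Gd$; in the second summand each $\ExpDer[v]{\Gd}$ with $v\in\Ap$ occurs as soon as some $u\in\CompExpr{\Fd}$ exists, which holds because $\CompExpr{\Fd}\neq\es$, so the second summand is all of $\TruDerTer{\Gd}$.

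The star is treated in the same spirit, now using the product rule internally. I would prove by induction on the length of $w$ that
\begin{equation}
\ExpDer[w]{(\Fd^{*})} =
\bigcup_{\substack{w = x\,v,\ v \in \Ap\\ x \in \CompExpr{\Fd^{*}}}}
\bigl(\ExpDer[v]{\Fd}\bigr)\autprod \Fd^{*}
\eqpnt
\notag
\end{equation}
the base case being \equnm{der-b-4}. In the step, deriving a term $\bigl(\ExpDer[v]{\Fd}\bigr)\autprod\Fd^{*}$ by $a$ produces $\bigl(\ExpDer[va]{\Fd}\bigr)\autprod\Fd^{*}$ and the contribution $\TermCst{\ExpDer[v]{\Fd}}\,\ExpDer[a]{(\Fd^{*})}$; the latter is present exactly when $v\in\CompExpr{\Fd}$, in which case $xv\in\CompExpr{\Fd^{*}}$, so the hand-off merely lengthens the prefix $x$ instead of leaving $\Fd^{*}$. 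The union over $w\in\Ap$, together with $\unAe\in\CompExpr{\Fd^{*}}$ (so every $v\in\Ap$ contributes), gives $\TruDerTer{\Fd^{*}} = \bigl(\TruDerTer{\Fd}\bigr)\autprod\Fd^{*}$.

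The main obstacle is the bookkeeping inside the two lemmas: through repeated derivation one must track the two ways a term is produced --- remaining a right-multiple of $\Gd$ (resp. of $\Fd^{*}$) versus being spawned, \via a constant term $\und$, from a derivative of $\Gd$ (resp. by restarting $\Fd^{*}$) --- and check that exactly the claimed family of decompositions $w=u\,v$ (resp. $w=x\,v$) is preserved by one more derivation step. The only other point requiring care is the final passage to $\TruDerTer{}$, where $\CompExpr{\Fd}\neq\es$ (equivalently, Property~\prpri{cst-trm-der} together with the case $\TermCst{\Fd}=\und$) is what guarantees that the whole of $\TruDerTer{\Gd}$ is recovered.
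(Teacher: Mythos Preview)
Your argument is correct and is the natural direct verification. The paper does not actually give a proof of this proposition: it is stated and then immediately used (with a reference to Mirkin's \emph{prebases}) as a recursive scheme for computing $\TruDerTer{\Ed}$ without derivation. Your two word-indexed lemmas for $\ExpDer[w]{(\Fd\autprod\Gd)}$ and $\ExpDer[w]{(\Fd^{*})}$ are exactly what is needed, and the inductive bookkeeping you describe goes through; in particular your check that, for $w'\in\Ap$, one has $w'\in\CompExpr{\Fd^{*}}$ iff $w'=x\,v$ with $x\in\CompExpr{\Fd^{*}}$ and $v\in\Ap\cap\CompExpr{\Fd}$ is the right hinge for the star step.

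Two small remarks. First, your caveat about $\CompExpr{\Fd}\neq\es$ is not a mere nicety: the product identity is \emph{false} when $\Fd=\zed$ and $\TruDerTer{\Gd}\neq\emptyset$, so the implicit reducedness convention (which forbids $\zed$ as a factor in a product subexpression) is essential, and you are right to flag it. Second, you silently use that right-multiplication by a fixed expression distributes over unions of sets of expressions, \ie $\bigl(\bigcup_i X_i\bigr)\autprod\Gd=\bigcup_i(X_i\autprod\Gd)$, when you pass from $\bigcup_w\bigl(\ExpDer[w]{\Fd}\bigr)\autprod\Gd$ to $\bigl(\TruDerTer{\Fd}\bigr)\autprod\Gd$; this is immediate but worth stating once.
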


Starting from
$\msp\TruDerTer{\zed}=\TruDerTer{\und}=\emptyset\msp$
and $\msp\TruDerTer{a}=\{\und\}\msp$ for every~$a$ in~$A$,
$\TruDerTer{\Ed}$ \emph{can be computed} from \propo{tru-der-ind}
by induction on~$\Dpth{\Ed}$ and \emph{without reference to 
the derivation operation} (\cf the \emph{prebases} in~\cite{Mirk66} 
and \defin{wei-der-ter} below).
It follows in particular that
$\msp\jsCard{\TruDerTer{\Ed}}\leq\LiteLgth{\Ed}\msp$
and thus:

\begin{corollary}
\label{:cor:der-ter-car}
\ee 
$\msp\jsCard{\DerTer{\Ed}}\leq\LiteLgth{\Ed}+1\msp$.
\end{corollary}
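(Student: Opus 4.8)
The plan is to prove the sharper bound $\jsCard{\TruDerTer{\Ed}}\leq\LiteLgth{\Ed}$ and then read off the corollary from the definition $\DerTer{\Ed}=\TruDerTer{\Ed}\cup\{\Ed\}$, which adds at most one element to the count. The whole argument goes by induction on the depth $\Dpth{\Ed}$ of the expression, and its engine is \propo{tru-der-ind}, which expresses $\TruDerTer{\Ed}$ for a compound expression in terms of the sets of true derived terms of its immediate subexpressions. The point of invoking this proposition is precisely to avoid having to reason about the derivation operation itself: the recursion it provides is purely structural, so that \defin{exp-der-b}--\defin{exp-der-wrd} never have to be unfolded.

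For the base cases I would use $\TruDerTer{\zed}=\TruDerTer{\und}=\emptyset$ and $\TruDerTer{a}=\{\und\}$ for $a\in A$, which give counts $0$, $0$ and $1$, matching $\LiteLgth{\zed}=\LiteLgth{\und}=0$ and $\LiteLgth{a}=1$. For the inductive step I would split on the outermost operator. When $\Ed=\Fd\autplus\Gd$, \propo{tru-der-ind} gives $\TruDerTer{\Ed}=\TruDerTer{\Fd}\cup\TruDerTer{\Gd}$, so subadditivity of cardinality under union together with the induction hypothesis yields $\jsCard{\TruDerTer{\Ed}}\leq\LiteLgth{\Fd}+\LiteLgth{\Gd}=\LiteLgth{\Ed}$. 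When $\Ed=\Fd\autprod\Gd$, the proposition gives $\TruDerTer{\Ed}=(\TruDerTer{\Fd})\autprod\Gd\cup\TruDerTer{\Gd}$, and when $\Ed=\autstar{\Fd}$ it gives $\TruDerTer{\Ed}=(\TruDerTer{\Fd})\autprod\autstar{\Fd}$.

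The only thing worth isolating is the treatment of the right-product sets $(\TruDerTer{\Fd})\autprod\Gd$. I would simply note that such a set is the image of $\TruDerTer{\Fd}$ under the map $\Hd\mapsto\Hd\autprod\Gd$, so its cardinality is at most $\jsCard{\TruDerTer{\Fd}}$; crucially the inequality runs in the favourable direction, so no injectivity of this map (which would in any case be delicate to argue modulo the trivial identities $\IdRT$) is needed. With this remark the product case gives $\jsCard{\TruDerTer{\Ed}}\leq\jsCard{\TruDerTer{\Fd}}+\jsCard{\TruDerTer{\Gd}}\leq\LiteLgth{\Fd}+\LiteLgth{\Gd}=\LiteLgth{\Ed}$, and the star case gives $\jsCard{\TruDerTer{\Ed}}\leq\jsCard{\TruDerTer{\Fd}}\leq\LiteLgth{\Fd}=\LiteLgth{\Ed}$. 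This closes the induction, and the corollary follows at once. So there is no real hard part here: the substance was already packed into \propo{tru-der-ind}, and what remains is routine bookkeeping of cardinalities under union and image.
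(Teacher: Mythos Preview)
Your proposal is correct and matches the paper's own argument essentially line for line: the paper derives $\jsCard{\TruDerTer{\Ed}}\leq\LiteLgth{\Ed}$ by induction on $\Dpth{\Ed}$ using \propo{tru-der-ind} and the base cases $\TruDerTer{\zed}=\TruDerTer{\und}=\emptyset$, $\TruDerTer{a}=\{\und\}$, then adds one for~$\Ed$ itself. You have simply spelled out the cardinality bookkeeping that the paper leaves implicit.
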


The computation of~$\DerTer{\Ed}$ is a $\EtAs$-map, as expressed 
by the following.

\begin{definition}[Antimirov \cite{Anti96}]
\label{:def:aut-der}
The \emph{derived-term automaton} of~$\Ed$ 
\index{derived-term|see{automaton (of...)}}%
\index{automaton (of an expression)!derived-term --}%
is the automaton~$\DTAut{\Ed}$
whose set of states is $\DerTer{\Ed}$ and whose transitions are
defined by:
\begin{conditionsiii}
    \item  if~$\Kd$ and~$\Kd'$ are derived terms of~$\Ed$ 
	and~$\xmd a\xmd$ a letter of~$A$, 
	then $\msp(\Kd,a,\Kd')\msp$ is a transition if and only if
            $\Kd'$ belongs to $\ExpDer{\Kd}$;
            
    \item  the initial state is~$\Ed$;

    \item a derived term $\Kd$ is final if and only if 
$\TermCst{\Kd}=1$.  
\end{conditionsiii}
\end{definition}

\begin{theorem}[\cite{Anti96}]
    \label{:the:der-ter-aut}
For any rational expression~$\Ed$,
$\msp\CompExpr{\Ed} = \CompAuto{\DTAut{\Ed}} \msp$.
\end{theorem}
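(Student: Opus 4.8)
The plan is to prove the equality of the two languages word by word: I will show that a word $w$ of $\Ae$ belongs to $\CompExpr{\Ed}$ if and only if $w$ is the label of a successful computation of $\DTAut{\Ed}$. Two facts carry the whole argument. The first is the quotient--derivation correspondence $\msp\CompExpr{\ExpDer[w]{\Ed}}=w^{-1}\CompExpr{\Ed}\msp$, recorded in \equat{der-b-quo-2} for every $w$ in $\Ap$; recall that derivation is extended additively to sets, so $\msp\CompExpr{\ExpDer[w]{\Ed}}=\bigcup_{\Kd\in\ExpDer[w]{\Ed}}\CompExpr{\Kd}\msp$. The second is the identity $\msp\TermCst{\Kd}=\TermCst{\CompExpr{\Kd}}\msp$, which lets me read off membership of the empty word: $\TermCst{\Kd}=\und$ exactly when $\unAe\in\CompExpr{\Kd}$.

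First I would establish the structural lemma relating paths to word-derivation: for every nonempty word $w$ in $\Ap$ and every expression $\Kd$, there is a computation $\msp\Ed\pathaut{w}{\DTAut{\Ed}}\Kd\msp$ if and only if $\Kd\in\ExpDer[w]{\Ed}$. This goes by induction on $\LitL{w}$, i.e. on $|w|$. For $w=a$ a single letter, a length-one computation from $\Ed$ to $\Kd$ is by \defin{aut-der} precisely a transition $(\Ed,a,\Kd)$, which exists iff $\Kd\in\ExpDer[a]{\Ed}=\ExpDer[w]{\Ed}$. For $w=ua$ with $u\in\Ap$, a computation $\Ed\pathaut{ua}{\DTAut{\Ed}}\Kd'$ factors through some state $\Kd$ with $\Ed\pathaut{u}{\DTAut{\Ed}}\Kd$ followed by a transition $(\Kd,a,\Kd')$; by the induction hypothesis and \defin{aut-der} this says $\Kd\in\ExpDer[u]{\Ed}$ and $\Kd'\in\ExpDer[a]{\Kd}$, which by the additive reading of \equat{der-b-5} is exactly $\Kd'\in\ExpDer[a]{(\ExpDer[u]{\Ed})}=\ExpDer[ua]{\Ed}$. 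Here I must note that every expression occurring along such a path is a true derived term, hence a genuine state of $\DTAut{\Ed}$ by \defin{der-ter}, so the induction stays inside the automaton.

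It then remains to translate acceptance into membership. A word $w$ is accepted iff some successful computation starts at the initial state $\Ed$ and ends at a final state, i.e. at a derived term $\Kd$ with $\TermCst{\Kd}=\und$. For $w=\unAe$ the automaton has no spontaneous transitions, so the only $\unAe$-labelled computation from $\Ed$ ends at $\Ed$ itself; thus $\unAe$ is accepted iff $\TermCst{\Ed}=\und$ iff $\unAe\in\CompExpr{\Ed}$, using $\TermCst{\Ed}=\TermCst{\CompExpr{\Ed}}$. For $w\in\Ap$, the lemma shows $w$ is accepted iff there is $\Kd\in\ExpDer[w]{\Ed}$ with $\TermCst{\Kd}=\und$, i.e. with $\unAe\in\CompExpr{\Kd}$; by the additive description of $\CompExpr{\ExpDer[w]{\Ed}}$ this holds iff $\unAe\in\CompExpr{\ExpDer[w]{\Ed}}$, which by \equat{der-b-quo-2} equals $\unAe\in w^{-1}\CompExpr{\Ed}$, i.e. $w\in\CompExpr{\Ed}$. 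Combining both cases gives $\CompExpr{\Ed}=\CompAuto{\DTAut{\Ed}}$. The one point that genuinely needs care — and which I regard as the main obstacle — is keeping the empty word apart from $\Ap$ throughout, since \equat{der-b-quo-2} is only available for nonempty words; the rest is bookkeeping that matches the set-valued, additively extended derivation of \defin{exp-der-wrd} against the nondeterministic transition relation of \defin{aut-der}.
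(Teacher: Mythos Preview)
The paper states this theorem without proof, citing Antimirov's original article. Your argument is correct and is exactly the one the paper's preparatory material is designed to enable: the structural lemma that computations $\Ed\pathaut{w}{\DTAut{\Ed}}\Kd$ correspond precisely to $\Kd\in\ExpDer[w]{\Ed}$, followed by reading acceptance through \equat{der-b-quo-2} and the identity $\TermCst{\Kd}=\TermCst{\CompExpr{\Kd}}$, is the standard route. Your separate handling of $w=\unAe$ is the right precaution, and your remark that intermediate states along any such path are true derived terms (hence genuine states of $\DTAut{\Ed}$) closes the only place where the induction could leak outside the automaton.
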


\begin{example}[\exemp{der-aut-2} cont.]
    \label{:exa:der-aut-3}%
The automaton~$\DTAut{\Ed_{1}}$ is shown at \figur{der-aut-3}.
\end{example}

\begin{figure}[ht]
\setlength{\lga}{4cm}\setlength{\lgb}{1.5cm}%
	\SmallPicture%
\centering
\VCDraw{%
\begin{VCPicture}{(-5,-1.4)(5.2,1.2)}
\LargeState
\StateVar[\xmd a^{*} b\xmd\Ed_{2}]{(-5,0)}{A}
\State[\Ed_{2}]{(0,0)}{B}
\StateVar[\xmd b^{*} a\xmd\Ed_{2}]{(5,0)}{C}
\Initial[nw]{B}\Final[ne]{B}
\ArcR{A}{B}{b}\ArcR{B}{A}{a}
\ArcL{B}{C}{b}\ArcL{C}{B}{a}
\LoopS{B}{b}
\VarLoopOn
\LoopS{A}{a}\LoopS{C}{b}
\VarLoopOff
\end{VCPicture}
}%
\caption{The automaton~$\DTAut{\Ed_{1}}$. }
\medskipneg
\label{:fig:der-aut-3}
\end{figure}

\subsubsection{Relationship with the standard automaton}
\label{:sec:sta-der-aut}

The constructions of the standard and derived-term automata of an 
expression are of different nature.
But both arise from the same inner structure of the expression by two 
inductive processes, and the two automata have 
a structural likeness  which yields another proof of 
\corol{der-ter-car}:

\begin{theorem}[\cite{ChamZiad02}]
    \label{:the:der-ter-sta}
For any rational expression~$\Ed$, 
$\DTAut{\Ed}$ is a quotient of $\Stan{\Ed}$.
\end{theorem}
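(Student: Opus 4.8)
The plan is to exhibit an explicit surjective morphism $\varphi_{\Ed}\colon\Stan{\Ed}\to\DTAut{\Ed}$ and to check that it satisfies the quotient conditions of $\defin{mor-aut}$. The guiding observation is numerical: by $\prpri{sta-aut-dim}$ the automaton $\Stan{\Ed}$ has exactly $\LitL{\Ed}+1$ states, while by $\corol{der-ter-car}$ the state set $\DerTer{\Ed}$ of $\DTAut{\Ed}$ has at most $\LitL{\Ed}+1$ elements, so one expects a surjection that collapses positions sharing a common derived term. First I would define $\varphi_{\Ed}$ by induction on $\Dpth{\Ed}$, sending the unique initial state of $\Stan{\Ed}$ to $\Ed$ itself and, driving the inductive construction \equnm{sta-aut-sum}--\equnm{sta-aut-sta} in lockstep with the inductive computation of $\TruDerTer{\Ed}$ provided by $\propo{tru-der-ind}$, sending each non-initial state (position) to the true derived term it generates: a position of $\Stan{\Fd}$ keeps its image under a sum $\Fd\autplus\Gd$, is post-multiplied by $\Gd$ under a product $\Fd\autprod\Gd$, and is post-multiplied by $\Fd^{*}$ under a star $\Fd^{*}$. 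The base cases $\zed,\und$ (one state, mapped to the expression) and $a$ (two states, $i\mapsto a$ and the position $\mapsto\und$) are immediate.

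The heart of the argument is a single inductive invariant that I would carry along with the construction: for every state $s$ of $\Stan{\Ed}$ and every letter $a$,
\begin{equation}
\Defi{\varphi_{\Ed}(s')}{(s,a,s')\in E} = \ExpDer{\bigl(\varphi_{\Ed}(s)\bigr)}
\eqpnt
\notag
\end{equation}
together with a finality clause, that $s$ is final in $\Stan{\Ed}$ if and only if $\TermCst{\varphi_{\Ed}(s)}=1$, and a surjectivity clause, that $\varphi_{\Ed}$ maps the positions onto $\TruDerTer{\Ed}$. This invariant is exactly what is needed. The inclusion $\subseteq$ in the display gives condition (iii) of $\defin{mor-aut}$, since $(s,a,s')\in E$ forces $\varphi_{\Ed}(s')\in\ExpDer{\varphi_{\Ed}(s)}$, i.e.\ $\bigl(\varphi_{\Ed}(s),a,\varphi_{\Ed}(s')\bigr)$ is a transition of $\DTAut{\Ed}$; and the inclusion $\supseteq$ gives the transition-reflection condition (vii), since any transition $(r,a,t)$ of $\DTAut{\Ed}$ and any $p$ with $\varphi_{\Ed}(p)=r$ yield, by the display applied at $p$, a state $q$ with $(p,a,q)\in E$ and $\varphi_{\Ed}(q)=t$. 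The finality clause gives conditions (ii) and (vi), surjectivity gives (iv), and $\varphi_{\Ed}(i)=\Ed$ gives (i) and (v).

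I would then verify the invariant case by case. For a sum, the blocks $J$ and $K$ out of the merged initial state reproduce $\ExpDer{\Fd}\cupsp\ExpDer{\Gd}$, matching \equnm{der-b-2}, and positions retain their $\Fd$- or $\Gd$-behaviour. The delicate cases are the product and the star, where the \emph{gluing} transitions created by \equnm{sta-aut-pro} and \equnm{sta-aut-sta} must match the constant-term summands of the derivation formulas: in $\Stan{\Fd\autprod\Gd}$ the transitions $c\xmd K$ from the initial state and $U\matmul K$ from the final positions of the $\Fd$-part into the $\Gd$-part are present precisely when the relevant constant term ($\TermCst{\Fd}$, resp.\ $\TermCst{\varphi_{\Fd}(p)}$) equals $1$, which is exactly the factor $\TermCst{\Fd}\,\ExpDer{\Gd}$ of \equnm{der-b-3}; likewise the feedback block $U\matmul J$ inside $H=U\matmul J+F$ of \equnm{sta-aut-sta} accounts for the $\TermCst{\varphi_{\Fd}(p)}\,(\ExpDer{\Fd})\autprod\Fd^{*}$ part of $\ExpDer{\bigl(\varphi_{\Fd}(p)\autprod\Fd^{*}\bigr)}$ coming from \equnm{der-b-4}. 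The main obstacle is precisely this bookkeeping: one must read the exact source and target of every gluing transition off the block matrices, confirm that the Boolean coefficients $c$ and $U$ coincide with the constant terms dictated by derivation, and check that post-multiplication by $\Gd$ (resp.\ $\Fd^{*}$) commutes with $a$-derivation in the sense of \equnm{der-b-3}--\equnm{der-b-4}; the finality clause is handled in the same pass, using that the final vectors produced by \equnm{sta-aut-pro} and \equnm{sta-aut-sta} encode $\TermCst{\Fd\autprod\Gd}=\TermCst{\Fd}\,\TermCst{\Gd}$ and $\TermCst{\Fd^{*}}=1$. Finally, I would note that the construction reproves $\corol{der-ter-car}$, since a surjection from a set of size $\LitL{\Ed}+1$ onto $\DerTer{\Ed}$ forces $\jsCard{\DerTer{\Ed}}\leq\LitL{\Ed}+1$.
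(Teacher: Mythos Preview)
The paper does not actually supply a proof of this theorem: it is stated with a citation to Champarnaud--Ziadi and accompanied only by the informal remark that the two automata ``arise from the same inner structure of the expression by two inductive processes''. Your proposal is correct and is precisely the natural fleshing-out of that remark: defining $\varphi_{\Ed}$ by structural induction in lockstep with \equnm{sta-aut-sum}--\equnm{sta-aut-sta} on one side and \propo{tru-der-ind} on the other, and carrying the invariant that the $a$-successors of a state~$s$ map onto $\ExpDer{\varphi_{\Ed}(s)}$, is exactly the argument of the cited reference.
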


\subsubsection{Derivation and bracketing}
\label{:sec:der-bra}

The derivation operator is 
sensitive to the bracketing of expressions;
on the other hand, it does  
commute to the associativity identity~$\IdRAs$. 

\begin{example}
    \label{:exa:der-ter-bra}
Let
$\msp a \xmd b \xmd ( c \xmd (a\xmd b) )^*\msp$  
be an expression which is not completely bracketed.
The derivation
of the two expressions obtained by different bracketings yields:
\begin{align*}
\DerTer{\strut a \xmd ( b \xmd ( c \xmd (a\xmd b) )^* )} &= 
\{a \xmd ( b \xmd ( c \xmd (a\xmd b) )^* )\setvrg b \xmd ( c \xmd (a\xmd b) )^*,
( c \xmd (a\xmd b) )^*\setvrg (a\xmd b) \xmd ( c \xmd (a\xmd b) )^* \} \eqpnt \\ 
\DerTer{\strut(a\xmd b) \xmd ( c \xmd (a\xmd b) )^* } &= 
\{(a\xmd b) \xmd ( c \xmd (a\xmd b) )^* \setvrg b \xmd ( c \xmd (a\xmd b) )^*,
( c \xmd (a\xmd b) )^*  \} \eqpnt
\end{align*}
\end{example}

More precisely, we have the following.

\begin{proposition}[\cite{AngrEtAl10}] 
    \label{:pro:der-ter-bra}
Let $\Ed$, $\Fd$ and $\Gd$ be three rational expressions. 
Then:
{\small
\begin{equation}
\jsCard{\strut \DerTer{\strut(\Ed \cdot \Fd) \cdot \Gd} } \jsleq 
\jsCard{\strut \DerTer{\strut\Ed \cdot ( \Fd \cdot \Gd )} }
\ \text{ and }\ 
        \IdRAs\dedtxt \DerTer{\strut(\Ed \cdot \Fd) \cdot \Gd}
          \equiv   \DerTer{\strut\Ed \cdot ( \Fd \cdot \Gd )}
\msp.
\notag
\end{equation}}
\end{proposition}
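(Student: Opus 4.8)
The plan is to reduce everything to the inductive description of the true derived terms given by \propo{tru-der-ind}, and then to a careful count. Applying that proposition twice to each bracketing, I would first record the two explicit descriptions
\[
\TruDerTer{(\Ed\autprod\Fd)\autprod\Gd}
  = \bigl\{(\Kd\autprod\Fd)\autprod\Gd \mid \Kd\in\TruDerTer{\Ed}\bigr\}
    \cupssp (\TruDerTer{\Fd})\autprod\Gd \cupssp \TruDerTer{\Gd}
\]
and
\[
\TruDerTer{\Ed\autprod(\Fd\autprod\Gd)}
  = \bigl\{\Kd\autprod(\Fd\autprod\Gd) \mid \Kd\in\TruDerTer{\Ed}\bigr\}
    \cupssp (\TruDerTer{\Fd})\autprod\Gd \cupssp \TruDerTer{\Gd}
\eqpnt
\]
Adjoining the full expression to each (to pass from $\TruDerTer{}$ to $\DerTer{}$) and writing $\Cc = (\TruDerTer{\Fd})\autprod\Gd \cupssp \TruDerTer{\Gd}$ for the common part, the two sets take the shape $\DerTer{(\Ed\autprod\Fd)\autprod\Gd}= H_L\cupsp\Cc$ and $\DerTer{\Ed\autprod(\Fd\autprod\Gd)}= H_R\cupsp\Cc$, with $H_L=\{(\Kd\autprod\Fd)\autprod\Gd\mid\Kd\in\DerTer{\Ed}\}$ and $H_R=\{\Kd\autprod(\Fd\autprod\Gd)\mid\Kd\in\DerTer{\Ed}\}$.

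For the associativity statement I would argue that the two sets coincide once read modulo~$\IdRAs$. Every element $(\Kd\autprod\Fd)\autprod\Gd$ of $H_L$ is obtained from the element $\Kd\autprod(\Fd\autprod\Gd)$ of $H_R$ by a single reassociation, the two full expressions are $\IdRAs$-equivalent in the same way, and $\Cc$ occurs literally in both. Hence the $\IdRAs$-normal forms (push every product to the right, say) of the two sets are equal, which is exactly $\IdRAs\dedtxt\DerTer{(\Ed\autprod\Fd)\autprod\Gd}\equiv\DerTer{\Ed\autprod(\Fd\autprod\Gd)}$. This is consistent with the cardinalities differing, since the normalisation may collapse several elements of $H_L$ onto one.

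For the inequality, the key preliminary observation is that derived terms are reduced only modulo the trivial identities~$\IdRT$, so that distinct reduced expressions are distinct as trees; consequently $\Kd\mapsto(\Kd\autprod\Fd)\autprod\Gd$ and $\Kd\mapsto\Kd\autprod(\Fd\autprod\Gd)$ are both injective on $\DerTer{\Ed}$ (the only possible collision comes from $\Kd=\und$, and at most one element maps to $\Fd\autprod\Gd$). Thus $\jsCard{H_L}=\jsCard{H_R}=\jsCard{\DerTer{\Ed}}$, and, using $\jsCard{H_L\cupsp\Cc}=\jsCard{H_L}+\jsCard{\Cc}-\jsCard{H_L\cap\Cc}$ together with the analogous identity for $H_R$, the whole inequality reduces to $\jsCard{H_L\cap\Cc}\jsgeq\jsCard{H_R\cap\Cc}$. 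I would deduce this from the stronger inclusion $H_R\cap\Cc\subseteq H_L$: the reassociation bijection $H_R\to H_L$ then carries $H_R\cap\Cc$ injectively into $H_L\cap\Cc$, so $\Cc$ meets the left head at least as often as the right one — exactly the collapse visible in \exemp{der-ter-bra}.

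The main obstacle is this inclusion $H_R\cap\Cc\subseteq H_L$, and it rests on one structural lemma about the \emph{shape} of derived terms: if a true derived term of an expression $H$ is a product $\Pd\autprod\Qd$, then its right factor $\Qd$ is a subexpression (a subtree) of $H$, so $\LiteLgth{\Qd}\jsleq\LiteLgth{H}$. This is proved by a routine induction on $H$ via \propo{tru-der-ind} (the cases $\autplus$, $\autprod$, $*$; in the star case the right factor is $H$ itself). Granting it, take $x\in H_R\cap\Cc$, say $x=\Kd\autprod(\Fd\autprod\Gd)$. If $\Kd=\und$ then $x=\Fd\autprod\Gd=(\und\autprod\Fd)\autprod\Gd\in H_L$ (and $\und\in\DerTer{\Ed}$ since $x\in H_R$), as wanted. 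If $\Kd\neq\und$, the reduced tree $x$ has right factor exactly $\Fd\autprod\Gd$; but $x\in\Cc$ forces $x$ to lie either in $(\TruDerTer{\Fd})\autprod\Gd$, whose elements have right factor $\Gd$, or in $\TruDerTer{\Gd}$, whose product elements — by the lemma with $H=\Gd$ — have right factor of length at most $\LiteLgth{\Gd}$. Since $\LiteLgth{\Fd\autprod\Gd}>\LiteLgth{\Gd}$ unless $\Fd$ reduces to~$\und$, this case is impossible; the degenerate cases where $\Ed$, $\Fd$ or $\Gd$ reduces to $\und$ or $\zed$ (in which the two bracketings reduce to the very same expression) are checked directly. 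This yields $H_R\cap\Cc\subseteq H_L$ and closes the argument.
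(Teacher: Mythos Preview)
The paper does not prove this proposition; it is stated with a citation to \cite{AngrEtAl10} and no argument is given. So there is nothing to compare against, and I can only assess your proof on its own merits.

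Your approach --- expanding both bracketings via \propo{tru-der-ind}, isolating the common tail $\Cc=(\TruDerTer{\Fd})\autprod\Gd\cupssp\TruDerTer{\Gd}$, and reducing the cardinality comparison to the inclusion $H_R\cap\Cc\subseteq H_L$ --- is the natural one and is essentially correct. Two small points deserve tightening.

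First, the sentence about the ``reassociation bijection $H_R\to H_L$'' carrying $H_R\cap\Cc$ into $H_L\cap\Cc$ is misleading: that bijection sends $\Kd\autprod(\Fd\autprod\Gd)$ to a \emph{different} expression $(\Kd\autprod\Fd)\autprod\Gd$, and you never show the latter lies in~$\Cc$. What you actually need (and have) is the bare set inclusion $H_R\cap\Cc\subseteq H_L\cap\Cc$, which follows at once from $H_R\cap\Cc\subseteq H_L$ together with $H_R\cap\Cc\subseteq\Cc$. Drop the bijection.

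Second, your structural lemma establishes that the right factor of a product true derived term of~$\Hd$ is a \emph{subtree} of~$\Hd$; you then pass to the weaker consequence $\LiteLgth{\Qd}\jsleq\LiteLgth{\Hd}$ and conclude via ``$\LiteLgth{\Fd\autprod\Gd}>\LiteLgth{\Gd}$ unless $\Fd$ reduces to~$\und$''. But $\LiteLgth{\Fd}=0$ does not force $\Fd\equiv\und$ under the trivial identities (take $\Fd=\und\autplus\und$), so the literal-length route leaks. The fix is to use the subtree conclusion directly: in the non-degenerate case $\Fd\autprod\Gd$ is already reduced and has strictly more nodes than its proper subtree~$\Gd$, hence cannot occur as a subtree of~$\Gd$. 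That closes sub-case~2b cleanly, and the analogous node-count argument handles sub-case~2a.
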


\section{Changing the monoid}
\label{:sec:cha-mon}

Most of what has been presented so far extends without 
problems from languages to subsets of arbitrary monoids, from 
expressions over a free monoid to expressions over such monoids. 
We run over definitions and statements to transform them accordingly. 
The main difference will be that rational and recognisable sets do 
not coincide anymore, making the link between finite 
automata and rational expressions even tighter, and ruling out 
quotient and derivation that refer to 
the recognisable `side' of rational languages.

Non-free monoids of interest in the field of computer 
science and automata theory are, among others,
direct products of free monoids (for relations between words),
free commutative monoids (for counting purpose),
partially commutative, or trace, monoids (for modelling 
concurrent or parallel computations),
free groups and polycyclic monoids (in relation with pushdown 
automata).

In the sequel, $M$ is a monoid, and~$\unM$ its identity element.

\ifcheat\smallskipneg\fi%
\ifcheat\smallskipneg\fi%
\subsection{Rationality}
\label{:sec:rat-mon}%

\paragraph{Rational sets and expressions}
\label{:sec:rat-exp-mon}%

Product and star are defined in~$\jsPart{M}$ as in~$\jsPart{\Ae}$ and 
the set of \emph{rational subsets} of~$M$, denoted by~$\RatM$,
\index{rational!subset}%
is the smallest subset 
of~$\jsPart{M}$ which contains the finite sets 
(including the empty set)
and which is closed under union, product, and star. 

\emph{Rational expressions over~$M$} are defined as those 
over~$\Ae$, with the only difference that the \emph{atoms} are the 
elements of~$M$; their set is denoted by~$\RatEM$.
We also write~$\CompExpr{\Ed}$ for the subset \emph{denoted} by an 
expression~$\Ed$. 
Two expressions are equivalent if they denote the same subset and we 
have the same statement as \propo{rat-exp-fm}: 

\ifcheat\smallskipneg\fi%
\begin{proposition}
    \label{:pro:rat-exp-mon}%
A subset of~$M$ is rational if and only if it is denoted 
by a  rational expression over~$M$.
\end{proposition}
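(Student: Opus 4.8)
The plan is to follow \emph{verbatim} the proof of \propo{rat-exp-fm} for the free-monoid case, observing that the argument never appeals to freeness and so transports \mutmut to an arbitrary monoid~$M$: the atoms are simply the elements of~$M$ rather than the letters of~$A$. The statement is an equality between two families of subsets of~$\jsPart{M}$, namely~$\RatM$ and the family~$\Dc$ of those subsets denoted by some expression of~$\RatEM$, and I would establish the two inclusions $\Dc\subseteq\RatM$ and $\RatM\subseteq\Dc$ separately.

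For $\Dc\subseteq\RatM$, I would argue by structural induction on the expression~$\Ed$ that $\CompExpr{\Ed}$ is rational. The base cases are immediate, since $\CompExpr{\zed}=\es$, $\CompExpr{\und}=\{\unM\}$, and $\CompExpr{m}=\{m\}$ for $m$ in~$M$ are all finite, hence in~$\RatM$. For the inductive step, the defining equations $\CompExpr{(\Ed\autplus\Fd)}=\CompExpr{\Ed}\cup\CompExpr{\Fd}$, $\CompExpr{(\Ed\autprod\Fd)}=\CompExpr{\Ed}\,\CompExpr{\Fd}$, and $\CompExpr{(\autstar{\Ed})}=\CompExpr{\Ed}^{*}$ express the denoted subset through the three rational operations applied to subsets that are rational by the induction hypothesis; since $\RatM$ is by definition closed under union, product, and star, the conclusion follows.

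For $\RatM\subseteq\Dc$, I would use the fact that $\RatM$ is, by definition, the \emph{smallest} family of subsets of~$M$ that contains the finite sets and is closed under the three operations. It therefore suffices to check that $\Dc$ enjoys these same properties. Every finite subset $\{m_{1},\dots,m_{k}\}$ is denoted by the expression $m_{1}\autplus\cdots\autplus m_{k}$ (and $\es$ by~$\zed$), so $\Dc$ contains all finite sets; and the three displayed equations above show at once that $\Dc$ is closed under union, product, and star. Minimality of~$\RatM$ then gives $\RatM\subseteq\Dc$, and together with the first inclusion this yields $\RatM=\Dc$, which is the claim.

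The point worth stressing --- and the only thing that really requires attention --- is precisely that there is \emph{nothing} monoid-specific to overcome: no step invokes cancellation, a length morphism, or the free structure of~$\Ae$. This is exactly the phenomenon advertised in \secti{new-loo}, namely that the equivalence of the expressive power of rational expressions and of the rational-closure operations is ``universally true'' and holds in any monoid. The same remark will apply when this is later reinterpreted as the first step of Kleene's theorem for subsets of~$M$ through finite automata over~$M$.
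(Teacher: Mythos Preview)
Your argument is correct and is exactly the standard two-inclusion proof one would expect. The paper does not give a proof of this proposition (nor of its free-monoid counterpart, \propo{rat-exp-fm}); it treats both as immediate consequences of the definitions, merely adding after the statement the sharper remark that the atoms may be restricted to any generating set of~$M$. Your write-up makes explicit what the paper leaves implicit, and your emphasis that no step relies on freeness is the right observation.
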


\ifcheat\smallskipneg\fi%
A subset~$G$ of~$M$ is a \emph{generating set} if
$\msp M=G^{*}\msp$. 
The direct part of \propo{rat-exp-mon} may be restated with more 
precision as: 
\emph{any rational subset of~$M$ is denoted by a rational expression 
whose atoms are taken in any generating set}.
It follows from the converse part that a rational subset of~$M$ is 
\index{generating set|see{monoid}}%
\index{monoid!generating set of --}%
\index{monoid!finitely generated --}%
contained in a finitely generated submonoid. 

\ifcheat\smallskipneg\fi%
\paragraph{Finite automata}

An \emph{automaton over~$M$}, 
\index{automaton}%
denoted by $\msp \Ac= \aut{Q,M,E,I,T}\msp$,
is defined like an automaton over~$\Ae$, with the only difference that
the transitions are labelled by elements of~$M$:
$\msp E\subseteq Q\x M\x Q\msp$.
Then,~$\Ac$ is \emph{finite} if~$E$ is finite.

The \emph{subset accepted} by~$\Ac$, called the 
\index{automaton!subset accepted by --}%
\emph{behaviour} of~$\Ac$ and denoted by~$\CompAuto{\Ac}$ as above, 
\index{automaton!behaviour of --}%
is the set of labels of \emph{successful computations}:
$\msp \CompAuto{\Ac}= 
\Defi{m\in M}{\ext i\in I,\ext t\in T\quantsmsp i\pathaut{m}{\Ac}t}$.

\ifcheat\smallskipneg\fi%
\paragraph{The fundamental theorem of finite automata}

In this setting, the statement appears more clearly different from 
Kleene's theorem.
Its first appearance\footnote{%
   Hidden in a footnote!}
seems to be in Elgot and Mezei's 
\index{Fundamental theorem!of finite automata}%
\index{Kleene's theorem}%
paper on rational relations.

\begin{theorem}[\cite{ElgoMeze65}]
    \label{:the:fun-the-mon}%
A subset of a monoid~$M$ is rational if and only if it is the 
behaviour of a finite automaton over~$M$ whose labels are taken in 
any generating set of~$M$.
\end{theorem}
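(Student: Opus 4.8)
The plan is to prove the two implications separately, just as \theor{fun-aut-fm} is obtained over~$\Ae$ from the $\AtEs$-map \propo{phi-map-fm} and the $\EtAs$-map \propo{psi-map-fm}, and to verify that each construction survives the passage from~$\Ae$ to a general~$M$. The point I would stress at the outset is that none of the ingredients of this `first step' uses the freeness of~$\Ae$: they rely only on $\jsPart{M}$ being a semiring in which the star is \emph{total}, being the countable union $\msp X^{*}=\bigcup_{n\in\N}X^{n}\msp$. In particular \lemme{mat-mul-gra}, which says that matrix powers record the labels of paths, is purely combinatorial and holds verbatim over~$M$, so the behaviour of a finite automaton $\msp\Ac=\aut{Q,M,E,I,T}\msp$ is still $\msp\CompAuto{\Ac}=I\matmul E^{*}\matmul T=\bigcup_{i\in I,\,t\in T}(E^{*})_{i,t}\msp$, with~$E$ read as the transition matrix over~$\jsPart{M}$.

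For the direction from automata to rational subsets, I would take $\msp\Ac=\aut{Q,M,E,I,T}\msp$ finite, with labels in~$G$, and note that its transition matrix has entries that are finite, hence rational, subsets of~$M$. It then suffices to invoke the analogue over~$M$ of \propo{sta-mat}, namely that the entries of~$E^{*}$ lie in the rational closure of the entries of~$E$: granting this, $\CompAuto{\Ac}$ is a finite union of rational subsets, hence rational. Here the hypothesis `labels in~$G$' is not really needed, and this implication holds for any finite labelling.

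For the converse, I would start from a rational subset~$X$ and apply the sharpened form of \propo{rat-exp-mon} recalled just after its statement: any rational subset is denoted by an expression whose atoms lie in any prescribed generating set. This gives $\msp\Ed\in\RatEM\msp$ with $\msp\CompExpr{\Ed}=X\msp$ and all atoms of~$\Ed$ in~$G$. The standard-automaton construction \equnm{sta-aut-sum}--\equnm{sta-aut-sta}, which is defined over an arbitrary monoid, then yields a finite automaton~$\Stan{\Ed}$ with $\msp\CompAuto{\Stan{\Ed}}=\CompExpr{\Ed}=X\msp$ — the generalisation of \propo{sta-aut-exp}, again a direct semiring computation. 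Since the transitions of~$\Stan{\Ed}$ carry exactly the atoms of~$\Ed$, the constants~$\zed$ and~$\und$ affecting only the shape of the automaton and its final-state scalar, all labels of~$\Stan{\Ed}$ lie in~$G$, as required.

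The hard part, and the only place that demands care, will be the generalisation of \propo{sta-mat}. Over~$\Ae$ the slick proofs lean on the \emph{uniqueness} clause of Arden's lemma, which itself rests on the gradation of~$\Ae$; for a monoid carrying idempotents, or in which~$\unM$ is a nontrivial product, this uniqueness can fail and the argument breaks down. My remedy would be the recursive (Conway) method of \secti{rec-met}, which computes~$E^{*}$ by block decomposition using only the rational operations and whose correctness rests on \equnm{ide-U-lan} together with the star identities~$\IdRS$ and~$\IdRP$ of \propo{ide-ape}. The decisive remark is that over~$\jsPart{M}$ these are not merely formal identities but genuine equalities of subsets, valid for \emph{every} monoid, since they follow at once from the distributivity of product over arbitrary unions. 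I would then argue by induction on~$\jsCard{Q}$: split~$Q$ into two nonempty blocks, apply the block formula for the star, and use the induction hypothesis on the smaller stars $\CompExpr{\Fd}^{*}$ and $\CompExpr{\Kd}^{*}$; the identities above ensure the resulting expression genuinely denotes~$E^{*}$, and every entry is built from the entries of~$E$ by~$+$, $\matmul$ and~$*$.
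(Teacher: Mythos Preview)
Your proposal is correct and follows the same two-direction plan as the paper: a $\AtEs$-map (\propo{phi-map-mon}) and a $\EtAs$-map (\propo{psi-map-mon}), together with the sharpened \propo{rat-exp-mon} to force atoms into the generating set~$G$.

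The paper is terser than you on the $\AtEs$-side: it simply asserts that \emph{all four} methods of \secti{aut-exp} carry over to arbitrary~$M$, ``even if their formal proof may be slightly different (Arden's lemma does not hold anymore)'', without singling out one method or saying how the proof is repaired. Your choice of the recursive (Conway) method is a good one, and your key observation --- that~\equnm{ide-U-lan}, $\IdRS$ and~$\IdRP$ are genuine equalities of subsets in~$\jsPart{M}$ for \emph{any} monoid~$M$, following directly from distributivity of product over arbitrary unions and not from Arden's lemma --- is exactly the point the paper leaves implicit. One could equally well justify the block formula by the direct path-decomposition argument underlying \MNY (which is also Arden-free), but your route via the Conway identities is perfectly sound and arguably more in the spirit of the chapter's emphasis on identities.
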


There is not much to change in Propositions~\ref{:pro:phi-map-fm} 
and~\ref{:pro:psi-map-fm} 
to establish \theor{fun-the-mon}.

\begin{proposition}[$\AtEs$-maps]
    \label{:pro:phi-map-mon}
For every finite automaton~$\Ac$ over~$M$, there exist rational 
expressions over~$M$ which denote~$\CompAuto{\Ac}$\xmd.
\end{proposition}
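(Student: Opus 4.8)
The plan is to transfer the free-monoid proof of \propo{phi-map-fm} almost verbatim, after isolating the one place where freeness was genuinely used. As in the free case, view the transition set of $\msp\Ac=\aut{Q,M,E,I,T}\msp$ as a $Q\x Q$-matrix over the semiring~$\jsPart{M}$, each entry~$E_{p,q}$ being the finite set of labels of transitions from~$p$ to~$q$. Since a finite subset of~$M$ is denoted by a rational expression over~$M$ (the sum of its elements, taken as atoms), it suffices to produce, for each entry of $\msp E^{*}=\sum_{n\in\N}E^{n}\msp$, a rational expression over~$M$ whose atoms are entries of~$E$, and then to read off $\msp\CompAuto{\Ac}=I\matmul E^{*}\matmul T\msp$ exactly as in~\equnm{com-aut-mat}. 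So the real target is the analogue of \propo{sta-mat}: the entries of~$E^{*}$ lie in the rational closure of the entries of~$E$. First I would check that \lemme{mat-mul-gra} carries over unchanged --- its proof only matches matrix multiplication with concatenation of paths and never invokes freeness --- so that $\msp E^{n}_{p,q}\msp$ is still the set of labels of paths of length~$n$ from~$p$ to~$q$.

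For the target itself I would use the recursive (Conway) block-decomposition method, the most purely algebraic of the four. Writing $\msp Q=R\cup S\msp$ and decomposing $\msp E=\matricedd{F}{G}{H}{K}\msp$ and $\msp E^{*}=\matricedd{U}{V}{W}{Z}\msp$ with $F,K$ square, the relation $\msp E^{*}=\matricedd{1}{0}{0}{1}+E\matmul E^{*}\msp$ --- which rests on~\equnm{ide-U-lan} applied to matrices with entries in~$\jsPart{M}$ --- yields the very same four equations \equnm{con1}--\equnm{con2} as before. The desired closed forms $\msp U=(F+GK^{*}H)^{*}\msp$, together with $\msp V=F^{*}GZ\msp$, $\msp W=K^{*}HU\msp$ and $\msp Z=(K+HF^{*}G)^{*}\msp$, are then obtained by induction on the dimension, the base case (dimension~$1$) being immediate. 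Every manipulation here takes place inside~$\jsPart{M}$ using only the rational operations, so at the level of symbols nothing obstructs the transfer.

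The hard part is the single step where the free-monoid argument applies \corol{Ard} to pass from~\equnm{con2} to these closed forms: the \emph{uniqueness} half of Arden's lemma was deduced from the gradation of~$\Ae$, and a general monoid~$M$ carries no length morphism, so a fixed-point equation $\msp\mathrm{X}=K\,\mathrm{X}+L\msp$ may admit several solutions and one can no longer conclude that the block~$V$ of~$E^{*}$ \emph{equals} $\msp F^{*}GZ\msp$ merely because both sides solve that equation. I would circumvent this in two complementary ways. For the $\AtEs$-direction only existence is needed --- that $\msp K^{*}L\msp$ \emph{is} a solution --- and existence is a direct consequence of~\equnm{ide-U-lan}, which holds in~$\jsPart{M}$ for every monoid; this already supplies \emph{some} expression for each entry. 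To secure the actual equalities I would instead justify the block identities combinatorially through the generalised \lemme{mat-mul-gra}: $\msp U_{p,q}\msp$ is the set of labels of all $p\to q$ walks, and grouping such walks by their successive returns to~$R$ (either a single $F$-step, or an excursion $\msp G\xmd K^{*}\xmd H\msp$ into~$S$ and back) gives $\msp U=(F+GK^{*}H)^{*}\msp$ directly, while splitting a walk at its first crossing into~$S$ gives $\msp V=F^{*}GZ\msp$, and symmetrically for~$W$ and~$Z$. These decompositions are statements about walks in the labelled graph~$\Ac$, whose labels multiply in~$M$, so they hold over any monoid; they replace the uniqueness argument, complete the induction, and thereby establish \propo{phi-map-mon}.
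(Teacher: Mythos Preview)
Your approach is correct and in line with the paper's own treatment. The paper does not give a detailed proof of \propo{phi-map-mon}; it simply remarks that all four methods of \secti{aut-exp} carry over to an arbitrary monoid~$M$, \emph{even if their formal proof may be slightly different (Arden's lemma does not hold anymore)}. You have chosen one of those four methods --- the recursive block-decomposition --- and you have identified precisely the point the paper flags: the uniqueness half of Arden's lemma rests on the gradation of~$\Ae$ and is unavailable over a general~$M$. Your fix, replacing the fixed-point argument by a direct combinatorial verification of the block identities \via the path interpretation of~$E^{*}$ (\lemme{mat-mul-gra}, which indeed survives unchanged), is a valid way to close that gap. One small comment: your first suggestion --- that existence of a solution suffices --- does not by itself give the needed equalities, as you yourself note; it is the walk-decomposition argument that carries the weight, so you could drop the first suggestion without loss.
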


\ifcheat\smallskipneg\fi%
All four methods described in \secti{aut-exp} apply for 
arbitrary~$M$, even if their formal proof may be slightly different 
(Arden's lemma does not hold anymore).

\begin{proposition}[$\EtAs$-maps]
    \label{:pro:psi-map-mon}
For every rational expression~$\Ed$ over~$M$, there exist 
finite automata over~$M$ whose behaviour is equal to~$\CompExpr{\Ed}$\xmd.
\end{proposition}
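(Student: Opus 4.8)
The plan is to re-use the \emph{standard automaton} construction of \secti{sta-aut-exp} essentially verbatim. As already observed there, the standard automaton is defined for expressions over \emph{any} monoid; the only construction that genuinely fails in the non-free setting is the derived-term automaton of \secti{der-aut-exp}, which rests on left quotients and hence on freeness. For the present direction freeness plays no role whatsoever, so the free-monoid argument transports with almost no change.

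First I would recall that an automaton is standard if it has a single initial state which is the target of no transition, and that the operations $\Ac+\Bc$, $\Ac\matmul\Bc$ and $\Ac^{*}$ on standard automata are given by \equnm{sta-aut-sum}--\equnm{sta-aut-sta}. The point to stress is that these formulas are \emph{purely matrix-algebraic}: they manipulate only the Boolean block structure of the transition matrices together with the constant terms $c$ and $d$. The atoms occurring in the label blocks $J, F, K, G, \ldots$ are now elements of $M$ rather than letters of $A$, but the constructions never inspect them, so they remain well-defined over an arbitrary monoid. In particular no hypothesis of finite generation on $M$ is needed, the atoms of $\Ed$ being used directly as transition labels.

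Second, I would check the three behaviour identities
\[
\CompAuto{(\Ac+\Bc)}=\CompAuto{\Ac}+\CompAuto{\Bc},\quad
\CompAuto{(\Ac\matmul\Bc)}=\CompAuto{\Ac}\matmul\CompAuto{\Bc},\quad
\CompAuto{(\Ac^{*})}=\CompAuto{\Ac}^{*},
\]
exactly as in the free case. Since $\CompAuto{\Ac}=I\matmul E^{*}\matmul T$ is computed in the semiring $\jsPart{M}$, and since these verifications are the same straightforward block computations used to establish \propo{sta-aut-exp}, they go through unchanged for any monoid $M$. Then, by induction on $\Dpth{\Ed}$, I would associate with every $\Ed$ in $\RatEM$ a standard automaton $\Stan{\Ed}$: the base cases are the one-state automaton for $\zed$ and $\und$ (behaviours $\es$ and $\{\unM\}$) and the obvious two-state standard automaton for an atom $m\in M$, whose behaviour is $\{m\}$; the inductive step sets $\Stan{\Fd+\Gd}=\Stan{\Fd}+\Stan{\Gd}$, $\Stan{\Fd\autprod\Gd}=\Stan{\Fd}\matmul\Stan{\Gd}$ and $\Stan{\Fd^{*}}=(\Stan{\Fd})^{*}$. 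Combining the behaviour identities with the inductive definition of $\CompExpr{\Ed}$ yields $\CompAuto{\Stan{\Ed}}=\CompExpr{\Ed}$, which proves the proposition.

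The honest assessment is that there is \emph{no} real obstacle here: the careful, monoid-agnostic phrasing of the standard-automaton operations in \secti{sta-aut-exp} was chosen precisely so that this generalisation would be transparent. The only content to verify is that those operations stay well-defined and that the behaviour formulas persist, and both are immediate because neither the freeness nor the finite generation of $M$ ever entered the block computations. Thus the statement is proved by the same $\EtAs$-map $\Ed\mapsto\Stan{\Ed}$ as in \propo{psi-map-fm}.
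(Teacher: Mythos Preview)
Your proposal is correct and follows exactly the approach the paper indicates: the paper simply remarks that the standard-automaton construction of \secti{sta-aut-exp} (and the Thompson construction) ``pass over to expressions over~$M$'', while derivation does not, and you have spelled out precisely why this is so. There is nothing to add.
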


\ifcheat\smallskipneg\fi%
Here again, the algorithms and results described in 
\secti{sta-aut-exp} for the construction of the standard automaton, 
\index{automaton (of an expression)!standard --}%
\index{automaton (of an expression)!Thompson --}%
Thompson automaton, \etc pass over to expressions 
over~$M$.
On the contrary, quotients in~$M$ define recognisable subsets of~$M$ 
and not rational ones (see below) and derivation of expressions 
\index{derivation (of an expression)}%
over~$M$ does not make sense anymore.

\ifcheat\smallskipneg\fi%
\ifcheat\smallskipneg\fi%
\subsection{Recognisability}

\defin{rec-lan} may be rephrased \textit{verbatim} for 
arbitrary monoids.
\emph{A subset~$P$ of~$M$ is said to be \emph{recognised} by a morphism
$\msp\alpha\colon M\rightarrow N\msp$
if
$\msp P= \alpha^{-1}(\alpha(P))\msp$.
A subset of~$M$ is \emph{recognisable} if it is recognised by a 
\index{recognisable!subset}%
morphism from~$M$ into a \emph{finite} monoid.
The set of \emph{recognisable subsets} of~$M$ is
denoted by~$\RecM$.}

\ifcheat\smallskipneg\fi%
\paragraph{Recognisable and rational subsets}

We can then
reproduce almost \textit{verbatim} the converse part of the proof of 
\theor{kle-fm}.
Let~$P$ be in~$\Rec M$, 
recognised by a morphism~$\alpha$.
We replace the alphabet~$A$ by any generating set~$G$ of~$M$ in the 
construction of the automaton~$\Ac_{\alpha}$. 
If~$M$ is finitely generated, $G$ is finite, so is~$\Ac_{\alpha}$ 
and~$P$ is rational by \theor{fun-the-mon}:

\begin{proposition}[McKnight \cite{MKni64}]
    \label{:pro:rec-fin-gen}%
If~$M$ is finitely generated, then $\msp\RecM\subseteq\RatM\msp$.
\end{proposition}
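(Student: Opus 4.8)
The plan is to reproduce, almost \textit{verbatim}, the converse direction of the proof of \theor{kle-fm}, with the single adjustment already foreseen in the discussion preceding the statement: the rôle of the alphabet will now be played by an arbitrary \emph{finite} generating set of~$M$. Once the automaton is set up, the whole argument reduces to the Fundamental theorem of finite automata over a monoid, \theor{fun-the-mon}.

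First I would invoke the hypothesis that~$M$ is finitely generated to fix a finite generating set~$G$, so that every element of~$M$ is a (possibly empty) product of elements of~$G$. Then, starting from a subset~$P$ in~$\RecM$ recognised by a morphism $\msp\alpha\colon M\rightarrow N\msp$ with~$N$ finite, and writing $\msp S=\alpha(P)\msp$, I would form the automaton
\begin{equation}
\Ac_{\alpha}=\aut{N,G,E,\{\unN\},S}
\EqVrg \where
E=\Defi{\Tran{n,g,n\xmd\alpha(g)}}{g\in G,\, n\in N}
\eqpnt
\notag
\end{equation}
Since both~$N$ and~$G$ are finite, $E$ is finite, so~$\Ac_{\alpha}$ is a finite automaton over~$M$ whose labels all lie in the generating set~$G$ --- exactly the kind of automaton to which \theor{fun-the-mon} applies.

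It then remains to check that $\msp\CompAuto{\Ac_{\alpha}}=P\msp$. A computation issued from~$\unN$ and labelled by a product $\msp m=g_{1}\cdots g_{k}\msp$ visits successively the states $\msp\unN,\alpha(g_{1}),\alpha(g_{1})\alpha(g_{2}),\ldots\msp$ and terminates at $\msp\alpha(g_{1})\cdots\alpha(g_{k})=\alpha(m)\msp$; hence every path from~$\unN$ labelled~$m$ ends at the single state~$\alpha(m)$, and, since~$G$ generates~$M$, such a path exists for every~$m$ in~$M$. Therefore
\begin{equation}
\CompAuto{\Ac_{\alpha}}=
\Defi{m\in M}{\ext p\in S\quantsmsp \unN\pathaut{m}{\Ac_{\alpha}}p}
=\Defi{m\in M}{\alpha(m)\in S}
=\alpha^{-1}(S)=P
\notag
\end{equation}
the last equality holding because~$P$ is recognised by~$\alpha$. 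By \theor{fun-the-mon}, $P$ is rational, which gives $\msp\RecM\subseteq\RatM\msp$.

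The step I expect to require the most care is the verification that the endpoint~$\alpha(m)$ of a path depends only on~$m$ and not on the chosen factorisation of~$m$ into generators: this is precisely where the morphism property of~$\alpha$ enters, and it is what lets the non-deterministic automaton~$\Ac_{\alpha}$ behave, at the level of reachable states, exactly as the deterministic construction did in the free-monoid case. One should also record in passing the boundary case $\msp m=\unM\msp$ (the empty product), accepted by the trivial computation at~$\unN$ precisely when $\msp\unN\in S\msp$, that is, when $\msp\unM\in P\msp$, in agreement with the displayed formula.
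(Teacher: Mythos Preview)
Your proof is correct and follows exactly the approach sketched in the paper: replace the alphabet by a finite generating set~$G$ in the construction of~$\Ac_{\alpha}$ from the converse part of the proof of \theor{kle-fm}, then invoke \theor{fun-the-mon}. You have simply written out in full the verification $\msp\CompAuto{\Ac_{\alpha}}=P\msp$ and the finiteness of~$\Ac_{\alpha}$ that the paper leaves implicit by reference to the free-monoid case.
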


\ifcheat\smallskipneg\fi%
On the other hand, the first part of the quoted 
proof does not generalise to non-free monoids and the inclusion in 
\propo{rec-fin-gen} is strict in general.
For instance, the set
$\msp(a,c)^{*}=\big((a,1)(1,c)\big)^{*}\msp$
is a rational subset of~$a^{*}\x c^{*}$
(where the product is formed component wise).
It is accepted by a two-state automaton which 
induces a map~$\mu$ from the generating set of~$a^{*}\x c^{*}$ 
into~$\B^{2\x2}$:
\begin{equation}
    \mu\big((a,1)\big) = \redmatu{\matricedd{0}{1}{0}{0}}
    \ee\text{and}\ee
    \mu\big((1,c)\big) = \redmatu{\matricedd{0}{0}{1}{0}}
\eqpnt
\notag
\end{equation}
But this map does not define a \emph{morphism} from~$a^{*}\x c^{*}$ 
into~$\B^{2\x2}$.

\ifcheat\smallskipneg\fi%
\paragraph{Decision problems for rational sets}

In general, $\RatM$ is not a Boolean algebra.
This is also accompanied with 
undecidability results.
The undecidability of Post Correspondence Problem, easily expressed 
in terms of monoid morphisms, implies for instance:

\ifcheat\smallskipneg\fi%
\begin{theorem}[Rabin--Scott \cite{RabiScot59}]
    \label{:the:int-ind}%
It is undecidable whether the intersection of two rational sets of 
$\{a,b\}^{*}\x\{c,d\}^{*}$ is empty or not.
\end{theorem}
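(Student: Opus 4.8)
The plan is to derive the result from the undecidability of the Post Correspondence Problem (PCP), as the surrounding text suggests. Recall its morphism formulation: a PCP instance is a pair of monoid morphisms $\varphi, \psi \colon \Sigma^{*} \to \{a,b\}^{*}$, where $\Sigma = \{x_{1}, \ldots, x_{n}\}$ is an alphabet of indices and the values $\varphi(x_{i}) = u_{i}$, $\psi(x_{i}) = v_{i}$ record the two lists of words; the instance is \emph{solvable} if there is a nonempty $w \in \Sigma^{+}$ with $\varphi(w) = \psi(w)$. Solvability is classically undecidable, already when the target alphabet has two letters. First I would fix such a binary target, and then manufacture \emph{effectively}, from an arbitrary instance, two rational subsets of $\{a,b\}^{*}\x\{c,d\}^{*}$ whose intersection encodes exactly its solutions.

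The one point requiring care is that the two components of $\{a,b\}^{*}\x\{c,d\}^{*}$ carry \emph{disjoint} alphabets, whereas a PCP solution is an equality of two words over the \emph{same} alphabet. I would bridge this by the letter-renaming bijection $\rho \colon \{a,b\}^{*} \to \{c,d\}^{*}$ given by $a \mapsto c$, $b \mapsto d$. Using it, define the morphism $\theta \colon \Sigma^{*} \to \{a,b\}^{*}\x\{c,d\}^{*}$ by $\theta(x_{i}) = (u_{i}, \rho(v_{i}))$ and set
\begin{equation}
X = \theta(\Sigma^{+}) = \Defi{(\varphi(w),\rho(\psi(w)))}{w \in \Sigma^{+}}
\notag
\end{equation}
together with the ``twisted diagonal''
\begin{equation}
Y = \big((a,c) + (b,d)\big)^{*} = \Defi{(s,\rho(s))}{s \in \{a,b\}^{*}}
\eqpnt
\notag
\end{equation}
Both are rational over $\{a,b\}^{*}\x\{c,d\}^{*}$: the set $Y$ is given outright by a rational expression, hence rational by \propo{rat-exp-mon}; and $X$ is the image of the rational set $\Sigma^{+}$ under a monoid morphism, hence rational because applying $\theta$ to a rational expression for $\Sigma^{+}$ yields one for $X$ (a morphism commutes with $+$, $\cdot$ and ${}^{*}$ and sends finite sets to finite sets).

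Then I would check the key equivalence. A pair lies in $X \cap Y$ exactly when it equals $(\varphi(w),\rho(\psi(w)))$ for some $w \in \Sigma^{+}$ and simultaneously has the form $(s,\rho(s))$; comparing second components forces $\rho(\psi(w)) = \rho(\varphi(w))$, and since $\rho$ is injective this is $\psi(w) = \varphi(w)$. Hence $X \cap Y \neq \emptyset$ if and only if the instance is solvable. An algorithm deciding emptiness of the intersection of two rational subsets of $\{a,b\}^{*}\x\{c,d\}^{*}$ would therefore decide PCP, which is impossible; this proves the theorem. The only genuine obstacle is the bookkeeping just described --- making the two target alphabets disjoint through $\rho$ and confirming that the morphic image $X$ is really rational; once these are settled the reduction is immediate, and restricting $\Sigma$ and the lists to a binary target keeps the construction inside precisely the two free monoids $\{a,b\}^{*}$ and $\{c,d\}^{*}$ named in the statement.
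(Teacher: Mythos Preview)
Your proposal is correct and follows exactly the approach the paper indicates: the paper does not spell out a proof but simply states that the result is implied by the undecidability of the Post Correspondence Problem ``easily expressed in terms of monoid morphisms,'' and your reduction via the morphism $\theta$ and the twisted diagonal $Y$ is precisely the standard way to make that implication explicit.
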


\ifcheat\smallskipneg\fi%
From which one deduce:

\ifcheat\smallskipneg\fi%
\begin{theorem}[Fischer--Rosenberg \cite{FiscRose68}]
    \label{:the:equ-ind}%
The equivalence of finite automata, and hence of rational 
expressions, over $\{a,b\}^{*}\x\{c,d\}^{*}$ is undecidable.
\end{theorem}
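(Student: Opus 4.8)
The plan is to reduce the emptiness problem of \theor{int-ind} to an instance of the equivalence problem, by turning a Post instance into a single rational relation whose \emph{universality} encodes the absence of a solution. Recall that \theor{int-ind} is obtained by associating, with a Post Correspondence instance given by two lists $u_{1},\dots,u_{n}$ and $v_{1},\dots,v_{n}$ of words over $\{a,b\}$, the rational relations $R_{1}=\{(u_{i_{1}}\cdots u_{i_{k}},\,\sigma(i_{1})\cdots\sigma(i_{k})):k\geq 1\}$ and $R_{2}=\{(v_{i_{1}}\cdots v_{i_{k}},\,\sigma(i_{1})\cdots\sigma(i_{k})):k\geq 1\}$, where $\sigma$ is a fixed prefix code of the index alphabet $\{1,\dots,n\}$ into $\{c,d\}^{*}$; one checks that $R_{1}\cap R_{2}\neq\es$ if and only if the instance has a solution, so emptiness of the intersection is undecidable.

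First I would observe that each of $R_{1}$ and $R_{2}$ is the graph of a partial function $\{c,d\}^{*}\rightharpoonup\{a,b\}^{*}$ with the same rational domain $D=\sigma(\{1,\dots,n\}^{+})$: a valid second component $z=\sigma(i_{1})\cdots\sigma(i_{k})$ decodes \emph{uniquely} into an index sequence and thus determines a single admissible first component on each side. The heart of the argument is then to build, from the same instance, a rational relation $N\subseteq\{a,b\}^{*}\x\{c,d\}^{*}$ consisting of \emph{all the pairs that do not encode a solution}, namely the pairs $(x,z)$ such that either $z\notin D$, or $z=\sigma(s)$ but $x$ differs from $u_{s_{1}}\cdots u_{s_{k}}$, or $z=\sigma(s)$ but $x$ differs from $v_{s_{1}}\cdots v_{s_{k}}$. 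By construction, $(x,z)$ fails to lie in $N$ exactly when it is a common value of both graphs, so $N=\{a,b\}^{*}\x\{c,d\}^{*}$ if and only if $R_{1}\cap R_{2}=\es$, that is, if and only if the Post instance has no solution.

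The step I expect to be the main obstacle is proving that $N$ is rational. The difficulty is that $N$ is, up to the rational set of invalid codes, the \emph{complement} of the graph of a rational function, and rational relations are not closed under complement. The point to exploit is that the defining \emph{failure} conditions are \emph{local} and can be witnessed: an automaton over $\{a,b\}^{*}\x\{c,d\}^{*}$ reading the pair $(x,z)$ can nondeterministically decode $z$ block by block, regenerate the expected $u$-blocks (resp.\ $v$-blocks), and guess a single position at which $x$ is too short, too long, or carries a wrong letter; invalidity of the code $z$ is likewise checked by a finite automaton on the second component. Assembling these guesses yields a finite automaton for $N$, whence $N$ is rational by \theor{fun-the-mon} (equivalently, one writes a rational expression for $N$ by \propo{rat-exp-mon}).

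It then remains to assemble the reduction. The relation $U=\{a,b\}^{*}\x\{c,d\}^{*}$ is rational (indeed recognisable), so ``$N=U$'' is a bona fide instance of the equivalence problem; a decision procedure for equivalence of finite automata over $\{a,b\}^{*}\x\{c,d\}^{*}$ would therefore decide whether the Post instance is solvable, contradicting \theor{int-ind}. This proves undecidability for automata. Finally, the passage to rational expressions is immediate: the constructions underlying \propo{phi-map-mon} and \propo{psi-map-mon} are \emph{algorithms} converting automata into equivalent expressions and back, so equivalence of expressions and equivalence of automata are decidable or undecidable together; hence equivalence of rational expressions over $\{a,b\}^{*}\x\{c,d\}^{*}$ is undecidable as well.
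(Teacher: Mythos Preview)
Your argument is correct and is essentially the classical Fischer--Rosenberg construction that the paper merely cites without proof: the paper states the theorem as a consequence (``From which one deduce'') but does not spell out any argument, so there is nothing to compare against beyond the reference. Two small remarks. First, your reduction is not a black-box reduction from \theor{int-ind} but from PCP itself, exploiting the specific shape of~$R_{1}$ and~$R_{2}$ as graphs of morphisms; this is unavoidable (there is no general way to turn intersection-emptiness into an equivalence instance without complements), and you do make the dependence explicit, but the sentence ``contradicting \theor{int-ind}'' should really read ``contradicting the undecidability of PCP''. Second, the rationality of your error-witness relation is the crux and deserves one more line of care: the finite automaton reads generators $(a,1),(b,1),(1,c),(1,d)$, keeps in its state the expected next letter of the current block~$u_{i_{j}}$ (finitely many possibilities), advances both tapes while \emph{ignoring} the actual letters of~$x$, and at a nondeterministically chosen step reads the actual letter of~$x$ and checks it differs from the expected one (or that one tape is exhausted while the other is not); the remainder is then consumed freely. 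This makes it plain that no equality check is ever performed and that the automaton is finite.
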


In contrast, the cases where~$\RatM$ is an effective Boolean algebra 
--- such as when~$M$ is a (finitely generated) \emph{free commutative 
monoid} \cite{GinsSpan66} or \emph{free group} \cite{Flie71} --- play 
a key role in model-checking issues which involve counters, or 
pushdown automata.

\ifcheat\smallskipneg\fi%
\ifcheat\smallskipneg\fi%
\ifcheat\smallskipneg\fi%
\ifcheat\smallskipneg\fi%
\section{Introducing weights} 
\label{:sec:int-mul}

Most of the statements about automata and expressions established in
the previous sections extend again without much difficulties in the
\emph{weighted case}, as we have taken care to formulate them
adequately.
There are two questions though that should be settled first in order
to set up the framework of this generalisation. 
First, the definition of the \emph{star operator} requires some
mathematical apparatus to be meaningful.  Second, the definition of
\emph{weighted expressions} has to be tuned in such a way that former
computations such as the \emph{derivation} remain valid.\footnote{%
   The definition of the \emph{behaviour of weighted automata}
   also conceals a problem due to the existence of \emph{spontaneous} or
   $\epsilon$-transitions).
   This is out of the scope of this chapter
   where we focus on the relationships between automata and 
   expressions. 
   All usual definitions eventually allow to establish that every
   automaton whose behaviour is defined is equivalent to a \emph{proper}
   automaton.
   This is how we define a weighted automaton and where we
   begin our presentation.
   We thus save a significant amount of foundation results.  
   On this subject,
   we refer to 
   other chapters of 
   this handbook (Chapters~\ref{DK:chp:DK} and~\ref{ZE:chp:ZE}) and
   other works
   (\cite{SaloSoit77,BersReut84,KuicSalo86,Saka03,DrosEtAl09Edhb,LombSaka13}).}

\ifcheat\smallskipneg\fi%
\ifcheat\smallskipneg\fi%
\subsection{Weighted languages, automata, and expressions}
\label{:sec:wei-lae}%

\subsubsection{The series semiring}
\label{:sec:rat-ser}%

The \emph{weights}, with which we enrich the
languages or subsets of monoids are taken in a \emph{semiring}, so as
\index{semiring}%
to give the set of \emph{series} we build the desired structure.
We are interested in weights as they actually appear in the
modelisation of phenomena that we want to be able to describe (and not
because they fullfil some axioms).
These are the classical numerical semirings~$\N$,
$\Z$,
$\Q$,
\etc,
the less classical~$\StruSA{\Z\cup\etainfty,\min,+}$,
\etc 
None of them are \emph{Conway semirings} (\cf \CTchp{ZE}), 
$\N$
is a \emph{quasi-Conway semiring}
but not the others.
In the  sequel, $\K$ is a semiring.
The unweighted case corresponds to~$\K=\B$ and will 
\index{Conway|see{semiring}}%
\index{semiring!Conway --}%
\index{semiring!quasi-Conway --}%
be refer to as \emph{the Boolean case}.

As in the Boolean case, free monoids give rise to results 
which do not hold in non-free ones
(the Kleene--Sch\"utzenberger theorem).
But not all non-free monoids  allow to easily define series with 
weights in arbitrary semirings.
We restrict ourselves to \emph{graded monoids}, \ie which are 
equipped with a \emph{length function}.
They behave exactly like the free monoids as far as the construction 
of series is concerned, they cover many monoids that are considered in computer 
science, and they are sufficient to make clear the difference between 
the free and non-free cases as far as rationality is concerned.
\index{monoid!graded --}%
\index{monoid!finitely generated --}%
In the  sequel, $M$ is a finitely generated graded monoid.

\ifcheat\smallskipneg\fi%
\paragraph{Series}

Any \emph{map}~$s$ from~$M$ to~$\K$ is a \emph{formal power series} 
(\emph{series} for short)
over~$M$ with coefficients in~$\K$.
The image by~$s$ of an element~$m$ in~$M$
is written $\msp\bra{s,m}\msp$
and is called the \emph{coefficient of~$m$ in~$s$}.
\index{series}%
\index{power series|see{series}}%
\index{series!coefficient in a --}%
\index{coefficient|see{series}}%
The set of these series, written $\KM$, is equipped with 
the (left and right) \emph{`exterior' multiplications},
\index{multiplication (exterior)}%
the pointwise \emph{addition}, 
and the \emph{(Cauchy) product}: for every~$m$ in~$M$,
$\msp
    \bra{ s \xmd t ,m }  =    
 \sum _{{uv=m}}\bra{ s ,u }\bra{ t ,v }
\msp$.
As~$M$ is graded, the  product is 
well-defined, and the three operations make~$\KM$ a semiring
(\cf \CTchp{DK}).

The \emph{support} of a series~$s$ is the subset of elements 
of~$M$ whose coefficient in~$s$ is not~$\zeK$.
\index{support|see{series}}%
\index{series!support of --}%
\index{polynomial}%
A series with finite support is a \emph{polynomial}; the set of 
polynomials over~$M$ with coefficients in~$\K$ is written~$\KPM$.

\ifcheat\smallskipneg\fi%
\paragraph{Topology}

The following definition of the star as
an \emph{infinite sum} calls for the definition of a \emph{topology} 
on~$\KM$.  
The semirings~$\K$ we consider are equipped with a topology defined by a 
\emph{distance}, whether it is a 
\emph{discrete topology} ($\N$, $\Z$, 
$\StruSA{\Z\cup\etainfty,\min,+}$, \etc) or  
a more classical one ($\Q$, $\R$, another $\SerSAnMon{\Lb}{N}$, \etc). 
Since~$M$ is graded (and finitely generated) it is easy to derive 
a distance which defines on~$\KM$
the \emph{simple convergence topology}: 
\begin{equation}
    \text{$s_{n}\msp$ \emph{converges to} $\msp s\msp$
    if, and only if, 
    for all~$m$ in~$M$,
    $\msp\bra{s_{n},m}\msp$ \emph{converges to} $\msp\bra{s,m}\msp$.} 
    \notag 
\end{equation}
Along the same line, a family of series~$\{s_i\}_{i \in I}$ is 
\emph{summable} if for  
every~$m$ in~$M$ the family $\{\bra{s_i,m}\}_{i \in I}$ is summable (in~$\K$). 
An obvious case of summability is when for every~$m$ in~$M$ 
there is only a finite number of indices~$i$ such
that~$\bra{s_i,m}$ is different from~$\zeK$, in which case the 
family~$\{s_i\}_{i \in I}$ is said to be \emph{locally finite}.

All quoted semirings that we consider are \emph{topological 
semirings}, that is, not only equipped with a topology, but their 
semiring operations are \emph{continuous}.
We also use silently in the sequel the following identification:
if~$Q$ is a finite set, $\KMQQ$, the semiring of $Q\x Q$-matrices with 
entries in~$\KM$ \emph{is isomorphic} to~$\KQQM$, the semiring of 
series on~$M$ with coefficients in~$\KQQ$.

\ifcheat\smallskipneg\fi%
\paragraph{Star}
The star, denoted~$t^*$, of an element~$t$ in an arbitrary 
topological semiring~$\T$ (not only in a  
semiring of series) is defined if the family $\{t^n\}_{n\in\N}$ is 
summable and in this case, $\msp t^* = \sum_{n \in \N} t^n\msp$
and~$t$ is said to be \emph{starable}.
\index{starable}%
If~$t^{*}$ is defined, then
$\msp t^* = \unT + t\xmd t^* =\unT + t^*t \msp$
hold.
If moreover~$\T$ is a \emph{ring}, 
this
can be written
$\msp (1 - t)\xmd t^* = t^*(1 - t) = 1\msp $
and $\msp t^*\msp $ is the \emph{inverse} of $\msp 1 - t\msp $.
Generally in semirings, the star of an element may be viewed as a 
substitute of taking the inverse 
in a poor structure that has no inverse.
Hence is the name \emph{rational} given to objects that can be 
computed with the star.
 
The \emph{constant term} of a series~$s$ is 
\index{constant term!of a series}%
\index{series!constant term of --}%
the coefficient of the identity of~$M$:
$\msp\TermCst{s} = \bra{s,\unM }\msp$.
A series is \emph{proper} if its constant term is zero.
\index{proper|see{series}}%
\index{series!proper --}%
If $\msp s\msp$ is proper, the family $\{s^n\}_{n\in\N}$
is locally finite \emph{since~$M$ is graded} and 
the star of a proper series of~$\KM$ is thus always defined.

\ifcheat\smallskipneg\fi
\begin{lemma}
    \label{:lem:Ard-wei}%
Let~$s$ and~$t$ be two series in~$\KM$.
If~$s^{*}$ is defined, then $\msp s^{*}t\msp $ is the \emph{unique} 
\index{Arden's lemma}%
solution of the equation 
$\msp 
        \mathrm{X} = s \xmd \mathrm{X} + t  
\msp$.
\end{lemma}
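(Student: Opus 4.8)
The plan is to prove the two assertions separately: first that $s^{*}t$ is \emph{a} solution of $\mathrm{X} = s\,\mathrm{X} + t$, and then that it is the \emph{only} one. For existence I would argue by a one-line computation inside the semiring $\KM$. Recall from just above that whenever $s^{*}$ is defined one has the identity $s^{*} = 1 + s\,s^{*}$. Multiplying this on the right by $t$ and using right distributivity and associativity in $\KM$ gives $s^{*}t = t + s\,(s^{*}t)$, which is exactly the equation $\mathrm{X} = s\,\mathrm{X} + t$ evaluated at $\mathrm{X} = s^{*}t$. No topology is needed here; this part is purely algebraic.

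For uniqueness I would unfold the equation and pass to the limit. Let $\mathrm{X}$ be any solution. Substituting $\mathrm{X} = s\,\mathrm{X} + t$ into itself repeatedly, a straightforward induction on $n$ yields
\[
\mathrm{X} = s^{\,n+1}\mathrm{X} + \Bigl(\sum_{k=0}^{n} s^{k}\Bigr)\, t
\qquad\text{for every } n\in\N .
\]
Since the left-hand side does not depend on $n$, the right-hand side is a sequence constantly equal to $\mathrm{X}$, so it converges to $\mathrm{X}$; it therefore suffices to identify its limit in the simple convergence topology. The partial sums $\sum_{k=0}^{n} s^{k}$ converge to $s^{*}$ by the very definition of the star as an infinite sum, and right multiplication by $t$ preserves this (see below), so $\bigl(\sum_{k=0}^{n} s^{k}\bigr)t \to s^{*}t$. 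If in addition $s^{\,n+1}\mathrm{X}\to 0$, then by continuity of addition and uniqueness of limits in the (Hausdorff) simple convergence topology we conclude $\mathrm{X} = s^{*}t$.

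The crux, and the one place where I expect to spend real care, is the vanishing $s^{\,n+1}\mathrm{X}\to 0$; this is also the only point where one cannot simply assume $s$ proper, since the hypothesis is merely that $s^{*}$ is defined. I would use two ingredients. First, summability of the family $\{s^{n}\}_{n\in\N}$ (which is precisely what ``$s^{*}$ is defined'' means) forces $\bra{s^{n},m}\to 0$ in $\K$ for every $m$, because in a Hausdorff commutative topological monoid the terms of a summable family tend to $0$; hence $s^{n}\to 0$, and so does the shifted sequence $s^{\,n+1}$. Second, because $M$ is graded, for each fixed $m$ the coefficient $\bra{s^{\,n+1}\mathrm{X},m} = \sum_{uv=m}\bra{s^{\,n+1},u}\,\bra{\mathrm{X},v}$ is a \emph{finite} sum; continuity of the operations of the topological semiring $\K$ then lets me push $s^{\,n+1}\to 0$ through this finite sum, giving $\bra{s^{\,n+1}\mathrm{X},m}\to 0$ for every $m$, i.e. $s^{\,n+1}\mathrm{X}\to 0$. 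The same gradedness-plus-continuity observation justifies the continuity of right multiplication by $t$ invoked earlier. As a sanity check, when $s$ is proper gradedness makes $\{s^{n}\}$ locally finite and $s^{\,n+1}\mathrm{X}$ vanishes identically on each coefficient once $n$ exceeds the length of $m$, so the limit argument degenerates to a finite computation; it is the general starable case that genuinely needs the topological formulation above.
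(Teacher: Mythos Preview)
The paper states this lemma without proof, so there is nothing to compare against; I assess your argument on its own merits.

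The existence half is correct, and your uniqueness scheme --- unfold to $X = s^{\,n+1}X + \bigl(\sum_{k\le n}s^{k}\bigr)t$ and pass to the limit --- is the natural one. The steps that use gradedness of $M$ (the Cauchy sum for each coefficient is finite, hence one-sided multiplication by a fixed series is continuous in the simple-convergence topology) are argued correctly.

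The gap is the claim ``in a Hausdorff commutative topological monoid the terms of a summable family tend to $0$''. That is a theorem about topological \emph{groups}, proved by subtracting consecutive partial sums; in a semiring without subtraction it can fail, and it does fail in one of the very semirings the paper lists, namely the tropical semiring $\langle\Z\cup\{+\infty\},\min,+\rangle$ with the discrete topology. There the family of infinitely many copies of $1_{\K}=0$ has all partial ``sums'' ($\min$'s) equal to $0$, hence is summable with sum $0$, yet no term equals the additive zero $+\infty$. This is not merely a flaw in your write-up but a problem with the statement at this level of generality: take $s=1_{\KM}$ over tropical $\K$; then $s^{*}=1_{\KM}$ is defined, but $X=sX+t$ reads coefficientwise as $\bra{X,m}=\min\bigl(\bra{X,m},\bra{t,m}\bigr)$, which is satisfied by every $X$ with $\bra{X,m}\le\bra{t,m}$ --- uniqueness fails outright.

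Your proof \emph{is} complete when $s$ is proper, exactly as your closing sanity check says: then $\bra{s^{\,n+1},u}=0_{\K}$ as soon as $n\ge|u|$, so $s^{\,n+1}X\to 0$ for free and no ``summable $\Rightarrow$ terms $\to 0$'' step is needed. Since the paper only ever applies the lemma to a proper $s$ (the transition matrix of a weighted automaton is proper by definition, and properness is preserved through the elimination steps), the downstream use is unaffected. But the general starable case, which you correctly flag as the delicate one, cannot be closed by the argument you give, and indeed appears to require a stronger hypothesis than the lemma states.
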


\ifcheat\smallskipneg\fi%
\ifcheat\smallskipneg\fi%
\subsubsection{Rational series and expressions}
\label{:sec:rat-ser-exp}

The \emph{rational operations} on~$\KM $ are:
the two \emph{exterior multiplications} by elements of~$\K$, 
the \emph{addition}, the \emph{product},
and the \emph{star} which is not defined everywhere.
A subset $\Ec$ of~$\KM$ is \emph{closed under star} if 
for every~$s$ in~$\Ec$ such that $s^{*}$ is
defined then $s^{*}$
belongs to~$\Ec$.
\index{rational!closure}%
\index{rational!series}%
\index{series!rational --}%
The \emph{rational closure} of a set~$\Ec$, written~$\KRat\Ec$, is 
the \emph{smallest} subset of~$\KM$ closed under 
the rational operations and
which contains~$\Ec$.
The set of ($\K$-)\emph{rational series}, written~$\KRat M$, is the rational 
closure of~$\KPM$. 

\ifcheat\smallskipneg\fi%
\paragraph{Weighted rational expressions}

A rational expression on~$M$ with weight in~$\K$ 
--- a \emph{weighted expression}  --- 
is defined by completing \defin{rat-exp-fm} with \emph{two} 
operations \emph{for every~$k$ in~$\K$}:
if~$\Ed$ is an expression, then so are 
$\msp(k\xmd\Ed)\msp$ and~$\msp(\Ed\xmd k)$. 
\index{expression!weighted rational --}%
The set of weighted rational expressions is written~$\msp\KREM$. 
As for the languages, we write~$\msp\CompExpr{\Ed}\msp$ 
for the \emph{series denoted by~$\Ed$}, 
\index{expression!series denoted by --}%
\index{series!denoted|see{expression}}%
with the supplementary equations:
$\msp\CompExpr{(k\xmd\Ed)}= k\xmd\CompExpr{\Ed}\msp$ and
$\msp\CompExpr{(\Ed\xmd k)} = {\CompExpr{\Ed}}\xmd k$. 

The \emph{constant term}~$\msp\TermCst{\Ed}\msp$ is defined as in 
\index{series!constant term of --}%
\defin{con-ter-exp} but for the last equation 
[$\TermCst{\autstar{\Fd}}=1$] which is replaced by:
$\msp`\TermCst{\autstar{\Fd}}=\autstar{\TermCst{\Fd}}\msp$ 
\emph{if the latter is defined}'.
An expression is \emph{valid} if its constant term is defined.
\index{expression!valid --}%
As~$M$ is graded, 
$\msp \TermCst{\Ed} = \bra{\CompExpr{\Ed},\unM}\msp$ 
holds for every valid weighted rational expression~$\Ed$.
Finally, the following holds:

\ifcheat\smallskipneg\fi%
\begin{proposition}
\label{:pro:rat-ser-1}%
A series of\/~$\KM$ is rational if and only if it is denoted by
a valid rational $\K$-expression over~$M$.
\end{proposition}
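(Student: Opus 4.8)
The plan is to prove the two implications separately, mirroring the proofs of Proposition~\ref{:pro:rat-exp-fm} and Proposition~\ref{:pro:rat-exp-mon} but paying attention to the one genuinely new phenomenon: the star is only partially defined on $\KM$, and the notion of \emph{validity} is exactly what tracks this. The linchpin of both directions is an observation I would isolate as a lemma: \emph{a series $s$ in $\KM$ is starable if and only if its constant term $\bra{s,\unM}$ is starable in $\K$}. Writing $s = \bra{s,\unM}\,\unM + s_{\mathsf p}$ with $s_{\mathsf p}$ proper, the powers of the proper part form a locally finite family because $M$ is graded, so the only obstruction to summability of $\{s^n\}_{n\in\N}$ is the diagonal contribution of the scalar $\bra{s,\unM}$; and one checks that $s^{*}$, when defined, equals $\bra{s,\unM}^{*}\,\bigl(s_{\mathsf p}\,\bra{s,\unM}^{*}\bigr)^{*}$, whose right-hand factor is proper, hence always defined. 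This lemma converts the scalar condition ``$\autstar{\TermCst{\Fd}}$ is defined'' occurring in the definition of validity into the series condition ``$\CompExpr{\Fd}^{*}$ is defined''.

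For the implication ``denoted by a valid expression $\Rightarrow$ rational'', I would argue by induction on $\Dpth{\Ed}$ that if $\Ed$ in $\KREM$ is valid then $\CompExpr{\Ed}$ is a well-defined rational series. The atoms $\zed$, $\und$ and $m\in M$ denote the polynomials $0$, $\unM$ and $m$, which lie in $\KPM\subseteq\KRat M$. Every immediate subexpression of a valid expression is again valid, immediately from the inductive definition of $\TermCst{\cdot}$, so the induction hypothesis applies to them; closure of $\KRat M$ under addition, product and the two exterior multiplications then handles $\Ed'\autplus\Ed''$, $\Ed'\autprod\Ed''$, $k\,\Ed'$ and $\Ed'\,k$. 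The only delicate case is $\Ed=\autstar{\Fd}$: validity gives that $\autstar{\TermCst{\Fd}}$ is defined, hence by the lemma $\CompExpr{\Fd}$ is starable, so $\CompExpr{\autstar{\Fd}}=\CompExpr{\Fd}^{*}$ is a genuine series, and it lies in $\KRat M$ because the rational closure of $\KPM$ is closed under the star of starable elements.

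For the converse ``rational $\Rightarrow$ denoted by a valid expression'', I would let $\Dc$ be the family of series denoted by some valid expression of $\KREM$ and show that $\Dc$ contains $\KPM$ and is closed under the rational operations; since $\KRat M$ is by definition the smallest such family, this yields $\KRat M\subseteq\Dc$. A polynomial is a finite sum $\sum_i k_i\,m_i$, denoted by the star-free, hence automatically valid, expression $\sum_i (k_i\,m_i)$, so $\KPM\subseteq\Dc$. If $s=\CompExpr{\Ed}$ and $t=\CompExpr{\Fd}$ with $\Ed,\Fd$ valid, then $s+t$, $s\,t$, $k\,s$ and $s\,k$ are denoted by $\Ed\autplus\Fd$, $\Ed\autprod\Fd$, $k\,\Ed$ and $\Ed\,k$, all valid since no new star is introduced. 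Finally, if $s^{*}$ is defined, then by the lemma $\TermCst{\Ed}=\bra{s,\unM}$ is starable in $\K$, so $\autstar{\TermCst{\Ed}}$ is defined and $\autstar{\Ed}$ is valid, with $\CompExpr{\autstar{\Ed}}=\CompExpr{\Ed}^{*}=s^{*}$; hence $\Dc$ is closed under star.

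The main obstacle I anticipate is not the bookkeeping of the inductions, which is entirely parallel to the Boolean case, but the clean statement and proof of the starability lemma together with the verification that it matches the syntactic notion of validity exactly, i.e. that the inductively computed constant term of an expression is defined precisely when every starred subexpression denotes a starable series. Establishing this correspondence, in conjunction with the already-recorded identity $\TermCst{\Ed}=\bra{\CompExpr{\Ed},\unM}$, is what makes both directions go through; everything else reduces to the closure properties of $\KRat M$ under the rational operations.
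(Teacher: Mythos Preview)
The paper does not give a proof of this proposition; it is stated as the weighted analogue of Propositions~\ref{:pro:rat-exp-fm} and~\ref{:pro:rat-exp-mon}, and the surrounding text already records the key facts you rely on (that~$M$ graded makes proper series starable, and that $\TermCst{\Ed}=\bra{\CompExpr{\Ed},\unM}$ for valid~$\Ed$). Your two inductions are exactly the natural argument, and you have correctly isolated the one nontrivial point: the equivalence between starability of~$s$ in~$\KM$ and starability of its constant term in~$\K$.

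One remark on that lemma. The direction ``$s^{*}$ defined $\Rightarrow$ $\bra{s,\unM}^{*}$ defined'' is immediate from gradedness, as you say. For the converse, your sentence ``one checks that $s^{*}$, \emph{when defined}, equals $c^{*}(s_{\mathsf p}c^{*})^{*}$'' does not by itself show that $s^{*}$ \emph{is} defined; you still need to argue that the family $\{s^{n}\}_{n}$ is summable, not merely that the closed form exists. In the framework of the chapter this is handled by the ambient assumption that~$\K$ is a topological semiring in which such manipulations are licit (and the paper itself relies on the same fact when it asserts $\CompAuto{(\Ac^{*})}=\CompAuto{\Ac}^{*}$ in~\equnm{wei-sta-aut-sta}), so you are on safe ground; but when you write it up, make explicit that the right-hand side is defined and \emph{then} verify it satisfies $t=\unK+s\,t$, invoking the uniqueness in Lemma~\ref{:lem:Ard-wei} to identify it with~$s^{*}$, rather than presupposing~$s^{*}$ exists.
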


\ifcheat\smallskipneg\fi%
In this framework, we reformulate \lemme{Ard-wei} as:

\ifcheat\smallskipneg\fi%
\begin{corollary}
\label{:cor:Ard-wei}%
Let~$\Ud$ and~$\Vd$ be two expressions in~$\KREM$.
If\/~$(\TermCst{\Ud})^{*}$ is defined,\linebreak 
then~$\Ud^{*}\xmd\Vd$ denotes the unique 
solution of the equation 
$\msp\mathrm{X}=\CompExpr{\Ud}\xmd\mathrm{X}+\CompExpr{\Vd}  \msp$.
\end{corollary}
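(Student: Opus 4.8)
The plan is to read the statement about expressions as the corresponding statement about the series they denote, and then to reduce it to \lemme{Ard-wei}. Write $\msp s = \CompExpr{\Ud}\msp$ and $\msp t = \CompExpr{\Vd}\msp$; these are series of~$\KM$ since $\Ud$ and $\Vd$ are valid expressions. The equation in the statement is then nothing but $\msp\mathrm{X} = s\xmd\mathrm{X} + t\msp$, and its candidate solution $\Ud^{*}\xmd\Vd$ denotes $\msp\CompExpr{\Ud}^{*}\xmd\CompExpr{\Vd} = s^{*}\xmd t\msp$, using the inductive definition of the series denoted by an expression (the product and star clauses). So everything comes down to two things: that $s^{*}$ is well-defined, and that \lemme{Ard-wei} then applies verbatim.

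First I would check that the hypothesis makes the expression $\Ud^{*}$ --- and hence $\Ud^{*}\xmd\Vd$ --- valid, so that $\msp\CompExpr{\Ud^{*}\xmd\Vd}\msp$ is meaningful. By the defining equation of the constant term of a starred expression, $\TermCst{\Ud^{*}} = (\TermCst{\Ud})^{*}$ is defined \emph{exactly} under the stated assumption; validity of $\Ud^{*}\xmd\Vd$ then follows from validity of~$\Vd$. Moreover $\msp\CompExpr{\Ud^{*}} = \CompExpr{\Ud}^{*} = s^{*}\msp$ as soon as the left-hand side is meaningful, so the scalar $\TermCst{\Ud}$ is precisely the constant term $\msp\bra{s,\unM}\msp$ of the series~$s$.

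The main obstacle is to pass from the \emph{scalar} hypothesis ``$(\bra{s,\unM})^{*}$ is defined in~$\K$'' to the \emph{series-level} assertion ``$s^{*}$ is defined in~$\KM$''. Here I would use the grading of~$M$. Decompose $\msp s = \bra{s,\unM}\,\unM + \spr\msp$, where $\spr$ is the proper part of~$s$. As recalled just before \lemme{Ard-wei}, the star of a proper series is \emph{always} defined, since $M$ being graded makes the family $\{\spr^{\,n}\}_{n\in\N}$ locally finite. Combining this with the summability of $\{(\bra{s,\unM})^{n}\}_{n\in\N}$ and the continuity of the semiring operations of~$\KM$, one obtains that $\{s^{n}\}_{n\in\N}$ is summable, \ie that $s^{*}$ is defined. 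This is the foundational computation relating the star of a series over a graded monoid to the star of its constant term, and it is the one genuinely non-formal point of the proof; I would either carry out the rearrangement explicitly or invoke the corresponding development in \CTchp{DK} and \CTchp{ZE}.

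Once $s^{*}$ is known to be defined, the conclusion is immediate: \lemme{Ard-wei}, applied to the series~$s$ and~$t$, asserts that $\msp s^{*}\xmd t\msp$ is the unique solution of $\msp\mathrm{X} = s\xmd\mathrm{X} + t\msp$. Translating back through $\msp s^{*}\xmd t = \CompExpr{\Ud}^{*}\xmd\CompExpr{\Vd} = \CompExpr{\Ud^{*}\xmd\Vd}\msp$ together with $\msp s = \CompExpr{\Ud}\msp$ and $\msp t = \CompExpr{\Vd}\msp$ yields exactly the claim that $\Ud^{*}\xmd\Vd$ denotes the unique solution of $\msp\mathrm{X} = \CompExpr{\Ud}\xmd\mathrm{X} + \CompExpr{\Vd}\msp$, which is the statement of the corollary.
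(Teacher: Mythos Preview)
Your proposal is correct and follows exactly the route the paper intends: the corollary is introduced there merely as a ``reformulation'' of \lemme{Ard-wei} in the language of expressions, with no separate proof given. You have in fact been more careful than the paper by making explicit the one non-formal step --- that the scalar hypothesis $(\TermCst{\Ud})^{*}$ defined in~$\K$ implies $\CompExpr{\Ud}^{*}$ defined in~$\KM$ --- and your argument via the decomposition $s = \bra{s,\unM}\,\unM + \spr$ together with the gradedness of~$M$ is the standard way to bridge that gap.
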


\ifcheat\smallskipneg\fi%
\ifcheat\smallskipneg\fi%
\subsubsection{Weighted automata and the fundamental theorem}
\label{:sec:wei-aut-fta}

An automaton~$\Ac$ over~$M$ with weight 
in~$\K$, a \emph{$\K$-automaton} for short, still written 
$\msp\Ac=\aut{Q,M,E,I,T}\msp$,
is an automaton where the \emph{sets} of initial and final states are 
replaced with \emph{maps} from~$Q$ to~$\K$, that is, every state has 
an initial and a final weight, and where the set~$E$ of transitions 
is contained in~$Q\x \K\x (M\bk\unM)\x Q$, that is, every transition is 
labelled with
a \emph{monomial} 
in~$\KPM$, different from a constant term.
\index{automaton!weighted --}%
The automaton~$\Ac$ is \emph{finite} if~$E$ is finite.

Alternatively, the same automaton is (more often) written
$\msp\Ac=\autiet\msp$, with the convention taken at \secti{rat-rec}: 
$E$ is the \emph{transition matrix} of~$\Ac$, a $Q\x Q$-matrix whose 
$(p,q)$-entry is the sum of the labels of all transitions from~$p$ 
to~$q$, and~$I$ and~$T$ are vectors in~$\K^{Q}$.
In this setting, $\Ac$ is finite if every entry of~$E$ is a 
polynomial of~$\KPM$.

The \emph{label} of a computation in~$\Ac$ is, as above, the product 
of the labels of the transitions that form the computation, multiplied 
(on the left) by the initial weight of the origin and (on the right) 
by the final weight of the end of the computation.
With the definition we have taken for automata (no transition 
labelled with a constant term), and because~$M$ is graded, the family 
of labels of all transitions of~$\Ac$ is \emph{summable} and
the \emph{series accepted} by~$\Ac$, also called \emph{behaviour} 
\index{automaton!series accepted by --}%
\index{automaton!behaviour of --}%
of~$\Ac$ and written~$\CompAuto{\Ac}\xmd$, is its sum.
\index{Fundamental theorem!of finite automata}%
The fundamental theorem of automata then reads:

\ifcheat\smallskipneg\fi%
\begin{theorem}
    \label{:the:fta-wgt-mon}%
Let~$M$ be a graded monoid.
A series of~\/$\KM$ is rational if and only if it is 
the behaviour of a finite $\K$-automaton over~$M$.
\end{theorem}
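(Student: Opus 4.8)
The plan is to prove the two implications separately, as the weighted analogues of \propo{phi-map-fm} (the $\AtEs$-maps) and \propo{psi-map-fm} (the $\EtAs$-maps), reusing the conversion algorithms of \secti{aut-exp} and the standard-automaton construction of \secti{sta-aut-exp}, all of which are written over an arbitrary semiring and an arbitrary monoid. The only two places where the weighted setting genuinely differs from the Boolean one are the \emph{good definition} of the various stars that occur; I therefore single those out and treat the rest as formally identical to the earlier cases (as was already done to pass from \theor{fun-aut-fm} to \theor{fun-the-mon}).

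For the implication \emph{behaviour $\Rightarrow$ rational}, let $\msp\Ac=\autiet\msp$ be a finite $\K$-automaton over~$M$ with transition matrix~$E$, so that $\msp\CompAuto{\Ac}=I\matmul E^{*}\matmul T\msp$ as in~\equnm{com-aut-mat}. First I would check that $E^{*}$ is defined: since every transition carries a monomial distinct from a constant term, each entry of~$E$ is a proper polynomial, supported on elements of positive length because~$M$ is graded; hence an entry of~$E^{n}$, being a sum of labels of paths of $n$~transitions, is supported on elements of length~$\geq n$, the family $\{E^{n}\}_{n\in\N}$ is locally finite, and $\msp E^{*}=\sum_{n}E^{n}\msp$ exists exactly as for a proper series of~$\KM$ (\cf \secti{rat-ser}). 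Then I would run the recursive method of \secti{rec-met}: block-decompose $\msp E=\matricedd{F}{G}{H}{K}\msp$ with $F,K$ square and solve the system~\equnm{con1}--\equnm{con2} using \corol{Ard-wei} in place of the Boolean \corol{Ard}. The induction on the dimension goes through because properness is preserved by the block operations, the matrices $F$, $K$, $F+GK^{*}H$ and $K+HF^{*}G$ being all proper, so that every star appearing is the star of a proper matrix and is therefore defined. Consequently the entries of~$E^{*}$ lie in the rational closure of those of~$E$, and $\CompAuto{\Ac}$, a $\K$-linear combination of them, is rational.

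For the converse, \emph{rational $\Rightarrow$ behaviour}, I would invoke \propo{rat-ser-1} to write the given series as $\CompExpr{\Ed}$ for some \emph{valid} weighted expression~$\Ed$, and build the standard automaton $\Stan{\Ed}$ by induction on $\Dpth{\Ed}$ through~\equnm{sta-aut-sum}--\equnm{sta-aut-sta}, just as in \secti{sta-aut-exp}, the two extra base expressions $k\xmd\Ed$ and $\Ed\xmd k$ being obtained by scaling respectively the initial and the final weights. One then checks inductively that $\msp\CompAuto{\Stan{\Ed}}=\CompExpr{\Ed}\msp$, so that $\Ed\mapsto\Stan{\Ed}$ is a genuine weighted $\EtAs$-map and produces a finite $\K$-automaton of the required behaviour.

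The main obstacle, and the true weighted content of the statement, is the star in this last construction: forming $\Ac^{*}$ via~\equnm{sta-aut-sta} is legitimate only when $\CompAuto{\Ac}$ is starable. The key lemma I would isolate is that, over a graded monoid, a series~$s$ is starable if and only if its constant term is: writing $\msp s=\TermCst{s}\,\unM+\PartProp{s}\msp$ with $\PartProp{s}$ proper, one has $\msp s^{*}=\big((\TermCst{s})^{*}\PartProp{s}\big)^{*}(\TermCst{s})^{*}\msp$, where the inner star exists because $\msp(\TermCst{s})^{*}\PartProp{s}\msp$ is again proper, and the outer factor $(\TermCst{s})^{*}$ is supplied by hypothesis. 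Thus the validity of~$\Ed$, i.e.\ that $(\TermCst{\Fd})^{*}$ be defined for every subexpression $\Fd^{*}$ of~$\Ed$, is precisely what makes every star in the inductive construction meaningful; verifying that this condition propagates correctly through the three operations is where the careful bookkeeping lies, and it is the hinge on which the whole argument turns.
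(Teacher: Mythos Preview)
Your overall strategy is the paper's own: one direction via \propo{sta-mat-wgt} (you choose the recursive method, the paper develops the system-solution method, but it explicitly says all four work), the other via the standard automaton and \propo{wei-sta-aut-exp}. The check that $E^{*}$ is defined because $E$ is proper, and that properness survives the block decomposition, is exactly right.

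There is one genuine slip in the $\EtAs$ direction. You propose to build $\Stan{\Ed}$ using the Boolean formula~\equnm{sta-aut-sta} and then argue that the only issue is \emph{whether} the star is defined. But~\equnm{sta-aut-sta} is not merely conditionally defined in the weighted case: it is \emph{wrong} whenever the constant term~$c$ of~$\CompAuto{\Ac}$ is nonzero. With $H=U\matmul J+F$ and final vector $\vecteurd{1}{U}$, that automaton computes $(\PartProp{s})^{*}$, the star of the proper part only, not $s^{*}$. Your own key lemma $s^{*}=\bigl(c^{*}\PartProp{s}\bigr)^{*}c^{*}$ is precisely what fixes this, but it does so by \emph{changing the construction}, not by certifying it: one must replace~\equnm{sta-aut-sta} by the modified form the paper gives as~\equnm{wei-sta-aut-sta}, namely first row $c^{*}J$, feedback $H=U\matmul c^{*}J+F$, and final vector $\vecteurd{c^{*}}{U\xmd c^{*}}$. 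So the lemma is not a side condition about convergence; it is the derivation of the correct weighted star operator on standard automata. (A smaller point in the same spirit: $k\xmd\Ac$ is not obtained by scaling the initial weight---that would break standardness---but by scaling the row~$J$ and the scalar~$c$, as the paper's explicit formulas show.)
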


\ifcheat\medskipneg\fi
\ifcheat\smallskipneg\fi%
\subsubsection{Recognisable series}
\label{:sec:rec-ser}%

The distinction between \emph{rational} and \emph{recognisable} 
carries over from subsets of a monoid~$M$ to series over~$M$.
The equivalence between automata over free monoids and matrix 
representation (\cf \sbsct{sec-ste}) paves the way to the 
definition of recognisability.

A \emph{$\K$-representation} of~$M$ of dimension~$Q$ is a 
\index{representation}%
triple~$\lmn$ 
where~$\mu\colon M\rightarrow\K^{Q\x Q}$ is a morphism,
and~$\lambda$ and~$\nu$ are two vectors of~$\K^{Q}$.
The representation~$\lmn$ realises the series
$\msp s= \sum_{m\in M} (\lambda\cdot\mu(m)\cdot\nu)\xmd m \msp$;
a series in~$\KM$ is \emph{recognisable} if it is realised by a 
\index{series!recognisable --}%
representation and the set of recognisable series is denoted 
by~$\KRec M$.
The family of rational and of recognisable series are distinct in 
general.
A proof which is very similar to the one given 
at~\sbsct{sec-ste}, and which is \emph{independent from~$\K$}, yields the 
following. 

\ifcheat\smallskipneg\fi%
\begin{theorem}[Kleene--Sch{\"u}tzenberger]
    \label{:the:kle-sch-fm}%
    If~$A$ is finite, then
\index{Kleene's theorem}%
    $\msp\KRat\Ae=\KRec\Ae\msp$.
\end{theorem}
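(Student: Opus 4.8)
The plan is to transcribe the proof of \theor{kle-fm} almost \emph{verbatim}, replacing Boolean matrices by matrices over~$\K$ and the formula $\msp\CompAuto{\Ac}=I\matmul E^{*}\matmul T\msp$ by its weighted reading; as announced the argument is independent of~$\K$ and rests entirely on the \emph{freeness} of~$\Ae$. The single ingredient that is not purely formal is the weighted analogue of \lemme{mat-mul-rep}, which I would establish first: given a morphism $\msp\mu\colon\Ae\rightarrow\K^{Q\x Q}\msp$ and $\msp E=\sum_{a\in A}\mu(a)\xmd a\msp$ (a matrix with entries in~$\KPM$, \emph{finite} because $A$ is finite), freeness of~$\Ae$ yields $\msp\mu(w)=\mu(a_{1})\cdots\mu(a_{n})\msp$ for $\msp w=a_{1}\cdots a_{n}\msp$, hence $\msp E^{n}=\sum_{w\in A^{n}}\mu(w)\xmd w\msp$ for every~$n$. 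Since every transition is labelled by a letter, $E$ is \emph{proper}, so --- $\Ae$ being graded --- the family $\{E^{n}\}_{n\in\N}$ is locally finite and $\msp E^{*}=\sum_{w\in\Ae}\mu(w)\xmd w\msp$ is well defined.

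For the inclusion $\msp\KRat\Ae\subseteq\KRec\Ae\msp$, I would argue as in the first part of the proof of \theor{kle-fm}. By \theor{fta-wgt-mon} a rational series~$s$ is the behaviour $\msp\CompAuto{\Ac}=I\matmul E^{*}\matmul T\msp$ of a finite $\K$-automaton $\msp\Ac=\autiet\msp$ over~$\Ae$. Writing the transition matrix as $\msp E=\sum_{a\in A}\mu(a)\xmd a\msp$ defines a morphism~$\mu$, and the identity above gives $\msp s=\sum_{w\in\Ae}(I\matmul\mu(w)\matmul T)\xmd w\msp$. Thus the representation $\msp(I,\mu,T)\msp$ realises~$s$, so~$s$ is recognisable.

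For the converse $\msp\KRec\Ae\subseteq\KRat\Ae\msp$, I would mirror the second part of that proof. Let~$s$ be realised by a representation~$\lmn$ of dimension~$Q$, so that~$\mu$ is a morphism and $\msp s=\sum_{w\in\Ae}(\lambda\matmul\mu(w)\matmul\nu)\xmd w\msp$. Put $\msp E=\sum_{a\in A}\mu(a)\xmd a\msp$ and let~$\Ac$ be the $\K$-automaton with transition matrix~$E$, initial vector~$\lambda$ and final vector~$\nu$; it is \emph{finite} precisely because~$A$ is finite, and proper. Its behaviour is defined and, by the same identity, $\msp\CompAuto{\Ac}=\lambda\matmul E^{*}\matmul\nu=s\msp$, so~$s$ is rational by \theor{fta-wgt-mon}.

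The crux --- the only place demanding care beyond the Boolean case --- is the verification that the proper matrix~$E$ over~$\KM$ is starable and that $\msp E^{*}=\sum_{w\in\Ae}\mu(w)\xmd w\msp$. This is the weighted matrix version of the remark that the star of a proper series over a graded monoid is always defined, carried out in $\msp\KMQQ\cong\KQQM\msp$; all convergence questions are confined to this point, and it is also exactly where the freeness and the grading of~$\Ae$ enter, foreshadowing why the statement fails for non-free monoids.
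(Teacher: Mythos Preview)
Your proposal is correct and follows exactly the line the paper indicates: the paper does not spell out a proof here but merely says that ``a proof which is very similar to the one given at~\sbsct{sec-ste}, and which is \emph{independent from~$\K$}'' establishes the result, and what you have written is precisely that transcription --- the weighted analogue of \lemme{mat-mul-rep}, the passage automaton~$\to$~representation for one inclusion, and representation~$\to$~automaton for the other, all resting on the freeness of~$\Ae$. One very small technical point: the finite $\K$-automaton produced by \theor{fta-wgt-mon} has transitions labelled by monomials $k\xmd w$ with~$w\in A^{+}$, not necessarily letters, so strictly speaking you should first normalise it to an equivalent automaton with letter labels before writing $E=\sum_{a\in A}\mu(a)\xmd a$; this is routine and the paper glosses over it as well.
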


\ifcheat\smallskipneg\fi%
\ifcheat\smallskipneg\fi%
\ifcheat\smallskipneg\fi%
\subsection{From automata to expressions: the $\AtEs$-maps}
\label{:sec:aut-exp-mul}%

With the definition taken for a $\K$-automaton
$\msp\Ac=\autiet\msp$,
every entry of~$E$ is a \emph{proper polynomial} of~$\KPM$,
$E$ is in~$\KPM^{Q\x Q}$, 
hence a proper polynomial of~$\PolSAnMon{\K^{Q\x Q}}{M}$,
and~$E^{*}$ is well-defined.
\lemme{mat-mul-gra} generalises to $\K$-automata 
and
$\msp\CompAuto{\Ac} = I\matmul E^{*}\matmul T\msp$
holds.

In every respect, the weighted case is similar to the Boolean one.
The direct part of \theor{fta-wgt-mon} follows from the generalised 
statement of \propo{sta-mat}:

\ifcheat\smallskipneg\fi%
\begin{proposition}
    \label{:pro:sta-mat-wgt}
    The entries of the star of a proper matrix~$E$ of~$\KM^{Q\x Q}$ 
    belong to the rational closure of the entries of~$E$.
\end{proposition}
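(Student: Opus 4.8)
The plan is to transport to the weighted setting the recursive block-decomposition method of \secti{rec-met} (Conway's proof of \propo{sta-mat}); among the four $\AtEs$-methods it is the one whose correctness rests on a purely algebraic manipulation, and hence the most economical for an abstract membership statement over an arbitrary graded monoid. I would argue by induction on the dimension $\jsCard{Q}$. Since $E$ is proper and $M$ is graded, $E^{*}$ is well-defined (as recalled just before the statement); the base case $\jsCard{Q}=1$ is immediate, for then $E=(s)$ with $s$ proper, $E^{*}=(s^{*})$, and $s^{*}$ lies in the rational closure of $\{s\}$ by definition of the star.

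For the inductive step I would write $Q$ as a disjoint union of two nonempty sets $R$ and $S$ and decompose
\begin{equation}
E= \matricedd{F}{G}{H}{K} \EqVrgInt \ee
E^{*}= \matricedd{U}{V}{W}{Z}
\eqvrg
\notag
\end{equation}
with $F$ and $K$ square (of dimensions $R$ and $S$). Viewing these block matrices as series with coefficients in the matrix semirings $\K^{R\x R}$, $\K^{S\x S}$, etc. (the identification $\KMQQ\cong\KQQM$), the identity $E^{*}=1+E\matmul E^{*}$ splits into exactly the four equations \equnm{con1}--\equnm{con2} of the Boolean case, which I would solve by two successive applications of the weighted Arden's lemma \corol{Ard-wei} (in its matrix form): first $W=K^{*}\matmul H\matmul U$ and $V=F^{*}\matmul G\matmul Z$, then, after substitution,
\begin{equation}
U=(F+G\matmul K^{*}\matmul H)^{*} \EqVrgInt \ee
Z=(K+H\matmul F^{*}\matmul G)^{*}
\eqvrg
\notag
\end{equation}

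The crux --- and the one place where the weighted case genuinely differs from the Boolean one, where every star exists automatically --- is the \emph{properness bookkeeping} needed to guarantee that each star invoked is defined and that Arden's lemma is applicable (its coefficient must have a starable constant term). Because $E$ is proper, the diagonal blocks $F$ and $K$ are proper, so $\TermCst{F}=\TermCst{K}=0$, whose star is $1$; thus $K^{*}$ and $F^{*}$ are defined (by induction) and Arden applies to recover $W$ and $V$. The off-diagonal blocks $G$ and $H$ are proper as well, while $K^{*}$ and $F^{*}$ have identity constant term, so $\TermCst{G\matmul K^{*}\matmul H}=\TermCst{G}\matmul\TermCst{K^{*}}\matmul\TermCst{H}=0$; hence $F+G\matmul K^{*}\matmul H$ and, symmetrically, $K+H\matmul F^{*}\matmul G$ are again proper, their stars are defined, and the induction hypothesis applies to these two matrices of dimension $<\jsCard{Q}$.

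Finally I would close the induction using that $\KRat$ is a closure operator (monotone and idempotent). By the induction hypothesis applied to $K$ and to $F$, the entries of $K^{*}$ and $F^{*}$ lie in the rational closure of the entries of $E$; hence so does every entry of $F+G\matmul K^{*}\matmul H$, and a further use of the induction hypothesis places all entries of $U$ --- and symmetrically of $Z$, $V$ and $W$ --- in that same rational closure. This proves \propo{sta-mat-wgt} and, through $\msp\CompAuto{\Ac}=I\matmul E^{*}\matmul T\msp$, the direct half of \theor{fta-wgt-mon}.
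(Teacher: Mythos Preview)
Your proof is correct. The paper does not give a self-contained proof of this proposition; it states that all four $\AtEs$-methods of \secti{aut-exp} carry over to the weighted setting and then develops the \emph{system-solution} method in detail (the paragraph following \propo{phi-map-wgt}), whereas you chose the \emph{recursive} (Conway block-decomposition) method of \secti{rec-met}, which the paper lists but does not spell out.

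Your remark that the recursive method is the most economical for a pure membership statement is well taken: it needs no ordering on~$Q$, and the induction on~$\jsCard{Q}$ matches the structure of the rational-closure claim directly. The paper's preference for the system-solution method is driven by a different aim --- producing a concrete expression $\srm{\Ac}$ whose dependence on the ordering~$\omega$ can then be analysed (\propo{wei-eli-equ}, \theor{wei-ord-eli-met}). The properness bookkeeping you carry out (checking that $F$, $K$, and then $F+G\matmul K^{*}\matmul H$ and $K+H\matmul F^{*}\matmul G$ are proper, so that each star and each invocation of Arden's lemma is legitimate) is precisely the ``correctness in this new and more complex framework'' that the paper says must be verified but leaves to the reader for the three methods it does not develop. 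One small point of hygiene: you cite \corol{Ard-wei}, which is the expression-level version; what you actually use is \lemme{Ard-wei} at the series level, in its matrix-module form --- the paper itself invokes that generalisation (``\lemme{Ard-wei} easily generalises'') in the system-solution paragraph, so this is harmless.
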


\ifcheat\smallskipneg\fi%
The same algorithms as those presented at \secti{aut-exp}: the 
state-elimination and system-solution methods, 
the McNaughton--Yamada and 
recursive algorithms,
\index{method!state-elimination}%
\index{method!system-solution}%
\index{McNaughton--Yamada algorithm}%
\index{method!recursive}%
establish the\linebreak 
weighted version of \propo{phi-map-fm}:

\begin{proposition}
    \label{:pro:phi-map-wgt}
Let~$M$ be a graded monoid.
For every finite $\K$-automaton~$\Ac$ over~$M$, there exist rational 
expressions over~$M$ which denote~$\CompAuto{\Ac}$.
\end{proposition}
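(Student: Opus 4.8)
The plan is to reduce the statement to \propo{sta-mat-wgt} and then assemble the answer as a weighted linear combination of the entries of $E^{*}$. As recalled at the opening of \secti{aut-exp-mul}, the generalisation of \lemme{mat-mul-gra} to $\K$-automata gives $\msp\CompAuto{\Ac} = I\matmul E^{*}\matmul T\msp$, where $E$ is the transition matrix of $\Ac$. Since every transition of a $\K$-automaton is labelled by a proper monomial, $E$ is a proper matrix of $\KPM^{Q\x Q}$, so $E^{*}$ is well defined. By \propo{sta-mat-wgt}, each entry $(E^{*})_{p,q}$ lies in the rational closure of the entries of $E$; as each entry of $E$ is a polynomial of $\KPM$, hence itself denoted by a weighted rational expression, each $(E^{*})_{p,q}$ is denoted by some expression $\Md_{p,q}$ in $\KREM$. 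Writing $I$ and $T$ as vectors of $\K^{Q}$, the behaviour is the finite sum $\msp\CompAuto{\Ac} = \sum_{p,q\in Q} I_{p}\,(E^{*})_{p,q}\,T_{q}\msp$, which, using the two exterior multiplications available for weighted expressions, is denoted by $\msp\sum_{p,q\in Q} I_{p}\,\Md_{p,q}\,T_{q}\msp$; this is the desired $\AtEs$-map.

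It remains to obtain \propo{sta-mat-wgt} itself, for which I would run any of the four algorithms of \secti{aut-exp} verbatim, say the state-elimination method. Starting from $\Ac$, one augments $Q$ with two new states $i$ and $t$ and suppresses the states of $Q$ one at a time; eliminating a state $q$ carrying a self-loop labelled $\Ld$ replaces each pair of adjacent transitions $\Kd_{i}$ (into $q$) and $\Hd_{j}$ (out of $q$) by $\msp\Gd + \Kd_{i}\xmd\Ld^{*}\xmd\Hd_{j}\msp$, exactly as in \figur{sta-eli-ste}. The only point where the Boolean proof uses a property that is not automatic over $\K$ is the appeal to Arden's lemma; here it is replaced by its weighted form \corol{Ard-wei}, whose hypothesis is that $(\TermCst{\Ld})^{*}$ be defined.

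The main thing to verify --- and the place where the gradedness of $M$ and the properness of the labels are essential --- is that properness is an \emph{invariant} of the elimination process, so that the required stars exist and \corol{Ard-wei} always applies with a \emph{unique} solution. Indeed all initial labels satisfy $\TermCst{\Ed_{p,q}}=\zeK$; and if $\Kd$, $\Ld$, $\Hd$ are proper, then $\Ld^{*}$ is defined (its constant term being $\unK$) and $\msp\TermCst{\Gd + \Kd\xmd\Ld^{*}\xmd\Hd} = \TermCst{\Gd} + \TermCst{\Kd}\,(\TermCst{\Ld})^{*}\,\TermCst{\Hd} = \zeK\msp$, so the new self-loops produced at later steps are again proper and their stars are again defined by \lemme{Ard-wei}. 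Thus at every step the pivoting equation $\mathrm{X}=\CompExpr{\Ld}\xmd\mathrm{X}+\CompExpr{\Hd}$ has $\Ld^{*}\Hd$ as its unique solution, the computation goes through as in the Boolean case, and all labels stay in $\KREM$. Reading off the final entries yields \propo{sta-mat-wgt}, and with it the proposition. The recursive block-decomposition variant would run identically, the same properness bookkeeping guaranteeing that $\CompExpr{\Fd}^{*}$ and $\CompExpr{\Kd}^{*}$ exist at each level of the induction on the dimension.
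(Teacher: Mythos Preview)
Your proposal is correct and follows essentially the same line as the paper: reduce to \propo{sta-mat-wgt} via $\CompAuto{\Ac}=I\matmul E^{*}\matmul T$, then run one of the four algorithms of \secti{aut-exp}, with \corol{Ard-wei} replacing Arden's lemma. The paper chooses the system-solution method as its worked example while you chose state-elimination, but since \propo{wei-eli-equ} identifies these and the paper explicitly says ``the other ones could be treated in the same way'', this is not a real difference; if anything, you are more explicit than the paper about the properness invariant that makes \corol{Ard-wei} applicable at every step.
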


\ifcheat\smallskipneg\fi%
If the algorithms are the same, one has to establish nevertheless 
their correctness in this new and more complex framework.
We develop the case of the system-solution method, the other ones 
could be treated in the same way.
To begin with, 
we have to 
enrich the set of \emph{trivial identities} in order to set up the 
definition of \emph{reduced weighted expressions}, which in turn is 
\index{expression!reduced --}%
necessary to define computations on expressions.
The set~$\Tmbf$ as defined at \secti{Ard-rat-ide} is now denoted 
as~$\Tmbf_{\mathsf{u}}$:
\begin{equation}
\Ed \autplus \zed \equiv \Ed \EqVrgInt \zed \autplus \Ed \equiv \Ed \EqVrgInt
\Ed \autprod \zed \equiv \zed \EqVrgInt \zed \autprod \Ed \equiv \zed 
\EqVrgInt
\Ed \autprod \und \equiv \Ed \EqVrgInt \und \autprod \Ed \equiv \Ed  \EqVrgInt
\autstar \zed \equiv \und
\hspace{.4em}
\tag{$\Tmbf_{\mathsf{u}}$}	
\end{equation}
and augmented with three other sets of identities:
\begin{gather}
{\zeK}\xmd\Ed  \equiv \zed   \EqVrgInt \Ed\xmd{\zeK} \equiv \zed \EqVrgInt
{k}\xmd\zed \equiv \zed \EqVrgInt\zed\xmd{k} \equiv \zed \EqVrgInt
{\unK}\xmd\Ed  \equiv \Ed \EqVrgInt \Ed\xmd{\unK}  \equiv  \Ed
\tag{$\Tmbf_{\K}$}
\\[.7ex]
{k}\xmd({h}\xmd\Ed)  \equiv {kh}\xmd\Ed \EqVrgInt
(\Ed\xmd{k})\xmd{h}  \equiv \Ed\xmd{kh}  \EqVrgInt
({k}\xmd\Ed)\xmd{h} \equiv {k}\xmd(\Ed\xmd{h})
\tag{$\Ambf_{\K}$}
\\[.7ex]
  \und\xmd{k}  \equiv  {k}\xmd\und \EqVrgInt
  \Ed\autprod({k}\xmd\und)   \equiv  \Ed\xmd{k} \EqVrgInt
  ({k}\xmd\und)\autprod\Ed   \equiv  {k}\xmd\Ed
\tag{$\Umbf_{\K}$}
\end{gather}
From now on, all computations on weighted expressions are performed 
modulo the \emph{trivial identities}
\index{identities!trivial --}%
$\msp
\Tmbf= \Tmbf_{\mathsf{u}}\land\Tmbf_{\K}\land\Ambf_{\K}\land\Umbf_{\K}
\msp$.
Besides the trivial identities, the 
\emph{natural identities}
\index{identities!natural --}%
$\msp
\Nmbf= \Ambf\land\Dmbf\land\Cmbf
\msp$
hold on the expressions of~$\KREM$ for any~$\K$ and 
(graded)~$M$, 
and, in contrast, the identities~$\Imbf$ 
and~$\Jmbf$ that are special to~$\jsPart{M}$ do not hold anymore.

\ifcheat\smallskipneg\fi%
\paragraph{The system-solution method}
starts from a proper automaton
\index{method!system-solution}%
$\msp\Ac=\autiet\msp$
of dimension~$Q$ whose behaviour is
$\msp\CompAuto{\Ac}=I\matmul V\msp$
where $\msp V= E^{*}\matmul T\msp$
is a vector in~$\KM^{Q}$.
\lemme{Ard-wei} easily generalises and as~$E$ is proper (in~$\KQQM$),
\index{Arden's lemma}%
$V$~is the unique solution of the equation
$\msp \mathrm{X} = E \xmd \mathrm{X} + T\msp$which we rewrite as a 
system of 
$\jsCard{Q}$ equations:
\begin{equation}
\textstyle{\fa p\in Q\quantsp
V_{p} = \sum_{q\in Q}\CompExpr{\Ed_{p,q}}\xmd V_{q} + \CompExpr{T_{p}\xmd\und} 
\eee
}
\label{:equ:wei-res-1}%
\end{equation}
where the~$V_{p}$ are the `unknowns',
where the entries~$E_{p,q}$, which 
are linear combinations of elements of~$M$, are considered as 
expressions and denoted as such
and where~$\CompExpr{T_{p}\xmd\und}$ is the series reduced to the 
monomial~$T_{p}\xmd\unM$.
The system~\equnm{wei-res-1} may be solved by successive 
\emph{elimination} of the unknowns, by means of \corol{Ard-wei}.
When all unknowns~$V_{q}$ have been eliminated following an 
order~$\omega$ on~$Q$, the computation yields an 
expression that we denote by~$\srm{\Ac}$, 
as in \sbsct{sys-res-met}, 
and 
$\msp\CompAuto{\Ac}=\CompExpr{\srm{\Ac}}\msp$
holds.

The parallel with the Boolean case can be carried on:
given a $\K$-automaton~$\Ac$ of dimension~$Q$, an ordering~$\omega$, and 
a recursive division~$\tau$ 
on~$Q$, the expressions~$\sem{\Ac}$, $\mny{\Ac}$, and~$\rcm{\Ac}$ 
that all denote~$\CompAuto{\Ac}\xmd$ are 
computed by the state-elimination method, the McNaughton--Yamada and 
recursive algorithms respectively.
The results on the comparison between these expressions also extend 
to the weighted case.

\ifcheat\smallskipneg\fi%
\begin{proposition}
    \label{:pro:wei-eli-equ}%
For every order~$\omega$ on~$Q$,
$\msp\sem{\Ac} = \srm{\Ac}\msp$ holds.
\end{proposition}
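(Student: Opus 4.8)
The plan is to reproduce, in the weighted setting, the argument that established \propo{eli-equ} in the Boolean case, the only genuinely new ingredients being the initial and final \emph{weights} (the vectors $I$ and $T$ in~$\K^{Q}$ that replace the Boolean sets) and the use of \corol{Ard-wei} in place of Arden's lemma. The guiding idea is that the two procedures are in fact one and the same computation, phrased once on an automaton and once on a linear system.

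First I would attach to the system~\equnm{wei-res-1} a generalised $\K$-automaton~$\Bc$ whose states are those of~$Q$ that remain, together with two fresh states~$i$ and~$t$: the transition from~$p$ to~$q$ is labelled~$\Ed_{p,q}$, the transition from~$p$ to~$t$ carries the final weight~$T_{p}\xmd\und$, the transition from~$i$ to~$p$ carries the initial weight~$I_{p}\xmd\und$, and there is a transition from~$i$ to~$t$ carrying the standalone term~$H$ (empty at the outset). Applied to the initial system, this construction yields exactly the automaton produced by the first phase of the state-elimination method (augmenting~$\Ac$ with~$i$ and~$t$ and the weight-bearing spontaneous transitions); here one checks that $\CompAuto{\Ac} = I\cdot V$ is faithfully encoded, with the initial weights sitting on the transitions issued from~$i$.

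Next I would show that a single elimination step coincides on both sides. Choosing an unknown~$V_{q}$ and applying \corol{Ard-wei} to the corresponding equation of~\equnm{wei-res-1}, then substituting, transforms the system into one of the very same shape whose coefficients are precisely the labels of the generalised automaton obtained from~$\Bc$ by the state-removal step of \figur{sta-eli-ste}. At this point I must check that \corol{Ard-wei} is applicable at every step: since~$\Ac$ is proper, each diagonal entry~$\Ed_{q,q}$ is a proper series, so $\TermCst{\Ed_{q,q}} = \zeK$ and $(\TermCst{\Ed_{q,q}})^{*} = \unK$ is defined; moreover propriety is preserved under the substitution, since a product one of whose factors is proper is again proper and sums of proper series are proper. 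Thus the hypothesis of \corol{Ard-wei} holds throughout the elimination in the order~$\omega$.

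Finally I would conclude by induction on the number of eliminated unknowns: the two procedures start from corresponding data and each step preserves the correspondence between the system and the generalised automaton, so after all states of~$Q$ have been removed in the order~$\omega$ the single surviving label from~$i$ to~$t$ is simultaneously~$\srm{\Ac}$ and~$\sem{\Ac}$, whence $\sem{\Ac} = \srm{\Ac}$. The only step that requires genuine care — and the one I expect to be the main obstacle — is the bookkeeping of the weights through the first construction and the first substitution, to be sure that the exterior multiplications by scalars are routed exactly as the state-elimination method routes them; once the weighted form of Arden's lemma is secured, the remainder is a verbatim transcription of the Boolean proof of \propo{eli-equ}.
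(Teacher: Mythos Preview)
Your proposal is correct and follows exactly the route the paper intends: the paper does not spell out a separate proof of \propo{wei-eli-equ} but presents it as the weighted counterpart of \propo{eli-equ}, the proof of which is the one you transcribe---build the generalised automaton encoding the current system, observe that one elimination step via (the weighted) Arden's lemma matches one state-removal step, and conclude by induction. Your additional care about the applicability of \corol{Ard-wei} at every step (propriety of the diagonal entries, preserved under substitution) and about routing the initial and final weights through the $i$- and $t$-transitions is precisely the bookkeeping the paper leaves implicit.
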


\ifcheat\smallskipneg\fi%
\ifcheat\smallskipneg\fi%
\begin{proposition}
    \label{:pro:wei-com-mya-sem}
For every order~$\omega$ on~$Q$,
$\msp\IdRN\land\IdRA \dedtxt\mny{\Ac_{p,q}}\equiv\sem{\Ac_{p,q}}\msp$
holds.
\end{proposition}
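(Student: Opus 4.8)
The plan is to observe that the proof of the Boolean analogue, \propo{com-mya-sem}, never appeals to any identity beyond the natural identities~$\IdRN$ and the star--unit identity~$\IdRA$ (namely $\msp X^{*}\equiv\und + X\xmd X^{*}\msp$), and that both of these hold over~$\KREM$ for every semiring~$\K$ and every graded monoid~$M$. The idempotency identities~$\Imbf$ and~$\Jmbf$, which are the only identities special to~$\jsPart{M}$ that fail in the weighted setting, are never used in that argument, so the very same computation proves the statement. Concretely, both the \MNY algorithm and the state-elimination method process the $n=\jsCard{Q}$ states of~$\Ac$ in the order~$\omega$, and at step~$k$ each updates its entries through the \emph{same} recurrence
\begin{equation}
X^{(k)}_{r,s} \;=\; X^{(k-1)}_{r,s} \;+\; X^{(k-1)}_{r,k}\xmd\bigl(X^{(k-1)}_{k,k}\bigr)^{*}\xmd X^{(k-1)}_{k,s}
\eqvrg
\notag
\end{equation}
starting from the common initial data $\msp X^{(0)}_{r,s}=\Ed_{r,s}\msp$ read off the transition matrix~$E$.

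First I would check that every star occurring in this recurrence is \emph{valid} in the weighted sense, which is the one genuinely new point with respect to the Boolean case. Since~$\Ac$ is a $\K$-automaton, no transition carries a constant term, so every entry~$\Ed_{r,s}$ denotes a proper polynomial; because~$M$ is graded, a non-empty computation from~$k$ to~$k$ avoiding states of rank greater than~$k-1$ has positive length, whence the diagonal entry~$X^{(k-1)}_{k,k}$ that gets starred at step~$k$ is again proper, with constant term~$\zeK$. By \lemme{Ard-wei} (or \corol{Ard-wei}) its star is therefore defined, so the recurrence is meaningful throughout and computes~$E^{*}$.

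Next I would transcribe the bookkeeping of \propo{com-mya-sem}, writing~$\mMNY{k}{r,s}$ for the $(r,s)$-entry produced by the \MNY algorithm at step~$k$ and~$\mBC{k}{r,s}$ for the label of the transition from~$r$ to~$s$ after the $k$-th state-elimination step. As there, the two agree \emph{exactly}, with no identity invoked, whenever $\msp k<\min(r,s)\msp$, by induction on the recurrence and the shared initial data. Fixing $\msp p<q\msp$ (the cases $p=q$ and $p>q$ being symmetric, exactly as in the Boolean argument), the only discrepancy arises at steps~$p$ and~$q$, where the \MNY update yields the literal form $\msp Y + X\xmd X^{*}\xmd Y\msp$ against the state-elimination value $\msp X^{*}\xmd Y\msp$, with~$X,Y$ the common entries already matched for smaller ranks. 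These are equivalent modulo the distributivity contained in~$\IdRN$ (to factor out~$Y$) followed by~$\IdRA$ (to replace $\msp\und + X\xmd X^{*}\msp$ by $\msp X^{*}\msp$). Propagating this through the subsequent steps is immediate, since the recurrences for row~$p$, for column~$q$, and for the $(p,q)$-entry are then term-by-term identical on both sides; this reproduces the chain of equivalences culminating in $\msp\IdRN\land\IdRA\dedtxt\mMNY{k}{p,q}\equiv\mBC{k}{i,t}\msp$ for $\msp q\jsleq k\jsleq n\msp$, and taking~$k=n$ gives $\msp\mny{\Ac_{p,q}}\equiv\sem{\Ac_{p,q}}\msp$.

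The main thing to verify, and the only place where the weighted case is not a literal copy of the Boolean one, is the star-validity argument of the second paragraph: one must be sure that every expression whose star is formed in the course of the algorithm is proper, so that all manipulations stay inside the valid part of~$\KREM$. Once that is granted, the obstacle that typically blocks weighted comparisons — the unavailability of~$\Imbf$ and~$\Jmbf$ — simply does not arise here, because the reconciliation of the two algorithms is powered entirely by the factorisation $\msp\und + X\xmd X^{*}\equiv X^{*}\msp$, i.e.\ by~$\IdRA$, together with the semiring axioms captured in~$\IdRN$.
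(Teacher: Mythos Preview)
Your proposal is correct and matches the paper's own approach. The paper does not give a separate proof of \propo{wei-com-mya-sem}: it simply states that the comparison results ``extend to the weighted case'' and records the proposition, the point being precisely what you identify --- the proof of \propo{com-mya-sem} uses only~$\IdRN$ and~$\IdRA$ (that is,~$\IdRU$), never the idempotency identities~$\IdRI$ or~$\IdRJ$, so it transports verbatim. Your added paragraph on star validity (showing by induction that every~$\Md^{(k-1)}_{k,k}$ has constant term~$\zeK$, because~$M$ is graded and the entries of~$E$ are proper) is the one genuinely new verification needed in the weighted setting, and it is correctly handled.
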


\ifcheat\smallskipneg\fi%
\theor{ord-eli-met} also extends to the weighted case
(and it is now clear why it was important that identities~$\IdRI$ 
and~$\IdRJ$ do not play a role in that result).

\ifcheat\smallskipneg\fi%
\begin{theorem}
    \label{:the:wei-ord-eli-met}%
Let~$\omega $ and~$\omega' $ be two orders on the set of states of a 
$\K$-automaton~$\Ac$.\linebreak  
Then,
$\msp
       \IdRN \land \IdRS \land \IdRP \e \dedtxt 
        \sem{\Ac} \e \equiv \e \sem[\omega']{\Ac}
\msp$ holds.
\end{theorem}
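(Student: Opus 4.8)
The plan is to show that the proof of \theor{ord-eli-met} transfers \emph{verbatim} to the weighted case, and to isolate exactly why this is legitimate. As in the Boolean argument, I would first reduce to a single transposition: any order $\omega'$ on $Q$ is reached from $\omega$ by a finite sequence of transpositions of consecutive states in the elimination order, and since $\equiv$ modulo a fixed set of identities is transitive, it is enough to treat the case where $\omega$ and $\omega'$ differ by swapping two states $r$ and $r'$. This places us in the local situation of \figur{MNY-ord}, the only change being that the labels $\Kd,\Kd',\Gd,\Gd',\Hd,\Hd',\Ld,\Ld'$ are now arbitrary expressions of $\KREM$ instead of Boolean ones.

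The crucial point is that the elimination step does not look inside these labels: suppressing a state $q$ turns a label $\Gd$ on the arc from~$p_i$ to~$r_j$ into $\Gd + \Kd_i\xmd\Ld^{*}\xmd\Hd_j$, by exactly the same formula as in the Boolean case. Hence the two expressions produced by eliminating $r$ then $r'$, and $r'$ then $r$, are assembled from the same eight atomic sub-expressions by an identical pattern of sums, products and stars; the derivation that carries the first to a form symmetric in the primed and unprimed symbols is then, step for step, the Boolean one — an application of $\IdRS$ to reorganise the nested stars, of $\IdRP$ together with its consequence $(\Xd\xmd\Yd)^{*}\Xd \equiv \Xd\xmd(\Yd\xmd\Xd)^{*}$ to \emph{switch the brackets}, and the natural identities $\IdRN$ throughout. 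Each of $\IdRN$, $\IdRS$ and $\IdRP$ is a valid identity of $\KREM$ for every semiring $\K$ and every graded monoid $M$ (the aperiodic identities $\IdRS$ and $\IdRP$ following from \corol{Ard-wei} precisely as \propo{ide-ape} follows from \lemme{Ard}), so every rewriting remains licit and we reach $\sem{\Ac} \equiv \sem[\omega']{\Ac}$.

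What must be verified to certify that nothing Boolean-specific intrudes amounts to two invariants. First, the derivation may never use the idempotency identities $\IdRI$ or $\IdRJ$, which fail over a general $\K$; this is the content of the remark preceding the statement, and inspecting the Boolean computation confirms that it invokes only $\IdRN$, $\IdRS$ and $\IdRP$. The scalar weights now decorating the labels require no further axiom, being absorbed by the trivial identities $\Tmbf = \Tmbf_{\mathsf{u}}\land\Tmbf_{\K}\land\Ambf_{\K}\land\Umbf_{\K}$, modulo which all computations on weighted expressions are silently performed. Second, every star occurring during the elimination must denote a defined object, so that the intermediate expressions stay valid: since $\Ac$ is proper and $M$ is graded, an induction on the elimination steps shows that every loop label remains proper — for instance the new loop $\Gd'\xmd\Ld^{*}\Gd + \Ld'$ at $r'$ obtained after removing $r$ has zero constant term — whence all the stars involved are defined. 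The main obstacle is therefore not a new computation but the bookkeeping of these two invariants; once they are in place the Boolean calculation delivers the conclusion unchanged.
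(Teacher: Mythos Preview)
Your proposal is correct and follows essentially the same approach as the paper, which does not give a separate proof but simply asserts that the argument of \theor{ord-eli-met} carries over and notes parenthetically that ``it is now clear why it was important that identities~$\IdRI$ and~$\IdRJ$ do not play a role in that result.'' Your write-up is in fact more careful than the paper's, as you explicitly track the two points the paper leaves implicit: that the aperiodic identities~$\IdRS$ and~$\IdRP$ remain valid in~$\KREM$ (via \corol{Ard-wei}), and that properness of loop labels is preserved along the elimination so all stars encountered are defined.
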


\ifcheat\smallskipneg\fi%
\ifcheat\smallskipneg\fi%
\ifcheat\smallskipneg\fi%
\subsection{From expressions to automata: the $\EtAs$-maps}
\label{:sec:exp-aut-mul}%

\ifcheat\smallskipneg\fi%
\subsubsection{The standard automaton of a weighted expression}

The definition of a \emph{standard} weighted automaton is the same as 
the one of a standard automaton for the Boolean case:
a unique initial state on which the initial map takes the value~$\unK$ 
and which is not the end of any transition.
\index{automaton!standard weighted --}%
Such an automaton may thus be represented as in \figur{sta-aut} and every 
weighted automaton is equivalent to, and may be turned into, a 
standard one.

As in the Boolean case, operations are defined on standard weighted 
automata that are parallel to the rational weighted operators.
With the notation of \figur{sta-aut}, the operators
$\Ac+\Bc$ and ${\Ac}\matmul{\Bc}$ are given by~\equnm{sta-aut-sum} 
and~\equnm{sta-aut-pro}, 
$k\xmd\Ac$ and~$\Ac\xmd k$ by

\ifcheat\smallskipneg\fi%
\ifcheat\smallskipneg\fi%
\begin{equation}
    k\xmd{\Ac} =
\autsk{\redmatu{\lignedblvs{1}{0}},
          \redmatu{\matriceddblvs{0}{k\xmd J}{0}{F}},
          \redmatu{\vecteurdblvs{k\xmd c}{U}}}
		  ,\;
    {\Ac}\xmd k=
\autsk{\redmatu{\lignedblvs{1}{0}},
          \redmatu{\matriceddblvs{0}{J}{0}{F}},
          \redmatu{\vecteurdblvs{c\xmd k}{U\xmd k}}}
,
     \notag
\end{equation}
and ${\Ac}^{*}$, which is defined when~$c^{*}$ is defined, 
by the following modification of~\equnm{sta-aut-sta}: 

\ifcheat\smallskipneg\fi%
\begin{equation}
    {\Ac}^{*} =
\aut{\redmatu{\lignedblvs{1}{0}},
     \redmatu{\matriceddblvs{0}{c^{*}\xmd J}{0}{H}},
     \redmatu{\vecteurdblvs{c^{*}}{U\xmd c^{*}}}}\eqvrg
\label{:equ:wei-sta-aut-sta}
\tag{\ref{:equ:sta-aut-sta}'}
\end{equation}
where $\msp H= U\matmul c^{*}\xmd J + F\msp$.
As in \secti{sta-aut-exp},
these operations allow to associate with every weighted 
expression~$\Ed$ and by induction on its depth,  
a standard weighted automaton~$\Stan{\Ed}$ which we call 
\emph{the} standard automaton of~$\Ed$.
\index{automaton (of an expression)!standard --}%
\index{automaton (of an expression)!Glushkov --}%
Straightforward computations show that
$\msp\CompAuto{(k\xmd\Ac)}= k\xmd\CompAuto{\Ac}\msp$,
$\msp\CompAuto{(\Ac\xmd k)}= \CompAuto{\Ac}\xmd k\msp$,
$\msp\CompAuto{(\Ac+\Bc)}= \CompAuto{\Ac}+\CompAuto{\Bc}\msp$,
$\msp\CompAuto{(\Ac\matmul\Bc)}=\CompAuto{\Ac}\matmul\CompAuto{\Bc}\msp$
and
$\msp\CompAuto{(\Ac^{*})}=\CompAuto{\Ac}^{*}$. 
From which one concludes that the construction of~$\Stan{\Ed}$ is a 
$\EtAs$-map:

\ifcheat\smallskipneg\fi%
\begin{proposition}[\cite{CaroFlou00,LombSaka05a}]
\label{:pro:wei-sta-aut-exp}
If~\/$\Ed$ is a weighted expression over~$\Ae$, then
$\msp\CompAuto{\Stan{\Ed}}=\CompExpr{\Ed}\msp$.
\end{proposition}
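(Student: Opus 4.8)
The plan is to prove the statement by induction on the depth $\Depth{\Ed}$, mirroring the Boolean argument behind \propo{sta-aut-exp}: the five operations on standard weighted automata supplied by \equnm{sta-aut-sum}, \equnm{sta-aut-pro}, the two scalar formulas, and \equnm{wei-sta-aut-sta} are each compatible with the corresponding rational operation on series, so that behaviour and denotation travel together through the inductive construction of $\Stan{\Ed}$. The computational backbone is a closed form for the behaviour of a standard automaton. With the notation of \figur{sta-aut}, the transition matrix $E=\matricedd{0}{J}{0}{F}$ has entries that are proper, so $F^{*}$ exists and $E^{*}=\matricedd{1}{J\matmul F^{*}}{0}{F^{*}}$, whence
\begin{equation}
\CompAuto{\Ac}= \ligned{1}{0}\matmul E^{*}\matmul \vecteurd{c}{U}= c + J\matmul F^{*}\matmul U\eqpnt
\notag
\end{equation}
In particular $\TermCst{\CompAuto{\Ac}}=c$, which is precisely the scalar that the constructions track; this observation is what keeps the bookkeeping of constant terms honest throughout.

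First I would dispatch the base cases: the standard automata associated with the atoms $\zed$, $\und$ and $a$ are read off directly from the construction, and their behaviours are the zero series, $\unK\,\unM$, and the monomial $a$, matching $\CompExpr{\zed}$, $\CompExpr{\und}$ and $\CompExpr{a}$. For the inductive step I would verify the five displayed identities one operation at a time, each via the closed form above. The scalar cases are immediate: $k\matmul\Ac$ rescales $J$ and $c$, giving $k\matmul(c+J\matmul F^{*}\matmul U)=k\matmul\CompAuto{\Ac}$, and symmetrically for $\Ac\matmul k$. For $\Ac+\Bc$ the block-diagonal shape of \equnm{sta-aut-sum} leaves $F$ and $G$ non-interacting, so the closed form yields $(c+J F^{*}U)+(d+KG^{*}V)$. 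The product $\Ac\matmul\Bc$ is the only one needing a genuine $3\times 3$ block-star expansion: collecting the four reachability terms produced by \equnm{sta-aut-pro} gives $cd$, $cKG^{*}V$, $JF^{*}Ud$ and the cross term $JF^{*}U\matmul K G^{*}V$ (issued by the off-diagonal block $U\matmul K$), which reassemble to $(c+JF^{*}U)(d+KG^{*}V)=\CompAuto{\Ac}\matmul\CompAuto{\Bc}$.

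The hard part will be the star, where definability and the algebraic identity must be matched at once. On the definability side, \equnm{wei-sta-aut-sta} is legitimate exactly when $c^{*}$ exists; by the closed form $c=\TermCst{\CompAuto{\Ac}}$, and for $\Ed=\Fd^{*}$ the induction hypothesis $\CompAuto{\Stan{\Fd}}=\CompExpr{\Fd}$ gives $c=\bra{\CompExpr{\Fd},\unM}=\TermCst{\Fd}$, so $c^{*}=(\TermCst{\Fd})^{*}$ is defined precisely when $\Ed$ is valid, which is the standing hypothesis under which $\Stan{\Ed}$ is built. On the algebraic side, writing $H=U\matmul c^{*}J+F$ for the new block, the closed form applied to \equnm{wei-sta-aut-sta} gives $\CompAuto{\Ac^{*}}=c^{*}+c^{*}J\matmul H^{*}\matmul U c^{*}$, and I must show this equals $\bigl(c+JF^{*}U\bigr)^{*}=\CompAuto{\Ac}^{*}$. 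I would obtain it from the semiring star laws that hold in any topological semiring once the relevant stars exist: expand $H^{*}=(F+Uc^{*}J)^{*}=F^{*}(Uc^{*}JF^{*})^{*}$ by the sum-of-stars law $(x+y)^{*}=x^{*}(yx^{*})^{*}$ (the series counterpart of $\IdRS$), switch brackets via $\alpha(\beta\alpha)^{*}=(\alpha\beta)^{*}\alpha$ (the counterpart of $\IdRP$), and telescope using $c^{*}=\unK+c\,c^{*}$ to recognise $c^{*}\bigl((JF^{*}U)c^{*}\bigr)^{*}=(c+JF^{*}U)^{*}$; this is the same chain of manipulations underlying \corol{Ard-wei}. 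Once the five identities are established, a routine induction on $\Depth{\Ed}$ with the base cases above yields $\CompAuto{\Stan{\Ed}}=\CompExpr{\Ed}$ for every valid weighted expression, which is the claim. The only conceptual, as opposed to merely computational, subtlety is exactly this coupling of the validity of $\Ed$ with the definability of $c^{*}$; everything else is matrix arithmetic already seen in the Boolean case.
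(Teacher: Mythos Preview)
Your proposal is correct and follows exactly the approach the paper sketches: the paper merely asserts that ``straightforward computations'' establish the five identities $\CompAuto{(k\xmd\Ac)}=k\xmd\CompAuto{\Ac}$, $\CompAuto{(\Ac\xmd k)}=\CompAuto{\Ac}\xmd k$, $\CompAuto{(\Ac+\Bc)}=\CompAuto{\Ac}+\CompAuto{\Bc}$, $\CompAuto{(\Ac\matmul\Bc)}=\CompAuto{\Ac}\matmul\CompAuto{\Bc}$, and $\CompAuto{(\Ac^{*})}=\CompAuto{\Ac}^{*}$, and then concludes by induction; you have supplied those computations in full, including the block-star manipulation for the star case and the matching of validity of~$\Ed$ with definability of~$c^{*}$.
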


\ifcheat\smallskipneg\fi%
The automaton~$\Stan{\Ed}$ has~$\LitL{\Ed}+1$ states. 
Computing~$\Stan{\Ed}$ from~\equnm{sta-aut-sum}, \equnm{sta-aut-pro} 
and~\equnm{wei-sta-aut-sta} is \emph{cubic} in~$\LitL{\Ed}$ and a 
\emph{star-normal form} for weighted expressions is something that does not 
\index{star-normal form!of an expression}%
seem to exist in the general case.
\figur{aut-sta-1} shows the standard $\Q$-automaton $\Sc_{\Ed_{3}}$ 
associated with 
$\msp\Ed_{3}=(\fracts{1}{6}\xmd a^{*}+\fracts{1}{3}\xmd b^{*})^{*}\msp$
and $\Z$-automaton $\Sc_{\Ed_{4}} $associated with
$\msp\Ed_{4}=(1-a)\xmd a^{*}$. 

\begin{figure}[htbp]
\centering
\setlength{\lga}{4cm}%
\VCDraw{%
\begin{VCPicture}{(-1.4,-1)(9.4,1.6)}
\SmallState
\State{(0,0)}{A}
\State{(\lga,0)}{B}\State{(2\lga,0)}{C}
\InitialL{w}{A}{1}\FinalR{s}{A}{2}
\FinalL{s}{B}{2}\FinalL{s}{C}{2}
\EdgeR{A}{B}{\fracts{1}{3}\xmd a}
\VArcL[.1]{arcangle=37.5,ncurv=.9}{A}{C}{\fracts{2}{3}\xmd b}
\ArcL{C}{B}{\fracts{1}{3}\xmd a}
\ArcL{B}{C}{\fracts{2}{3}\xmd b}
\LoopN[.2]{B}{\fracts{4}{3}\xmd a}
\LoopN[.75]{C}{\fracts{5}{3}\xmd b}
\end{VCPicture}%
}
\setlength{\lga}{3cm}%
\eee
\VCDraw{%
\begin{VCPicture}{(-1.4,-1)(7.4,1.6)}
\SmallState
\State{(0,0)}{A}
\State{(\lga,0)}{B}\State{(2\lga,0)}{C}
\InitialL{w}{A}{1}\FinalR{s}{A}{1}
\FinalL{s}{B}{1}\FinalL{e}{C}{1}
\EdgeR{A}{B}{-\xmd a}
\EdgeR{B}{C}{a}
\LArcL{A}{C}{a}
\LoopN{C}{a}
\end{VCPicture}%
}
\caption{The $\Q$-automaton $\Sc_{\Ed_{3}}$ and 
the $\Z$-automaton $\Sc_{\Ed_{4}}$}
\label{:fig:aut-sta-1}%
\end{figure}

It is the necessary definition of~$k\xmd\Ac$ and~$\Ac\xmd k$ that 
rules out the equivalence
$\msp k\xmd m\equiv m\xmd k\msp$,
\index{identities!trivial --}%
with~$m$ in~$M$, from the set of trivial identities.

\ifcheat\smallskipneg\fi%
\subsubsection{The derived-term automaton of a weighted expression}

The \emph{(left) quotient} operation also extends from languages to 
series:
\index{quotient!of a series}%
for every~$s$ in~$\KA$, and every~$u$ in~$\Ae$,
$\msp u^{-1}s\msp$
is defined by
$\msp\bra{u^{-1}s,v} = \bra{s,u\xmd v}\msp$
for every~$v$ in~$\Ae$.
The quotient is a \emph{(right) action} of~$\Ae$ on~$\KA$:
$\msp(u\xmd v)^{-1}s = v^{-1}\left(u^{-1}s\right)$. 

In contrast with the Boolean case, a series in~$\KRat\Ae$ may have an 
infinite number of distinct quotients.
However, the quotient operation allows to express a characteristic 
property of rational series. 
Let us call \emph{stable} 
\index{stable}%
a subset~$U$ of~$\KA$ closed under quotient.
Then, a characterisation due to Jacob reads:
\emph{a series of~$\KA$ is rational if and only if it is contained 
in a finitely generated stable submodule of~$\KA$}, \cf 
\cite{BersReut84,Saka09bhb}.

\ifcheat\smallskipneg\fi%
\paragraph{Derivation}

The \emph{derivation} of  weighted rational expressions implements 
the lifting of the quotient of series to the level of expressions.
It yields an effective version of the characterisation quoted above.

In the sequel, addition in~$\K$
is written~$\Kadd $ to
distinguish it from the~$+$ operator in expressions.
The set of \emph{(left) linear combinations} of
$\K$-expressions with coefficients in~$\K$ 
is denoted, by abuse, by~$\KKREA$.
In the following,~$[k\,\Ed]$ or~$k\,\Ed$ is a monomial 
in~$\KKREA$  
whereas~$(k\,\Ed)$ is an expression in~$\KRA$.
An external right multiplication
on~$\KKREA$ by an expression and by a 
scalar is needed in the sequel. 
It is first defined  
on monomials by
$\msp ([k\,\Ed]\cdot\Fd) \equiv k\, (\Ed\cdot\Fd)\msp$
and
$\msp ([k\,\Ed]\, k') \equiv k\, (\Ed\, k')\msp$
and then extended to~$\KKREA$ by linearity.

\ifcheat\smallskipneg\fi%
\begin{definition}[\cite{LombSaka05a}]
    \label{:def:wei-exp-der}%
The \emph{derivation} of~$\Ed$  in~$\KRA$ with respect to~$a$ in~$A$, 
\index{derivation (of an expression)!K@$\K$- --}%
denoted by~${\ExpDer{\Ed}}$, 
is a linear combination of expressions in~$\KRA$
defined 
by~\equnm{der-b-1} for the base cases and inductively 
by the following formulas.
{\iesf
{\allowdisplaybreaks
\begin{align}
\ExpDer{(k\, \Ed)}  = k\, \ExpDer{\Ed}
\EqVrgInt
\ExpDer{(\Ed\, k)}  
= 
\left(\left[\ExpDer{\Ed}\right]\, k\right)
&  
\EqVrgInt
\ExpDer{(\Ed \mathsf{+} \Fd)} 
  = \ExpDer{\Ed} \oplus  \ExpDer{\Fd} \eqvrg
\notag
\\[1ex]
\ExpDer{(\Ed \cdot  \Fd)} 
= \left(\left[\ExpDer{\Ed}\right] \cdot \Fd\right) \oplus  
   \TermCst{\Ed} \, \ExpDer{\Fd}
\EqVrgInt
&
\text{ and } \e
\ExpDer{(\Ed^{*})} 
= \TermCst{\Ed}^{*}\, 
   \left(\left[\ExpDer{\Ed}\right]\cdot (\Ed^{*})\right)
   \eqpnt
   \e 
\notag
\end{align}
}%
}%
\end{definition}

The last equation
is defined only if~$\Ed^*$
is a \emph{valid} expression.
The derivation of an expression with respect to a 
\emph{word}~$u$ is defined by induction on the length of~$u$:
for every~$u$ in~$A^{+}$ and every~$a$ in~$A$,
$\msp\ExpDer[ua]{\Ed} = \ExpDer{ \left(\ExpDer[u]{\Ed}\right)}\msp$
and the definition of derivation is consistent 
with that of quotient of series
since for every~$u$ in~$A^{+}$,
$\msp\CompExpr{\ExpDer[u]{(\Ed)}} = u^{-1}\CompExpr{\Ed}\msp$
holds.

\ifcheat\smallskipneg\fi%
\paragraph{The derived-term automaton}

At \sbsct{der-ter-aut}, we have defined the \emph{derived terms} of a 
(Boolean) expression as the expressions that occur in a derivation 
of that expression.
\propo{tru-der-ind} then established properties that allow to compute 
these derived terms, without derivation.
For the  weighted case, we  take the same properties as 
the definition.

\ifcheat\smallskipneg\fi%
\begin{definition}[\cite{LombSaka05a}]
    \label{:def:wei-der-ter}
The set~$\TruDerTer{\Ed}$ of \emph{true derived terms} 
\index{derived term (of an expression)!true --}%
of~$\Ed$ in~$\KRA$ 
is inductively defined by:  
$\TruDerTer{k\,\Ed}=\TruDerTer{\Ed}$,\e
$\TruDerTer{\Ed\,k}=\left(\TruDerTer{\Ed}\,k\right)$,\e 
$ \TruDerTer{\Ed \autplus \Fd} 
   = \TruDerTer{\Ed} \cupssp  \TruDerTer{\Fd}$,\e
$\TruDerTer{\Ed \autprod  \Fd} 
    =\left(\TruDerTer{\Ed}\right)\autprod\Fd
         \cupssp   \TruDerTer{\Fd} $,\e
$\TruDerTer{\Ed^{*}} 
     = \left(\TruDerTer{\Ed}\right)\autprod \Ed^{*}$,
 starting from the base cases
$\msp\TruDerTer{\zed}=\TruDerTer{\und}=\emptyset\msp$,
and
$\msp\TruDerTer{a}=\{\und\}\msp$ for every~$a$ in~$A$.
\end{definition}

\ifcheat\smallskipneg\fi%
$\TruDerTer{\Ed}$ is a 
set of \emph{unitary} monomials of~$\KKREA$,
with~$\jsCard{\TruDerTer{\Ed}}\leq\LitL{\Ed}\msp$.
The set of \emph{derived terms} of~$\Ed$
\index{derived term (of an expression)}%
is
$\msp\DerTer{\Ed}=\TruDerTer{\Ed}\cup\{\Ed\}\msp$.
\theor{ant-mul} insures consistency between
Definitions~\ref{:def:wei-exp-der} and~\ref{:def:wei-der-ter};
the usefulness of the latter follows from \theor{th-cov}.

\ifcheat\smallskipneg\fi%
\begin{theorem}[\cite{Rutt03,LombSaka05a}]
    \label{:the:ant-mul}%
Let~$\Ed$ be in~$\KRA$ and $\msp\DerTer{\Ed}=\{\Kd_1,...,\Kd_n\}$.
For every~$a$ in~$ A$, there exist an $n\x n$-matrix~$\mu(a)$ 
with entries in~$\K$ such that
\begin{equation}
\textstyle{\forall i \in [n] \quantsp
   \ExpDer{\Kd_{i}} = 
   \Ksum_{j\in [n]} \mu(a)_{i,j}\xmd \Kd_{j} \EqPnt
   \eee
   \notag
}%
\end{equation}
\end{theorem}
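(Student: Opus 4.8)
The plan is to prove, by induction on the depth $\Dpth{\Ed}$, a statement slightly stronger than the one announced, namely the following \emph{closure property}: for every $a$ in $A$ and every $\Kd$ in $\DerTer{\Ed}$, the derivative $\ExpDer{\Kd}$ is a $\K$-linear combination of elements of $\TruDerTer{\Ed}$. Granting this, the theorem follows at once: since $\TruDerTer{\Ed}\subseteq\DerTer{\Ed}$ and $\DerTer{\Ed}=\{\Kd_1,\dots,\Kd_n\}$ is finite (recall $\jsCard{\TruDerTer{\Ed}}\leq\LitL{\Ed}$), it suffices to set $\mu(a)_{i,j}$ equal to the coefficient of $\Kd_j$ in $\ExpDer{\Kd_i}$. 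These coefficients are well-defined precisely because all computations are carried out modulo the trivial identities $\Tmbf$ only (and \emph{not} modulo $\Nmbf$): distinct reduced expressions are then genuinely distinct, hence linearly independent, basis vectors of the free $\K$-module $\KKREA$, so the expansion of $\ExpDer{\Kd_i}$ on the $\Kd_j$ is unique.

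First I would dispatch the base cases $\Ed=\zed$, $\Ed=\und$ and $\Ed=a$, where $\DerTer{\Ed}$ and all derivatives are read off directly from~\equnm{der-b-1} and \defin{wei-der-ter}, the derivative of each derived term being a trivial combination of true derived terms. The inductive step runs through the five constructors, using \defin{wei-der-ter} to describe $\TruDerTer{\Ed}$ and \defin{wei-exp-der} to compute derivatives. The scalar cases $\Ed=k\,\Fd$ and $\Ed=\Fd\,k$ and the sum $\Ed=\Fd\autplus\Gd$ are immediate: the true derived terms of $\Ed$ are respectively $\TruDerTer{\Fd}$, the right-multiples $\left(\TruDerTer{\Fd}\right)k$, and the union $\TruDerTer{\Fd}\cupsp\TruDerTer{\Gd}$, and each derivation formula carries a derived term of $\Ed$ to a combination of these by the induction hypothesis applied to the immediate subexpression(s).

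The two genuinely structural cases are the product and the star. For $\Ed=\Fd\autprod\Gd$, the derived terms split into $\Ed$ itself, the terms $\Kd\autprod\Gd$ with $\Kd\in\TruDerTer{\Fd}$, and the elements of $\TruDerTer{\Gd}$. Applying the product formula of \defin{wei-exp-der} to $\Kd\autprod\Gd$ (or to $\Ed$) yields $\left(\left[\ExpDer{\Kd}\right]\autprod\Gd\right)\oplus\TermCst{\Kd}\,\ExpDer{\Gd}$; the induction hypothesis on $\Fd$ rewrites $\ExpDer{\Kd}$ as a combination of elements of $\TruDerTer{\Fd}$, so the first summand lands in the span of $\left(\TruDerTer{\Fd}\right)\autprod\Gd$, while the induction hypothesis on $\Gd$ places $\TermCst{\Kd}\,\ExpDer{\Gd}$ in the span of $\TruDerTer{\Gd}$; both are subsets of $\TruDerTer{\Ed}$. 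A derived term already lying in $\TruDerTer{\Gd}$ is handled directly by the induction hypothesis on $\Gd$. The star case $\Ed=\Fd^{*}$ is analogous: its true derived terms are the $\Kd\autprod\Fd^{*}$ with $\Kd\in\TruDerTer{\Fd}$, and since $\Ed$ is valid the scalar $\TermCst{\Fd}^{*}$ occurring in the star-derivation formula is defined; applying the product and star formulas and invoking the induction hypothesis on $\Fd$ keeps every resulting term of the form $\Kd'\autprod\Fd^{*}$ with $\Kd'\in\TruDerTer{\Fd}$, hence inside the span of $\TruDerTer{\Ed}$.

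I expect the difficulty to be organisational rather than conceptual. The crucial move is to strengthen the statement to the closure property above, which speaks of \emph{all} derived terms rather than of $\Ed$ alone; without this strengthening the induction does not feed itself in the product and star cases. The care then goes into checking that a derivation rule applied to a \emph{compound} derived term such as $\Kd\autprod\Gd$ or $\Kd\autprod\Fd^{*}$---and not to $\Ed$ itself---still produces only terms lying in the span of $\TruDerTer{\Ed}$, while correctly absorbing the scalar factors $\TermCst{\Kd}$ and $\TermCst{\Fd}^{*}$ and the external right-multiplications into the coefficients in $\K$. The well-definedness of $\mu(a)$, which hinges on reducing only modulo $\Tmbf$, is the one point that should be stated explicitly rather than left implicit.
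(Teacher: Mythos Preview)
Your proposal is correct and follows the standard approach; the paper itself does not give a proof of this theorem but defers to the cited references \cite{Rutt03,LombSaka05a}, where the argument is precisely the structural induction you outline. The key strengthening you identify---proving closure of \emph{all} derived terms under derivation, not just of $\Ed$---is exactly what is needed, and your handling of the compound terms $\Kd\autprod\Gd$ and $\Kd\autprod\Fd^{*}$ via the product rule combined with the induction hypothesis on the subexpressions is the right mechanism.
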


\ifcheat\smallskipneg\fi%
\ifcheat\smallskipneg\fi%
The derivation of an expression~$\Ed$ in~$\KRA$ with respect 
to every word in~$A^{+}$ is thus a linear combination of derived 
terms of~$\Ed$. 
Hence the derived terms of an expression denote 
\emph{the generators of a stable submodule} 
that contains the series denoted by the expression.
\theor{ant-mul} yields the \emph{the derived-term automaton} of~$\Ed$,
\index{automaton (of an expression)!derived term --}%
$\msp\Ac_{\Ed}=\aut{I,X,T}\msp$, of dimension~$\DerTer{\Ed}$,
with
$\msp I_{}  = \unK\msp$ if~$\Kd_{i} = \Ed$ and~$\zeK$ otherwise,
$\msp X =   \Ksum_{a\in A} \mu(a)\xmd a \msp$, and
$\msp T_{j}  = \TermCst{\Kd_{j}}\msp$.
The $\K$-derivation is another $\EtAs$-map since
$\msp\CompExpr{\Ac_{\Ed}} = \CompExpr{\Ed}\msp$
holds. 

Morphisms and quotients of (Boolean) automata are generalised to
\emph{Out-morphisms} 
and \emph{quotients}
of $\K$-automata (\cf 
\cite{Saka09bhb,BealEtAl06}).
\theor{der-ter-sta} is then extended to the weighted case.

\ifcheat\smallskipneg\fi%
\begin{theorem}[\cite{LombSaka05a}]
    \label{:the:th-cov}
Let~$\Ed$ be in~$\KRA$.
Then~$\Ac_{\Ed}$ is a quotient of~$\Sc_{\Ed}$. 
\end{theorem}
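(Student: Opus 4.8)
The final statement is Theorem~\ref{:the:th-cov}: for a weighted expression~$\Ed$ in~$\KRA$, the derived-term automaton~$\Ac_{\Ed}$ is a quotient of the standard automaton~$\Sc_{\Ed}$. Let me think about how this should go.

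Let me first recall the Boolean analogue (Theorem~\ref{:the:der-ter-sta}, due to Champarnaud--Ziadi), where $\DTAut{\Ed}$ is a quotient of $\Stan{\Ed}$. The weighted version replaces ordinary morphisms/quotients of automata by Out-morphisms and quotients of $\K$-automata. So the first thing I need is the right notion of quotient: a surjective Out-morphism $\varphi$ from $\Sc_{\Ed}$ onto $\Ac_{\Ed}$ that respects initial and final weights and, crucially, satisfies the weighted analogue of condition (vii) in Definition~\ref{:def:mor-aut} — namely that the sum of the weights of transitions from a state $p$ into the fiber $\varphi^{-1}(s)$ equals the weight of the transition from $\varphi(p)$ to $s$ in the quotient.

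The plan is to build $\varphi$ explicitly by induction on the depth of~$\Ed$, running the induction in parallel with the inductive constructions of both $\Sc_{\Ed}$ (via the operations~\equnm{sta-aut-sum}, \equnm{sta-aut-pro}, \equnm{wei-sta-aut-sta}) and of $\TruDerTer{\Ed}$ (via Definition~\ref{:def:wei-der-ter}). The states of~$\Sc_{\Ed}$ are in bijection with $\{\und\}\cup\{\text{positions/letter-occurrences of }\Ed\}$ (by \prpri{sta-aut-dim}, there are $\LitL{\Ed}+1$ of them), while the states of~$\Ac_{\Ed}$ are the derived terms, of which there are at most $\LitL{\Ed}+1$ by the count following Definition~\ref{:def:wei-der-ter}. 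First I would define the map $\varphi$ on states: the initial state of $\Sc_{\Ed}$ maps to $\Ed$, and each position-state maps to the true derived term ``rooted'' at that position. The key is that \propo{tru-der-ind} (its weighted analogue, built into Definition~\ref{:def:wei-der-ter}) lets me track, inductively, exactly which derived term each position belongs to, so that the fibers of $\varphi$ are precisely the sets of positions giving rise to the same derived term. I would then verify the Out-morphism conditions on initial weight, final weight (both sides read off $\TermCst{\cdot}$), and the transition-weight-summation condition, at each inductive step — sum, product, star, and the two scalar multiplications $k\,\Ed$ and $\Ed\,k$.

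The main obstacle, and the step I expect to absorb most of the work, is the star case together with the interaction of scalars. For the star, both the standard-automaton star~\equnm{wei-sta-aut-sta} and the derivation rule $\ExpDer{(\Ed^{*})} = \TermCst{\Ed}^{*}\,([\ExpDer{\Ed}]\cdot(\Ed^{*}))$ introduce the factor $c^{*}=\TermCst{\Ed}^{*}$, and I must check that the weight bookkeeping on the fibers matches on both sides — in particular that the ``back'' transitions $H = U\matmul c^{*}\,J + F$ created in~\equnm{wei-sta-aut-sta} project correctly onto the transitions of $\Ac_{\Ed^{*}}$, where every true derived term gets right-multiplied by $\Ed^{*}$. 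The scalar cases are where the weighted setting genuinely differs from the Boolean one: since $\varphi$ must be an Out-morphism (summing outgoing weights over fibers) rather than a plain graph morphism, I have to confirm that the linearity built into $\ExpDer{(k\,\Ed)} = k\,\ExpDer{\Ed}$ and $\ExpDer{(\Ed\,k)} = ([\ExpDer{\Ed}]\,k)$ is exactly mirrored by the scalar multiplications of $\Sc_{\Ed}$.

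Once the inductive construction of $\varphi$ is in place and the Out-morphism conditions are verified at each of the five inductive steps, the theorem follows: $\varphi$ is surjective by construction (every derived term arises from at least one position, by the way $\TruDerTer{\cdot}$ is assembled), so $\Ac_{\Ed}$ is a quotient of $\Sc_{\Ed}$, and as a corollary $\CompAuto{\Ac_{\Ed}}=\CompAuto{\Sc_{\Ed}}=\CompExpr{\Ed}$ is recovered, consistent with \propo{wei-sta-aut-exp}. I would remark that, just as in the Boolean case where this yields an alternative proof of \corol{der-ter-car}, the existence of the quotient reproves the bound $\jsCard{\DerTer{\Ed}}\leq\LitL{\Ed}+1$ noted after Definition~\ref{:def:wei-der-ter}, since a quotient cannot have more states than the automaton it is a quotient of.
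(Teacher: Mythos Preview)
The paper does not actually contain a proof of this theorem: it is stated with the citation \cite{LombSaka05a} and followed immediately by \remar{der-ter}, with no argument given (the Boolean analogue, \theor{der-ter-sta}, is likewise only stated and cited). So there is no proof in the paper to compare your proposal against.

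That said, your outline is the natural one and matches the approach in the cited source: an induction on the depth of~$\Ed$, running the recursive construction of~$\Sc_{\Ed}$ in parallel with the recursive description of~$\TruDerTer{\Ed}$ from \defin{wei-der-ter}, and verifying at each operator that the map sending each position-state to the corresponding derived term is an Out-morphism. Your identification of the star case and the two scalar operators as the places requiring care is accurate, and your remark that \defin{wei-der-ter} (rather than a derivation-based definition) is what makes surjectivity hold is exactly the point of \remar{der-ter}.
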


\ifcheat\smallskipneg\fi%
\ifcheat\smallskipneg\fi%
\ifcheat\smallskipneg\fi%
\begin{remark}
\label{:rem:der-ter}%
This statement is a justification for \defin{wei-der-ter}.
The monomials that appear in the derivations of an 
expression~$\Ed$ are in~$\DerTer{\Ed}$.
The converse is not 
necessarily true when~$\K$ is not a positive semiring: some derived 
terms may never occur in a derivation,
as it can be observed for instance on the $\Z$-expression
$\msp\Ed_{4} = (1-a)\xmd a^{*}\msp$ (\cf \figur{ter-der-2}).
With a definition of derived terms based on derivation only, 
\theor{th-cov} would not hold anymore. 
\end{remark}

\ifcheat\smallskipneg\fi%
\ifcheat\smallskipneg\fi%
\begin{figure}[htbp]
\centering
\setlength{\lga}{3cm}%
\ifcheat\smallskipneg\fi%
\VCDraw{%
\begin{VCPicture}{(-1.4,-1)(7.4,1.6)}
\SmallState
\State{(0,0)}{A}
\State{(\lga,0)}{B}\State{(2\lga,0)}{C}
\InitialL{w}{A}{1}\FinalR{s}{A}{1}
\FinalL{s}{B}{1}\FinalL{e}{C}{1}
\EdgeR{A}{B}{-\xmd a}
\EdgeR{B}{C}{a}
\LArcL{A}{C}{a}
\LoopN{C}{a}
\end{VCPicture}%
}
\eee
\VCDraw{%
\begin{VCPicture}{(-4,-1)(7,1.6)}
\LargeState
\SetAOSLengthCoef{1}%
\State[\Ed_{2}]{(0,0)}{A}\State[a^{*}]{(\lga,0)}{C}
\Initial{A}\Final[s]{A}\Final{C}
\LoopN{C}{a}
\end{VCPicture}}
\caption{The $\Z$-automaton $\Sc_{\Ed_{4}}$ and its $\Z$-quotient 
$\Ac_{\Ed_{4}}$} 
\label{:fig:ter-der-2}
\end{figure}

%
%
%
%

\ifcheat\smallskipneg\fi%
\ifcheat\smallskipneg\fi%
\ifcheat\smallskipneg\fi%
\ifcheat\smallskipneg\fi%
\section{Notes}
\label{:sec:not}

{\iesf
Most of the material presented in this chapter has appeared in 
previous work of the author \cite{Saka03,Saka05,Saka09bhb}.
\shortonly{A detailed version of this chapter is to be found at
\texttt{http://arxiv.org/abs/xxx.yyy}}

\ifcheat\smallskipneg\fi%
\paragraph{\secti{new-loo}. New look at Kleene's theorem}%
A detailled history of the development of ideas at the beginning of 
the theory of automata is given in~\cite{PerrD93}.
Berstel \cite{Bers79} attributes to Eilenberg the idea of
\index{recognisable!subset}%
\index{rational!subset}%
distinguishing the family of recognisable from that of rational sets.

Besides the already quoted Elgot and Mezei's paper \cite{ElgoMeze65}, 
other authors have certainly noticed the equality of expressiveness 
of automata and expressions beyond free monoids.
It is part of Walljasper's thesis \cite{Wall70}; it can be found in 
Eilenberg's treatise \cite{Eile74}.
The splitting of Kleene's theorem has been proposed in~\cite{Saka87b}.

\paragraph{\secti{aut-exp}. From automata to expressions}%
First note that this section is mostly of 
theoretical interest: for which practical purpose would one exchange an 
automaton for an expression?

\textit{Identities.\ } As mentioned, the axiomatisation of rational 
expressions, even  
hinting at bibliographic references, is out of the scope of this 
chapter.
Conway showed that besides the identities~$\Smbf$ and~$\Pmbf$ (that 
are at the basis of the definition of the 
so-called `Conway semirings', \cf~\CTchp{ZE}), each finite simple 
\index{semiring!Conway --}%
\emph{group} gives rise to an identity that is independent from the 
others~\cite{Conw71}.
Krob, who showed that this set of identities is complete, 
coined~$\Smbf$ and~$\Pmbf$ the \emph{aperiodic 
\index{identities!aperiodic --}%
identities}~\cite{Krob91}. 

\textit{State-elimination method.\ }
\index{method!state-elimination}%
The example~$\Dc_{3}$ of \figur{sta-eli-exa} is easily generalised so as 
to find an exponential gap between the length of expressions for two 
distinct orders.
The search for short expressions is 
performed by heuristics; 
as reported in~\cite{GrubEtAl09},
the naive one, modified or not as in~\cite{DelgMora04},
appears to be good (\cf~\CTchp{GHK} for more information on the 
subject).

\textit{McNaughton--Yamada algorithm\ } is the implementation in the 
\index{McNaughton--Yamada algorithm}%
semiring of languages of the comtemporary Floyd--Roy--Warshall 
algorithms (in the Boolean or tropical 
semirings)~\cite{Floy62,Roy59,Wars62}.

\textit{Star height.\ } 
The star height of a rational language~$L$ is \emph{the minimum of
the star heights of the rational expressions that denote~$L$}.
\index{star height!of a rational language}%
Whether the star height of a language is effectively computable has 
been a long standing open problem until it was positively solved 
\index{star height!problem}%
first by 
K.~Hashiguchi~\cite{Hash88} and then by D.~Kirsten~\cite{Kirs05}.

\ifcheat\smallskipneg\fi%
\ifcheat\miniskipneg\fi%
\paragraph{\secti{exp-aut}. From expressions to automata}%
The presentation of the standard automaton given here is not 
\index{automaton (of an expression)!standard --}%
the classical one, and not only for the chosen name.
The recursive definition, also used in~\cite{FischEtAl10} for 
instance, avoids the 
definition of 
\texttt{First}, \texttt{Last}, and 
\texttt{Follow} functions that are built in most papers on the 
subject.
\index{automaton (of an expression)!Glushkov --}%
\index{automaton (of an expression)!position --}%
Based on these functions, other automata may be defined: \eg 
in~\cite{MNauYama60} they are used to compute directly the 
\emph{determinisation} of~$\Stan{\Ed}$,
in~\cite{IlieYu03} positions with the same image by 
\texttt{Follow}    
are merged, giving rise to a possibly smaller automaton, called
\index{follow automaton|see{automaton (of...)}}%
\index{automaton (of an expression)!follow --}%
\emph{follow automaton}.

Attributing derivation to Brzozowski and Antimirov together is an 
unusual but sensible foreshortening.
Original Brzozowski's \emph{derivatives}~\cite{Brzo64} 
\index{derivative (of an expression)}%
are obtained by replacing~`$\cup$' by a~`$+$' in~\equnm{der-b-2} 
and~\equnm{der-b-3}.
Derivatives are then \emph{expressions}, and there is a finite number of 
them, modulo the~$\Ambf$, $\Cmbf$, and~$\Imbf$ identities.
By replacing the~`$+$' by a~`$\cup$' in Brzozowski's definition, 
Antimirov~\cite{Anti96} changed the derivatives into a \emph{set of 
expressions},  
which he called \emph{partial derivatives}, as they are `parts' of 
\index{derivative (of an expression)!partial --}%
derivatives.
As they are applied to union of sets, and not to expressions, 
the~$\Ambf$, $\Cmbf$, and~$\Imbf$ identities come for free, and are 
no longer necessary to insure the finiteness of derived terms.


A common technique for defining $\EtAs$-maps has been the 
\emph{linearisation}~$\Edl$ of the expression~$\Ed$, \ie making all 
letters in~$\Ed$ distinct by indexing them by their position in~$\Ed$ 
(\eg~\cite{MNauYama60,IlieYu03}).
Berry--Sethi~\cite{BerrSeth86} showed that the (Brzozowski) 
derivatives of~$\Edl$  
coincide with the states of~$\Stan{\Ed}$, whereas 
Berstel--Pin~\cite{BersPin96} observed that~$\CompExpr{\Ed}$ is a \emph{local} 
language~$\ExprLine{L}$ and interpreted Berry-Sethi's result as the 
construction of \emph{the} deterministic automaton canonically 
associated with~$\ExprLine{L}$.

The similarity between Mirkin's prebases~\cite{Mirk66} and Antimirov's derived
\index{prebase}%
terms was noted by Champarnaud--Ziadi~\cite{ChamZiad01}, who called 
\emph{equation  automaton} the derived-term automaton.
\index{equation automaton|see{automaton  (of...)}}%
\index{automaton (of an expression)!equation --}%
\index{automaton (of an expression)!derived term --}%

Allauzen--Mohri have generalised \propo{Tho-sta} and 
\theor{der-ter-sta} and computed~$\DTAut{\Ed}$ and the follow 
automaton of~$\Ed$ from~$\Thom{\Ed}$ by quotient and elimination of 
marked spontaneous transitions~\cite{AllaMohr06}.

In~\cite{CaroZiad00}, an algorithm is given which is a 
kind of converse of a $\EtAs$-map: it recognises if an automaton is 
the standard automaton~$\Stan{\Ed}$ of an expression~$\Ed$ and, in
\index{star-normal form!expression in --}%
this case, computes such an~$\Ed$ in \emph{star-normal form}.
The problem of inverting a $\AtEs$-map has been given a partial answer 
in~\cite{LombSaka05bbis}: 
it is possible to compute~$\Ac$ from~$\sem{\Ac}$ for certain~$\Ac$ 
(and any~$\omega$);
this has lead to the definition of a variant of the derivation: the 
\emph{broken derivation}, that has been further studied 
\index{derivation (of an expression)!broken --}%
\index{broken derivation|see{derivation}}%
in~\cite{AngrEtAl10}.

\ifcheat\smallskipneg\fi%
\ifcheat\miniskipneg\fi%
\paragraph{\secti{cha-mon}. Changing the monoid}%
\propo{rec-fin-gen}
leads naturally to consider monoids~$M$ in which
$\Rat M = \Rec M\msp$ holds, and which one could call \emph{Kleene 
monoids}. 
\index{Kleene monoid}%
\index{monoid!Kleene --}%
In~\cite{Saka87c} was defined the family of \emph{rational monoids} 
which contains all previously known examples of Kleene monoids;  
still the inclusion is strict~\cite{PellSaka90}.
\index{rational!monoid}%
\index{monoid!rational --}%
\emph{Commutative} Kleene monoids, as well as finitely generated submonoids 
of~$\Rat a^{\!*}$ 
are rational monoids~\cite{Rupe91,AfonKhaz10}.

\ifcheat\smallskipneg\fi%
\paragraph{\secti{int-mul}. Introducing weights}%

If the definition of rational (and algebraic) series in non-commu\-ting 
variables as generalisation of rational (and context-free) languages 
on one hand-side, as well as the formalisation of rational 
expressions on the other, date back to the beginning of automata 
theory, the formalisation of \emph{weighted} rational expressions 
\index{expression!weighted rational --}%
seem to have appeared in various papers in the years~2000 only 
\cite{CaroFlou00,Rutt03,LombSaka05a}.
A satisfactory definition of trivial identities for weighted 
expressions proves to be tricky and has evolved in the publications of 
the author. 

By replacing quotient and derivation by \emph{co-induction},
\index{co-induction}%
Rutten formulated the equivalent of \theor{ant-mul}~\cite{Rutt03}. 

Krob~\cite{Krob92} and Berstel--Reutenauer~\cite{BersReut08} have 
considered `weighted rational expression'
slightly different from those expression dealt with in this chapter.
With their \emph{differentiation} and \emph{derivation}, they have 
tackled different problems than the construction of $\EtAs$-maps.

\ifcheat\smallskipneg\fi%
\paragraph{Acknowledgements}%

The author is grateful to Z.~{\'E}sik and to J.~Brzozowski who read a 
first draft of this chapter and made numerous and helpful remarks.
P.~Gastin, A.~Demaille, and H.~Gr\"uber sent corrections on the first 
version.
The careful reading of the final version by A.~Szilard has been very 
encouraging and most helpful, and is heartily
acknowledged. 

}%


\newcommand{\BibDir}{./bibinputs/}%

\bibliographystyle{abbrv}
\addcontentsline{toc}{section}{References}
\ifcheat\smallskipneg\fi%
\ifcheat\smallskipneg\fi%
\ifcheat\smallskipneg\fi%
\ifcheat\smallskipneg\fi%
\begin{small}
\bibliography{\BibDir AMAHB-abbrevs,%
              \BibDir Alexandrie-abbrevs,%
              \BibDir Alexandrie-AC,%
              \BibDir Alexandrie-DF,%
              \BibDir Alexandrie-GL,%
			  \BibDir Alexandrie-MR,%
              \BibDir Alexandrie-SZ}
\end{small}
\ifhdbk
\newpage{\pagestyle{empty}\cleardoublepage}
\else
\clearpage 
\fi

\ifhdbk\else
\addcontentsline{toc}{section}{Index}
\markright{\indexname}\markboth{\indexname}{\indexname}
\printindex
 
\fi

\end{document}